\newtheorem{thm}{Theorem}
\newtheorem{lem}[thm]{Lemma}
\newtheorem{prop}[thm]{Proposition}
\newtheorem{rem}[thm]{Remark}
\newtheorem{cor}[thm]{Corollary}
\theoremstyle{definition}
\newtheorem{var}[thm]{Variant}
\newtheorem{defn}[thm]{Definition}
\newtheorem{example}[thm]{Example}
\newcommand{\A}{\mathbb A}
\newcommand{\C}{\mathbb C}
\newcommand{\Q}{\mathbb Q}
\newcommand{\Z}{\mathbb Z}
\newcommand{\R}{\mathbb R}
\newcommand{\Pro}{\mathbb P}
\newcommand{\Gbar}{\overline{G}}
\newcommand{\cH}{\mathfrak{h}}
\newcommand{\pomega}{\nu}
\newcommand{\To}{\longrightarrow}
\newcommand{\Iup}{I^{\ell}}
\newcommand{\Gdot}{G^{\bullet}}
\newcommand{\ren}{\mathrm{ren}}
\newcommand{\Res}{\mathrm{Res}}
\newcommand{\omegat}{\omega^{(2)}}
\newcommand{\axi}{(\mathbf{1})}
\newcommand{\axii}{(\mathbf{2})}
\newcommand{\axiii}{(\mathbf{3})}
\newcommand{\D}{\mathfrak{D}}
\newcommand{\DD}{\sigma}
\newcommand{\diverg}{ {\tiny \hbox{ div. }}}
\title{Angles, Scales and Parametric Renormalization}
\author{Francis Brown${}^\ast$
\and 
Dirk Kreimer${}^{\ast\ast}$}
\thanks{${}^\ast$Supported by CNRS and ERC grant 257638. ${}^{\ast\ast}$Alexander von Humboldt Chair in Mathematical Physics, supported by the Alexander von Humboldt Foundation and the BMBF}
\address{Inst.Math.Jussieu\\ 175 Rue de Chevaleret\\  75013 Paris\\ France \and
Humboldt U.\\ Unter den Linden 6\\ 10099 Berlin\\ Germany}
\begin{document}
\maketitle

\bibliographystyle{plain}
\bibliography{main}

\newtheorem{exa}[thm]{Example}

\theoremstyle{definition}

\def\overlap{\;\raisebox{-12mm}{\epsfysize=36mm\epsfbox{overlap.eps}}\;}
\def\decompwthree{\;\raisebox{-8mm}{\epsfysize=24mm\epsfbox{decompwthree.eps}}\;}

\def\oldc{\;\raisebox{-10mm}{\epsfysize=24mm\epsfbox{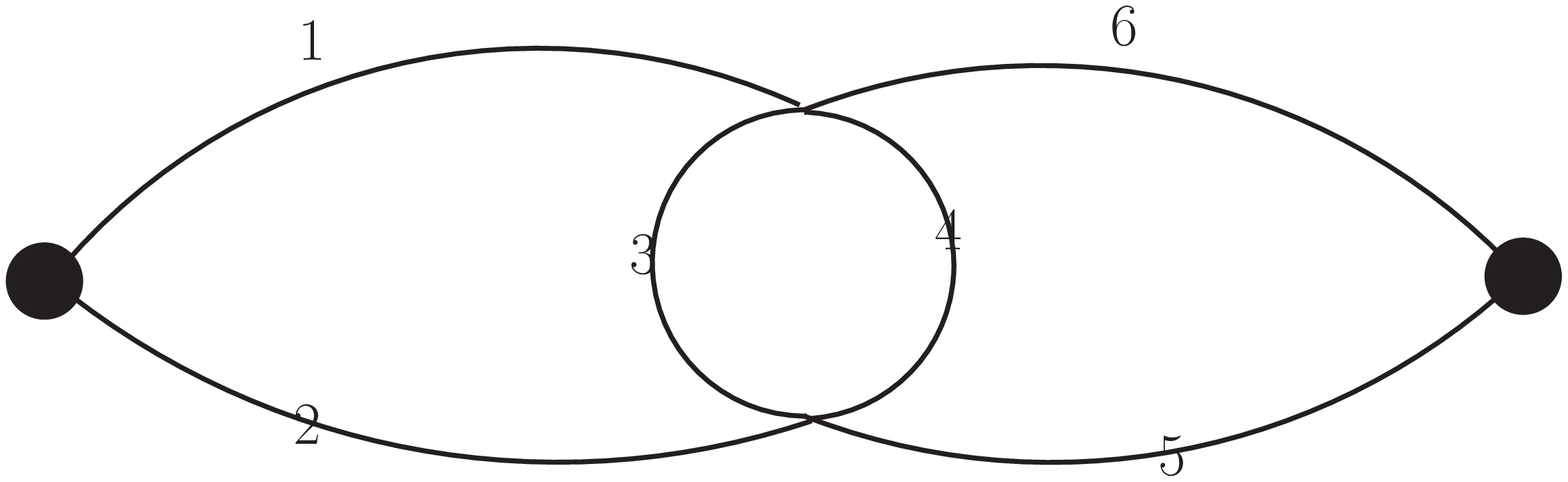}}\;}
\def\noldcl{\;\raisebox{-10mm}{\epsfysize=24mm\epsfbox{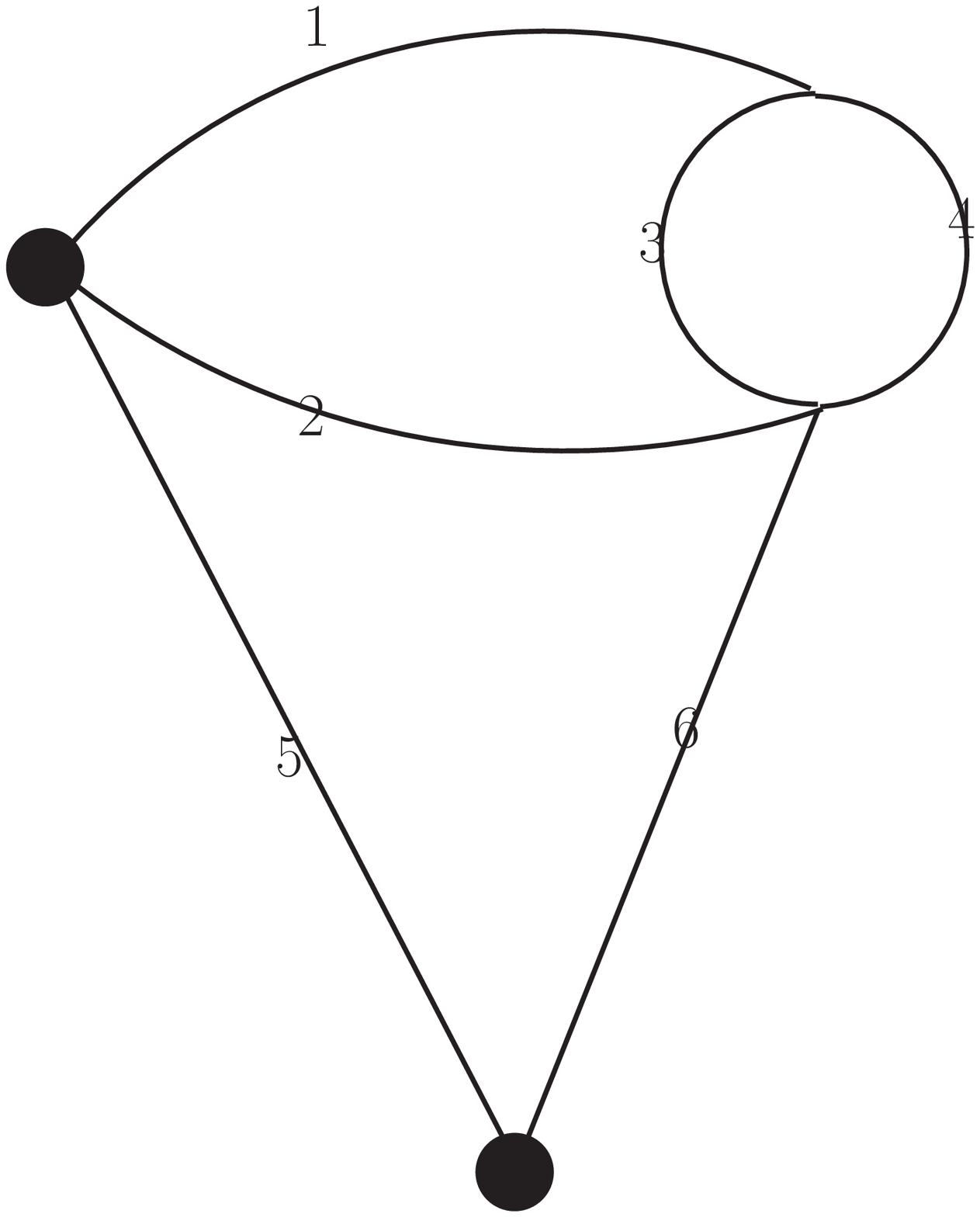}}\;}
\def\noldcr{\;\raisebox{-10mm}{\epsfysize=24mm\epsfbox{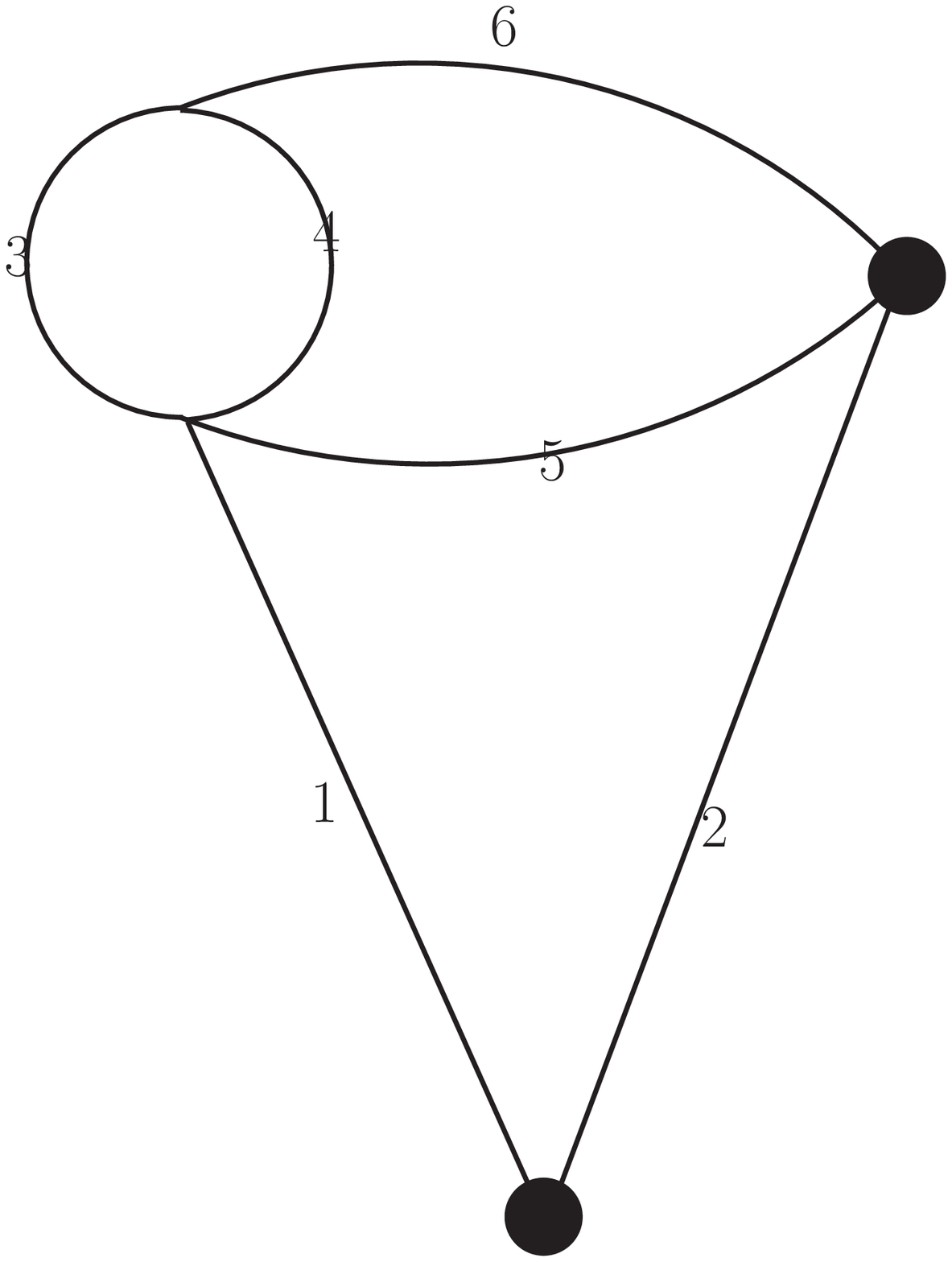}}\;}
\def\dcl{\;\raisebox{-10mm}{\epsfysize=24mm\epsfbox{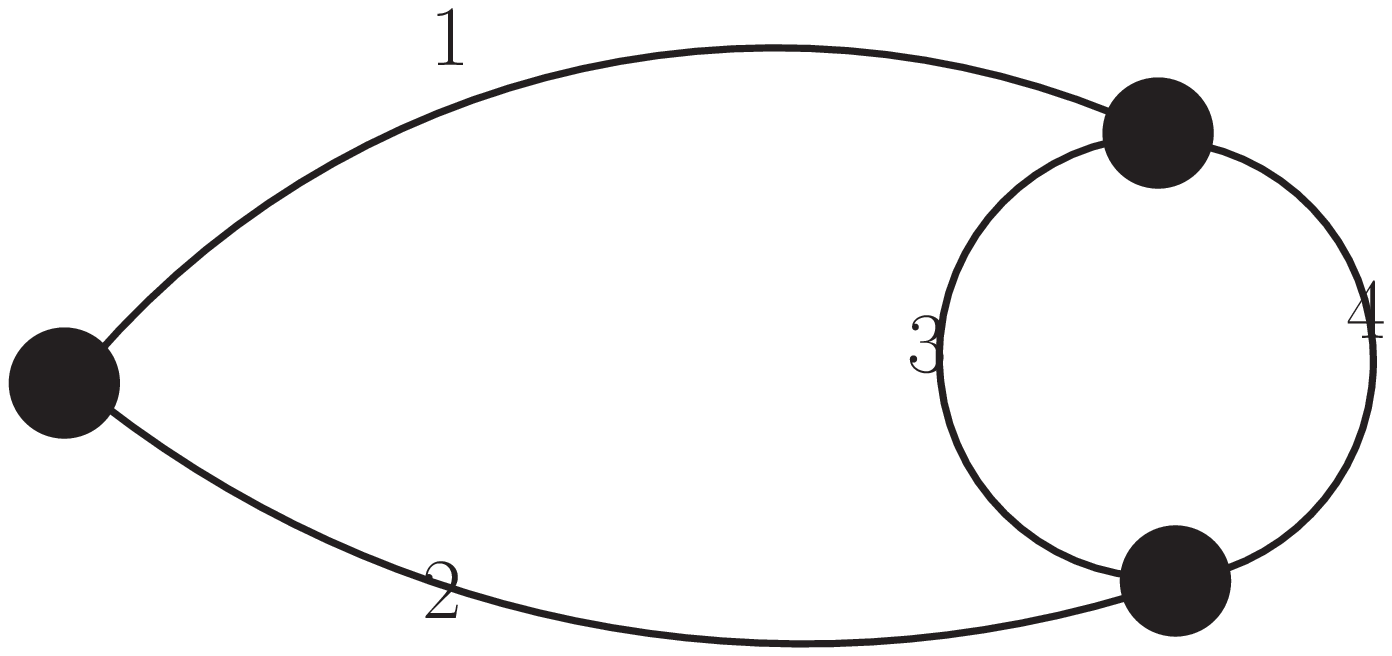}}\;}
\def\dcr{\;\raisebox{-10mm}{\epsfysize=24mm\epsfbox{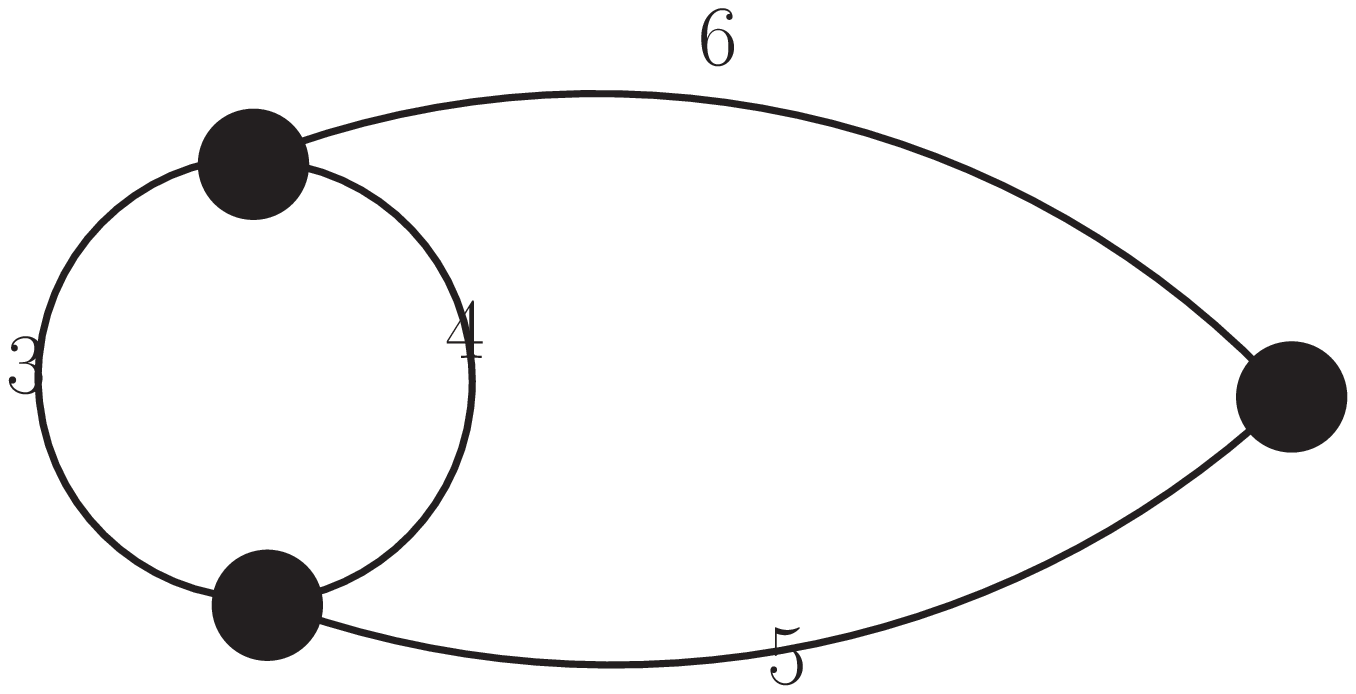}}\;}
\def\dclb{\;\raisebox{-10mm}{\epsfysize=24mm\epsfbox{dclb.eps}}\;}
\def\dcrb{\;\raisebox{-10mm}{\epsfysize=24mm\epsfbox{dcrb.eps}}\;}
\def\oot{\;\raisebox{-10mm}{\epsfysize=24mm\epsfbox{oot.eps}}\;}
\def\otf{\;\raisebox{-10mm}{\epsfysize=24mm\epsfbox{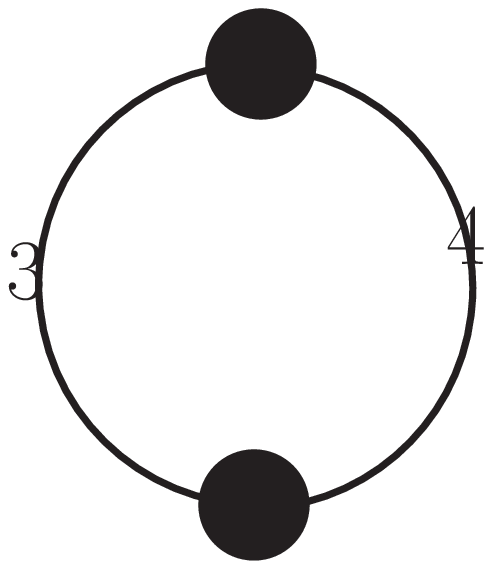}}\;}
\def\ofs{\;\raisebox{-10mm}{\epsfysize=24mm\epsfbox{ofs.eps}}\;}
\def\foroldcl{\;\raisebox{-20mm}{\epsfysize=48mm\epsfbox{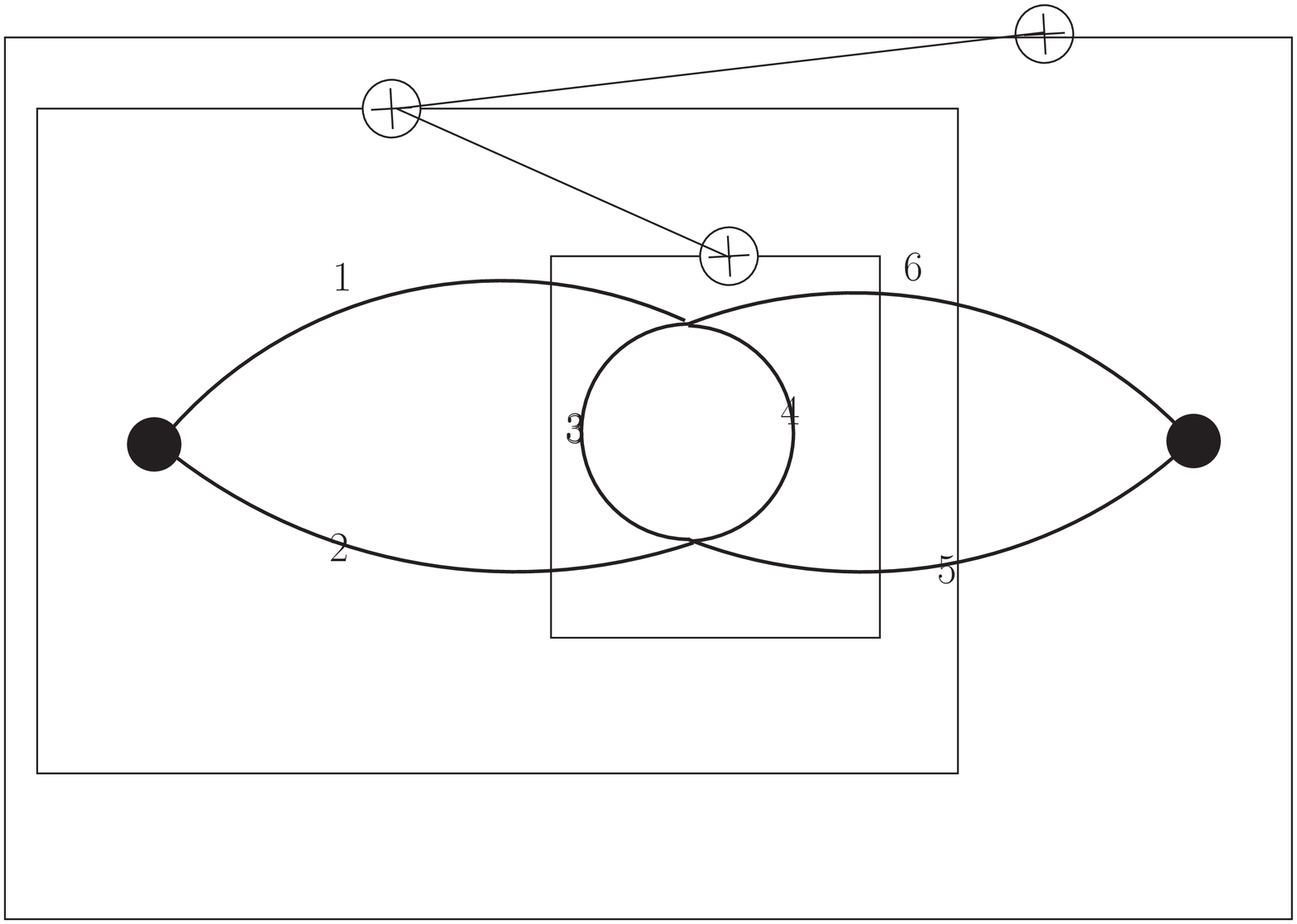}}\;}
\def\foroldcr{\;\raisebox{-20mm}{\epsfysize=48mm\epsfbox{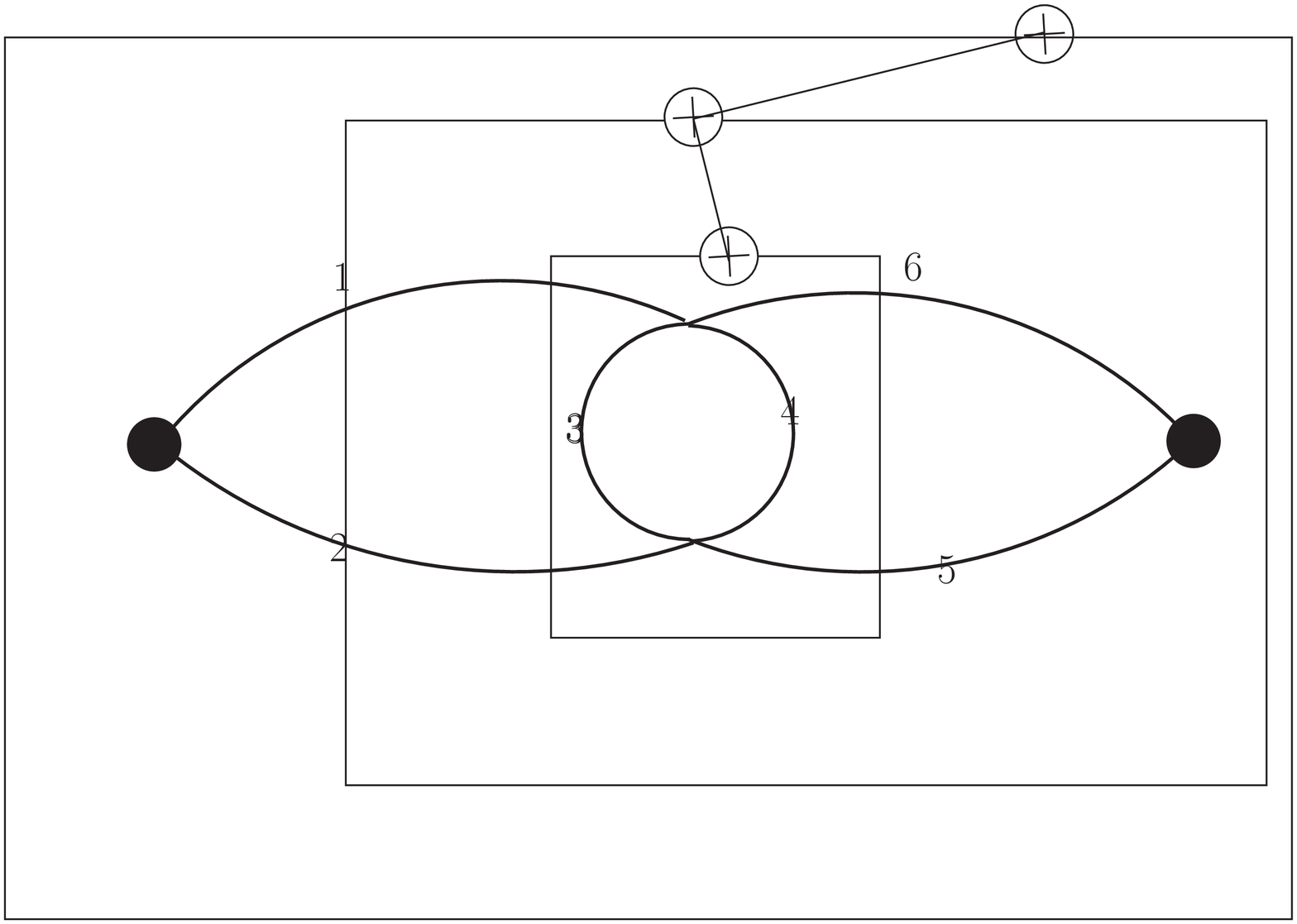}}\;}

\def\tfour{\;\raisebox{-20mm}{\epsfysize=40mm\epsfbox{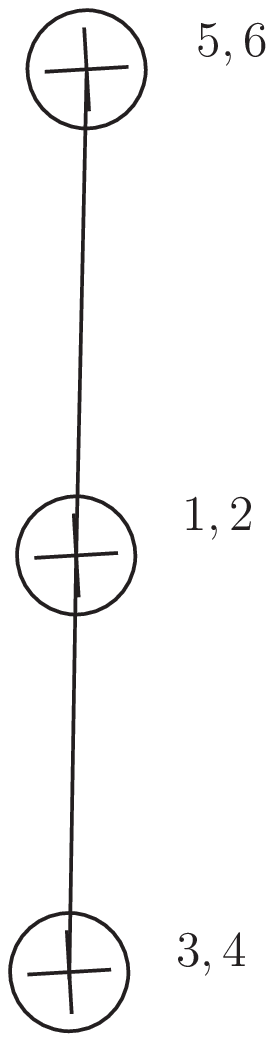}}\;}
\def\tthree{\;\raisebox{-20mm}{\epsfysize=40mm\epsfbox{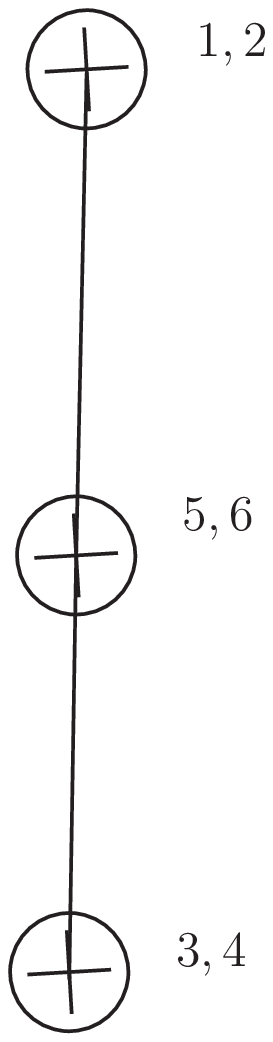}}\;}

\def\loglog{\;\raisebox{-12mm}{\epsfysize=24mm\epsfbox{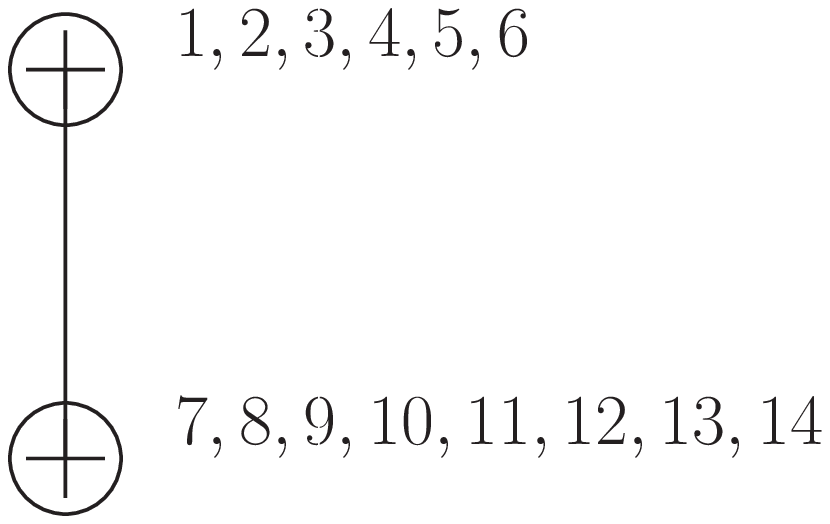}}\;}

\def\soldc{\;\raisebox{-5mm}{\epsfysize=10mm\epsfbox{oldc.eps}}\;}
\def\soldcbm{\;\raisebox{-5mm}{\epsfysize=10mm\epsfbox{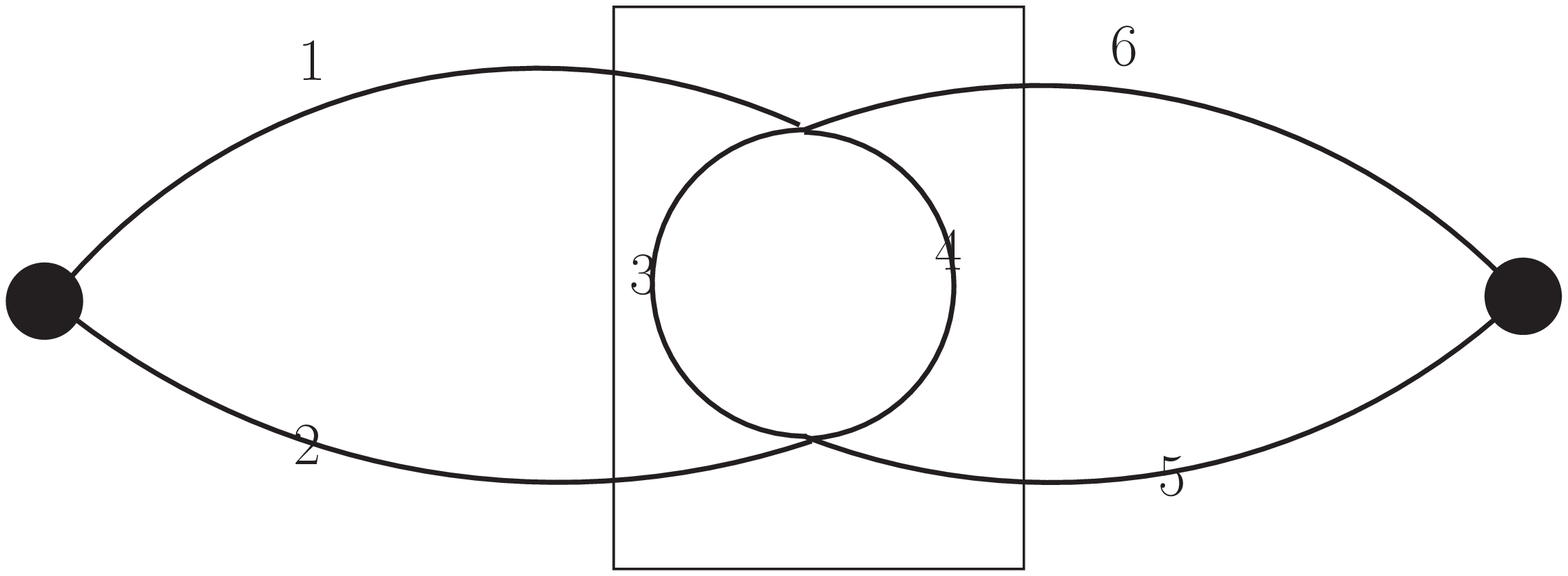}}\;}
\def\soldcbl{\;\raisebox{-5mm}{\epsfysize=10mm\epsfbox{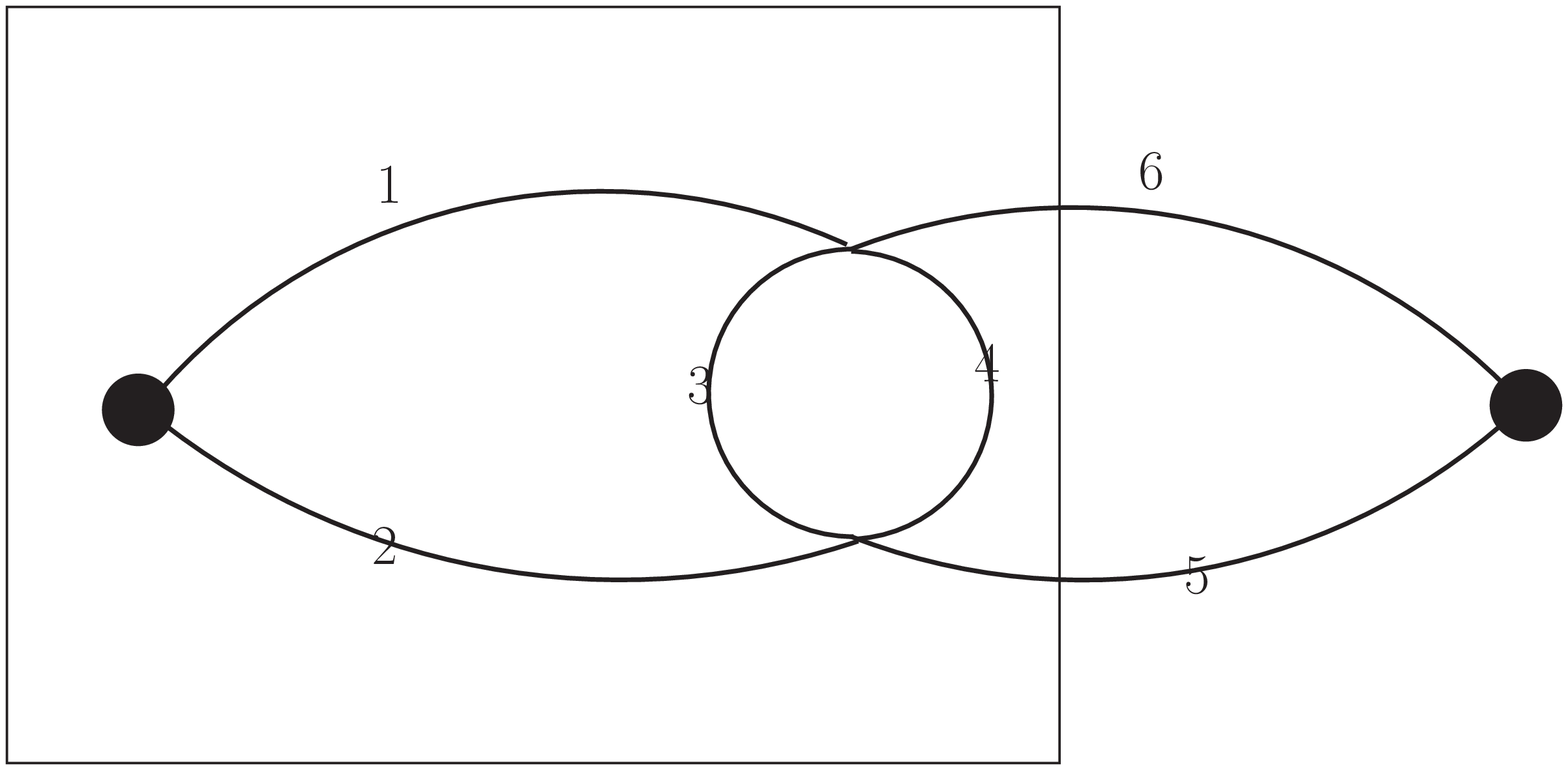}}\;}
\def\soldcbr{\;\raisebox{-5mm}{\epsfysize=10mm\epsfbox{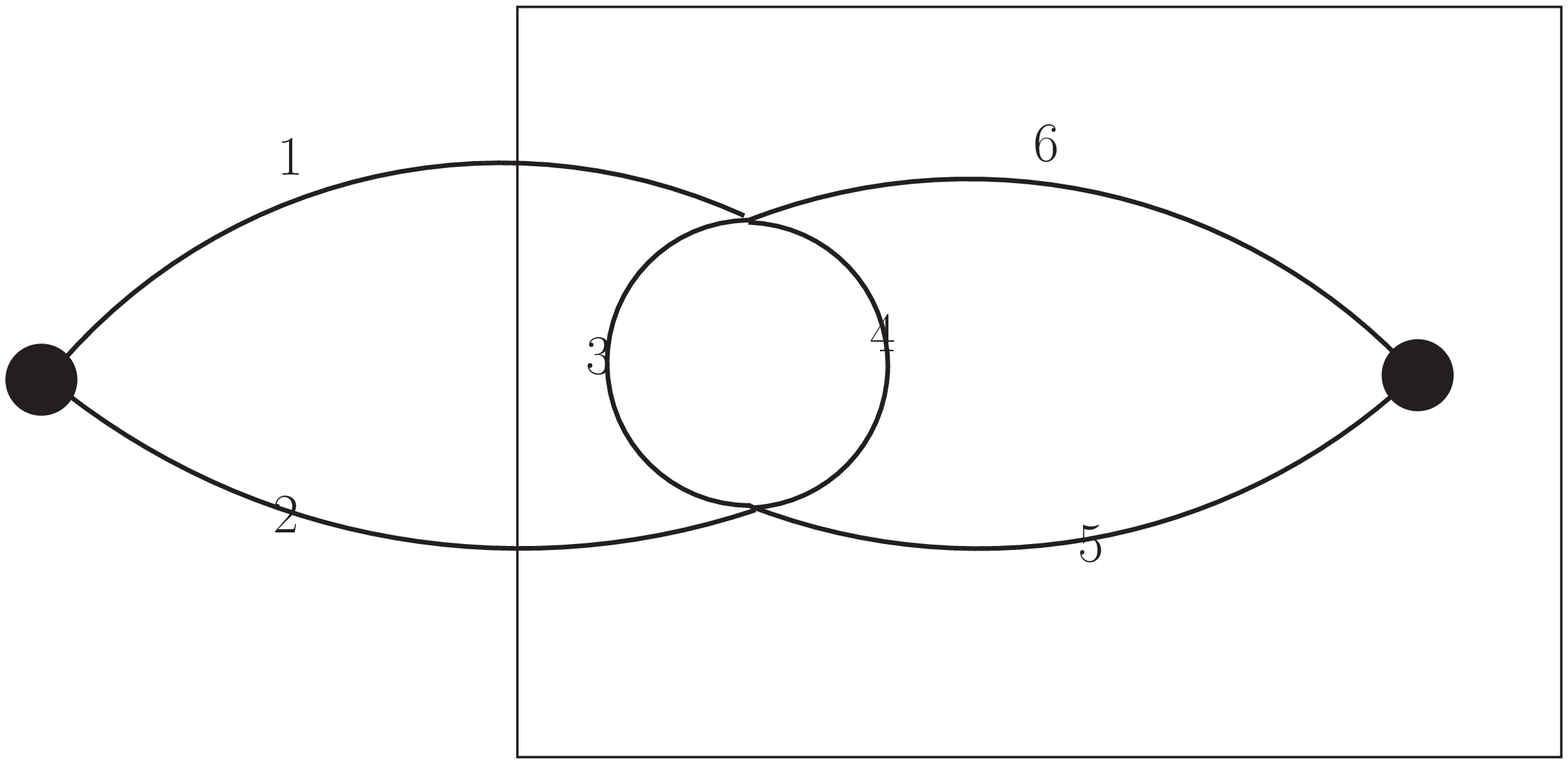}}\;}
\def\soldcbml{\;\raisebox{-5mm}{\epsfysize=10mm\epsfbox{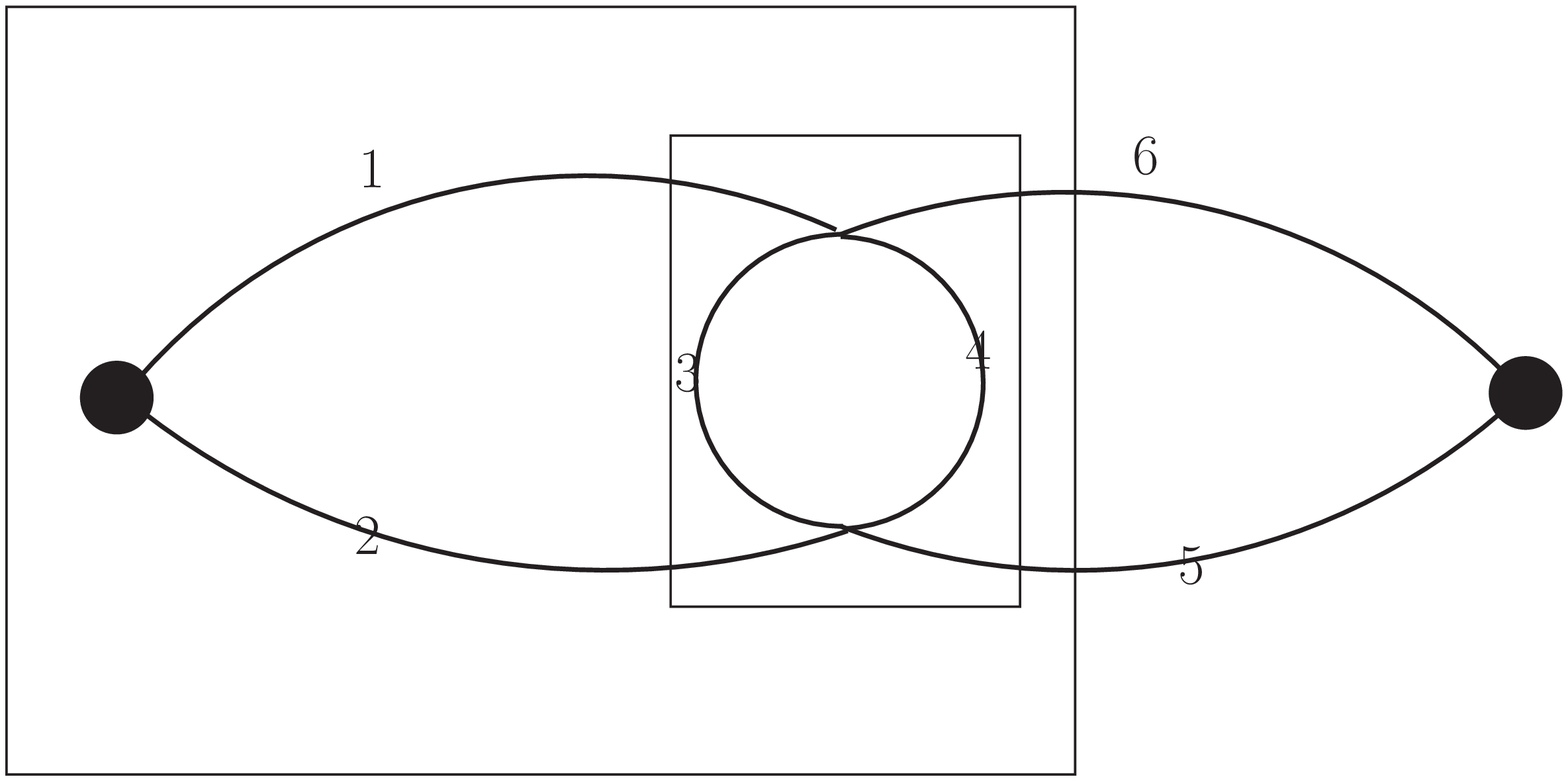}}\;}
\def\soldcbmr{\;\raisebox{-5mm}{\epsfysize=10mm\epsfbox{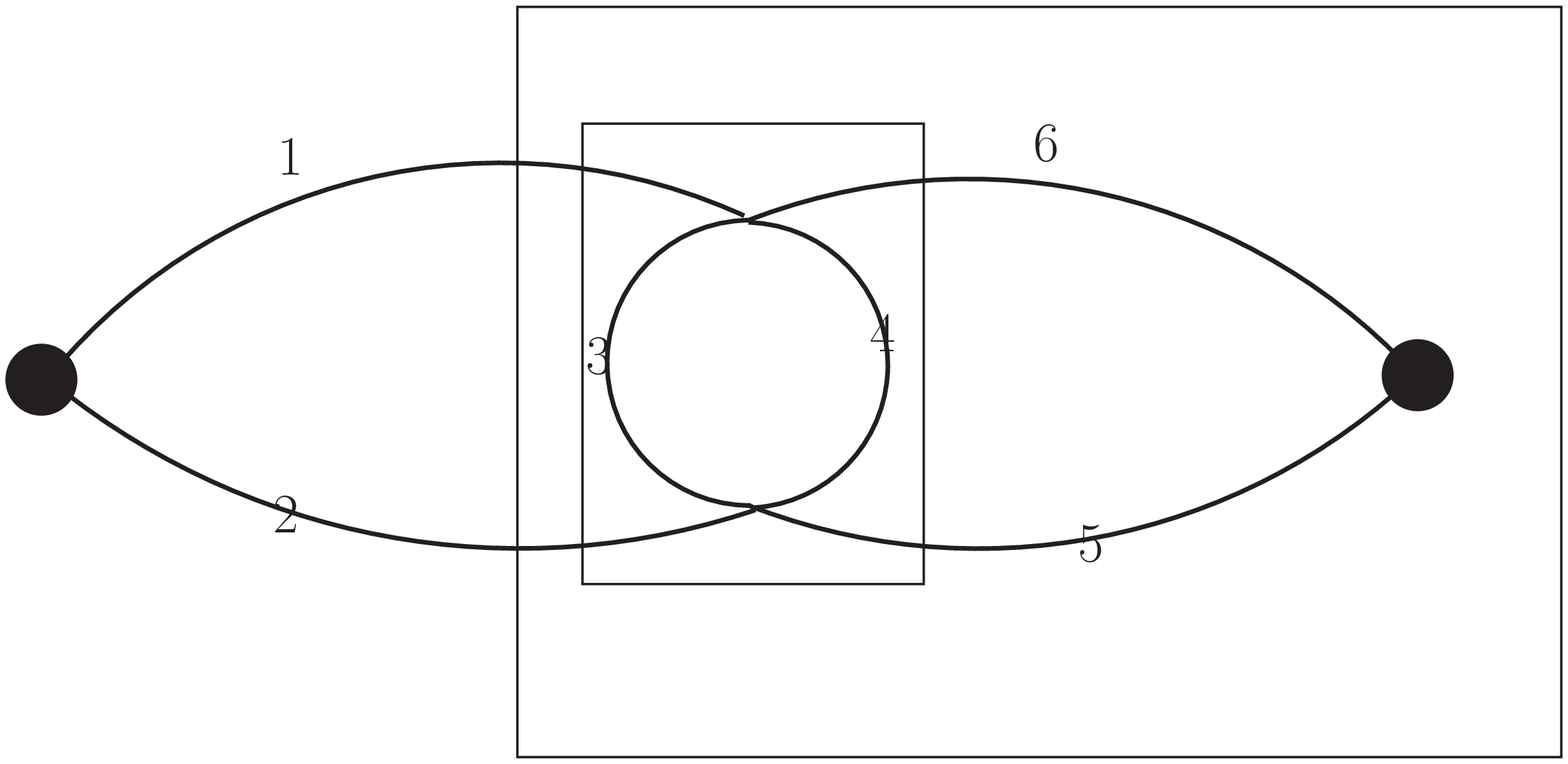}}\;}

\def\wttf{\;\raisebox{-10mm}{\epsfysize=20mm\epsfbox{w334.eps}}\;}
\def\wttft{\;\raisebox{-10mm}{\epsfysize=20mm\epsfbox{w3342.eps}}\;}
\def\wtttf{\;\raisebox{-10mm}{\epsfysize=20mm\epsfbox{w3324.eps}}\;}
\def\wttt{\;\raisebox{-10mm}{\epsfysize=20mm\epsfbox{w332.eps}}\;}

\def\wfconn{\;\raisebox{-10mm}{\epsfysize=20mm\epsfbox{wfconn.eps}}\;}
\def\wfconnos{\;\raisebox{-10mm}{\epsfysize=20mm\epsfbox{wfconnos.eps}}\;}

\def\wtwf{\;\raisebox{-15mm}{\epsfysize=36mm\epsfbox{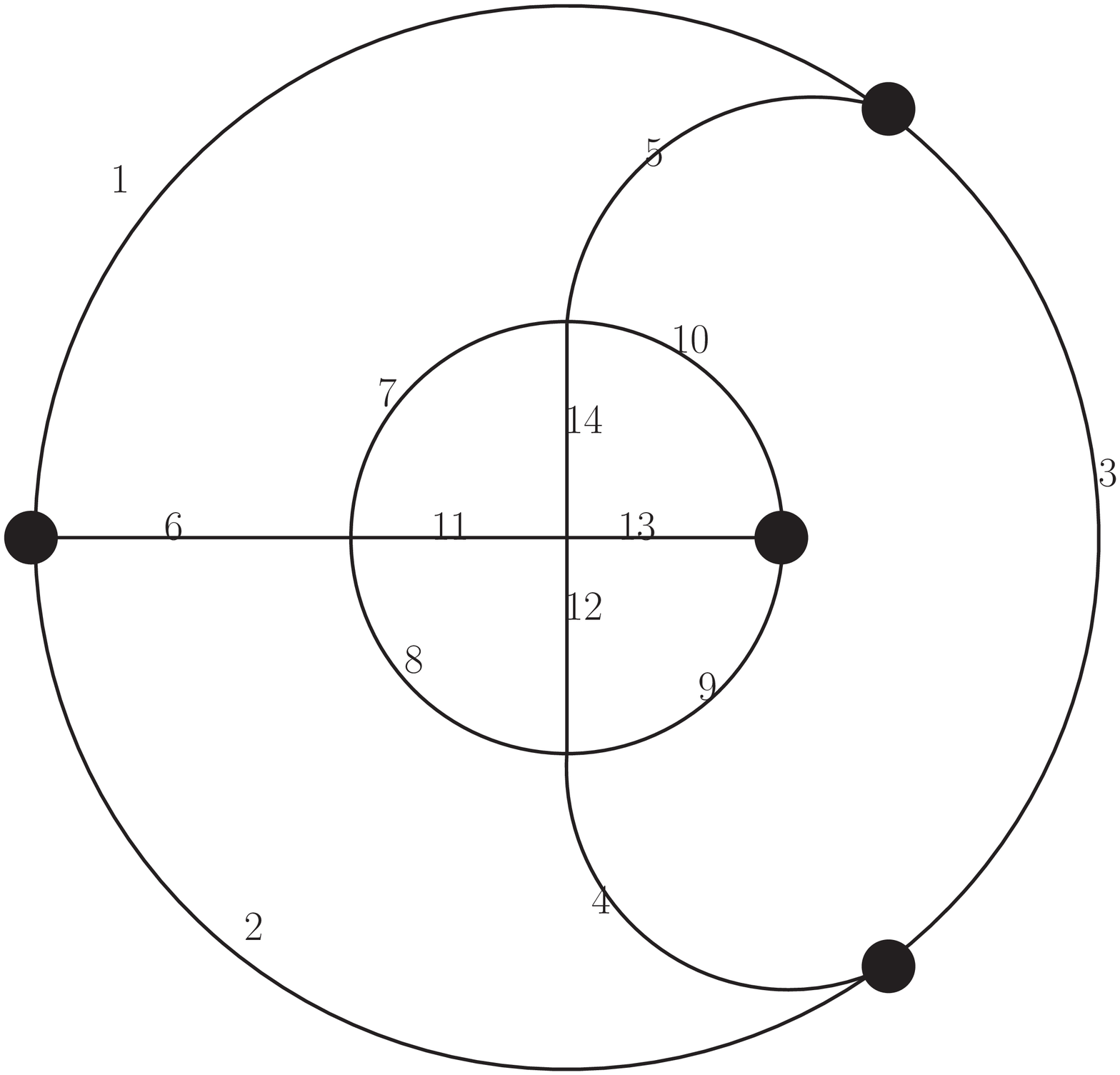}}\;}
\def\wtwft{\;\raisebox{-15mm}{\epsfysize=36mm\epsfbox{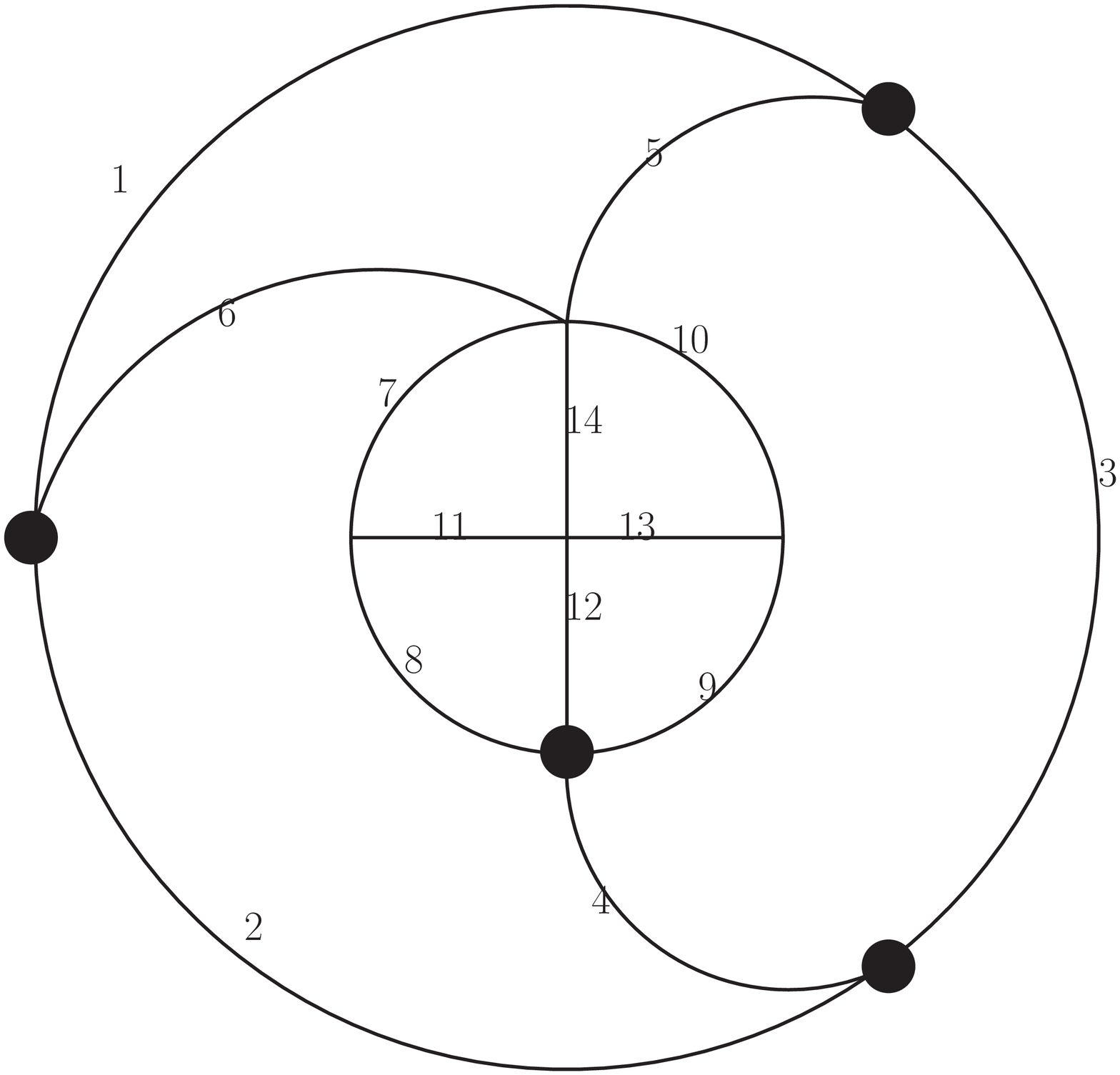}}\;}
\def\wtwfb{\;\raisebox{-15mm}{\epsfysize=36mm\epsfbox{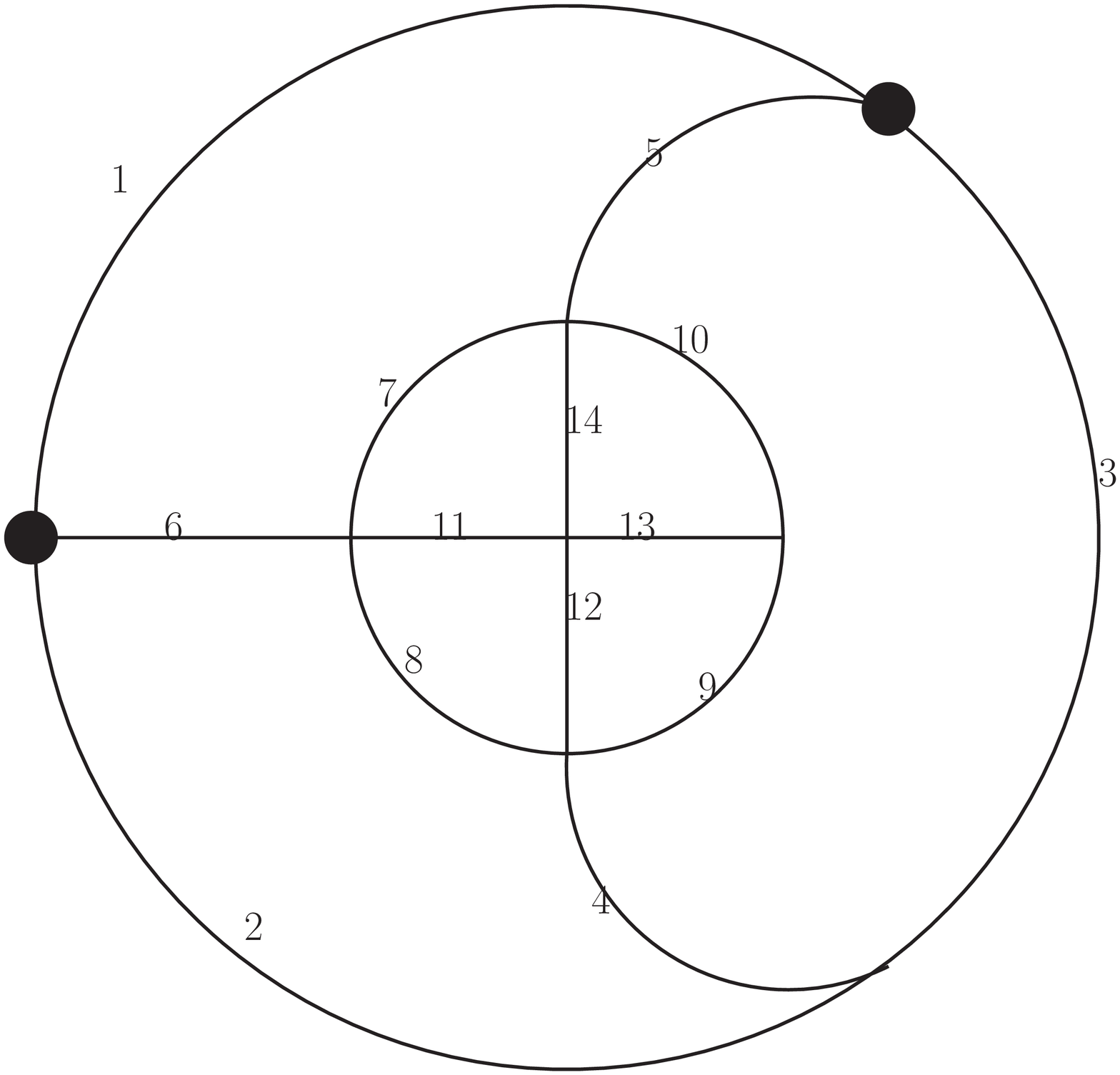}}\;}
\def\wtwftb{\;\raisebox{-15mm}{\epsfysize=36mm\epsfbox{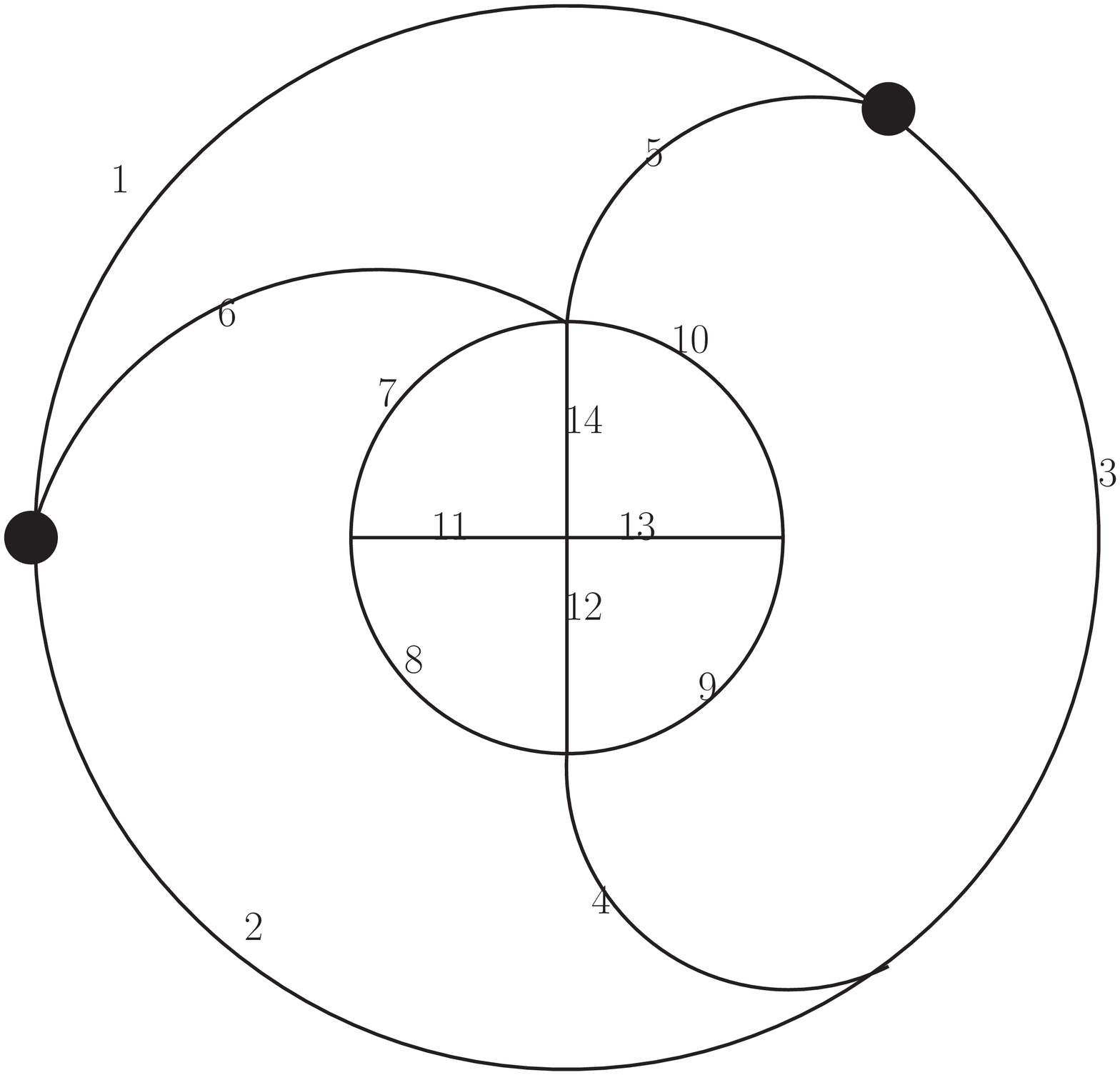}}\;}
\def\wt{\;\raisebox{-10mm}{\epsfysize=24mm\epsfbox{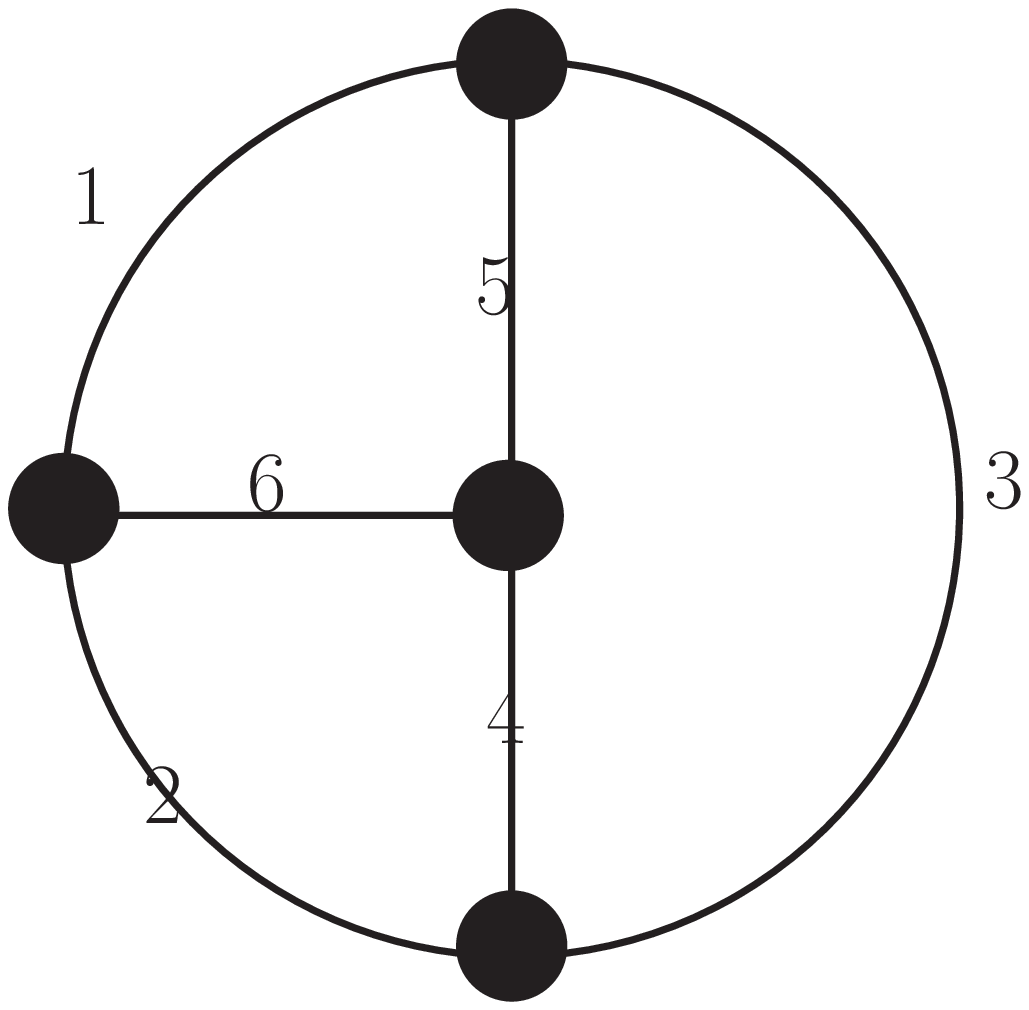}}\;}
\def\wtb{\;\raisebox{-10mm}{\epsfysize=24mm\epsfbox{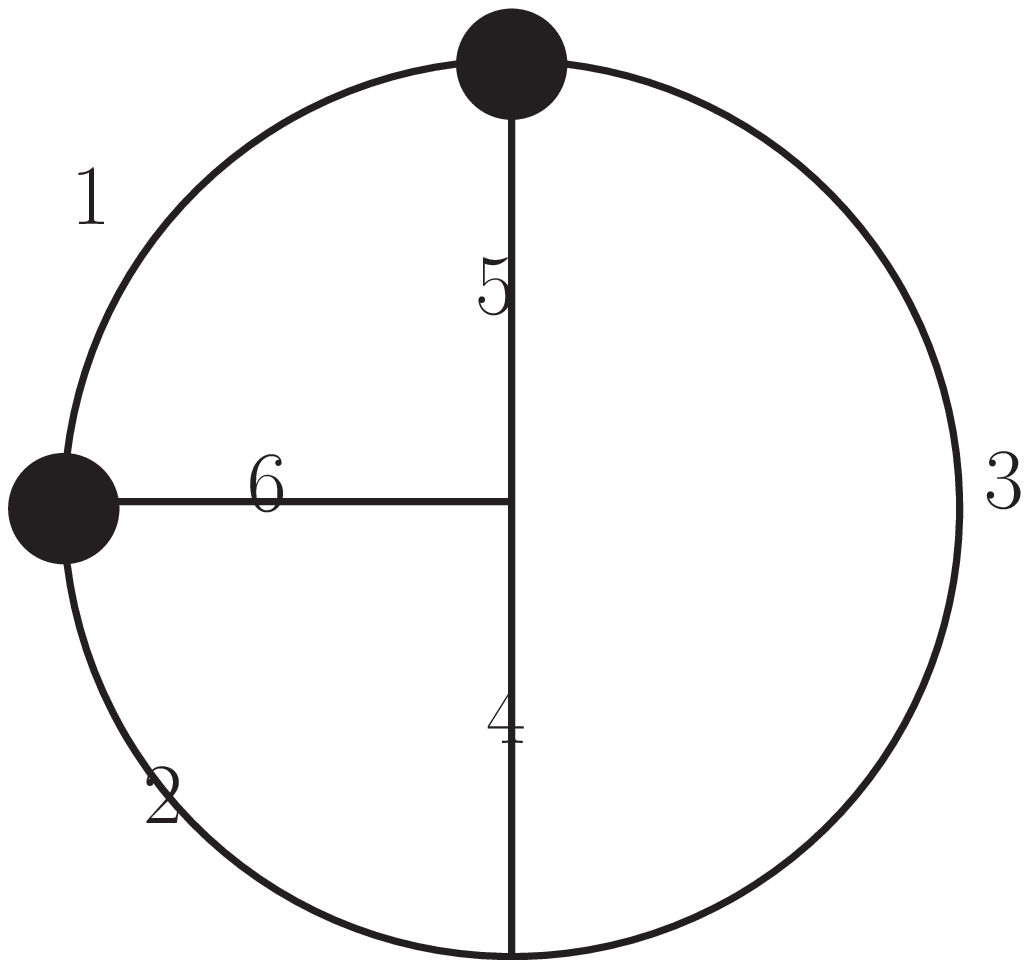}}\;}
\def\wtbbull{\;\raisebox{-10mm}{\epsfysize=24mm\epsfbox{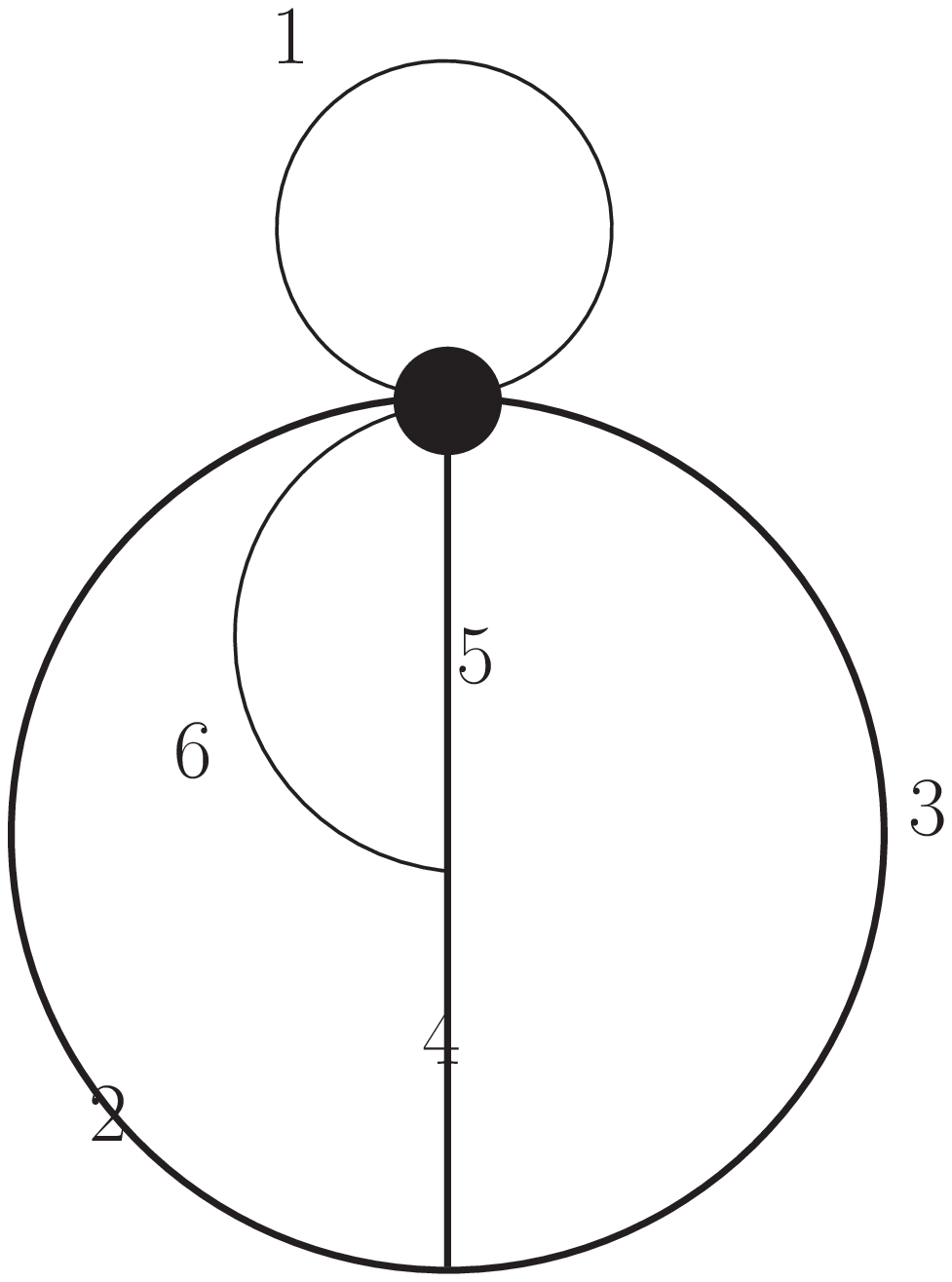}}\;}
\def\wtbups{\;\raisebox{-10mm}{\epsfysize=24mm\epsfbox{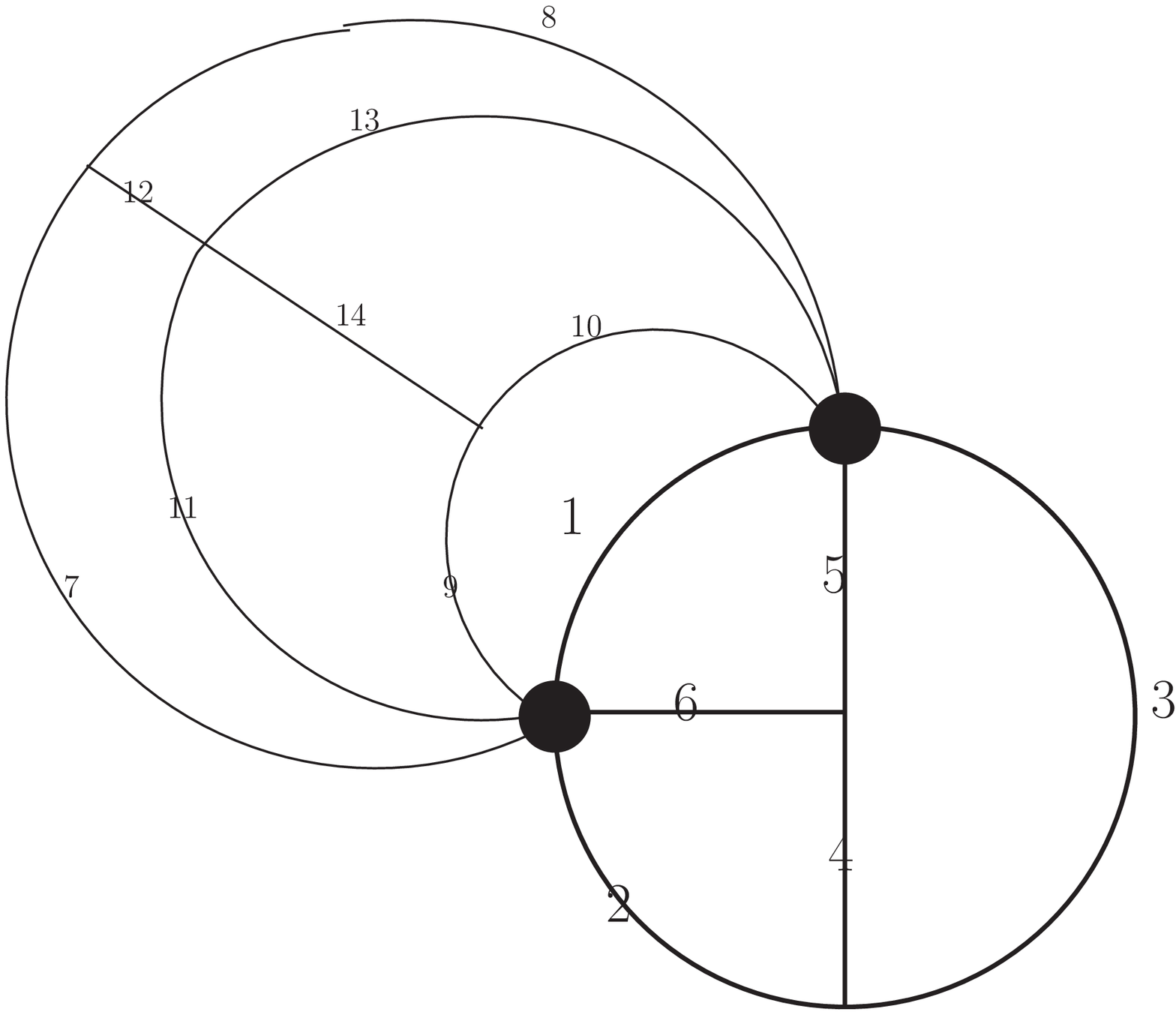}}\;}
\def\wf{\;\raisebox{-10mm}{\epsfysize=24mm\epsfbox{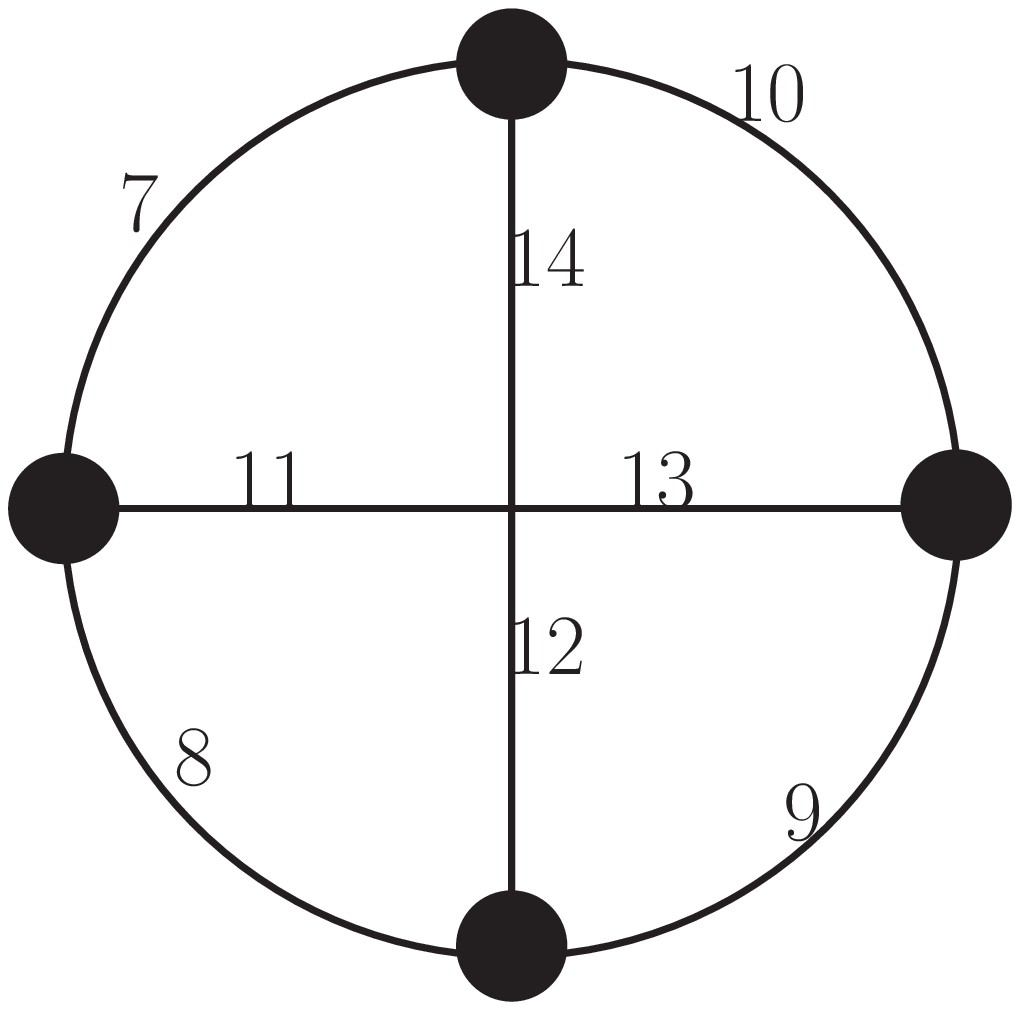}}\;}
\def\wfb{\;\raisebox{-10mm}{\epsfysize=24mm\epsfbox{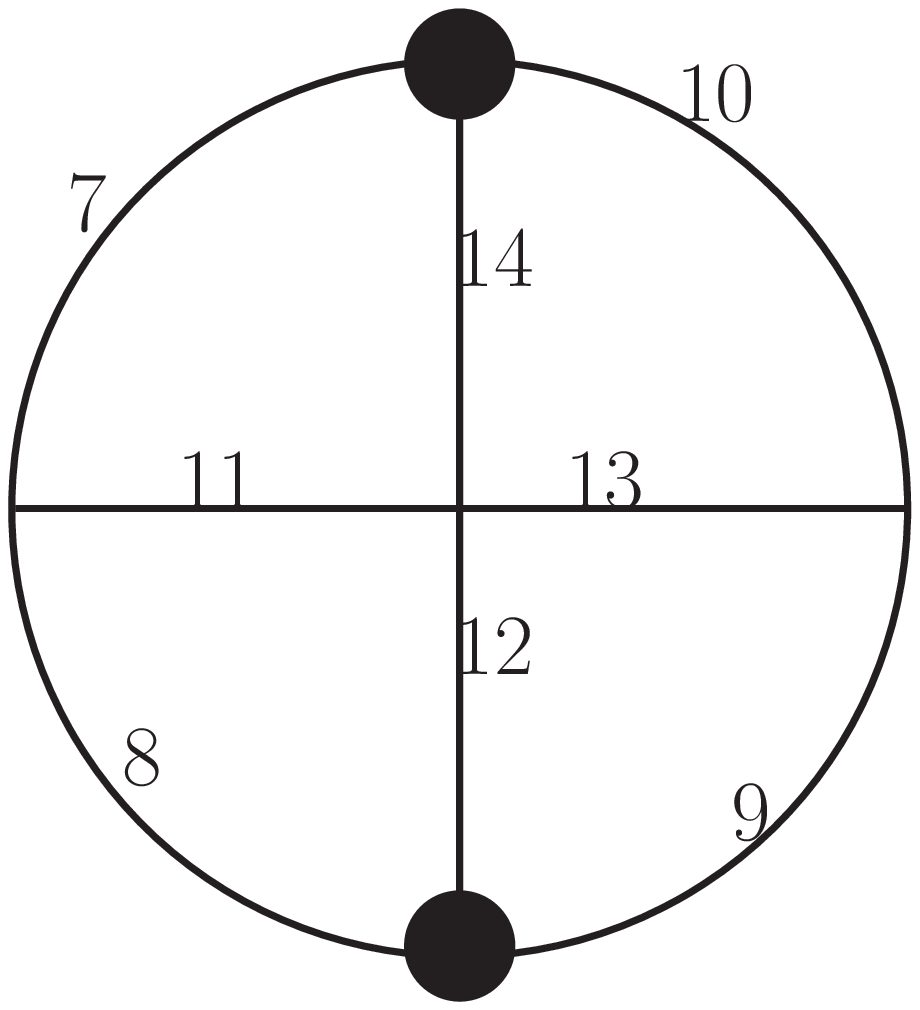}}\;}

\def\dcqs{\;\raisebox{-10mm}{\epsfysize=24mm\epsfbox{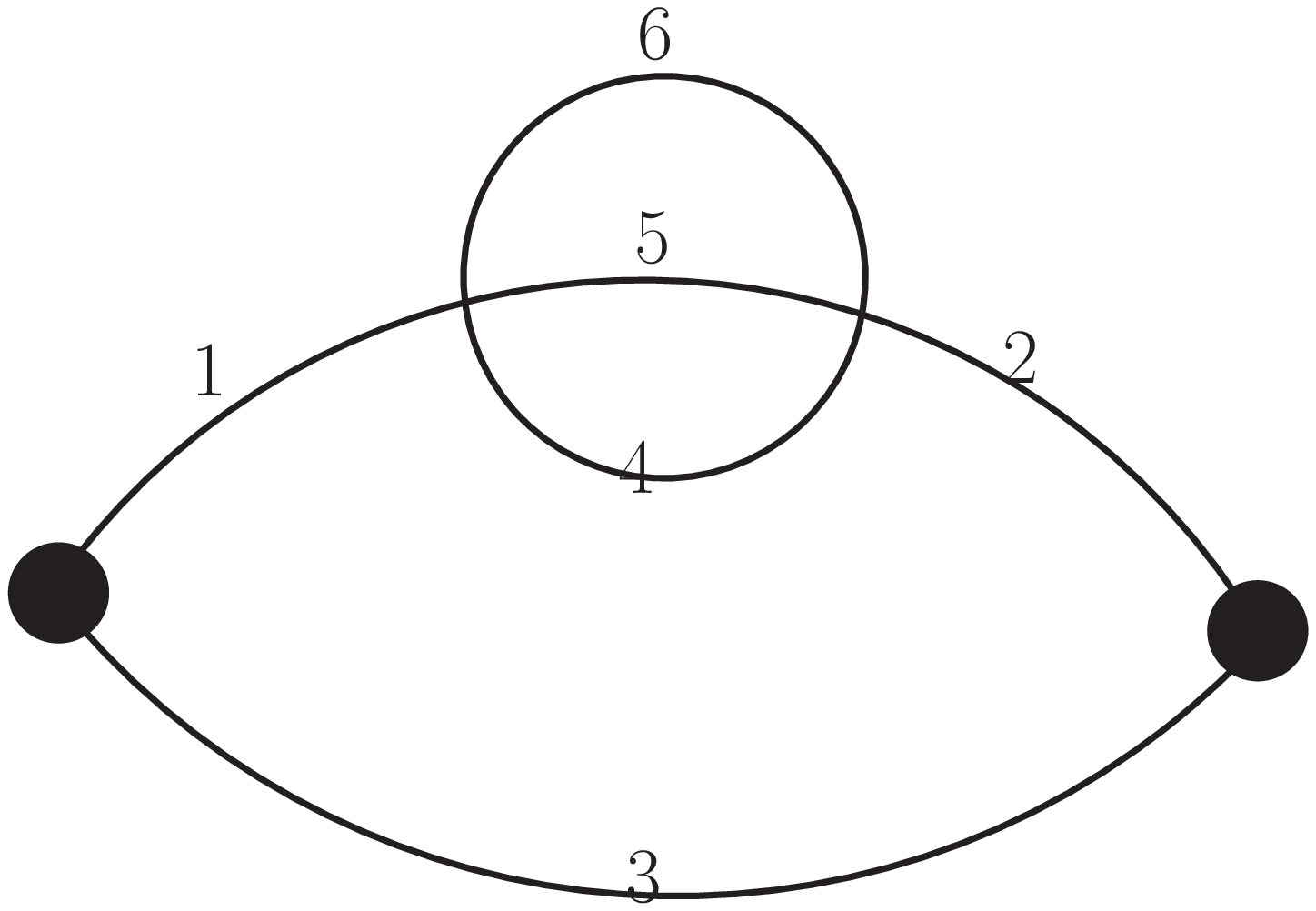}}\;}
\def\dcqssq{\;\raisebox{-10mm}{\epsfysize=24mm\epsfbox{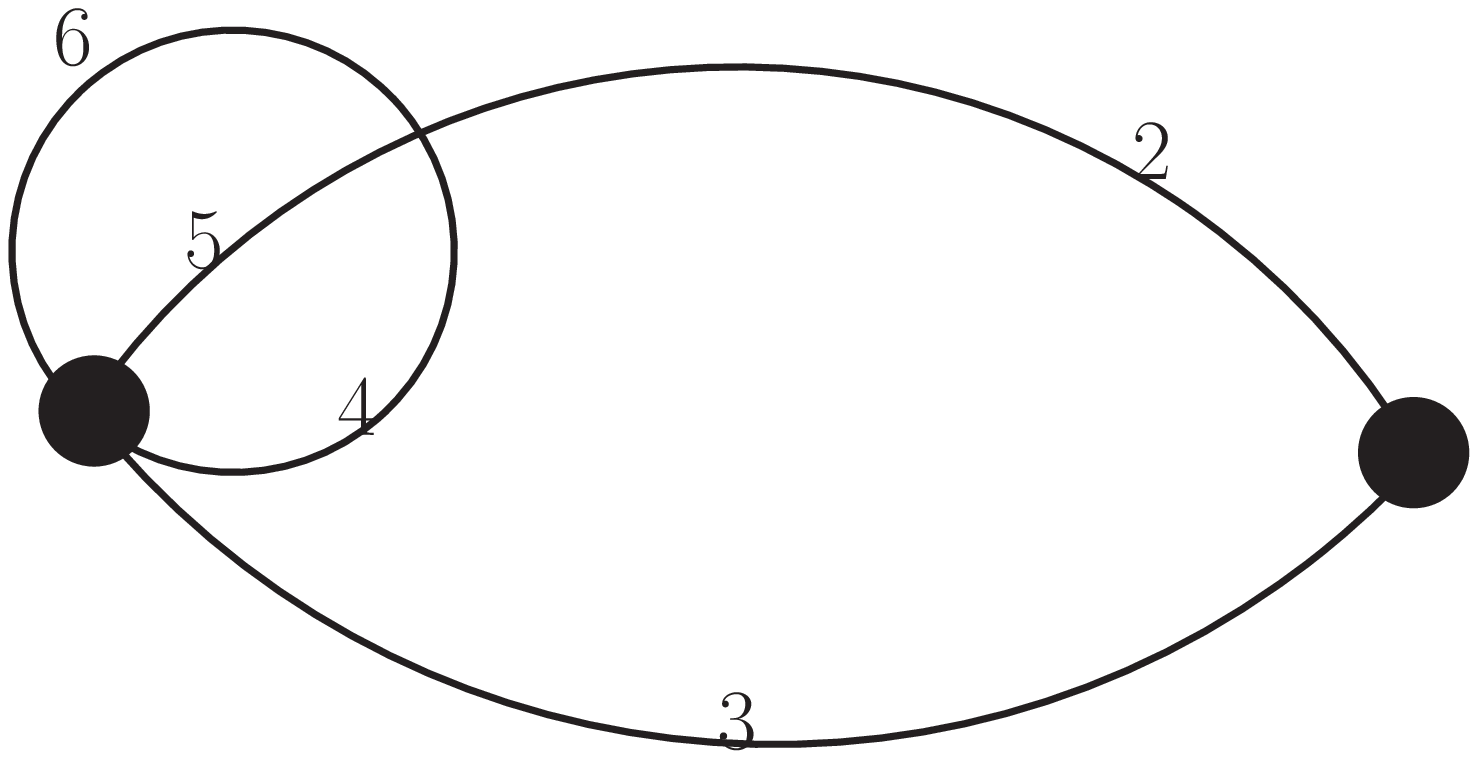}}\;}
\def\subqul{\;\raisebox{-10mm}{\epsfysize=24mm\epsfbox{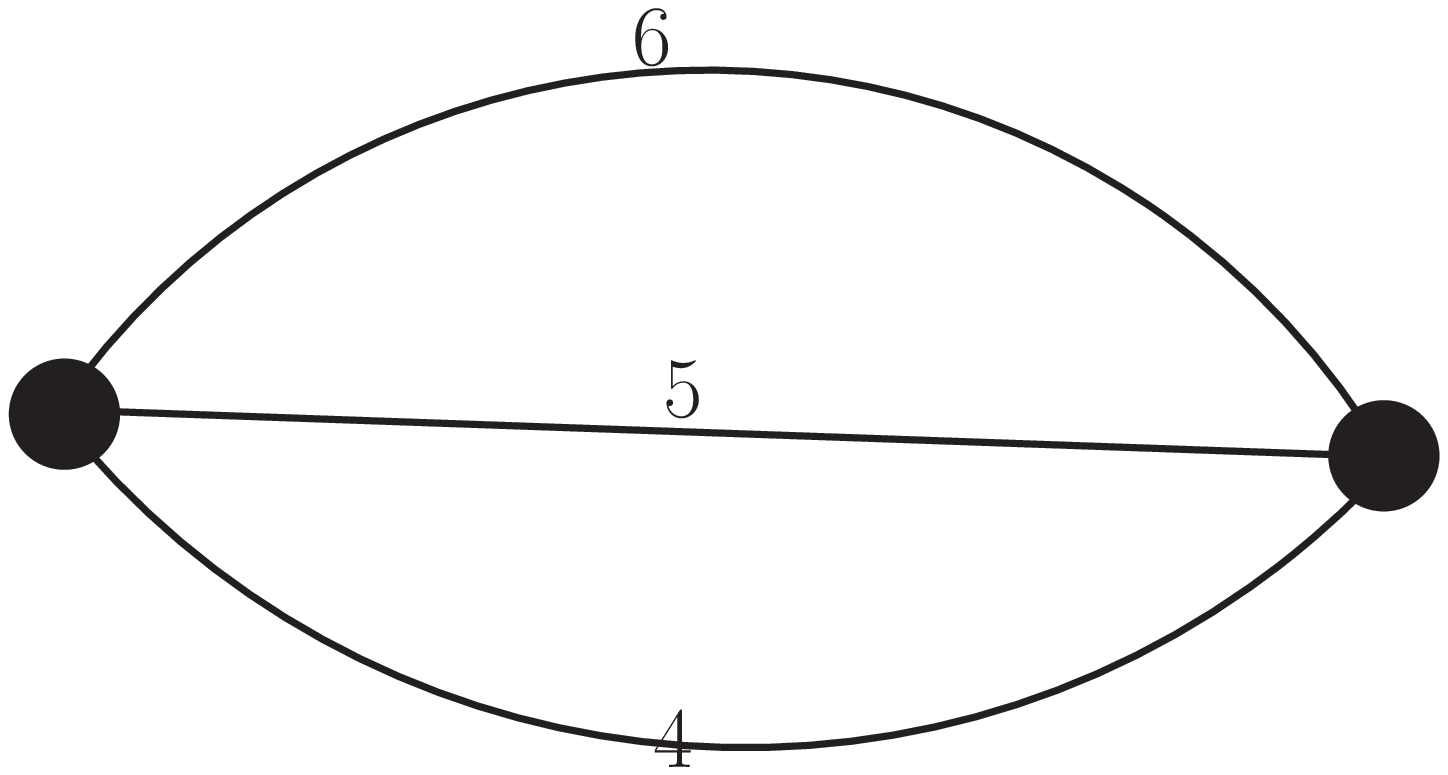}}\;}
\def\codc{\;\raisebox{-10mm}{\epsfysize=24mm\epsfbox{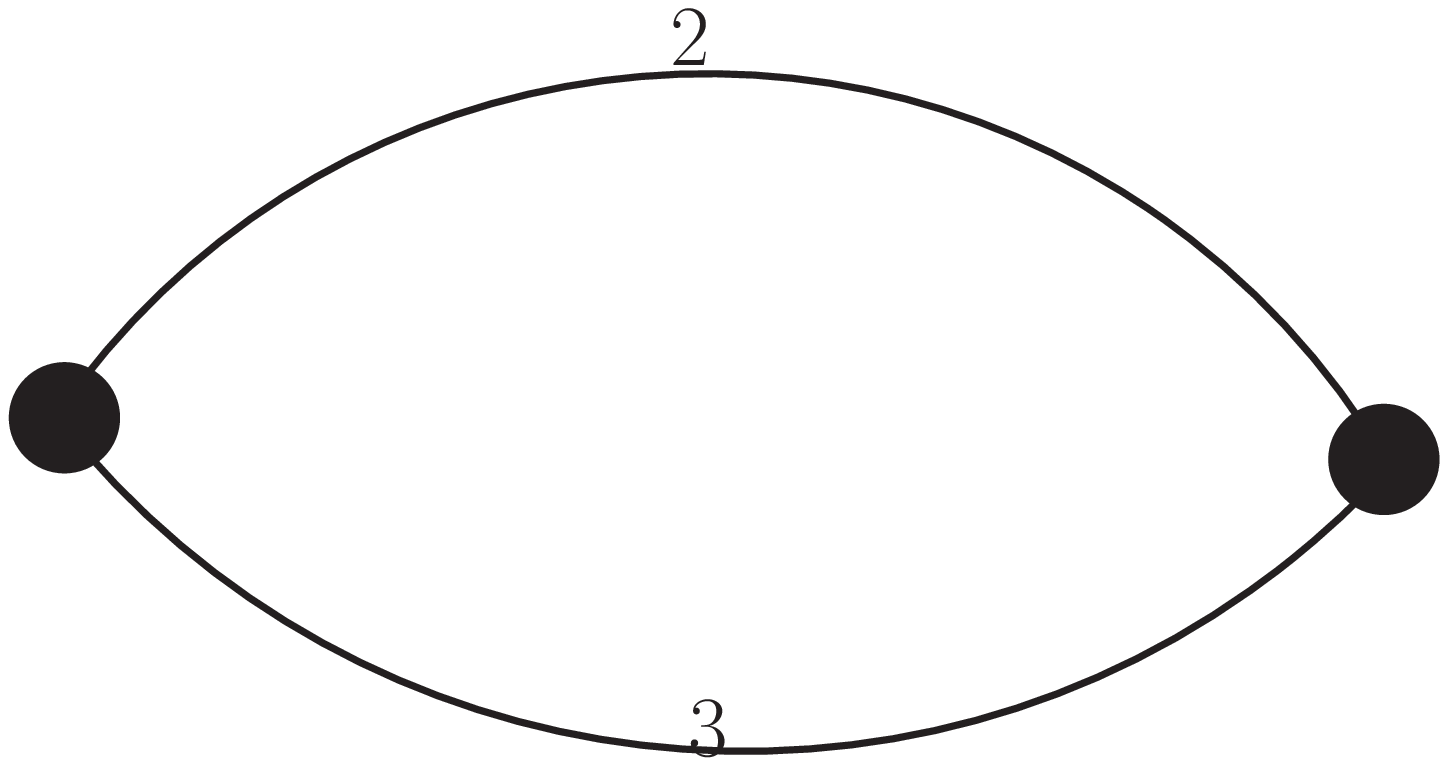}}\;}
\def\codcsq{\;\raisebox{-10mm}{\epsfysize=24mm\epsfbox{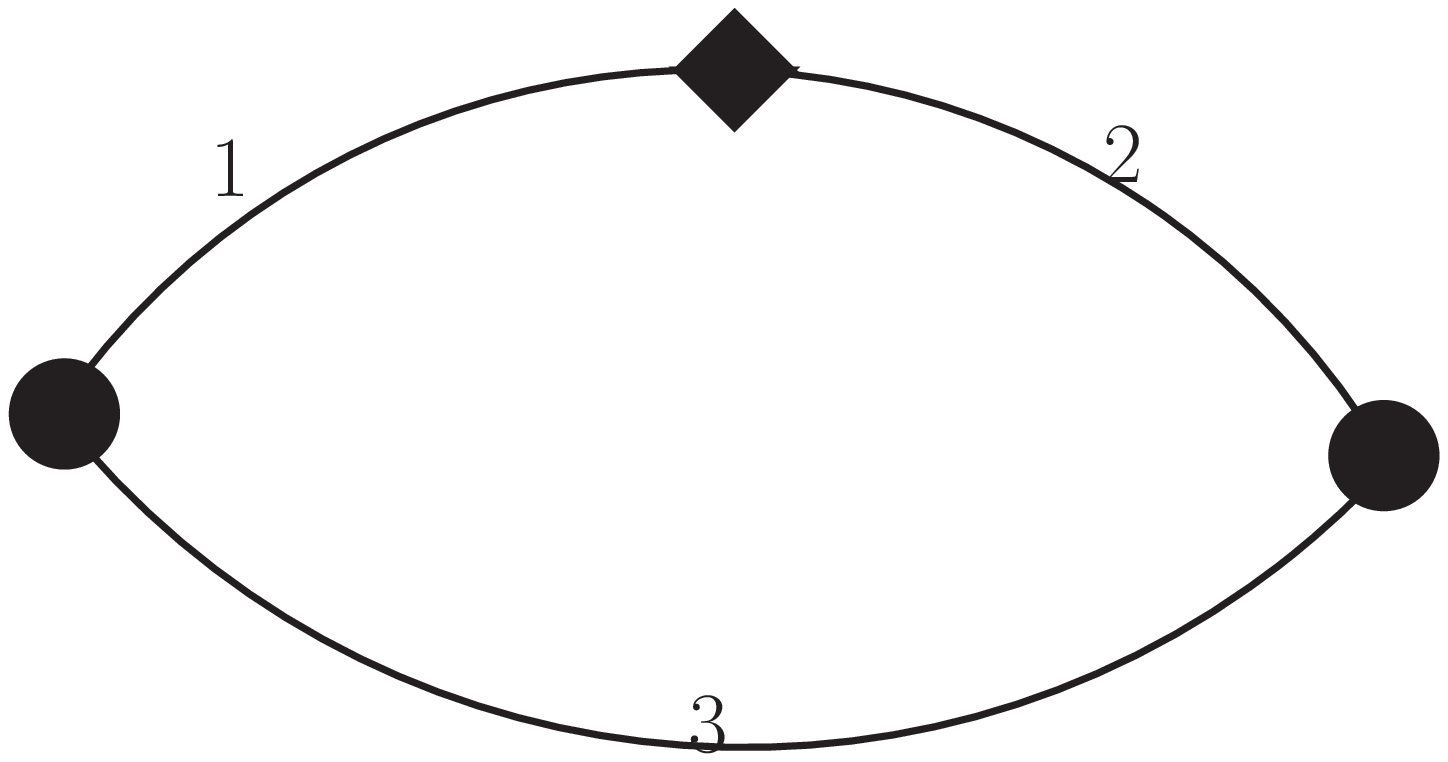}}\;}

\def\wttfsmall{\;\raisebox{-1mm}{\epsfysize=3mm\epsfbox{w334.eps}}\;}
\def\wttftsmall{\;\raisebox{-1mm}{\epsfysize=3mm\epsfbox{w3342.eps}}\;}
\def\wtttfsmall{\;\raisebox{-1mm}{\epsfysize=3mm\epsfbox{w3324.eps}}\;}
\def\wtttssmall{\;\raisebox{-1mm}{\epsfysize=3mm\epsfbox{w332.eps}}\;}

\def\tltp{\;\raisebox{-10mm}{\epsfysize=24mm\epsfbox{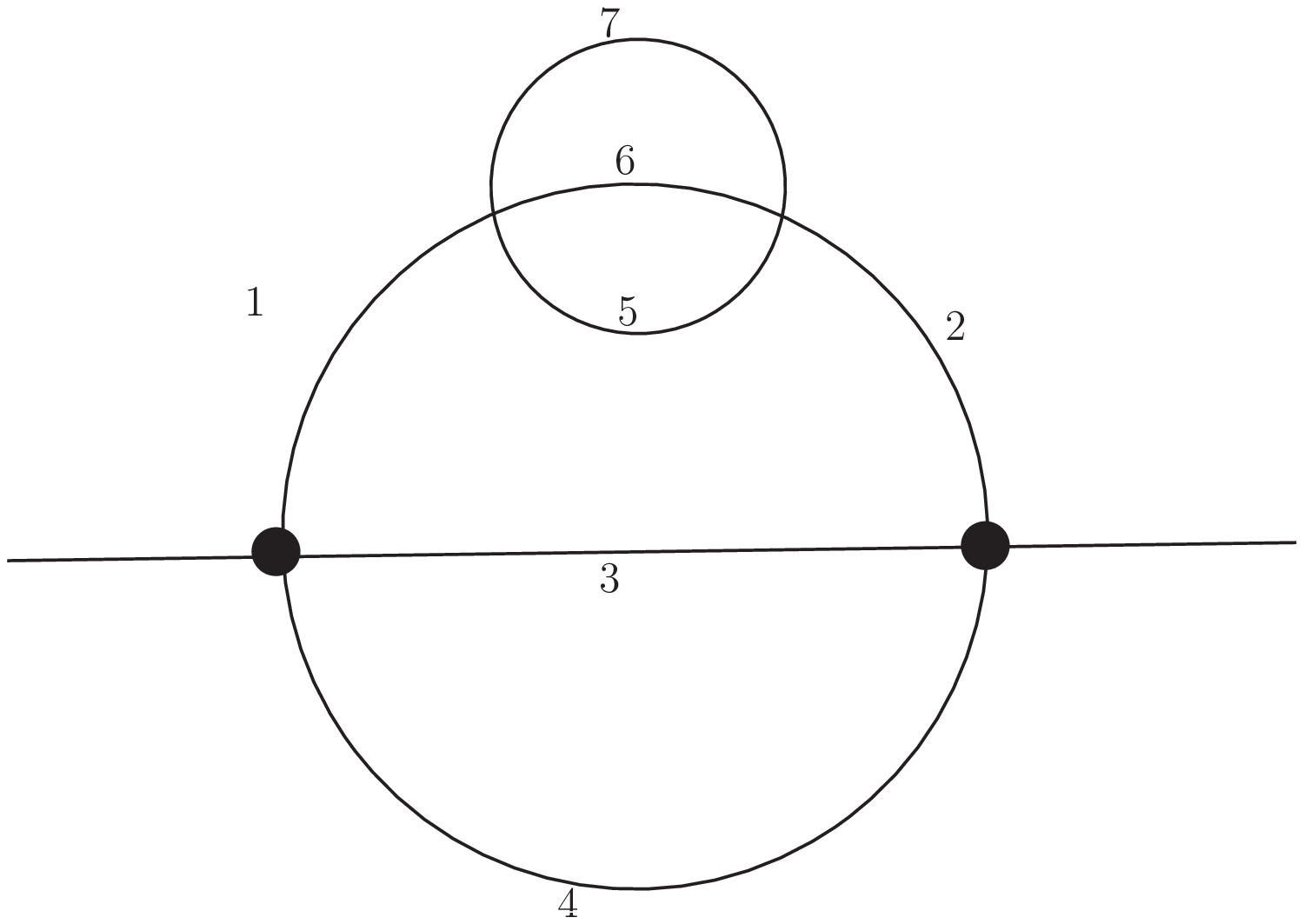}}\;}
\def\tltpdt{\;\raisebox{-10mm}{\epsfysize=24mm\epsfbox{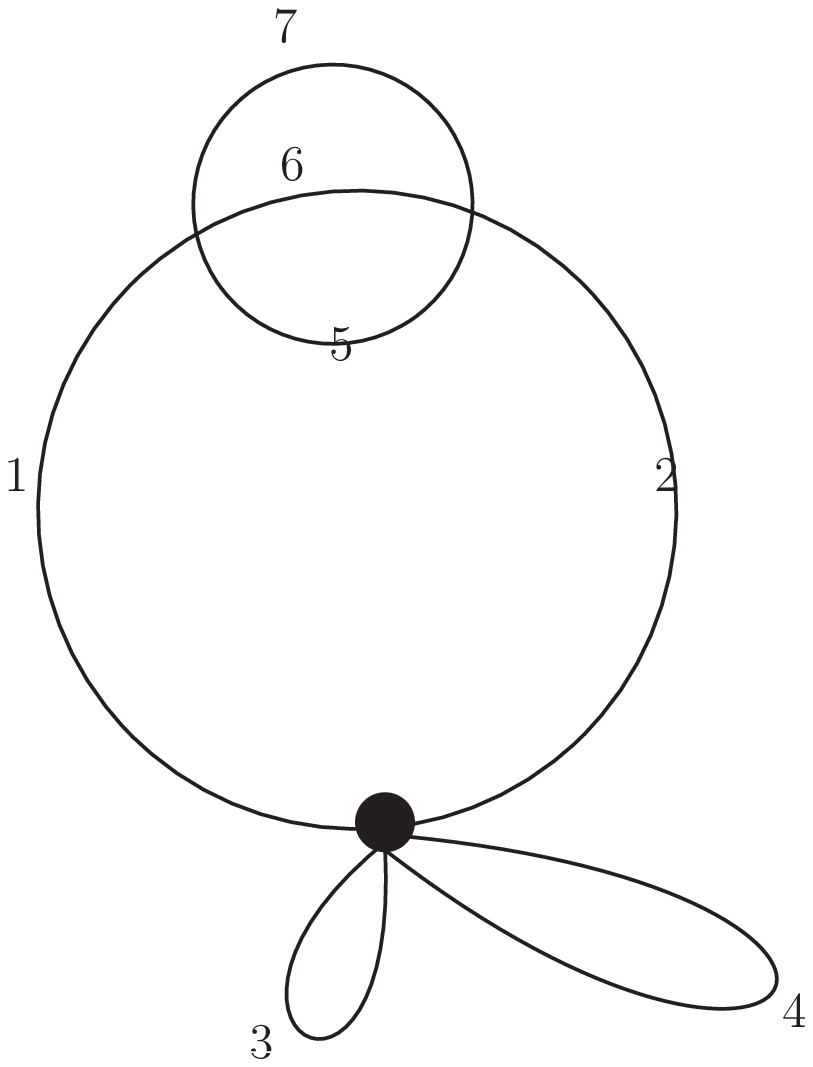}}\;}
\def\tltpsq{\;\raisebox{-10mm}{\epsfysize=24mm\epsfbox{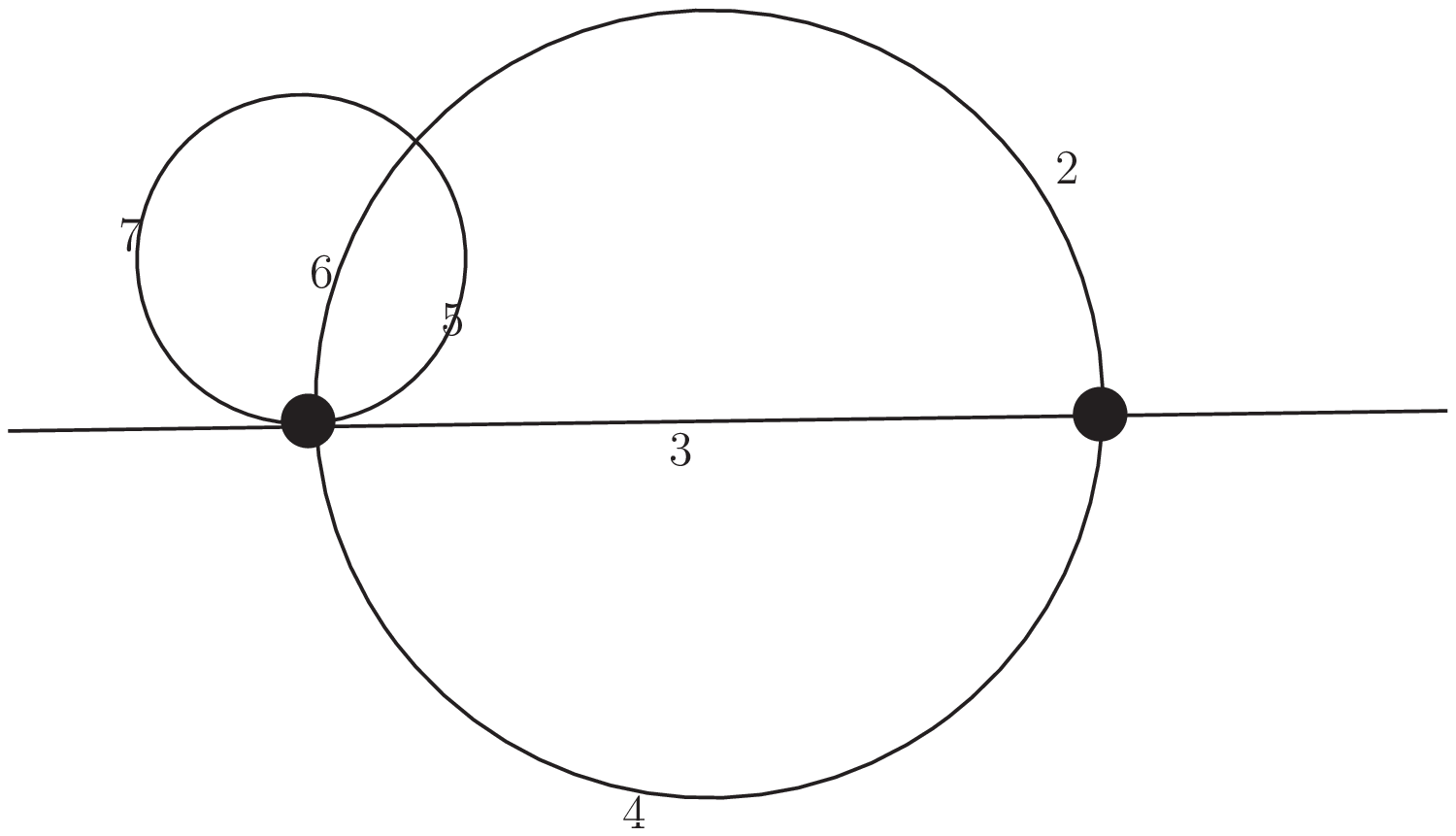}}\;}
\def\tltpsqdt{\;\raisebox{-10mm}{\epsfysize=24mm\epsfbox{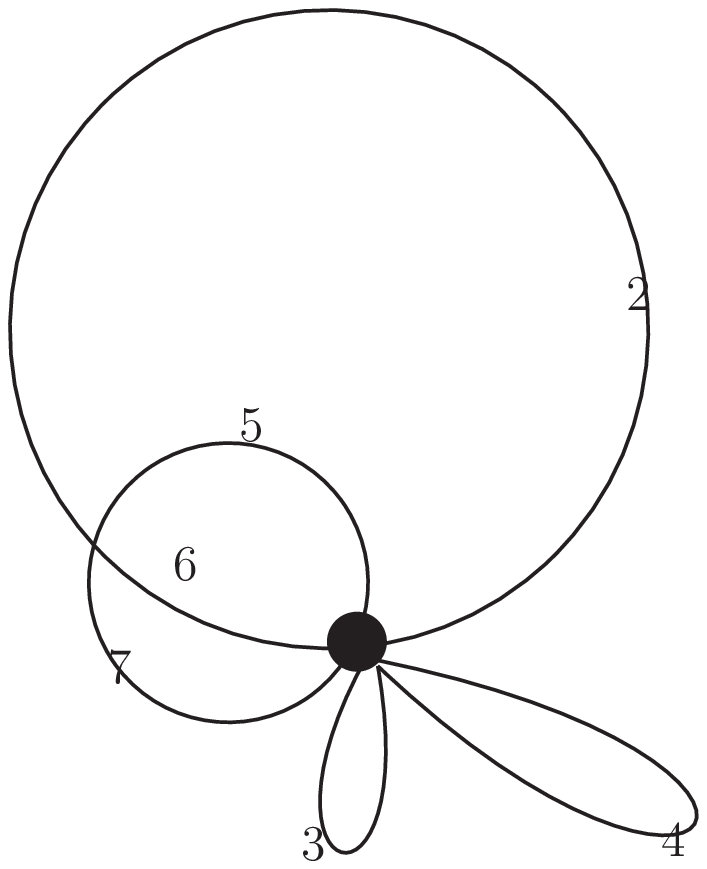}}\;}
\def\oltpdot{\;\raisebox{-10mm}{\epsfysize=24mm\epsfbox{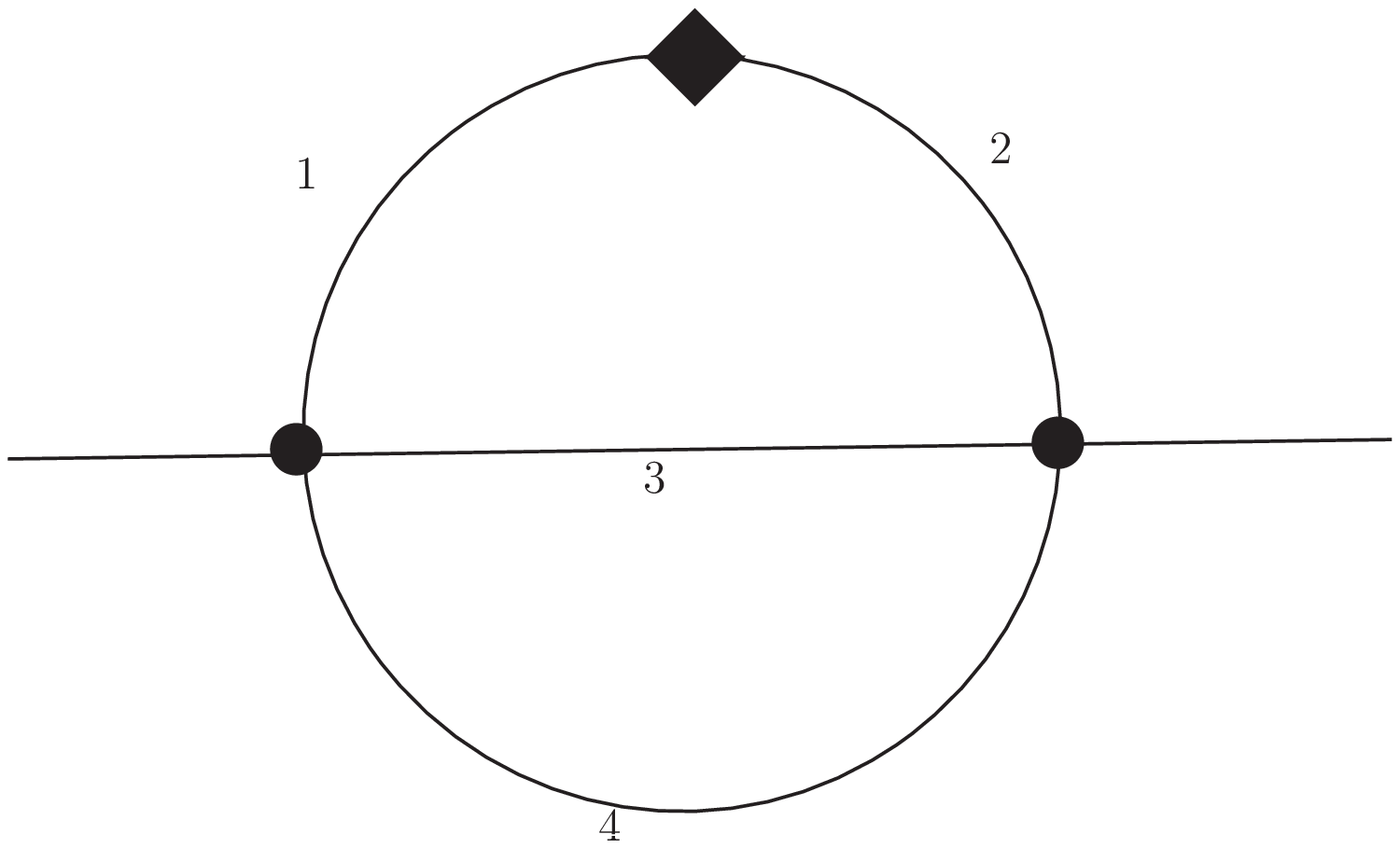}}\;}
\def\oltpsq{\;\raisebox{-10mm}{\epsfysize=24mm\epsfbox{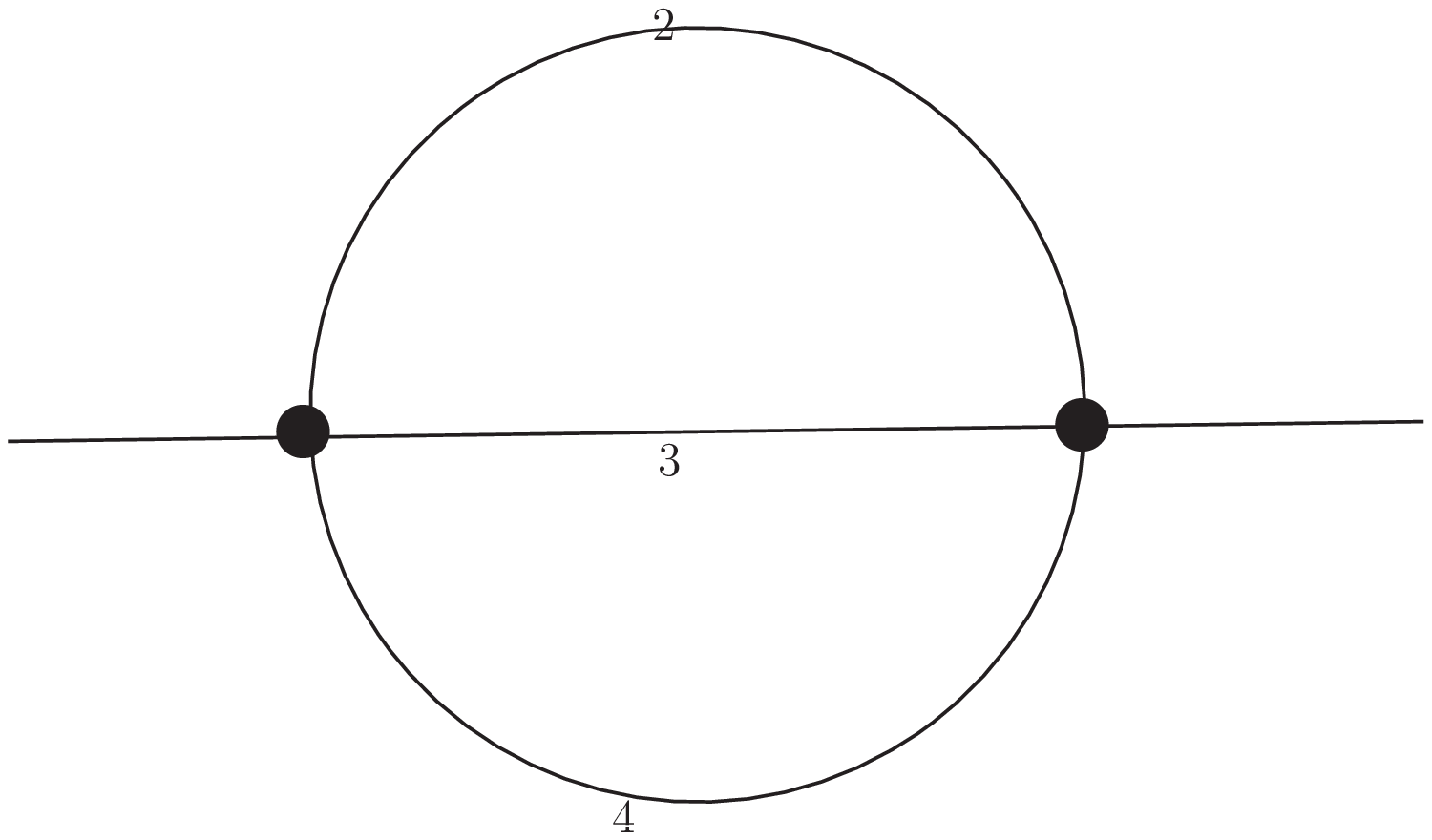}}\;}
\def\suboltp{\;\raisebox{-10mm}{\epsfysize=24mm\epsfbox{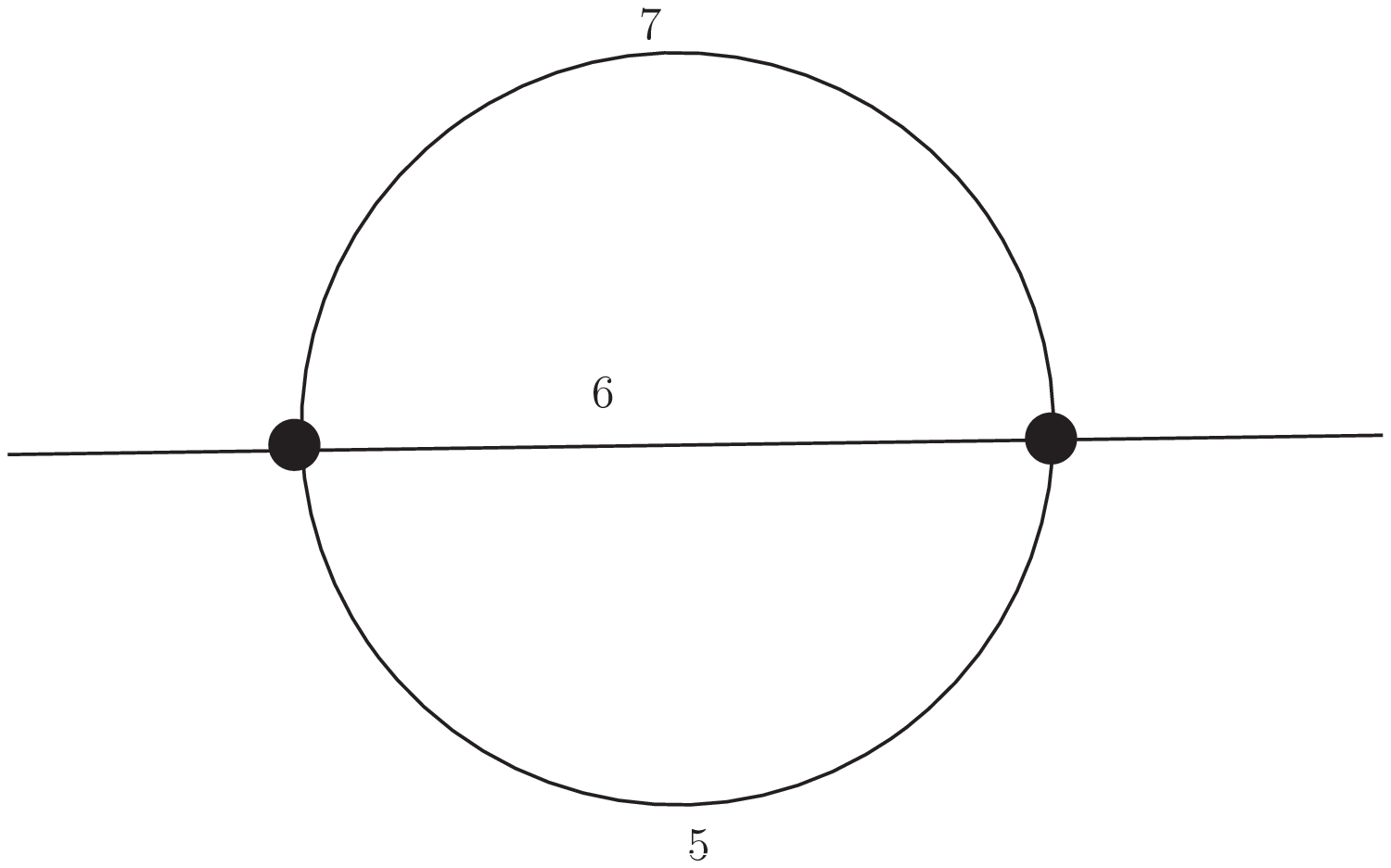}}\;}
\def\suboltpdt{\;\raisebox{-10mm}{\epsfysize=24mm\epsfbox{suboltpdt.eps}}\;}

\def\wts{\;\raisebox{-5mm}{\epsfysize=10mm\epsfbox{wts.eps}}\;}
\def\wtts{\;\raisebox{-5mm}{\epsfysize=10mm\epsfbox{wtts.eps}}\;}
\def\subwts{\;\raisebox{-5mm}{\epsfysize=10mm\epsfbox{subwts.eps}}\;}
\def\subwtts{\;\raisebox{-5mm}{\epsfysize=10mm\epsfbox{subwtts.eps}}\;}
\def\wtssmall{\;\raisebox{-1mm}{\epsfysize=2mm\epsfbox{wts.eps}}\;}
\def\wttssmall{\;\raisebox{-1mm}{\epsfysize=2mm\epsfbox{wtts.eps}}\;}
\def\subwtssmall{\;\raisebox{-1mm}{\epsfysize=2mm\epsfbox{subwts.eps}}\;}
\def\subwttssmall{\;\raisebox{-1mm}{\epsfysize=2mm\epsfbox{subwtts.eps}}\;}

\def\olh{\;\raisebox{-0mm}{\epsfysize=2mm\epsfbox{olh.eps}}\;}
\def\olholh{\;\raisebox{-0mm}{\epsfysize=2mm\epsfbox{olholh.eps}}\;}
\def\olv{\;\raisebox{-1mm}{\epsfysize=4mm\epsfbox{olv.eps}}\;}
\def\tlr{\;\raisebox{-1mm}{\epsfysize=4mm\epsfbox{tlr.eps}}\;}
\def\tll{\;\raisebox{-1mm}{\epsfysize=4mm\epsfbox{tll.eps}}\;}
\def\tlld{\;\raisebox{-8mm}{\epsfysize=20mm\epsfbox{tlld.eps}}\;}
\def\tlg{\;\raisebox{-1mm}{\epsfysize=4mm\epsfbox{tlg.eps}}\;}
\def\flg{\;\raisebox{-1mm}{\epsfysize=4mm\epsfbox{flg.eps}}\;}
\def\tlgo{\;\raisebox{-1mm}{\epsfysize=4mm\epsfbox{tlgo.eps}}\;}
\def\tlgofnr{\;\raisebox{-2mm}{\epsfysize=6mm\epsfbox{tlgofnr.eps}}\;}
\def\tlgofnl{\;\raisebox{-2mm}{\epsfysize=6mm\epsfbox{tlgofnl.eps}}\;}

\def\One{\mathbb{I}}

\section{Introduction: The decomposition of $\mathrm{Spec}_{\mathrm{Feyn}}(H)$}
In this paper we analyse periods appearing in the renormalization of divergent Feynman graphs $\Gamma$.
Throughout, for any graph $\Gamma$, we let $\Gamma^{[1]}$ be the set of internal edges of $\Gamma$, and $\Gamma^{[0]}$ 
be the set of vertices of $\Gamma$.

For each vertex $v\in\Gamma^{[0]}$, we assign a four-momentum  $q(v)$; to each edge $e\in\Gamma^{[1]}$, we assign 
a mass $m_e$. 

Our study holds for quantum field theories quite generally, but we specialise in the following to the case of a scalar field theory with four-valent vertices in four dimensions of space-time, for concreteness.   

Amplitudes in quantum field theory can be written as a function of a chosen
energy variable $L=\ln(S/S_0)$.   
Here, $S$ sets the scale of the process  under consideration. We take $S$ to be a suitable linear combination 
of scalar products $q(v)\cdot q(w)$ of external momenta and squared masses $m_e^2$ such that $0<S\in\mathbb{R}$.
Dimensionless scattering angles $\Theta$ like $q(v)\cdot q(w)/S$ and $m_e^2/S$ 
are defined with respect to the choice of $S$. Throughout, we will assume that $S$ is chosen such that it only vanishes when 
all external momenta vanish. In these variables, amplitudes can be calculated as a perturbation expansion 
in terms of Feynman graphs $\Gamma$ as $\sum_\Gamma\mathbf{\Phi}^R(\Gamma)$. Here, the renormalzed Feynman rules $\mathbf{\Phi}^R$ are expressed in terms of such
angle and scale variables, and the graphs $\Gamma$ form a Hopf algebra $H$. For any choice of 
angle and scale variables, $\mathbf{\Phi}^R$ is in the group $\mathrm{Spec}(H)_{\mathbb{C}}$, and the restriction of this group to maps which originate from evaluation of graphs by Feynman rules defines a sub-group 
$G=\mathrm{Spec}_{\mathrm{Feyn}}(H)\subset\mathrm{Spec}(H)_{\mathbb{C}}$.

Such a chosen decomposition of the variables reflects itself then in a chosen decomposition of the group 
$G$ into two subgroups $G_{\mathrm{1-s}}$ and
$G_{\mathrm{fin}}$.  Elements $\mathbf{\Phi}\in G_{\mathrm{1-s}}$ are of the form 
\begin{equation}\mathbf{\Phi}(\Gamma)=\sum_{j=1}^{\mathrm{cor}(\Gamma)}p_j L^j,\end{equation}
where the coefficients $p_j$ are periods in the sense of algebraic geometry and are independent of the angles $\{\Theta\}$. The integer $\mathrm{cor}(\Gamma)$
is defined below, it basically counts how often divergent sub-graphs are nested in a graph.

We allow for renormalization conditions which are defined by kinematic constraints on Green-functions: 
we demand that such Green functions, regarded as functions of $S$ and $\{\Theta\}$, vanish (up to a specified order)
at a reference point (in $S,\{\Theta\}$-space) given by $S_0,\{\Theta_0\}$. 
We implement these constraints graph by graph.
Hence renormalized Green functions
as well as renormalized Feynman rules become functions of $S,S_0,\Theta,\Theta_0$. Here, $\Theta,\Theta_0$ stand
for the whole set of angles in the Feynman rules.

Elements $\mathbf{\Phi}_{\mathrm{fin}}\in G_{\mathrm{fin}}$
are of the form
\begin{equation}\mathbf{\Phi}_{\mathrm{fin}}(\Gamma)={c^\Gamma_0}(\Theta),\end{equation}
with ${c^\Gamma_0}(\Theta)$ an $L$-independent function of the angles. 

We hence obtain the decomposition of $G$ 
as a map $\mathbf{\Phi}^R\to (\mathbf{\Phi}_{\mathrm{fin}},\mathbf{\Phi}^R_{\mathrm{1-s}})$, 
which proceeds then by a twisted conjugation:
\begin{equation}
G\ni \mathbf{\Phi}^R(S,S_0,\Theta,\Theta_0)=\mathbf{\Phi}^{-1}_{\mathrm{fin}}(\Theta_0)\star\mathbf{\Phi}_{\mathrm{1-s}}^R(S,S_0)\star\mathbf{\Phi}_{\mathrm{fin}}(\Theta),
\end{equation}
with $\mathbf{\Phi}_{\mathrm{fin}}(\Theta_0),\mathbf{\Phi}_{\mathrm{fin}}(\Theta)\in G_{\mathrm{fin}}$ and $\mathbf{\Phi}^R_{\mathrm{1-s}}(S,S_0)\in G_{\mathrm{1-s}}$. The group law $\star$ and inversion ${}^{-1}$ are defined through the Hopf algebra underlying $G$.

In particular, an overall divergent Feynman graph with divergent sub-graphs 
will evaluate to
\begin{equation}\mathbf{\Phi}^R(\Gamma)=\sum_{j=0}^{\mathrm{cor}(\Gamma)}c_j^\Gamma(\Theta,\Theta_0) L^j,\label{gendiv}\end{equation}
while for an overall convergent graph
\begin{equation}\mathbf{\Phi}^R(\Gamma)=\sum_{j=0}^{\mathrm{cor}(\Gamma)-1}c_j^\Gamma(\Theta,\Theta_0) L^j,\label{genconv}\end{equation}
and the product structure manifests itself in particular in the facts that in Eq.(\ref{gendiv}),
$c^\Gamma_{\mathrm{cor}(\Gamma)}$ is in fact an angle-independent coefficient, 
while Eq.(\ref{genconv}) is non-leading in its $L$ expansion,
(it stops at  $c^\Gamma_{\mathrm{cor}(\Gamma)}-1$) with angle dependent coefficients throughout.

We proceed by first reviewing forests in Feynman graphs, which are the appropriate tool to label 
divergent sectors. They are most efficiently organized using the Hopf algebra structure of rooted trees.
Renormalized Feynman rules for a given Feynman graph $\Gamma$ evaluate the graph using a sum over all forests in the graph, 
evaluating divergent sub-graphs to renormalized values in each forest using suitable renormalization conditions. In a final step
the remaining overall divergence is subtracted using renormalization conditions on the whole graph.

Each forest corresponds to a sector in the initial Feynman integral which has a localized singularity along the corresponding sub-graphs. 
Signs in the sum over all forests are arranged such that in the total expression, the Feynman integral allows for integration in all sectors.
The underlying combinatorics corresponds then to a Hopf algebra.

We next introduce Feynman rules in parametric space. We put emphasis on giving
them as integrands for projective spaces. We discuss in detail subtractions for 
quadratic sub-divergences in accordance with BPHZ renormalization conditions for the propagator
(the 1PI self-energy is supposed to vanish 
to first order at $q^2=m^2$, with  $m^2$ being the physical mass). 

The resulting unrenormalized amplitudes, after suitable partial integrations, have strictly logarithmic singularities, 
overall as well as in proper sub-graphs. 

Inclusion-exclusion is then the mechanism which eliminates these logarithmic singularities by suitable subtractions along any singular sector, making use of the combinatorial properties of the underlying Hopf algebra structure. 

Logarithmic singularities are angle-independent.
The same combinatorial inclu\-sion-exclusion properties alluded to above allow us then to 
compensate singularities by using a different evaluation of graphs in which angle dependences have been eliminated,
by choosing enough masses $m_e$ and momenta $q(v)$ to be zero.
In particular, we choose an element $\bar{\mathbf{\Phi}}\in G_{\mathrm{1-s}}$ which maps a Feynman graph 
to the desired $L$-polynomial
\begin{equation}\bar{\mathbf{\Phi}}(\Gamma)=\sum_{j=1}^{\mathrm{cor}(\Gamma)}p_j L^j.\end{equation}
We define graphs $\bar{\Gamma}$ such that 
\begin{equation}\bar{\mathbf{\Phi}}(\Gamma)=\mathbf{\Phi}(\bar{\Gamma}).\end{equation}
Such graphs (dubbed 1-scale graphs) form a Hopf algebra $H_{\mathrm{1-s}}$ again.
It separates the periods completely from the angle-dependence of Feynman amplitudes 
on the level of graphs.  
 
We are then ready to decompose 
\begin{equation}\mathbf{\Phi}^R(\Gamma)=\sum_{j=0}^{\mathrm{cor}(\Gamma)}c_j(\Theta) L^j,\end{equation}
as announced above. This corresponds to defining a matrix $M_{ij}(\Gamma)$, where $i,j$ run over all forests
of a graph $\Gamma$. While this matrix defined below plays no prominent role in the conceptual developments discussed here, it will figure prominently in the transition of our approach to concrete algorithms.

For the Hopf algebra of 1-scale graphs, we give renormalization proofs and a derivation of the renormalization group from scratch, using mere combinatorial properties of graph polynomials and Hopf algebras and not allowing ourselves any further analytic input. 

The renormalized amplitudes of single-scale graphs are given by explicit integrals involving graph polynomials, and are motivic, i.e., periods of mixed Hodge structures of certain diagrams of hyper-surfaces.  In particular, this leads to a rigorous definition of the weight of renormalized amplitudes, and opens the possibility of a qualitative study of 
renormalized amplitudes from the point of view of algebraic geometry.

Summarizing, we provide an approach which delivers 
a renormalized Feynman graph as a well-defined sum of projective integrals in parametric space in a form
suitable to an algorithmic determination of its periodic content, on combining our results with the studies started in \cite{BrCMP,BrFeyn}.

\section{Forests}
It is useful to collect some notation first.
\subsection{Definitions}
For a 1PI superficially divergent graph $\Gamma$, we define a  forest $f$ to be a collection of 1PI proper superficially divergent
sub-graphs $\Gamma_i\subset\Gamma$, $i\in\mathcal{I}^f_\Gamma$ for some index set $\mathcal{I}^f_\Gamma$,
such that either they are disjoint: $\Gamma_i\cap\Gamma_j=\emptyset$, or contained in each other:
$\Gamma_i\subset\Gamma_j$ or $\Gamma_j\subset\Gamma_i$. In particular, a forest $f$ is a product of 1PI graphs:
$f=\prod_i\gamma_i$. By $\Gamma/f$ we denote the graph obtained by contracting the graphs $\gamma_i$ to points in $\Gamma$.

For a 1PI superficially divergent graph $\Gamma$, we define a maximal forest to be a forest and
furthermore, we demand that $p^f_\Gamma:=\Gamma/[\cup_{i\in \mathcal{I}^f_\Gamma}\Gamma_i]$ has no divergent sub-graph. 
We hence call the index set $\mathcal{I}^f_\Gamma$ maximal for $\Gamma$. 

For $f\ni \Gamma_i\subset\Gamma$, each index set $\mathcal{I}^f_\Gamma$ defines an index set $\mathcal{I}_i^f$ of
all forests strictly contained in $\Gamma_i$, i.e.\ such that 
$\Gamma_j\subset\Gamma_i$ $\forall j\in\mathcal{I}_i^f$. 

We call a forest complete, if $\mathcal{I}^f_\Gamma$ is maximal for $\Gamma$ and $\mathcal{I}_i^f$ 
maximal for each proper 1PI superficially divergent
sub-graph $\Gamma_i$  of $\Gamma$.

Each finite graph $\Gamma$ has a finite number $|C(\Gamma)|$ of complete forests. Here, the set of all
such complete forests is denoted by $C(\Gamma)$. Examples are below.

Such complete forests are in one-to-one correspondence with decorated rooted trees 
where the set of decorations $p_v$ (at vertices $v$) is given by 1PI superficially divergent  graphs free of sub-divergences,
\begin{equation}
 p_v:=\Gamma_i/\cup_{j\in \mathcal{I}^f_i}\Gamma_j.
\end{equation}

From now on, we write in obvious abuse of notation  $T\in C(\Gamma)$ for such a decorated rooted tree.

Note that the power set $P_E(T)$  of edges $E(T)$ of such a tree $T$ gives all possible cuts $c$ at the tree $T$: any $c\in P_E(T)$ 
defines, for a connected tree $T$, a union of connected components $T-c$ obtained by removing the edges $c$, with $R^c(T)$ the unique component containing
the root of $T$, and $P^c(T)$ the union of the remaining components. We have $\cup_{T\in C(\Gamma)} 2^{E(T)}$ as the set of all cuts available altogether, and denote by $(c,T)$ an element of this set.

$P^c(T)$ corresponds to a forest of 
$\Gamma$, with each of its connected components corresponding to a graph $\gamma_i$ in $f=\prod_i\gamma_i$
\footnote{$c\to P^c(T)$ furnishes a surjective map $F$ from $\cup_{T\in C(\Gamma)} 2^{E(T)}$ to the forests $f$ of $\Gamma$.
The set of pre-images $f_c=F^{-1}(f)$ gives a partition of  $\cup_{T\in C(\Gamma)} 2^{E(T)}$ which is a bijection
with the forests of $\Gamma$.}.

After having determined the set $C(\Gamma)$, all (non-empty) 
forests of $\Gamma$ are in bijection with (non-empty) sets $f_c$ of some cuts $(c,T)$. We describe them as follows.

If we let $|T|$ be the number of vertices of a tree $T$, a tree $T$ allows for $2^{|T|}$ cuts including the empty one.
For a graph $\Gamma$, this gives us $\sum_{T\in C(\Gamma)}2^{|T|}$ cuts $c$. By construction, a forest $f=\prod_i \gamma_i$ of a graph $\Gamma$ assigns to a graph the product $(\Gamma/f)\prod_i\gamma_i$. We have $|C(\Gamma/f)|\prod_i |C(\gamma_i)|$
cuts $c_i$ corresponding to the same forest, and let $f_c$ be the set of cuts $(c,T)$ which correspond to the same forest $f$. 

We often notate a cut $(c,T)$ using $T$ with marked edges, and notate the union $f_c$ then as a sum of such trees.
We have $\sum_{T\in C(\Gamma)}2^{|T|}=\sum_f |f_c|$ by construction\footnote{The cardinality $|f_c|$ of $f_c$ gives the number of sectors in $f$ and $\Gamma/f$.}.

We can hence label the forests of a graph $\Gamma$ by subsets of edges on some of the trees $T\in C(\Gamma)$:
\begin{equation}
 \sum_f = \sum_{T\in C(\Gamma)}\sum_{c\in f_c}=\sum_{T\in C(\Gamma)}\sum_{c\in P_E(T)}.
\end{equation}
Furthermore, we identify the empty forest (of $\Gamma$) with $\Gamma$ and write $\sum_f^\emptyset$ when we include it in the sum. If we allow forests also to contain $\Gamma$ itself, we double the sum of forests and write
$\sum_{[f]}$ for the corresponding sum.
\begin{example}\label{example1}
Consider the graph
\begin{equation}\Gamma=\oldc.\label{firstexa}\end{equation}
It has subgraphs $$\gamma_{34}=\otf,\gamma_l=\dcl,$$ and $$\gamma_r=\dcr.$$
We have $\gamma_{34}\subset\gamma_l$ and $\gamma_{34}\subset\gamma_r$.
Its forests are
\begin{eqnarray}
f_0 & = & \emptyset,|{f_\emptyset}_c|=2,\label{f0}\\
f_1 & = & \gamma_{34},|{f_1}_c|=2,\label{f1}\\
f_2 & = & \gamma_l,|{f_2}_c|=1,\label{f2}\\
f_3 & = & \gamma_r,|{f_3}_c|=1,\label{f3}\\
f_4 & = & \gamma_{34},\gamma_l,|{f_4}_c|=1,\label{f4}\\
f_5 & = & \gamma_{34},\gamma_r,|{f_5}_c|=1\label{f5}.
\end{eqnarray}
The forest $f_1$ is neither maximal nor complete. The forests $f_2$ and $f_3$ are both maximal, but incomplete.
The forests $f_4,f_5$ are both complete.
Hence, $C(\Gamma)=\{f_4,f_5\}$ is a two-element set. 

If we add the graph $\Gamma$ itself to the forests, we double the set, 
for each  $f_i$, we now have $f_i$ and $f_i\cup\Gamma$.
 
The decorated trees $T_4,T_5$ are complete forests. They are given as:
\begin{equation}
\tfour = T_4 \leftrightarrow \foroldcl,
\end{equation}
and
\begin{equation}
\tthree = T_5\leftrightarrow \foroldcr.
\end{equation}

We can find the decorations by shrinking all graphs in the subforests of a given forest:
we assign to the two maximal complete forests two rooted trees, the root corresponding to the vertex at the outermost box
\footnote{Also, we can describe those trees as $T_4:=(((3,4),1,2),5,6)$ and $T_5:=(((3,4),5,6),1,2)$,
where we indicate the tree structure by bracket configurations and decorations by the edge labels of the corresponding primitive graphs.
If we notate forests in trees by square brackets $[\ldots ]$ corresponding to cuts, then the correspondences are as follows:
$f_0 \leftrightarrow (((3,4),1,2),5,6) + (((3,4),5,6),1,2)$,
$f_1 \leftrightarrow (([3,4],1,2),5,6) + (([3,4],5,6),1,2)$,
$f_2 \leftrightarrow ([(3,4),1,2],5,6)$,
$f_3 \leftrightarrow ([(3,4),5,6],1,2)$,
$f_4 \leftrightarrow ([[3,4],1,2],5,6)$,
$f_5 \leftrightarrow ([[3,4],5,6],1,2)$.
The forests corresponding to $f_i\cup\Gamma$  are then notated by replacing the outermost $(\ldots)$ pair of brackets by $[\ldots]$}.
\end{example}

\subsection{Hopf structures}
We summarize the relevant Hopf algebra structures \cite{BergbKr} as follows.
\subsubsection{For trees}
For the free commutative algebra of decorated rooted trees $H_{\mathrm{Dec}}$ (typically, decorations are provided by either
the graphs $p_v$ or their set of edge labels) we have a co-product $\Delta_T$ defined by
\begin{equation}
 \Delta_T\circ B_+^{p}(\cdot)=B_+^{p}(\cdot)\otimes 1+(\mathrm{id}\otimes B_+^{p})\Delta_T,
\end{equation}
and an antipode given by
\begin{equation}
 S(T)=-T-\sum_{c\in P_E(T)}(-1)^{|c|}P^c(T)R^c(T),
\end{equation}
where $R^c(T)$ contains the root with decoration $p$ and $P^c(T)$ are the other trees in $T-c$.
$B_+^p$ are Hochschild 1-cocycles, see \cite{BergbKr} for details.

We let $\mathrm{shad}:H_{\mathrm{Dec}}\to H_\emptyset$ be the map which forgets decorations.
\subsubsection{For graphs}
For graphs we have a Hopf algebra of graphs $H_\Gamma$ with co-product 
\begin{equation} \label{DirkDeltadef}
 \Delta_G(\Gamma)=\Gamma\otimes 1+ 1\otimes\Gamma+\sum_{\gamma\subset\Gamma}\gamma\otimes\Gamma/\gamma,
\end{equation}
where $\gamma$ is a disjoint union $\gamma=\cup_i\gamma_i$ of 1PI graphs which are superficially divergent.

The antipode is given by
\begin{equation}
 S(\Gamma)=-\Gamma-\sum_f (-1)^{|f|}\gamma_f\otimes\Gamma/\gamma_f.
\end{equation}
We have a Hopf algebra homomorphism $\rho:H_\Gamma\to H_{\mathrm{Dec}}$ given by  $\rho(\Gamma)=\sum_{T\in C(\Gamma)}T$ and with 
\begin{equation}
 [\rho\otimes\rho]\Delta_G=\Delta_T\rho.
\end{equation}
For any Hopf algebra $H\in (H_{\mathrm{Dec}},H_\Gamma)$ , we let $P$ be the projection into the augmentation ideal. 
We set $\sigma:=S\star P\equiv m_H(S\otimes P)\Delta $, which vanishes on scalars $\mathbb{Q}\One$.
For the Hopf algebra of graphs, one has $\sigma(\Gamma)=\sum_f^\emptyset (-1)^{|f|}f(\Gamma/f)$. 
We need a well-known lemma:
\begin{lem}
Let $\mathrm{id}_{\mathrm{Aug}}$ be the identity map $\mathrm{Aug}\to \mathrm{Aug}$ in the augmentation ideal. We have 
\begin{equation}
 \mathrm{id}_{\mathrm{Aug}}=\sum_{j=1}^\infty \sigma^{\star j}=:\sum_{j=1}^\infty \sigma_j.\label{empty}\end{equation}
\end{lem}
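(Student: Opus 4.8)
The plan is to work entirely inside the convolution algebra $(\mathrm{End}(H),\star)$, whose unit is the idempotent $e:=u\circ\varepsilon$ projecting onto the scalars $\Q\One$. Since $H$ is connected and graded (both $H_{\mathrm{Dec}}$ and $H_\Gamma$ are), the identity endomorphism splits as $\mathrm{id}=e+P$, where $P$ is exactly the projection into the augmentation ideal used to define $\sigma$. First I would rewrite $\sigma$ so as to expose its relation to the antipode: applying the defining identity $S\star\mathrm{id}=e$ to the decomposition $\mathrm{id}=e+P$ gives $S\star e+S\star P=e$, and since $e$ is the convolution unit this reads $S+\sigma=e$, i.e.\ $\sigma=e-S$. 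Thus proving \eqref{empty} amounts to summing a geometric series for $(e-\sigma)^{\star-1}=S^{\star-1}$.

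Second, I would establish that the infinite sum on the right of \eqref{empty} is well defined, which is the genuine content of the lemma. Here the key step is local nilpotency of $\sigma$ with respect to the grading. Because $S$, $P$, $\Delta$ and the product all respect the grading, $\sigma$ maps $H_n$ to $H_n$; and because $P$ kills $H_0=\Q\One$, one has $\sigma|_{H_0}=0$. Writing $\sigma^{\star j}=m^{(j-1)}\circ\sigma^{\otimes j}\circ\Delta^{(j-1)}$ with the iterated coproduct, and using that every surviving tensor factor must lie in positive degree (else $\sigma$ annihilates it), the degrees of the $j$ factors are positive integers summing to $n$; hence $\sigma^{\star j}$ vanishes on $H_n$ whenever $j>n$. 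Consequently $\sum_{j\ge 1}\sigma^{\star j}$ restricts to a finite sum on each graded piece and defines an endomorphism of $H$.

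Third, I would evaluate the series by a telescoping computation. Setting $\Sigma:=\sum_{j\ge0}\sigma^{\star j}$ (again locally finite, with $\sigma^{\star0}=e$), the identity $(e-\sigma)\star\Sigma=\sum_{j\ge0}\bigl(\sigma^{\star j}-\sigma^{\star(j+1)}\bigr)=e$ holds because the tail $\sigma^{\star(N+1)}$ vanishes on $H_n$ for $N\ge n$; the same computation gives $\Sigma\star(e-\sigma)=e$. Since $e-\sigma=S$ and the antipode is by definition the $\star$-inverse of $\mathrm{id}$, uniqueness of inverses yields $\Sigma=S^{\star-1}=\mathrm{id}$. Subtracting the $j=0$ term gives $\sum_{j\ge1}\sigma^{\star j}=\mathrm{id}-e=P$, which restricted to $\mathrm{Aug}$ is precisely $\mathrm{id}_{\mathrm{Aug}}$, as claimed.

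I expect the only real obstacle to be the second step: one must use the connected graded structure in an essential way to make sense of the infinite convolution sum, since the telescoping of the third step is legitimate only once local nilpotency of $\sigma$ guarantees that the series terminates on each $H_n$. The algebraic identities $\sigma=e-S$ and $\Sigma=S^{\star-1}=\mathrm{id}$ are formal and hold in any Hopf algebra; it is the grading that upgrades the formal expansion of $(e-\sigma)^{\star-1}$ to an honest, convergent expression. An essentially equivalent route, which I would mention as an alternative, is a direct induction on the degree $n$, verifying \eqref{empty} on $H_n$ by isolating the top term via $\sigma=\mathrm{id}-e-(\text{lower order})$ and invoking the inductive hypothesis on the strictly lower-degree contributions produced by the reduced coproduct $\tilde\Delta$.
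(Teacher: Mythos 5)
Your proof is correct. The paper itself gives no argument for this lemma — it is invoked as ``well-known'', with only the remark that the sums terminate on elements of finite degree — so there is no proof of record to compare against; what you have written is the standard argument that justifies the claim. The two pivots are both sound: the identity $\sigma = S\star P = S\star(\mathrm{id}-e) = e - S$, and the local nilpotency of $\sigma$ coming from connectedness of the grading (each tensor factor surviving $\sigma^{\otimes j}$ must have positive degree, so $\sigma^{\star j}$ kills $H_n$ for $j>n$), which is exactly the content of the paper's one-line remark about termination. With those in hand the telescoping identity $(e-\sigma)\star\Sigma = \Sigma\star(e-\sigma)=e$ and uniqueness of convolution inverses give $\Sigma = S^{\star-1}=\mathrm{id}$, hence $\sum_{j\geq 1}\sigma^{\star j}=\mathrm{id}-e=P$, which is $\mathrm{id}_{\mathrm{Aug}}$ on the augmentation ideal. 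No gaps.
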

Note that the sums terminate when applied to any element of finite degree in the Hopf algebra.
Hopf algebras $H$  allow for a co-radical filtration 
\begin{equation}
\mathbb{Q}\One=H^{(0)}\subset H^{(1)}\cdots\subset H^{(n)}\subset
\cdots\subset H.
\end{equation}
The maps $\sigma_j$ vanish on elements in the Hopf algebra which are in $H^{(k)}$, $k<j$, and the coradical filtration is defined by the kernels of $\sigma_j$: elements in $H^{(k)}$ vanish when acted upon by $\sigma_j,\forall j>k$.

Now any map $\sigma_j$ above corresponds to a finite sum over forests $\bar{\sigma_j}$. 
As the empty forest corresponds to the identity map of a graph $\Gamma$, we can write for forests
\begin{equation}
 \emptyset=\sum_{j=1}^\infty \bar{\sigma_{j}}.
\end{equation}
The following gives an example for the maps $\bar{\sigma}_j$, acting on the graph $\Gamma$ of Example \ref{example1}.
\begin{example}
\begin{eqnarray}
\bar{\sigma_1}: & & \soldc-\soldcbm-\soldcbl\nonumber\\ & & -\soldcbr+\soldcbml+\soldcbmr,\\
\bar{\sigma_2}: & & \soldcbm+\soldcbl+\soldcbr\nonumber\\ & & -2\soldcbml-2\soldcbmr,\label{multtwo}\\
\bar{\sigma_3}: & & \soldcbml+\soldcbmr.
\end{eqnarray}
Note the multiplicity two generated in two terms in $\bar{\sigma_2}=\bar{\sigma_1}\star\bar{\sigma_1}$
in line (\ref{multtwo}),
coming from the fact that the subgraphs $\gamma_2,\gamma_3$ and the cograph $\Gamma/\gamma_1$ are acted upon
by $\bar{\sigma_1}$ with the same results.
\end{example}
\subsection{(Co-)ideals}
For any decorated rooted trees $T_1,T_2$ we say that $T_1$ is equivalent to $T_2$, $T_1\sim T_2$, if the trees agree as undecorated trees, $\mathrm{shad}(T_1)=\mathrm{shad}(T_2)$,
and at each vertex $v$, the decorations are by primitive graphs $p_1(v),p_2(v)$ which differ only by the the choice of external momenta $q(v)$
at vertices of $p_i$ and masses $m_e$ at internal edges of $p_i$. We say that two graphs are equivalent, $\Gamma_1\sim \Gamma_2$, if $C(\Gamma_1)$ and $C(\Gamma_2)$ 
are two sets of trees which are pairwise equivalent, and similar if the $\Gamma_i$ are sums of graphs.
In particular, if $\Gamma=\sum_i\gamma_i$ say, $C(\Gamma)=\cup_iC(\Gamma_i)$.
\subsection{Notation for forests}
Finally, we denote a graph with forests generated by a particular $T\in C(\Gamma)$ as follows.
For a graph with edges labelled $1,\ldots,m$ say, for each $T\in C(\Gamma)$ 
we can label the vertices $v\in T$ by those edges of $\Gamma$
which correspond to $p_v$. Note that this loses the information how and by which edges a decoration $p_v$ was connected to $p_w$, where $w$ is the vertex above (closer to the root than) the vertex $v$. This ambiguity will soon give us the freedom to define the desired 1-scale structures.
\section{Derivation of renormalized Feynman rules in parametric space}
We turn to the derivation of Feynman rules.
In our Hopf algebra $H_\Gamma$, we have graphs $\Gamma$ with labelled edges $e\in\Gamma^{[1]}$. To a graph $\Gamma$,
we will assign forms $\Phi_\Gamma$ which depend on the edge labels $A_e$,
the squared masses $m_e^2$, and the momenta $q(v)$, $v\in\Gamma^{[0]}$. Physicists may wish to consider these external momenta $q(v)$ as external edges, with a splitting as in say $q(v)=q_1+q_2$ corresponding to two external edges at $v$, if so desired (for example to achieve homogeneity in the valence of vertices). 

We assume that for a product of graphs $\Gamma_1\Gamma_2$, labels are not repeated. The forms $\Phi_\Gamma$ have the structure
$\Phi_\Gamma=f_\Gamma(\{A_e\})\Omega_\Gamma$, with $f_\Gamma(\{A_e\})$ a function of all the edge variables and $\Omega_\Gamma$ a standard form, see below.
With unrepeated labels, $\Phi_{\Gamma_1\Gamma_2}=f_{\Gamma_1}f_{\Gamma_2}\Omega_{\Gamma_1\cup\Gamma_2}$.

Renormalized Feynman rules make use of the Hopf algebra $H_\Gamma$ to construct a linear combination of forms $\Phi^R_\Gamma$
such that it can be integrated against positive real projective  $\mathbb{P}^{|\Gamma^{[1]}|-1}$-space.
We write $\mathbf{\Phi}^R(\Gamma)\in G=\mathrm{Spec}_{\mathrm{Feyn}}(H)$ for the resulting integral.  
\subsection{Schwinger parametrization and the exponential integral}
We first define the two graph polynomials $\psi,\varphi$. Both are configuration polynomials \cite{KrBl}. 
We define them here though using spanning trees and spanning forests. 
We have (for a connected graph $\Gamma$)
\begin{equation}
 \psi_\Gamma:=\sum_{T}\prod_{e\not\in T} A_e,
\end{equation}
for spanning trees $T$ and edges $e$ of $\Gamma$.
Furthermore, we let $q(v)$ be the external momentum entering a vertex $v\in\Gamma$ (it can be zero),
and for a subset of vertices $X\subset\Gamma$, we let $Q(X)=\sum_{v\in X}q(v)$.
Then,
\begin{equation}
\varphi_\Gamma:=\sum_{T_1\cup T_2}Q(T_1)\cdot Q(T_2)\prod_{e\not\in T_1\cup T_2}A_e,
\end{equation}
where $T_1\cup T_2$ is a spanning two-forest.
Note: $Q(T_1)=-Q(T_2)$, $Q(T_1)^2=Q(T_2)^2=-Q(T_1)\cdot Q(T_2)$. 
Note that if $\Gamma$ has only two distinct vertices $v_1,v_2$ say at which external momenta $q,-q$ enter 
(we call such a graph a two-point graph) we can write
$\varphi_\Gamma=-q^2 \psi_{\Gamma^\bullet}$, where $\Gamma^\bullet$ is the graph obtained from $\Gamma$ by identifying the two vertices $v_1,v_2$. We extend these definitions to products of graphs as follows.
For $\gamma=\prod_i\gamma_i$,
\begin{equation}
\psi_\gamma=\prod_i\psi_{\gamma_i},\,\varphi_\gamma=\sum_i\left(\varphi_{\gamma_i}\prod_{j\not= i}\psi_{\gamma_j}\right).
\end{equation}
For 1-scale graphs $\gamma$, $\varphi_\gamma$ becomes the circular join introduced below, see Eq.(\ref{PsiGcirc}).

Define $Q_{vw}:=q(v)\cdot q(w)$, let $S:=\sum_{v,w\in \Gamma^{(0)}}c_{vw}Q_{vw}$
a real ($c_{vw}\in\mathbb{R}$) linear combination of scalar products $Q_{vw}$ which vanishes only when all external momenta $q(v)$ vanish.
We say that $S$ is in general kinematic position. Let $\Theta_{vw}:=Q_{vw}/S$ and $\Theta_e:=m_e^2/S$.
\begin{equation}
 \varphi_\Gamma(\Theta):=\frac{\varphi_\Gamma}{S},\; \phi_\Gamma(S,\Theta):=S\phi_\Gamma(\Theta),\, \phi_\Gamma(\Theta):=\varphi_\Gamma(\Theta)
 +\psi_\Gamma \left(\sum_{e}A_e \Theta_e\right). 
\end{equation} 
We usually write $\phi_\Gamma\equiv\phi_\Gamma(S,\Theta)$ in the decomposed form (and in slight abuse of notation) as $\phi_\Gamma=S\phi_\Gamma(\Theta)$.
Extension to products is defined as before.

For a two-point graph $\Gamma$, we define
\begin{equation}
 \varphi_\Gamma =-q^2 \psi_{\Gamma^\bullet},\; \phi_\Gamma(\Theta):=-\psi_{\Gamma^\bullet}
 +\psi_\Gamma \left(\sum_{e}A_e \frac{m_e^2}{q^2}\right). 
\end{equation}
For such a two-point graph $\gamma$, we also define
\begin{equation}
 \bar{\psi}_{\gamma^\bullet}:=\left(-\psi_{\Gamma^\bullet}
 +\psi_\Gamma \left(\sum_{e}A_e \frac{m_e^2}{m^2}\right)\right), \label{defpsibar}
\end{equation}
with $m^2$ fixing the scale at which a two-point graph is subtracted.

We have for any $\gamma\subset\Gamma$, with $\gamma=\cup_i\gamma_i$, $\psi_\gamma=\prod_i\psi_{\gamma_i}$,
\begin{prop}
 \begin{eqnarray}
\psi_\Gamma & = & \psi_{\Gamma/\gamma}\psi_\gamma+R^\Gamma_\gamma, |R^\Gamma_\gamma|_\gamma=|\psi(\gamma)|_\gamma+1,\label{cher1}\\
\phi_\Gamma(\Theta) & = & \phi_{\Gamma/\gamma}(\Theta)\psi_\gamma
+\bar{R}^\Gamma_\gamma(\Theta), |\bar{R}^\Gamma_\gamma(\Theta)|_\gamma\geq |\psi(\gamma)|_\gamma+1,\label{cher2}
   \end{eqnarray}   
and $|\phi_\Gamma|=|\psi_\Gamma|+1$, and $|U|_V$ is the degree of $U$ in the edge variables of $V$, and $|U|=|U|_U$.
\end{prop}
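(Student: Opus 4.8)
The plan is to reduce all three assertions to the classical behaviour of the Symanzik polynomials under the contraction $\Gamma\mapsto\Gamma/\gamma$, read off directly from their spanning-tree and spanning-forest expansions. Throughout I regard $\psi_\Gamma$, $\varphi_\Gamma$ and $\phi_\Gamma(\Theta)$ as polynomials in the full family of variables $\{A_e\}$ and grade them by their degree $|\cdot|_\gamma$ in the sub-family $A_\gamma=(A_e)_{e\in\gamma}$; the remainders $R^\Gamma_\gamma$ and $\bar R^\Gamma_\gamma(\Theta)$ will be exactly the parts of strictly higher $A_\gamma$-degree, so that $|R^\Gamma_\gamma|_\gamma$ records the lowest $A_\gamma$-degree occurring in the remainder. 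The last identity $|\phi_\Gamma|=|\psi_\Gamma|+1$ is immediate from this bookkeeping: a spanning tree of connected $\Gamma$ has $|\Gamma^{[0]}|-1$ edges, so $\psi_\Gamma$ is homogeneous of degree $h_1(\Gamma)=|\Gamma^{[1]}|-|\Gamma^{[0]}|+1$, whereas $\varphi_\Gamma$ sums over spanning two-forests whose complements carry one more edge, hence degree $h_1(\Gamma)+1$; since $\psi_\Gamma(\sum_e A_e\Theta_e)$ is likewise homogeneous of degree $h_1(\Gamma)+1$, so is $\phi_\Gamma(\Theta)=\varphi_\Gamma(\Theta)+\psi_\Gamma(\sum_e A_e\Theta_e)$.

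For (\ref{cher1}) I would partition the spanning trees $T$ of $\Gamma$ according to their restriction $T\cap\gamma$ to the edges of $\gamma$. For any spanning tree $T$, the set $T\cap\gamma$ is a spanning forest of $\gamma$ using at most $|\gamma^{[0]}|-c(\gamma)$ edges, with equality exactly when it restricts to a spanning tree of each component $\gamma_i$. As the $A_\gamma$-degree of the monomial $\prod_{e\notin T}A_e$ equals the number of $\gamma$-edges outside $T$, it is minimised, with value $\sum_i h_1(\gamma_i)=|\psi_\gamma|_\gamma$, precisely on those $T$ for which $T\cap\gamma$ is a spanning tree of $\gamma$. The combinatorial core is the bijection $T\mapsto(T\cap\gamma,\,T/\gamma)$ between such trees and pairs (spanning tree of $\gamma$, spanning tree of $\Gamma/\gamma$), under which the complement splits as the disjoint union of the $\gamma$-complement and the $(\Gamma/\gamma)$-complement. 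Summing the corresponding monomials factors the lowest-degree part as $\psi_\gamma\,\psi_{\Gamma/\gamma}$, while every remaining tree has $A_\gamma$-degree at least $|\psi_\gamma|_\gamma+1$; since $\psi_\Gamma$ has unit, hence non-cancelling, coefficients, the trees with $T\cap\gamma$ a two-component forest show this valuation is attained exactly, giving $|R^\Gamma_\gamma|_\gamma=|\psi(\gamma)|_\gamma+1$.

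For (\ref{cher2}) I would treat the two summands of $\phi_\Gamma(\Theta)=\varphi_\Gamma(\Theta)+\psi_\Gamma(\sum_e A_e\Theta_e)$ separately and recombine. The same forest analysis applied to $\varphi_\Gamma=\sum_{T_1\cup T_2}Q(T_1)\cdot Q(T_2)\prod_{e\notin T_1\cup T_2}A_e$ shows that its lowest $A_\gamma$-degree contribution comes from the two-forests whose restriction to $\gamma$ is a full spanning tree of $\gamma$, i.e. those placing all of $\gamma$ inside one component; contraction then identifies these with the two-forests of $\Gamma/\gamma$ and factors the leading part as $\psi_\gamma\,\varphi_{\Gamma/\gamma}$. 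For the $\psi_\Gamma$-term, (\ref{cher1}) gives leading part $\psi_\gamma\psi_{\Gamma/\gamma}$, and splitting $\sum_e A_e\Theta_e=\sum_{e\notin\gamma}A_e\Theta_e+\sum_{e\in\gamma}A_e\Theta_e$ the second group raises the $A_\gamma$-degree and is absorbed into the remainder, so the leading part of $\psi_\Gamma(\sum_e A_e\Theta_e)$ is $\psi_\gamma\psi_{\Gamma/\gamma}\sum_{e\notin\gamma}A_e\Theta_e$. Since the edges of $\Gamma/\gamma$ are exactly the non-$\gamma$ edges, adding the two leading parts and dividing by $S$ reassembles $\psi_\gamma\bigl(\varphi_{\Gamma/\gamma}(\Theta)+\psi_{\Gamma/\gamma}\sum_{e\in(\Gamma/\gamma)}A_e\Theta_e\bigr)=\psi_\gamma\,\phi_{\Gamma/\gamma}(\Theta)$, which is (\ref{cher2}); only the bound $\geq|\psi(\gamma)|_\gamma+1$ can be asserted here, because the weights $Q(T_1)\cdot Q(T_2)$ carry signs and may cancel at degree $|\psi(\gamma)|_\gamma+1$.

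I expect the main obstacle to be the factorization of the second Symanzik polynomial $\varphi$, since one must control how the external momenta redistribute under contraction. The point I would isolate is that $\gamma$, being an internal subgraph, carries no external momenta, so $Q(X)$ is unchanged when $X$ is enlarged by vertices of $\gamma$; hence for a leading two-forest placing all of $\gamma$ in one tree the weight $Q(T_1)\cdot Q(T_2)$ depends only on the image two-forest in $\Gamma/\gamma$ and matches the weight there. Verifying that the bijection of leading forests is momentum-compatible in this sense — and that no weight is lost or double-counted when $\gamma$ is disconnected, where one argues component by component using $\psi_\gamma=\prod_i\psi_{\gamma_i}$ — is the technical heart of the argument.
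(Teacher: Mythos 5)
Your argument is correct and is exactly the route the paper intends: its entire proof is the one line ``From the definitions via spanning trees and two-forests,'' and your partition of spanning trees (resp.\ two-forests) by their restriction to $\gamma$, with the valuation reading of $|\cdot|_\gamma$ and the sign-cancellation caveat explaining why (\ref{cher2}) only gets $\geq$, is the standard filling-in of that line. One small correction to the point you isolate as the technical heart: in this paper a divergent subgraph $\gamma$ \emph{can} contain vertices with $q(v)\neq 0$ (cf.\ the sets $V_p^{\mathrm{ext}}$ in \S\ref{aux}), so the premise ``$\gamma$ carries no external momenta'' is not available; the momentum-compatibility you need holds anyway because $Q$ is additive over vertices and the contracted vertex of $\Gamma/\gamma$ carries $\sum_{v\in\gamma}q(v)$, whence $Q(T_1)=Q(T_1/\gamma)$ for any leading two-forest with $\gamma\subset T_1$.
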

Note that $\phi_{\Gamma/\gamma}(\Theta)$ can be zero, for example when masses are zero and $Q(T_i)=0$ for all two-forests of $\Gamma/\gamma$.
\begin{proof}
From the definitions via spanning trees and two-forests.
\end{proof}

We now let $\Box_\Gamma$ be the hypercube $\mathbb{R}_+^{|\Gamma^{(1)}|}$, and consider the integrand
obtained from a  Schwinger parametrization of a Feynman graph $\Gamma$, 
\begin{equation}
\Phi_\Gamma(S,\Theta):= 
\frac{dA_1\cdots dA_{|\Gamma^{(1)}|}e^{+\frac{S\phi_\Gamma(\Theta)}{\psi_\Gamma}}}{\psi^2_\Gamma}.
\end{equation}
This unrenormalized integrand cannot be integrated yet in the edge variables $A_e$ against $\Box_\Gamma$.
Its renormalized counterpart has the form (say for logarithmic divergences, the general case is below and has the same structure)
\begin{eqnarray}
\Phi^R_\Gamma(S,S_0,\Theta,\Theta_0) & = & \sum_{[f]}^\emptyset (-1)^{|f|}\Phi_f(S_0,\Theta_0)\Phi_{\Gamma/f}(S,\Theta).\label{tradfor}\\
 & = & \sum \Phi^{-1}_{\Gamma^\prime}(S_0,\Theta_0)\Phi_{\Gamma^{\prime\prime}}(S,\Theta),
\end{eqnarray}
where we used Sweedler's notation $\Delta_G(\Gamma)=\sum \Gamma^\prime\otimes\Gamma^{\prime\prime}$ in the second line. 
This is the traditional forest formula\footnote{Note that we use $\psi_\emptyset=1$, $\phi_\emptyset(\Theta)=0$.}.

In the following, we will renormalize this integrand using kinetic renormalization schemes. 
For that, we let $2s_\Gamma\equiv 2sd(\Gamma)$ be the superficial
degree of divergence of $\Gamma$ (in the example of a massive scalar field theory with quartic interactions):
\begin{equation}
 2s_\Gamma=4|\Gamma|-2|\Gamma^{[1]}|.
\end{equation}
Then, all vertex graphs $\Gamma$ have $s_\Gamma=0$ 
together with  $|\Gamma^{[1]}|=2|\Gamma|$, while for all propagator graphs, $s_\Gamma=1$
with $|\Gamma^{[1]}|=2|\Gamma|-1$.

Let us introduce new variables $A_e\to a_e, A_e=t a_e$, and $dA_1\cdots dA_{|\Gamma^{[1]}|}\to dt\wedge \Omega_\Gamma$,
with $\Omega_\Gamma$ the usual $(|\Gamma^{[1]}|-1)$-form $A_1dA_2\wedge\cdots\wedge dA_{|\Gamma^{[1]}|}-A_2dA_1\cdots\pm\cdots$, see Eq.(\ref{Omegag}).
We find
\begin{equation}
\Phi_\Gamma:= 
\frac{dt}{t}\wedge\frac{\Omega_\Gamma e^{t\frac{S\phi_\Gamma(\Theta)}{\psi_\Gamma}}}{t^{s_\Gamma}\psi^2_\Gamma}.\label{basint}
\end{equation}
We want to study the overall $t$-integration as a function of the superficial degree of divergence $s_\Gamma$ first.
Concretely, we are interested to define and find the limit in the $t$-integration
\begin{equation}
\lim_{c\to 0} \int_c^\infty \Phi_\Gamma,
\end{equation}
where $c\in\mathbb{R}_+$.
We use renormalization conditions on $\Phi_\Gamma\equiv\Phi_\Gamma(S,\Theta)$.

Kinetic renormalization conditions imply that we choose values $S_0,\Theta_0$ for the scale and for the angles, such that the renormalized amplitudes of a graph $\Gamma$, together with their first $s_\Gamma$ derivatives in an expansion around that point,
vanish.\\
For $s_\Gamma=0$, we can simply subtract at a chosen $S_0,\Theta_0$: 
\begin{equation}
\Phi_\Gamma(S,\Theta)\to \Phi_\Gamma(S,\Theta)-\Phi_\Gamma(S_0,\Theta_0)
\end{equation}
which takes care of the overall divergence in the graph $\Gamma$.\\
For $s_\Gamma=1$, we are dealing with a quadratically divergent propagator function. We will subtract at $q^2=m^2$.
Note that there are no angles $\Theta_{vw}$  for a two-point function, the $\Theta_e$ remain though.
Kinetic renormalization conditions are determined by the requirement that the renormalized amplitude vanishes at $q^2=m^2$, together 
with its first derivative $\partial_{q^2}$, so that the pole in the propagator has a on-shell unit residue\footnote{For a massless propagator, 
vanishing of $\Phi^R_\Gamma$ at $q^2=0$ and of $\Phi^R_\Gamma/q^2$ at $q^2=\mu^2$ are also convenient renormalization conditions.}.

For overall convergent Green functions, the limit $c\to 0$ can be taken without having to impose overall constraints, 
the introduction of constraints (which correspond to  over-subtractions) to impose some favoured kinematics 
would be possible though in this set-up.
\subsection{$s_\Gamma=0$}
Let us start with the case $s_\Gamma=0$.
The limit is
\begin{equation}
 \lim_{c\to 0}\int_c^\infty [\Phi_\Gamma(S,\Theta)-\Phi_\Gamma(S_0,\Theta_0)]=
\frac{\Omega_\Gamma\ln{\frac{S\phi_\Gamma(\Theta)}{S_0\phi_\Gamma(\Theta_0)}}}{\psi^2_\Gamma},
\end{equation}
using that for small $c>0$,
\begin{equation}
 \int_c^\infty \frac{e^{-tX}dt}{t}=-\ln c+\ln X +\gamma_E+\mathcal{O}(c).
\end{equation}
Here, $\gamma_E$ is the Euler--Mascheroni constant. Note that we can decompose the logarithm as 
$$
\ln{\frac{\frac{S}{S_0}\phi_\Gamma(\Theta)}{\phi_\Gamma(\Theta_0)}}=\ln(S/S_0)+\ln(\phi_\Gamma(\Theta)/\phi_\Gamma(\Theta_0)),$$ 
(we assume $S/S_0>0$). We assume also that the angles $\Theta,\Theta_0$ are chosen such that we are off Landau singularities. 
Approaching such singularities means studying the corresponding variation of the logarithm above.

Let us now look at logarithmic sub-divergences.
A typical term in the forest formula provides an integrand of the form 
\begin{equation}
\frac{e^{+\frac{S\phi_{\Gamma/f}(\Theta)}{\psi_{\Gamma/f}}}}{\psi^2_{\Gamma/f}}\frac{e^{+\frac{S_0\phi_{f}(\Theta_0)}{\psi_{f}}}}{\psi^2_{f}}
-
\frac{e^{+\frac{S_0\phi_{\Gamma/f}(\Theta_0)}{\psi_{\Gamma/f}}}}{\psi^2_{\Gamma/f}}\frac{e^{+\frac{S_0\phi_{f}(\Theta_0)}{\psi_{f}}}}{\psi^2_{f}}.
\end{equation}
Combining each of the two products of exponentials into a single exponential and using the exponential integral as above
delivers 
\begin{equation}
M^\Gamma_f:=\frac{\ln{\frac{S\phi_{\Gamma/f}(\Theta)\psi_f+S_0\phi_f(\Theta_0)\psi_{\Gamma/f}}{S_0\phi_{\Gamma/f}(\Theta_0)\psi_f+S_0\phi_f(\Theta_0)\psi_{\Gamma/f}}}}{\psi^2_{\Gamma/f}\psi^2_f} \Omega_\Gamma.
\end{equation}
Summing over all forests including the empty one delivers the renormalized integrand as the homogeneous of degree zero form
\begin{equation}\label{renFR}
\Phi^R_\Gamma:=\sum_f^\emptyset (-1)^{|f|} M^\Gamma_f.
\end{equation}
$\Phi^R_\Gamma$ is an integrand which can, this is just a rewriting of the forest formula,  be integrated against $\mathbb{P}^{|\Gamma^{[1]}|-1}(\mathbb{R}_+)$.
An explicit proof from scratch is given below though, after we decomposed Feynman rules suitably. 
\begin{rem}
If we write $\varphi_\Gamma=S\sum_{v,w}\Theta_{vw}c_\Gamma^{vw}$ for some monomials 
$$c_\Gamma^{vw}=c_\Gamma^{vw}(A_1,\ldots,A_{|\Gamma^{[1]}|}),$$ in edge variables $A_e$, then
\begin{equation}
-\partial_{\Theta_{v,w}}\frac{\phi_{\Gamma/f}(\Theta)}{\psi_{\Gamma/f}}=\frac{1}{\psi_{\Gamma/f}}c_{\Gamma/f}^{vw}(\partial_{m_1^2,\ldots,\partial_{m^2_{|\Gamma^{[1]}|}}}).
\end{equation}
As an extra power of $\psi_{\Gamma/f}$ in the denominator is equivalent to a shift in the dimension, we can re-derive
in this way the usual recursion and dimension shift relations between master integrals.
\end{rem}

Next we treat other overall degrees of divergence still with logarithmic sub-divergences, 
and will then treat the general case including quadratic sub-divergences.
\subsection{$s_\Gamma=1$}
Let us next look at the case $s_\Gamma=1$. We have $\Phi_\Gamma=\Phi_\Gamma(q^2,m^2)$ and for $s_\Gamma=1$ we look at the Taylor expansion $\Phi_\Gamma=\Phi_\Gamma(m^2,m^2)+(q^2-m^2)\partial_{q^2}\Phi_\Gamma(m^2,m^2)
+\mathcal{O}((q^2-m^2)^2)$.
Partially integrating Eq.(\ref{basint}), we find
\begin{equation}
 \int_c^\infty \Phi_\Gamma=\int_c^\infty
\frac{dt}{t}\wedge\frac{\Omega_\Gamma e^{-t\frac{S\phi^S_\Gamma}{\psi_\Gamma}}}{t\psi^2_\Gamma}=
\frac{1}{c}\left[\frac{\Omega_\Gamma e^{-c\frac{S\phi^S_\Gamma}{\psi_\Gamma}}}{\psi^2_\Gamma}\right]
-\frac{S\phi_\Gamma^S}{\psi_\Gamma}\int_c^\infty
\frac{dt}{t}\wedge\frac{\Omega_\Gamma e^{-t\frac{S\phi^S_\Gamma}{\psi_\Gamma}}}{\psi^2_\Gamma}.
\label{quadr}
\end{equation}
We set $S=q^2$, so $\Theta_e=m_e^2/q^2$, and $S_0=m^2,\Theta_{e,0}=m_e^2/m^2$. If all masses are equal, we simply have $\Theta_0=1$.

In Eq.(\ref{quadr}) we have on the rhs a boundary term (the term $\sim 1/c$ in $[\cdots]$ brackets) and a logarithmically divergent integral $\sim \ln c$.
Subtracting at $q^2=m^2$ gives for the boundary term a contribution:
\begin{equation}
-\left[\frac{\Omega_\Gamma (q^2-m^2)\psi_{\Gamma^\bullet}}{\psi^3_\Gamma}\right]
\end{equation}
where $\Gamma^\bullet$ is the graph obtained from identifying the two external vertices of $\Gamma$.
Note that this boundary term contains no higher terms in $(q^2-m^2)$ in the limit $c\to 0$.
So in the limit this is an expression linear in $(q^2-m^2)$ which hence vanishes in kinematic renormalization.

For the logarithmic divergent integral on the rhs of Eq.(\ref{quadr}) we find 
\begin{equation}
-\frac{S\phi_\Gamma^S}{\psi_\Gamma}\int_c^\infty
\frac{dt}{t}\wedge\frac{\Omega_\Gamma e^{-t\frac{S\phi^S_\Gamma}{\psi_\Gamma}}}{\psi^2_\Gamma}
+\frac{S_0\phi_\Gamma^{S_0}}{\psi_\Gamma}\int_c^\infty
\frac{dt}{t}\wedge\frac{\Omega_\Gamma e^{-t\frac{S_0\phi^{S_0}_\Gamma}{\psi_\Gamma}}}{\psi^2_\Gamma}.
\end{equation}
Subtraction of the term linear in $(q^2-m^2)$ delivers a renormalized integrand free of overall divergences which reads as :
\begin{equation}
 M^\Gamma_\emptyset=\frac{m^2\bar{\psi}_{\Gamma^\bullet}}{\psi_\Gamma}\frac{
\left[x^\Gamma_\emptyset\ln(1+x^\Gamma_\emptyset)+\ln(1+x^\Gamma_\emptyset)-x^\Gamma_\emptyset\right]\Omega_\Gamma}{\psi^2_\Gamma},\label{quadroverall}
\end{equation}
where we write $\phi_\Gamma(q^2,m^2)/\phi_\Gamma(m^2,m^2)$ as $1+x^\Gamma_\emptyset$, 
with $$x^\Gamma_\emptyset=\frac{(q^2-m^2)\psi_{\Gamma^\bullet}}{m^2\bar{\psi}_{\Gamma^\bullet}}.\label{xzerogamma}$$
We remind the reader of Eq.(\ref{defpsibar}) which gives $\bar{\psi}_{\Gamma^\bullet}$.
\subsection{$s_\Gamma=-1$} In that overall convergent case, with 
 $1+x^\Gamma_\emptyset:=\frac{S\phi_\Gamma(\Theta)}{S_0\phi_\Gamma(\Theta_0)}$ similarly, 
we have 
\begin{equation}
 M^\Gamma_\emptyset=\left(\frac{S_0\phi_\Gamma(\Theta_0)}{\psi_\Gamma}\right)^{-1}\frac{\Omega_\Gamma}{\psi^2_\Gamma(1+x^\Gamma_\emptyset)}.
\end{equation}
Similarly for the cases $s_\Gamma<-1$. One finds for $s=-k$:
\begin{equation}
 M^\Gamma_\emptyset=\left(\frac{S_0\phi_\Gamma(\Theta_0)}{\psi_\Gamma}\right)^{-k}\frac{(k-1)!(-1)^{k-1}\Omega_\Gamma}{\psi^2_\Gamma(1+x^\Gamma_\emptyset)^k}.\label{convoverall}
\end{equation}
\subsection{Quadratic sub-divergences}
We saw before that a partial integration improved an overall quadratically divergent integrand to an overall logarithmic one,
plus a boundary term which captured the overall quadratic divergence. That boundary term was eliminated by the BPHZ renormalization conditions for the  mass term, leaving us to treat a logarithmic divergent integrand.

We want similarly to use partial integrations to treat quadratic sub-divergences.

We will treat quadratic sub-divergences according to their partial ordering, which is discussed in detail below, see \S\ref{sectQuadratic}. 
To start,  assume $\gamma\subset\Gamma$ is a quadratically divergent sub-graph of $\Gamma$
and $\gamma$ is itself free of quadratic divergent sub-graphs.   

We first consider the Feynman integral in momentum space. Label the edges of $\gamma$ by $1,\cdots,2|\gamma|-1$.
We use a parametric representation for the internal propagators of $\gamma$ and integrate out the internal loop momenta of $\gamma$.
We let $e_\gamma,f_\gamma$ denote the two edges (connectors) which connect $\gamma$ to the other vertices of 
$\Gamma-\gamma$, with $1/P_{e_\gamma}(q),1/P_{f_\gamma}(q)$ their propagators.

This gives a factor
\begin{equation}
F_\gamma:=\frac{e^{-\frac{\phi_\gamma}{\psi_\gamma}}dA_1\cdots dA_{2|\gamma|-1}}{P_{e_\gamma}(q)P_{f_\gamma}(q)\psi^2_\gamma},
\end{equation}
in the Feynman integrand.

Isolating a variable $t_\gamma=A_1$ via $A_i=t_\gamma a_i$, $i=2,\cdots|\gamma^{[1]}|$ 
allows us to study the behaviour of the integrand with respect to the overall divergence of the quadratically divergent sub-graph.

\begin{equation}
F_\gamma:=\frac{e^{-t_\gamma\frac{\phi_\gamma}{\psi_\gamma}}da_2\cdots  da_{2|\gamma|-1}}{P_{e_\gamma}(q)P_{f_\gamma}(q)\psi^2_\gamma}\wedge\frac{dt_\gamma}{t_\gamma^2}.
\end{equation}
In obvious abuse of notation, we still write $\phi_\gamma,\psi_\gamma$ after the change of variables.

Partially integrating $\int_{c_\gamma}^\infty F_\gamma$ in $t_\gamma$ delivers
\begin{eqnarray}
\int_{c_\gamma}^\infty F_\gamma 
 & = & 
\int_{c_\gamma}^\infty \frac{e^{-t_\gamma\frac{\phi_\gamma}{\psi_\gamma}}da_2\cdots  da_{2|\gamma|-1}}{P_{e_\gamma}(q)P_{f_\gamma}(q)\psi^2_\gamma}\wedge\frac{dt_\gamma}{t_\gamma^2}\\
 & = &
\int_{c_\gamma}^\infty\frac{-1}{P_{e_\gamma}(q)P_{f_\gamma}(q)}\underbrace{\frac{\phi_\gamma e^{-t_\gamma\frac{\phi_\gamma}{\psi_\gamma}}da_2\cdots  da_{2|\gamma|-1}}{\psi^3_\gamma}\wedge\frac{dt_\gamma}{t_\gamma}}_{X}\\
 & + & 
\frac{1}{P_{e_\gamma}(q)P_{f_\gamma}(q)}\underbrace{ \left[
\frac{e^{-c_\gamma\frac{\phi_\gamma}{\psi_\gamma}}da_2\cdots  da_{2|\gamma|-1}}{\psi^2_\gamma}\frac{1}{c_\gamma} 
  \right] }_{Y}.
\end{eqnarray}
We now impose BPHZ conditions for the sub-graph $\gamma$.
We start with the boundary term $Y\equiv Y(q^2,m^2)$.
We have 
\begin{equation}
Y(q^2,m^2)=Y(m^2,m^2)+
\frac{\overbrace{[\phi_\gamma(q^2,m^2)-\phi_\gamma(m^2,m^2)]}^{=(q^2-m^2)\psi_{\gamma^\bullet}}da_2\cdots  da_{2|\gamma|-1}}{\psi^3_\gamma}+\mathcal{O}(c_\gamma).
\end{equation}
In BPHZ conditions we subtract the term $Y(m^2,m^2)$ and the term linear in $(q^2-m^2)$. Hence, in the limit $c_\gamma\to 0$,
the $Y$ term leaves no contribution.

Let us now turn to the $X\equiv X(q^2,m^2)$ term. 
We use $\phi_\gamma(m^2,m^2)=m^2\overline{\psi}_{\gamma^\bullet}$.
Then\footnote{If we renormalize the subgraph $\gamma$ at $m_\gamma^2\not= m_{f_\gamma}^2$, we set $(q^2-m_\gamma^2)=(q^2-m_{\gamma_f}^2)+\Delta_m^2$, and compensate by a redefinition of $\bar{\psi}_{\gamma^\bullet}$.}, using $(q^2-m^2)=P_{f_\gamma}(q)$ and $\phi_\gamma(q^2,m^2)=(q^2-m^2)\psi_{\gamma^\bullet}+\phi_\gamma(m^2,m^2)$,
\begin{equation}
\frac{1}{P_{e_\gamma}(q)P_{f_\gamma}(q)}X=\frac{X_1}{P_{e_\gamma}}+\frac{m^2X_2}{P_{e_\gamma} P_{f_\gamma}},
\end{equation}
with
\begin{equation}
X_1:=\frac{(q^2-m^2)\psi_{\gamma^\bullet} e^{-\frac{\phi_\gamma}{\psi_\gamma}}dA_1dA_2\cdots  dA_{2|\gamma|-1}}{\psi^3_\gamma}
\end{equation}
and
\begin{equation}
X_2:=\frac{\overline{\psi}_{\gamma^\bullet} e^{-\frac{\phi_\gamma}{\psi_\gamma}}dA_1dA_2\cdots  dA_{2|\gamma|-1}}
{\psi^3_\gamma}.
\end{equation}
Note that we returned from $t_\gamma,a_i$ to $A_1,A_2,\ldots$ variables.
Summarizing, BPHZ conditions render a renormalized $X_R(q^2,m^2)$ as
\begin{eqnarray}
X_R(q^2,m^2) & = & P_{f_\gamma}(q)[X_1(q^2,m^2)-X_1(m^2,m^2)]\nonumber\\
 & + & \left[X_2(q^2,m^2)-X_2(m^2,m^2)\left(1-\frac{P_{f_\gamma}(q)\psi_{\gamma^\bullet}}{\psi_\gamma}\right)\right].\label{quadrsub}
\end{eqnarray}
The last term in the second line makes sure that we subtract from $X_2(q^2,m^2)$ its value at $q^2=m^2$ and its first
Taylor coefficient in the expansion in $(q^2-m^2)$.
 
We can now continue to parametrize the remaining propagators in $\Gamma-\gamma$ treating quadratic subdivergences in the order as dictated by their partial ordering (see below) and integrate out all loop momenta to find the final renormalized parametric representations below, now for 
the cases of quadratic and logarithmic sub-divergences jointly.
Summarizing,  we will give them for overall degrees of divergence $s_\Gamma\in\{-1,0,1\}$,
from overall convergence to quadratic divergence.
\subsection{The final result}
As before, we let $s_\Gamma=\mathrm{sd}(\Gamma)/2$.
Define 
\begin{equation}
1+ x^\Gamma_f:=\frac{\frac{S}{S_0}\phi_{\Gamma/f}(\Theta)\psi_f
+\phi_f(\Theta_0)\psi_{\Gamma/f}}{
\phi_{\Gamma/f}(\Theta_0)\psi_f+\phi_f(\Theta_0)\psi_{\Gamma/f}},\label{finalxvar}
\end{equation}
and the following series (which could be easily generalized to $l_j(x),j\in\mathbb{Z}$):
\begin{eqnarray}
 l_{-1}(x) & = & \frac{1}{(1+x)}  =  \sum_{k=0}^\infty (-1)^k x^k\\
 l_{0}(x) & = & \ln{(1+x)}  =  \sum_{k=1}^\infty (-1)^{k-1} \frac{x^k}{k}\\
 l_{1}(x) & = & (q^2-m^2)l_0(x)+\left(m^2\partial_{m^2}\frac{1}{x^\prime}\right)l_0(x)^{[1]},\label{lonex}
\end{eqnarray}
where we need to explain the notation in Eq.(\ref{lonex}) as follows.
First, $l_1(x)$ is needed only for two-point functions, in which case any variable $x=x(q^2,m^2)$, and we denote 
$x^\prime=\partial_{q^2}x(q^2,m^2)$. Then, 
\begin{equation}
m^2\partial_{m^2}\frac{1}{{x^\Gamma_f}^\prime}=m^2\frac{\bar{\psi}_{{\Gamma/f}^\bullet}}{\psi_{{\Gamma/f}^\bullet}}.
\end{equation}
Second, for any function $l_i(x)$, we denote by $l_i(x)^{[k]}$ the function $l_i(x)-\sum_{j=0}^i c_{i,j}x^j$,
where $c_{i,j}$ is the $j$-th Taylor coefficient in the expansion of $l_i$ around $x=0$.  Note that $l_0(x)=l_0^{[0]}(x)$
and $l_1(x)=l_1^{[1]}(x)$ by construction.

Let $n^X_2$ be the number of (sub)-graphs $\gamma \subseteq X$ with $\mathrm{sd}(\gamma)=2$. Define 
\begin{equation}
 \omega_X:=\prod_{\gamma\subseteq X,\mathrm{sd}(\gamma)=2}\frac{\psi_{\gamma^\bullet}}{\psi_\gamma}.
\end{equation}

For an integer $i\in\mathbb{Z}$, we let $\theta:\mathbb{Z}\to \{0,1\}$ be the function $\theta(i)=1,i>0$, $\theta(i)=0,i\leq 0$.

For all $n_2^\Gamma$ quadratic sub-graphs of $\Gamma$, choose a set $C_2$ of   
$n_2^\Gamma-s_\Gamma\theta(s_\Gamma)$ distinct edges in $\Gamma$ such that for each $\gamma$ at least one connector (the two edges in $\Gamma-\gamma$ which connect to $\gamma$) of $\gamma$ is an element  of $C_2$. A subscript $C_2/I$ denotes suppression of the edge
variables in $C_2/I$ in the corresponding expression.

We define $E_j^{C_2/I}$ to be the $j$-th elementary symmetric polynomial in the $|C_2/I|$ variables 
$\frac{m_e^2\bar{\psi}_{{\gamma_e}^\bullet}}{\psi_{{\gamma_e}}}$, $e\in C_2/I$. $E_0^{C_2/I}:=1$.
Then, with 
\begin{eqnarray}
 & & M(x_f^\Gamma)   :=\label{eqform}\\
 & &  \sum_{I\subseteq C_2}(-1)^{|I|}
 \left\{ \omega_{\Gamma/f}\omega_f\Omega_\Gamma \sum_{j=0}^{|C_2/I|}(-1)^j 
 \left(\frac{\phi_{\Gamma/f(\Theta_0)}}{\psi_{{\Gamma/f}^\bullet}}\right)^{\theta(|I|+j)\theta(s_\Gamma)}\times\right.\nonumber\\
 & & \left. \times\left(\frac{\overbrace{\phi_{\Gamma/f}(\Theta_0)\psi_f+\phi_f(\Theta_0)\psi_{\Gamma/f}}^{\mathrm{den}(1+x^\Gamma_f),\, \mathrm{see\, Eq.(\ref{finalxvar})}}}{\psi_{\Gamma/f}}\right)^{(s_\Gamma-|I|-j)\theta(-s_\Gamma+|I|+j)}
 \times\right.\nonumber\\ & & \left.
\times E_j^{C_2/I}
\frac{l_{s_\Gamma-|I|-j}^{[s_\Gamma]}(x^\Gamma_f)}{\underbrace{\psi_{\Gamma/f}^{2}\psi_f^{2}}_{-2|\Gamma|}}\right\}_{C_2/I}  \prod_{e\in I}m_e^2
\frac{\bar{\psi}_{{\gamma_e}^\bullet}}{\psi_{{\gamma_e}^\bullet}},\nonumber
\end{eqnarray}
we have
\begin{thm}\label{finalr}
 \begin{equation}
\mathbf{\Phi}^R_\Gamma=\sum^\emptyset_f(-1)^{|f|}\int_{\mathbb{P}^{n}(\mathbb{R}_+)}M(x_f^\Gamma),\label{finalresult}  
 \end{equation}
converges when integrated against projective $\mathbb{P}^n(\mathbb{R}_+)$ 
spaces of dimension $n=|\Gamma^{[1]}|-|C_2|+|I|$ for each $\Omega_{\Gamma_{C_2/I}}$.
\end{thm}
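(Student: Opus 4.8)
The plan is to deduce the convergence asserted in \eqref{finalresult} from a power-counting statement on an iterated blow-up of $\mathbb{P}^{|\Gamma^{[1]}|-1}(\mathbb{R}_+)$, showing that the signed forest sum lowers, along every boundary divisor attached to a superficially divergent subgraph, the order of singularity below the logarithmic threshold. Before the analytic estimate I would settle the purely algebraic claim hidden in \eqref{eqform}: that $M(x_f^\Gamma)$ is precisely what the forest formula \eqref{tradfor} produces after performing the partial integrations above iteratively, in the order fixed by the partial ordering of the quadratic subgraphs. Concretely, the inner sum over $I\subseteq C_2$ with signs $(-1)^{|I|}$ weighted by the elementary symmetric polynomials $E_j^{C_2/I}$ is the inclusion--exclusion recording the boundary terms $Y$ generated at each partial integration, the factors $\omega_X$ record the surviving $\psi_{\gamma^\bullet}/\psi_\gamma$ weights of the $X$-terms, and the truncated series $l^{[s_\Gamma]}_{s_\Gamma-|I|-j}$ records the removal of the first $s_\Gamma$ Taylor coefficients demanded by the kinematic conditions leading to \eqref{quadrsub}. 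This step is bookkeeping, but it licenses treating the theorem as a statement about logarithmically singular integrands only.

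Granting that reduction, the heart of the matter is the local estimate along the divisor $L_\gamma$ of a $1$PI superficially divergent $\gamma\subset\Gamma$. I would parametrize the approach to $L_\gamma$ by scaling $A_e\mapsto\lambda A_e$ for $e\in\gamma$ (projectively within $\gamma$) and read the leading behaviour from \eqref{cher1} and \eqref{cher2}: since $|R^\Gamma_\gamma|_\gamma=|\psi(\gamma)|_\gamma+1$ and $|\bar R^\Gamma_\gamma(\Theta)|_\gamma\ge|\psi(\gamma)|_\gamma+1$, one gets $\psi_\Gamma=\lambda^{|\gamma|}\psi_{\Gamma/\gamma}\psi_\gamma(1+O(\lambda))$ and $\phi_\Gamma(\Theta)=\lambda^{|\gamma|}\phi_{\Gamma/\gamma}(\Theta)\psi_\gamma(1+O(\lambda))$. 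With the radial measure contributing $\lambda^{|\gamma^{[1]}|-1}d\lambda$ and $\psi_\Gamma^{2}$ contributing $\lambda^{-2|\gamma|}$, the bare integrand behaves as $\lambda^{-s_\gamma-1}d\lambda$, i.e. logarithmically critical exactly when $s_\gamma=0$. The decisive point is that the leading coefficient of $M^\Gamma_f$ depends on $\gamma$ only through the scale-independent $\psi_\gamma$ (the internal scale $S_0$ drops out in the limit), so $M^\Gamma_f$ and $M^\Gamma_{f\cup\gamma}$ share the same leading $\lambda^{-s_\gamma-1}$ term; pairing them with opposite signs from $(-1)^{|f|}$ gains one power of $\lambda$, making the difference $O(\lambda^{-s_\gamma})$ and hence integrable at $\lambda=0$ for $s_\gamma=0$.

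To globalize I would resolve all coordinate degenerations simultaneously by the standard iterated blow-up (Hepp sectors), so that each corner of the domain corresponds to a flag of nested $1$PI superficially divergent subgraphs and the resolved singular locus is normal crossing. Overlapping subdivergences never occur together in one forest, but they are separated at the level of complete forests: each $T\in C(\Gamma)$ supplies one nested flag, and $\sum_f=\sum_{T\in C(\Gamma)}\sum_{c\in P_E(T)}$ runs over all of them. Along a given flag the pairing above must be applied at each nesting level, and the combinatorics guaranteeing exactly the right counterterm at depth $k$ is the identity $\mathrm{id}_{\mathrm{Aug}}=\sum_j\sigma_j$ of \eqref{empty} together with the coradical filtration: the maps $\bar{\sigma}_j$ organize the signed forest sum so that at each level the leading singularity is subtracted once and only once, the multiplicities (such as the factor two appearing in $\bar{\sigma}_2$) being accounted for automatically.

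The quadratic sub-divergences are the main obstacle, and they are handled by the partial-integration reduction carried out above: each $s_\gamma=1$ subgraph is converted to a logarithmically divergent one at the cost of a connector propagator $1/P_{f_\gamma}$, an extra $\psi_\gamma$ in the denominator, and a $\bar{\psi}_{\gamma^\bullet}$ numerator, while its boundary term $Y$ is annihilated by subtracting its value and first derivative at $q^2=m^2$. Choosing $C_2$ with at least one connector per quadratic subgraph trades $|C_2|$ edge integrations for lower-dimensional ones, which is exactly why each $\Omega_{\Gamma_{C_2/I}}$ lives on $\mathbb{P}^{n}(\mathbb{R}_+)$ with $n=|\Gamma^{[1]}|-|C_2|+|I|$. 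The hard part is to verify that the double inclusion--exclusion (over forests $f$ with signs $(-1)^{|f|}$ and over $I\subseteq C_2$ with signs $(-1)^{|I|}$, weighted by $E_j^{C_2/I}$) reproduces the iterated partial-integration boundary cancellations along \emph{every} flag mixing logarithmic and quadratic subgraphs, and that the truncations $l^{[s_\Gamma]}_{s_\Gamma-|I|-j}$ enact the first-derivative subtraction uniformly. Once this matching is in place the local estimate of the second paragraph applies verbatim to each reduced integrand, giving strictly sub-critical behaviour along all boundary divisors and hence the asserted convergence.
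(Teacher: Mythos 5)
Your overall route is legitimate but it is not the one the paper takes. The paper's proof of Theorem \ref{finalr} is deliberately thin at this point: it declares the formula $(\ref{eqform})$ to be a rewriting of the forest formula $(\ref{tradfor})$ via the partial integrations $(\ref{quadroverall},\ref{convoverall},\ref{quadrsub})$, and then \emph{defers} convergence to the factorization $\mathbf{\Phi}^R=\mathbf{\Phi}_{\mathrm{fin}}^{-1}(\Theta_0)\star\mathbf{\Phi}^R_{\mathrm{1-s}}\star\mathbf{\Phi}_{\mathrm{fin}}(\Theta)$ together with the convergence theorems for 1-scale graphs (Theorems \ref{thmconv}, \ref{thmconv2}, \ref{thmquadconv}), which are proved by a residue calculus on blow-ups: one shows that $\pi_I^*\omega^{\ren}$ has at most simple poles along each exceptional divisor $\mathcal{E}_I$ (Proposition \ref{propResIformula}) and that the total residue of the renormalized form vanishes (Proposition \ref{proprenorm}). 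You instead attack the full angle-dependent integrand directly by Hepp-sector power counting and pairwise cancellation of leading terms between $M^\Gamma_f$ and $M^\Gamma_{f\cup\gamma}$. Your depth-one local estimate is correct and is exactly what $(\ref{cher1})$, $(\ref{cher2})$ deliver; what your route buys is that it avoids the detour through $G_{\mathrm{1-s}}$, at the price of having to redo, for the angle-dependent $\phi_\Gamma(\Theta)$, all of the bookkeeping that the paper only carries out for 1-scale graphs.

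Two concrete gaps remain. First, the identity you invoke to control the cancellation along nested flags, $\mathrm{id}_{\mathrm{Aug}}=\sum_j\sigma_j$ of Eq.~$(\ref{empty})$, is the wrong lemma: in the paper that identity governs the decomposition of $\mathbf{\Phi}^R$ into powers of $L$, not the cancellation of singularities. The statement that ``at each level the leading singularity is subtracted once and only once'' is precisely the content of the preparation-map identity $\mu_{13}\circ(\Delta\otimes(\Delta-\mathrm{id}\otimes 1))\circ R=1\otimes R$ (Theorem \ref{thmRmainproperty}), which feeds into $\Res\,\omega^{\ren}=0$ via axioms $\axi$--$\axiii$; without it, your argument only covers cuts of depth one, and the multiplicities arising from overlapping divergences distributed over several trees $T\in C(\Gamma)$ are not visibly accounted for. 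Second, you correctly identify as ``the hard part'' the matching of the double inclusion--exclusion over $I\subseteq C_2$ with the iterated partial integrations along every mixed logarithmic/quadratic flag, but you then assert it rather than prove it; this is where the paper's \S\ref{sectQuadratic} (squashing, inflation, Lemmas \ref{lemqhasnoepsilon}--\ref{lemliftepsilon}, and the pole-order formula $(\ref{viformulaq})$) does real work, in particular establishing that quadratic subdivergences never overlap nontrivially (Corollary \ref{qdontoverlap}), which is what makes the ordering of the partial integrations and the choice of $C_2$ consistent. As written, your proposal is a correct and well-aimed outline, but the two steps you flag as hard are exactly the ones that constitute the proof.
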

\begin{proof}
The formula is a straightforward but tedious rewriting of the forest formula in terms of projective integrals,
using Eqs.(\ref{quadroverall},\ref{convoverall},\ref{quadrsub}), and factorizing out $\omega_{\Gamma/f},\omega_f$
as this is convenient for the treatment of 1-scale graphs later on. 
Again, the direct proof is a consequence of the corresponding result for 1-scale graphs given below, and the factorization of Feynman rules to which we turn in the next section.
\end{proof}
\begin{prop}
 Tadpoles vanish in renormalization.
\end{prop}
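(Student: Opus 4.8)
The plan is to exploit the single structural feature that defines a tadpole: a tadpole subgraph $\gamma$ is attached to the remainder of the diagram at exactly one vertex, so momentum conservation forces zero external momentum through it, giving $Q(T_1)=Q(T_2)=0$ for every spanning two-forest and hence $\varphi_\gamma=0$ identically. First I would record the consequence for the scale-and-angle function. Since $S\phi_\gamma(\Theta)=\varphi_\gamma+\psi_\gamma\sum_e A_e m_e^2$, the vanishing of $\varphi_\gamma$ leaves $S\phi_\gamma(\Theta)=\psi_\gamma\sum_e A_e m_e^2$, which depends only on the fixed physical masses and neither on the scale $S$ nor on the scattering angles $\Theta_{vw}$. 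Equivalently $S\phi_\gamma(\Theta)=S_0\phi_\gamma(\Theta_0)$ as functions on $\Box_\gamma$, so the renormalization point coincides with the evaluation point for every tadpole factor.

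Next I would dispose of the overall tadpole. In the logarithmic case $s_\gamma=0$ the renormalized integrand is proportional to $\ln\bigl(S\phi_\gamma(\Theta)/S_0\phi_\gamma(\Theta_0)\bigr)$, whose argument is $1$ by the previous paragraph, so the integrand vanishes pointwise in the $A_e$. For the quadratic two-point case $s_\gamma=1$ I would feed the same scale-independence into Eq.(\ref{quadroverall}): the variable $1+x^\gamma_\emptyset$ collapses to $1$, so $x^\gamma_\emptyset=0$, and the bracket $x\ln(1+x)+\ln(1+x)-x$ vanishes to second order at $x=0$; meanwhile the surviving boundary term is linear in $(q^2-m^2)$ and is already removed by the BPHZ subtraction at $q^2=m^2$. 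Hence $M^\gamma_\emptyset=0$. If needed the convergent cases follow identically from Eq.(\ref{convoverall}) with $x^\gamma_\emptyset=0$.

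To treat a tadpole appearing as a proper subgraph I would invoke the factorisation of the graph polynomials at the articulation vertex joining $\gamma$ to $\Gamma/\gamma$. There the remainder $R^\Gamma_\gamma$ of Eq.(\ref{cher1}) vanishes, giving $\psi_\Gamma=\psi_\gamma\psi_{\Gamma/\gamma}$, and since all external momentum lives in $\Gamma/\gamma$ the momentum polynomial inherits only that structure, $\varphi_\Gamma=\psi_\gamma\varphi_{\Gamma/\gamma}$. Because $\Phi_\gamma(S,\Theta)=\Phi_\gamma(S_0,\Theta_0)$, the subtraction attached to $\gamma$ in the forest formula Eq.(\ref{tradfor}) is identically zero, so the convolution factor corresponding to $\gamma$ is trivial: forests differing only by the inclusion of $\gamma$ carry opposite signs $(-1)^{|f|}$ and equal integrands, and therefore cancel in pairs. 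The alternating forest sum of Eq.(\ref{finalresult}) thus collapses, and any graph carrying a tadpole renormalizes to zero.

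The main obstacle is the quadratic bookkeeping. I would have to check that the scale-independence survives the full structure of Eq.(\ref{eqform}), namely the elementary symmetric polynomials $E_j^{C_2/I}$ and the shifted series $l^{[s_\Gamma]}_{s_\Gamma-|I|-j}(x^\Gamma_f)$, since a tadpole may itself carry, or sit above, quadratic subdivergences whose connectors must be selected into $C_2$. Concretely I would establish the articulation-vertex factorisation of $\psi_{\gamma^\bullet}$ and $\bar{\psi}_{\gamma^\bullet}$ of Eq.(\ref{defpsibar}) in that setting, and confirm that $x^\Gamma_f$ and each $l^{[s_\Gamma]}$ factor then reduce to their reference values so that the pairwise forest cancellation persists termwise. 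The logarithmic part is immediate from the $\ln 1=0$ mechanism; it is this reduction of the quadratic master integrand that requires the careful work.
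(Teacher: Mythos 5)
Your opening two paragraphs contain exactly the mechanism the paper uses: a tadpole has $\varphi\equiv 0$ by momentum conservation, so $S\phi(\Theta)=\psi\sum_e A_e m_e^2=S_0\phi(\Theta_0)$, the variable $x$ collapses to $0$, and the relevant (subtracted) functions vanish there. The paper's entire proof is this one observation applied to the \emph{cograph}: $\frac{S}{S_0}\phi_{\Gamma/f}(\Theta)=\phi_{\Gamma/f}(\Theta_0)$ whenever $\Gamma/f$ is a tadpole, hence $x^\Gamma_f=0$ and $l_i(0)=0$ for $i=0,1$. Since contracting subgraphs does not alter the external-leg structure, \emph{every} cograph $\Gamma/f$ of a tadpole is again a tadpole, so this kills every term $M(x^\Gamma_f)$ of the forest sum (\ref{finalresult}) individually: the prefactors $\omega_{\Gamma/f}\omega_f$, $E_j^{C_2/I}$, etc.\ in (\ref{eqform}) are finite and irrelevant because they multiply $l^{[s_\Gamma]}_{s_\Gamma-|I|-j}(x^\Gamma_f)$ evaluated at $x^\Gamma_f=0$, where it vanishes (the subtraction $[s_\Gamma]$ with $s_\Gamma\geq 0$ always removes the constant Taylor coefficient). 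No pairing of forests and no factorization of $\psi$, $\bar\psi_{\gamma^\bullet}$ at an articulation vertex is needed.

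The gap in your write-up is therefore self-inflicted: by routing the non-empty forests through a pairwise cancellation between $f$ and $f\cup\gamma$ for a tadpole \emph{subgraph} $\gamma$, you are forced to prove that the paired integrands agree exactly — which requires the zero-remainder factorizations $\psi_\Gamma=\psi_\gamma\psi_{\Gamma/\gamma}$, $\varphi_\Gamma=\varphi_{\Gamma/\gamma}\psi_\gamma$ at the one-vertex join and a term-by-term matching of the quadratic bookkeeping in (\ref{eqform}) — and this is precisely the step you concede is not done. It is also a different (and stronger) statement than the proposition: you are attempting to show that a non-tadpole graph containing a tadpole subgraph renormalizes to zero, whereas the proposition, as used in the text (e.g.\ to drop tadpole cographs from the coproduct in the ``log with quadratic sub'' example), concerns tadpole cographs $\Gamma/f$ and is settled termwise. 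If you replace your third and fourth paragraphs by the single remark that $\Gamma/f$ is a tadpole for all $f$ and hence $x^\Gamma_f=0$ in every term, your argument closes and coincides with the paper's.
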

\begin{proof}
 $\frac{S}{S_0}\phi_{\Gamma/f}(\Theta)=\phi_{\Gamma/f}(\Theta_0)$  if $\Gamma/f$ is a tadpole, 
 so $x^\Gamma_f=0$ and $l_i(0)=0$, $i=0,1$.
\end{proof}
\begin{rem}
Vanishing tadpoles define an obvious ideal and co-ideal. 
\end{rem}
\subsection{Auxiliary Feynman rules}\label{aux}
To next introduce Feynman rules such that the desired factorization
\begin{equation}
\mathbf{\Phi}^R_\Gamma=\mathbf{\Phi}_{\mathrm{fin}}(\Theta_0)^{-1}\star\mathbf{\Phi}^R_{\mathrm{1-s}}\star\mathbf{\Phi}_{\mathrm{fin}}(\Theta),
\end{equation}
can be established, we have to enlarge the set of Feynman rules.

For that, we first introduce auxiliary Feynman graphs.
Let $p$ be some decoration of $T\in C(\Gamma)$ at a vertex below the root. Let $E_p^{\bar{p}}$ be the set of 
edges which connect $p$ to the decoration $\bar{p}$ above $p$. 
We can assume $|E_p^{\bar{p}}|\geq 2$ since $\Gamma$ is one-particle irreducible.

Let $V_p^{\bar{p}}\subseteq p^{[0]}$ 
be the set of vertices of $p$ which are connected to
edges $e\in E_p^{\bar{p}}$. Let $V_p^{\mathrm{ext}}\subset \Gamma^{[0]}\cap p^{[0]}$
be external vertices of $\Gamma$ ($q(v)\not= 0$) which are also vertices of $p$.
Set $V_p:=V_p^{\bar{p}}\cup V_p^{\mathrm{ext}}$. Set $q_p:=\sum_{v\in V_p^{\mathrm{ext}}}q(v)$.

Both $E_p^{\bar{p}},V_p$ have cardinality two or greater. 

For $V_p$, we can hence choose two distinct vertices $v_p,w_p$.
For $E_p^{\bar{p}}$, we can choose a partition  into two non-empty sets $E_p^{\bar{p}}=F_p\cup G_p$. 
Set $q(v_p)=q_p$ and set $q(v)=0$ for all other vertices in $p^{[0]}$.

Let now $c\in P_E(T)$ be a cut of $T\in C(\Gamma)$, so $c$ is a collection of some edges of $T$.
To such a collection of edges corresponds a set of decorations $D(c)$ at the endpoint (away from the root) of these edges.
Each element in $D(c)$ plays the role of a decoration $p$ above.

The corresponding $\bar{p}$ decorate the vertices above $p$ for those edges $c$.
  
For each $p\in D(c)$, choose $v_p, w_p$ and partition $E_p^{\bar{p}}=F_p\cup G_p$. 

Attach all edges in $F_p$ to $v_p$ and all edges in $G_p$ to $w_p$. 
This defines, for each $T\in C(\Gamma)$ and each cut $c$ of $T$,  
some new graph $\Gamma_c^T$, which clearly depends on choices of $v_p,w_p$ and on choices of partitions of $E_p^{\bar{p}}$. 
Call these choices the choice of a 1-scale structure. 

In particular, consider a complete forest, that is some $T$ with all its edges chosen, $c=P_E(T)$.
We chose a 1-scale structure  for that complete forest $f$ defining $T$.
It induces  a compatible choice for any other forest in the same given tree $T$ (choices can be made independent between different trees in $C(\Gamma)$ though).

Do this for any $T\in C(\Gamma)$.

We now have a set of graphs $\Gamma_{f}=\sum_{c\in f_c}\Gamma^T_c$ for each forest of $\Gamma$.
By construction, $|V_p|=2$ for all $p,\bar{p}\in \Gamma_c^T$.
Finally, we set $m_e=0$ for all edges $e\not\in (\Gamma_f/f)^{[1]}$.
\begin{lem}\label{onescale}
i) For complete forests, we have the equivalence $$\sum_{f\;\mathrm{complete}}\Gamma_f\sim\Gamma.$$ ii)  For any forest $f$, $$|C(\Gamma_f)|=1.$$
\end{lem}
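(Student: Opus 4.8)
The plan is to establish part (ii) and read off part (i) from it, since the single-complete-forest property drives everything. The governing idea is that the $1$-scale rerouting replaces every decoration $p$ by a two-point graph: attaching $F_p$ to $v_p$ and $G_p$ to $w_p$ and setting $q(v_p)=q_p$ with $q(v)=0$ otherwise gives $p$ exactly two external legs, and --- as already emphasised in the notation for forests --- the information of how $p$ was glued to its enclosing decoration $\bar p$ is discarded. Two-point subgraphs sit on definite propagator lines, so the superficially divergent $1$PI subgraphs of the rerouted graph form a laminar family (any two are nested or disjoint) with a forced maximal nest; this is exactly what removes the overlap that gave $\Gamma$ more than one complete forest (the competition between $\gamma_l$ and $\gamma_r$ around $\gamma_{34}$ in Example \ref{example1}).

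For (ii), take a forest $f$ and recall $\Gamma_f=\sum_{(c,T)\in f_c}\Gamma_c^T$, where $|f_c|=|C(\Gamma/f)|\prod_i|C(\gamma_i)|$. I would argue by induction on the depth of the nesting, equivalently along the coradical filtration. First one shows that in a single $\Gamma_c^T$ the rerouting of the decorations in $D(c)$ pins the two connectors of each affected subgraph, so that every competing overlapping divergent subgraph is destroyed and none is created; the power counting is untouched by setting $m_e=0$ off $(\Gamma_f/f)^{[1]}$, since $2s_\Gamma=4|\Gamma|-2|\Gamma^{[1]}|$ does not depend on masses. Hence the divergent-subgraph poset of $\Gamma_c^T$ is a nest of two-point insertions with a unique maximal forest, and the inductive hypothesis applied inside each insertion and in the cograph gives $|C(\Gamma_c^T)|=1$. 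Since $\rho(\Gamma_c^T)=\sum_{T'\in C(\Gamma_c^T)}T'$ and the rerouting alters only the momenta $q(v)$ and the masses $m_e$, the resulting complete forest has the same shadow as $(c,T)$ with decorations differing only by those data, i.e.\ it is equivalent to it. Finally, as the discarded gluing data is precisely what distinguishes the different $(c,T)\in f_c$, all summands $\Gamma_c^T$ become equivalent after rerouting, so their complete forests fall into a single class and $|C(\Gamma_f)|=1$.

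Part (i) is then immediate. A complete forest $f$ has $|f_c|=1$ and corresponds to one tree $T\in C(\Gamma)$ with $c=P_E(T)$, so
\begin{equation}
\sum_{f\ \mathrm{complete}}\Gamma_f=\sum_{T\in C(\Gamma)}\Gamma_{P_E(T)}^T .
\end{equation}
By the previous paragraph each summand has a unique complete forest $T'_T$ with $T'_T\sim T$, whence $\bigcup_{T}C(\Gamma_{P_E(T)}^T)=\{T'_T:T\in C(\Gamma)\}$ is pairwise equivalent to $C(\Gamma)=\{T:T\in C(\Gamma)\}$; this is exactly $\sum_{f\ \mathrm{complete}}\Gamma_f\sim\Gamma$.

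I expect the main obstacle to be the geometric heart of the inductive step in (ii): proving that pinning the connectors $F_p\cup G_p$ at $v_p,w_p$ genuinely destroys every overlapping divergent subgraph and creates no spurious one, and that after discarding the gluing data the summands over $(c,T)\in f_c$ coincide up to equivalence. This has to be argued at the level of edge sets together with the power counting for $s_\Gamma\in\{-1,0,1\}$, so that once $m_e=0$ no new superficial divergence appears; the remaining verifications (equality of shadows and the bookkeeping of $f_c$) are then routine.
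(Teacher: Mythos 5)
There is a genuine gap, and you have in fact located it yourself: the assertion that ``pinning the connectors $F_p\cup G_p$ at $v_p,w_p$ genuinely destroys every overlapping divergent subgraph and creates no spurious one'' is exactly the content of the lemma, and your proposal never proves it. The heuristic you offer in its place --- that the rerouted decorations become two-point graphs sitting on definite propagator lines, so the divergent subgraphs automatically form a laminar family --- mischaracterizes the construction. Attaching all edges of $E_p^{\bar p}$ to the two vertices $v_p,w_p$ gives $p$ two distinguished \emph{vertices}, not two external \emph{legs}: the number of connectors $|E_p^{\bar p}|$ is unchanged and may well be three or more, so the rerouted $p$ is not a propagator insertion and laminarity does not follow from that observation. (The paper does prove that genuinely quadratic, two-connector subgraphs cannot overlap, but that is a separate statement in \S\ref{sectQuadratic} and does not apply here.) Without this step the induction along the coradical filtration in your part (ii), and hence your deduction of part (i), has nothing to rest on.

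The paper closes precisely this gap with a short power-counting argument, which is the real content of the proof. A would-be new or surviving-overlap divergence in $\Gamma^T_c$ must arise from a union of cycles $\mathrm{cyc}\subset\bar p$ through the insertion vertex $x_p$, completed inside $p$ by the minimal spanning tree $t_q^{\bar q}$, which after rerouting is a path from $v_p$ to $w_p$. Since $\bar p$ is primitive, $\mathrm{sd}(\mathrm{cyc})\le 0$; replacing the vertex $x_p$ by the internal path $t_q^{\bar q}$ adds edges without adding loops, so $\mathrm{sd}(\mathrm{cyc}-x_p\cup t_q^{\bar q})<\mathrm{sd}(\mathrm{cyc})\le 0$ and the candidate subgraph is convergent. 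This is what shows that the only divergent subgraphs of $\Gamma^T_c$ are those indexed by the subtrees of $T$, giving both the equivalence in (i) and the uniqueness of the complete forest in (ii) (which the paper then calls obvious). Your surrounding bookkeeping --- the insensitivity of $\mathrm{sd}$ to setting $m_e=0$, the equality of shadows, and the passage from the summands $\Gamma^T_c$ over $f_c$ to the equivalence $\sum_f\Gamma_f\sim\Gamma$ --- is fine, but it all hinges on the step you deferred.
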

Note that this implies that under such choices, no new singular sectors are generated. The graphs $\Gamma^T_c$ are in particular 
free of overlapping divergences.
\begin{proof}
By assumption, all decorations $p$ in any $T\in C(\Gamma)$ are log-divergent. 
For any choice, consider the pair $\bar{p},p$, with $V_p\in p^{[0]}$ as described.
Define $t_p^{\bar{p}}$ to be a minimal spanning tree of $V_p$ on internal edges of $p$,
so there is no other spanning tree of  $V_p$ in $p$ which contains fewer edges.

Let $x_p\in\bar{p}^{[0]}$ be the vertex into which $p$ is inserted. For any $p,\bar{p}$ decorations of $T$
with $\bar{p}$ above $p$,  and $T\in C(\Gamma)$,
there are $q,\bar{q}$ in $T_f=C(\Gamma_{f})$ such that the minimal spanning tree $t_q^{\bar{q}}\subseteq t_p^{\bar{p}}$ is a path 
from $v_p$ to $w_p$ \footnote{In theories with derivative couplings, where vertices might have negative weight in powercounting,
one must (and can) demand that $\mathrm{sd}(t^{\bar{q}}_q)>0$.}. Any
not necessarily proper subset of cycles $\mathrm{cyc}\subset \bar{p}$ containing $x_p$ has 
$\mathrm{sd}(\mathrm{cyc})\leq 0$, 
as $\bar{p}$ is primitive by construction. As $\mathrm{sd}(\mathrm{cyc}-x_p\cup t_q^{\bar{q}})<
\mathrm{sd}(\mathrm{cyc})$, i) follows.\\ ii) is obvious.
\end{proof}
Two examples might be helpful.
\begin{example}
The graph
\begin{equation}
\gamma=\wf\subset \wtwf=\Gamma
\end{equation}
say is connected through edges $4,5,6$ to the graph
\begin{equation}
\Gamma/\gamma=\wt.
\end{equation}
Also, at the vertex of edges $9,10,13$, an external momentum enters $\Gamma$. So here, the set $V_\gamma$ consists of four vertices, three vertices in $V_\gamma^{\Gamma/\gamma}$ and a single vertex in $V_\gamma^{\mathrm{ext}}$. The graphs $\gamma,\Gamma/\gamma$ form a pair $p,\bar{p}$. Let $x_p$ 
be the vertex $4,5,6$ into which $\gamma$ is inserted into $\Gamma/\gamma$.

The edges $11,12,13,14$ form a minimal spanning tree for the four vertices in $V\gamma$.
In 
\begin{equation}
\Gamma_\gamma=\wtwft
\end{equation}
the connectors 4,5,6 have changed their endpoints. The divergent subgraph 
\begin{equation}
\wfb=\gamma^2\subset\Gamma_\gamma
\end{equation}  
is a 1-scale graph (in particular the external momentum at the vertex of edges $9,10,13$ is sent to zero), and $\Gamma/\gamma=\Gamma_\gamma/\gamma^2$.
$V_{\gamma^2}$ consists of two vertices, with spanning tree the path $t$ given by edges $12,14$.
Any 1PI subgraph of $\Gamma/\gamma$ is superficially convergent (if it is a proper subgraph) or log divergent
(if it is $\Gamma/\gamma$ itself) as $\Gamma/\gamma$ is free of sub-divergences.
In $\Gamma_\gamma$, such a sub-graph has the vertex $x_p$ replaced by the path $t$. Its powercounting improves by the weight of the edges in $t$.
So by construction, for any given forest $f$, we have a graph $\Gamma_f$ where the subgraphs indexed by the forest
 are 1-scale graphs. 
\end{example}
This ends the first example. Lets consider another one.
\begin{example} 
\begin{equation}
\Gamma=\oldc.
\end{equation}
We know $|C(\Gamma)|=2$. It has two maximal complete forests
$f_4,f_5$. 
A possible choice of 1-scale structures for them is
\begin{eqnarray}
\Gamma_{f_4} & = & \noldcl,(m_1=m_2=m_3=m_4=0)\\
\Gamma_{f_5} & = & \noldcr, (m_5=m_6=m_3=m_4=0).
\end{eqnarray}
We then have $\Gamma_{f_0}=\Gamma$, $\Gamma_{f_1}=\Gamma_{m_3=m_4=0}$, $\Gamma_{f_2}=\Gamma_{{f_4}}$,
$\Gamma_{f_3}=\Gamma_{{f_5}}$, where we refer to the forests in Eqs.(\ref{f0}-\ref{f5}). 
\end{example}

We have now constructed graphs such that all divergent sub-graphs depend only on a single-scale.
Still, the set $\{v\in \Gamma_f^{[0]}|q(v)\not= 0\}$ might have cardinality $>2$.

But then, we can partition this set into two non-empty subsets $X,Y$ say and choose 
a vertex $x\in X, y\in Y$ for each. We then set $q(x):=\sum_{v\in X}q(v)$ and
$q(y):=\sum_{v\in Y}q(v)$, and all other $q(v)=0$. This gives a new graph $\Gamma_f^2$
which itself is a 1-scale graph.

Define ${\phi_2}_{\Gamma_f}=
\varphi_{\Gamma_f^2}$. Note that we use the massless $\varphi$ on 1-scale graphs.
Define furthermore
\begin{equation}
x_2^{\Gamma_f}:= \frac{{\phi}_{\Gamma_f}(\Theta)-{\varphi_2}_{\Gamma_f}}{{\varphi_2}_{\Gamma_f}}.
\end{equation}
Note that $x_2^{\Gamma_f}$ is independent of $S,S_0,\Theta_0$, by construction:
$x_2^{\Gamma_f}=x_2^{\Gamma_f}(\Theta)$.
We have
\begin{lem}
 \begin{equation}
\mathbf{\Phi}_{\mathrm{fin}}(\Gamma):=\sum^\emptyset_f(-1)^{|f|}\int_{\mathbb{P}^{n}(\mathbb{R}_+)}M(x_2^{\Gamma_f})  
 \end{equation}
converges. Here, the notation is defined similarly to  Eqs.(\ref{eqform},\ref{finalresult}).
\end{lem}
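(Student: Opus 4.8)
The plan is to run the convergence argument of Theorem~\ref{finalr} almost verbatim, now for the family of one-scale graphs $\Gamma_f$ rather than for $\Gamma$ itself. The only difference between $M(x_2^{\Gamma_f})$ and $M(x_f^\Gamma)$ is the substitution of the purely angular variable $x_2^{\Gamma_f}$ for $x_f^\Gamma$ in Eq.~(\ref{eqform}); every graph polynomial, every power of $\psi$, the symmetric polynomials $E_j^{C_2/I}$ and the functions $l_i^{[s_\Gamma]}$ are untouched. First I would record the structural properties of the new argument. Since $1+x_2^{\Gamma_f}={\phi}_{\Gamma_f}(\Theta)/{\varphi_2}_{\Gamma_f}$ with numerator and denominator both homogeneous of the common degree $|\psi_{\Gamma_f}|+1$, the combination $1+x_2^{\Gamma_f}$ is homogeneous of degree zero; ${\varphi_2}_{\Gamma_f}$ is moreover strictly positive on the open orthant, being the massless Symanzik polynomial of a one-scale two-point graph and hence a sum of positive monomials (the circular join of Eq.~(\ref{PsiGcirc})), while $\phi_{\Gamma_f}(\Theta)$ is non-negative in general kinematic position, any zero contributing only an integrable logarithmic singularity. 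Hence $l_{s_\Gamma-|I|-j}^{[s_\Gamma]}(x_2^{\Gamma_f})$ creates no non-integrable interior singularity.

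Next I would localise the analysis at the divergent sectors. By Lemma~\ref{onescale}(ii) each $\Gamma_f$ satisfies $|C(\Gamma_f)|=1$, so it carries no overlapping divergences and its singular loci are exactly the nested faces $\{A_e=0:e\in\gamma\}$ indexed by the subgraphs $\gamma$ of a single chain of forests. Along such a face the Proposition (Eqs.~(\ref{cher1}),~(\ref{cher2})) gives $\psi_{\Gamma_f}=\psi_{\Gamma_f/\gamma}\psi_\gamma+R^{\Gamma_f}_\gamma$ and ${\phi}_{\Gamma_f}(\Theta)=\phi_{\Gamma_f/\gamma}(\Theta)\psi_\gamma+\bar R^{\Gamma_f}_\gamma(\Theta)$ with remainders of strictly higher degree in the edge variables of $\gamma$. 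Consequently the leading behaviour of $M(x_2^{\Gamma_f})$ as the $\gamma$-variables tend to zero factorises into a subgraph piece times a cograph piece, and its divergence is the angle-independent logarithmic one governed by $\psi_\gamma$ alone; the substitution $x_f^\Gamma\to x_2^{\Gamma_f}$ alters only the subleading, finite, angle-dependent part, since $1+x_2^{\Gamma_f}\to\phi_{\Gamma_f/\gamma}(\Theta)/{\varphi_2}_{\Gamma_f/\gamma}$ at leading order, the $\gamma$-dependence dropping out.

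I would then invoke the inclusion-exclusion underlying the forest formula to dispose of the subdivergences. The alternating sum $\sum_f^\emptyset(-1)^{|f|}$ pairs the terms with $\gamma\in f$ against those with $\gamma\notin f$, and because the leading singular coefficient on each $\gamma$-face is angle-independent it agrees across the two families; the cancellation is precisely the combinatorial identity $\emptyset=\sum_j\bar{\sigma}_j$ of Eq.~(\ref{empty}) applied to $\Gamma_f$. The overall (leading) behaviour of each term is instead controlled internally by the Taylor truncation built into $l_i^{[s_\Gamma]}$ (so that $l_0^{[0]}=\ln(1+x)$ for $s_\Gamma=0$ and $l_1^{[1]}$ for $s_\Gamma=1$) together with the boundedness of $x_2^{\Gamma_f}$, exactly as in the derivation of Eqs.~(\ref{quadroverall}),~(\ref{convoverall}) and~(\ref{quadrsub}). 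Combining these facts over the finitely many sectors shows that the combined forest integrand has all its sector singularities cancelled and is integrable, i.e.\ the forest sum converges.

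The hard part will be the middle step: confirming that the leading singularity along every divergent face is genuinely angle-independent, so that the cancellation is insensitive to replacing $x_f^\Gamma$ by $x_2^{\Gamma_f}$. This is exactly where the degree estimates of the Proposition are used, to guarantee that the angle-dependent corrections $\bar R^{\Gamma_f}_\gamma(\Theta)$ are subleading on each $\gamma$-face; the positivity of ${\varphi_2}_{\Gamma_f}$ is the supporting point ruling out interior singularities. Once these are in hand, the remaining bookkeeping---reconciling the $C_2$ edge choices and the powers of $\psi$ recorded in Eq.~(\ref{eqform}), and the matching of the varying projective dimensions $n=|\Gamma^{[1]}|-|C_2|+|I|$---is the same tedious but routine rewriting already invoked for Theorem~\ref{finalr}.
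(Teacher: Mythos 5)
Your proposal is correct and follows essentially the same route as the paper's own (much terser) proof: both rest on the observation that along each forest face the leading factorization $\phi_{\Gamma_f}=\phi_{\Gamma/f}\psi_f+\bar R^{\Gamma_f}_f$ agrees with that of the standard case up to a subleading remainder, so the same forest-sum cancellation of sector singularities applies with $x_2^{\Gamma_f}$ in place of $x_f^\Gamma$. The extra detail you supply (positivity and degree-zero homogeneity of $1+x_2^{\Gamma_f}$, the use of Lemma \ref{onescale}(ii), and the role of the truncations $l_i^{[s_\Gamma]}$ for the overall divergence) is a faithful expansion of what the paper leaves implicit.
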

\begin{proof}
By construction, in each sector, singularities drop out, reflected by the scale independence of $x_2^{\Gamma_f}$,
and we use the same forest sum as in the proof of convergence for $\Phi^R_{\mathrm{1-s}}$ below.
Compared to renormalization by counterterms, where the remainder terms $\bar{R}^\Gamma_f$ 
are replaced by $\phi_f(\Theta_0)\psi_{\Gamma/f}$, we now have different remainder terms $\bar{R}^{\Gamma_f}_f$ 
for them. This is  due to the fact that $\phi_{\Gamma/f}=\phi_{\Gamma^f/f}$. Hence, along a forest $f$,
$\phi_\Gamma=\phi_{\Gamma/f}\psi_f+\bar{R}^\Gamma_f$ and $\phi_{\Gamma_f}=\phi_{\Gamma/f}\psi_f+\bar{R}^{\Gamma_f}_f$. 
\end{proof}
\section{Double iteration}
Let us now define
\begin{equation}
 x^\Gamma_{g,f}=\frac{\frac{S}{S_0}\phi_{\Gamma_f/g}(\Theta)\psi_g
+\phi_g(\Theta_0)\psi_{\Gamma_f/g}}{\phi_{\Gamma_f/g}(\Theta_0)\psi_g+\phi_g(\Theta_0)\psi_{\Gamma_f/g}}
\end{equation}
and we work with $M(x^\Gamma_{f,g})$.
We then have
\begin{equation}
 \mathbf{\Phi}^R(\Gamma)=
\int_{\mathbb{P}^{n}(\mathbb{R}_+)}\sum^\emptyset_g(-1)^{|g|}M(x^\Gamma_{\emptyset,g}),
\end{equation}
\begin{equation}\label{doublefin}
 \mathbf{\Phi}_{\mathrm{fin}}(\Gamma)=
\int_{\mathbb{P}^{n}(\mathbb{R}_+)}\sum^\emptyset_f(-1)^{|f|}M(x^\Gamma_{f,\emptyset}).
\end{equation}
Finally, we define
\begin{equation}\label{forestsum}
 \mathbf{\Phi}^R_f(\Gamma)=
\int_{\mathbb{P}^{n}(\mathbb{R}_+)}\sum^\emptyset_g(-1)^{|g|}M(x^\Gamma_{f,g}).
\end{equation}
Then, first, (due to $\emptyset=\sum_j \bar{\sigma}_j$)
\begin{equation}
 \Phi^R(\Gamma)=\sum_{j=1}^\infty \Phi^R_{\bar{\sigma}_j}(\Gamma),
\end{equation}
which is a terminating sum on the rhs due to the finite co-radical filtration $\mathrm{cor}(\Gamma)$ of every graph $\Gamma$.

Furthermore, we have our decomposition:
\begin{thm}
 $$\mathbf{\Phi}^R_{\bar{\sigma}_j}(\Gamma-\Gamma_2)=\sum_{k=0}^{j-1}c_k(\Theta,\Theta_0)L^k,$$ 
 $$\mathbf{\Phi}^R_{\bar{\sigma}_j}(\Gamma_2)=\sum_{k=1}^{j}d_k L^k.$$ 
\end{thm}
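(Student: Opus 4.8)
The plan is to read $\Gamma_2$ as the single-scale (``two-point-like'') projection $\bar{\Gamma}$ of $\Gamma$, so that $\mathbf{\Phi}^R(\Gamma_2)=\bar{\mathbf{\Phi}}(\Gamma)\in G_{\mathrm{1-s}}$ carries the angle-independent periods, while $\Gamma-\Gamma_2$ carries the angle-dependent remainder; by linearity of $\mathbf{\Phi}^R$ and of $\bar{\sigma}_j$ the first identity is $\mathbf{\Phi}^R_{\bar{\sigma}_j}(\Gamma)-\mathbf{\Phi}^R_{\bar{\sigma}_j}(\Gamma_2)$. I would prove both statements by induction on the coradical degree $j$, using $\bar{\sigma}_j=\sigma\star\bar{\sigma}_{j-1}$ together with the double iteration of Eq.~(\ref{forestsum}), which makes $\mathbf{\Phi}^R_{\bar{\sigma}_j}$ compatible with this convolution by splitting a forest at its outermost cut. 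Throughout, the scale enters only through $S/S_0=e^{L}$ inside $1+x^\Gamma_{f,g}$, cf.\ Eq.~(\ref{finalxvar}); hence the entire $L$-dependence is carried by the functions $l_{s_\Gamma-|I|-j'}(x^\Gamma_{f,g})$ in $M$, and the task reduces to counting, level by level, how many powers of $L$ each overall subtraction can generate and whether its coefficient is angle-free.

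For the 1-scale statement $\mathbf{\Phi}^R_{\bar{\sigma}_j}(\Gamma_2)=\sum_{k=1}^{j}d_kL^k$ I would argue as follows. Because a 1-scale graph depends on a single scale, $\phi_{\Gamma/f}(\Theta)$ and $\phi_{\Gamma/f}(\Theta_0)$ coincide as functions, so $x^\Gamma_f$ is a function of $e^{L}$ alone and the $d_k$ are manifestly angle-independent. The one-step lemma I need is that a single application of $\sigma$, i.e.\ one overall subtraction, produces exactly one power of $L$: for $s=0$ this is the $L$-linear part of $l_0(x)=\ln(1+x)$, and for $s=1$ it is the corresponding part of $l_1$ in Eq.~(\ref{lonex}). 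Iterating $j$ times bounds the degree by $j$; the absence of an $L^0$ term is forced by the subtraction at the reference point, where $L=0$ and $x^\Gamma_f$ vanishes so that every $l_i$ vanishes as well. The genuinely transcendental pieces $\ln(e^{L}A+B)$ cancel in the signed forest sum $\sum_f(-1)^{|f|}$, exactly the inclusion--exclusion already used for convergence in Theorem~\ref{finalr}, leaving a polynomial of the stated range.

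For $\Gamma-\Gamma_2$ I would show that the top power $L^{j}$ cancels in the difference. The coefficient of $L^{j}$ in $\mathbf{\Phi}^R_{\bar{\sigma}_j}(\Gamma)$ is obtained by taking the $L$-linear part of $l_0$ at each of the $j$ levels; each such factor has angle-free coefficient $1$, so the top coefficient is angle-independent, and the angle-dependent parts of $l_0$ (and of $l_1$) feed only into strictly lower powers. Moreover this leading coefficient is computed from the cograph factors $\phi_{\Gamma/f}$, and by the relation $\phi_{\Gamma/f}=\phi_{\Gamma_f/f}$ used in the lemma for $\mathbf{\Phi}_{\mathrm{fin}}$ it is unchanged when $\Gamma$ is replaced by its 1-scale version $\Gamma_2$ (only the remainder terms $\bar{R}^\Gamma_f$ differ). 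Hence $[\mathbf{\Phi}^R_{\bar{\sigma}_j}(\Gamma)]_{L^{j}}=[\mathbf{\Phi}^R_{\bar{\sigma}_j}(\Gamma_2)]_{L^{j}}$ and the subtraction removes $L^{j}$, while the constant term --- absent from the 1-scale amplitude --- is retained, giving the range $k=0,\dots,j-1$ with angle-dependent $c_k(\Theta,\Theta_0)$.

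The hard part will be the exact degree accounting rather than the upper bounds: one must verify that the iterated double-forest sum built from $x^\Gamma_{f,g}$ produces precisely one factor of $L$ per coradical level with no cancellation at the top, that the angle-dependent contributions are confined to strictly lower powers at every level, and that all $\ln(e^{L}A+B)$ remainders cancel level by level. The most delicate bookkeeping is for chains of nested quadratic subdivergences, organised by the partial ordering of \S\ref{sectQuadratic}, where the repeated $l_1$-subtractions and the elementary symmetric polynomials $E_j^{C_2/I}$ of Eq.~(\ref{eqform}) must be tracked to confirm that the two-point sector indeed reaches $L^{j}$ with an angle-free coefficient while the complement stops at $L^{j-1}$.
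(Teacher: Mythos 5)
Your decomposition is the right one (the paper states this theorem without an attached proof; its substance is carried by the machinery developed in the later sections), and your observations that the top coefficient of $L^{j}$ is built from residues $\int\Omega/\psi^{2}$ depending only on the $\psi$-polynomials --- hence unchanged under $\Gamma\mapsto\Gamma_{2}$ and angle-free --- and that the absence of an $L^{0}$ term for $\Gamma_{2}$ follows from the subtraction at $S=S_{0}$, are both sound. The gap is in the central step: that the signed forest sum turns the individual contributions $\ln(e^{L}A+B)$ into a genuine polynomial in $L$ with exactly one power of $L$ per coradical level. Your one-step lemma, ``a single application of $\sigma$ produces exactly one power of $L$,'' is false term by term --- each non-empty forest term in Eq.~(\ref{IGammafull}) contributes $\ln\bigl(s\,\psi_{\gamma}\phi_{\Gamma/\gamma}+\phi_{\gamma}\psi_{\Gamma/\gamma}\bigr)$, which is transcendental in $L=\ln s$ --- and it becomes true only for the alternating sum, which is precisely what has to be proved. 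Appealing to ``the same inclusion--exclusion used for convergence in Theorem~\ref{finalr}'' conflates two different cancellations: the convergence argument cancels simple poles along exceptional divisors in Schwinger-parameter space (\S\ref{sectCancelpoles}), and says nothing about the $s$-dependence of the resulting finite integral.

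The mechanism the paper actually uses is the renormalization group equation, Theorem~\ref{Renormeq} and Proposition~\ref{propprerenormgroup}: $s\,\frac{d}{ds}f_{\Gamma}(s)=\sum_{i}f_{\gamma_{i}}\,f_{\Gamma/\gamma_{i}}(s)$, from which polynomiality of degree equal to the coradical degree, the vanishing constant term ($I_{\Gamma}(1)=0$), and the product structure of the coefficients all follow by induction. Proving that equation is the real work: it requires lifting the projective integral to an affine one via $\frac{1}{AB}=\int_{0}^{\infty}\frac{d\lambda}{(A\lambda+B)^{2}}$ (Lemma~\ref{lemminormonotone}), the product formula of Proposition~\ref{prop1}, the additivity $\phi_{\gamma}/\psi_{\gamma}=\sum_{i}\phi_{\gamma_{i}}/\psi_{\gamma_{i}}$ of Lemma~\ref{lemcjoinproperty}, and above all the Hopf-algebraic identity for the preparation map, $\mu_{13}\circ(\Delta\otimes(\Delta-\mathrm{id}\otimes 1))\circ R=1\otimes R$ (Theorem~\ref{thmRmainproperty}), together with Proposition~\ref{propHomifdiff} to convert the differential equation into the cocharacter property. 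None of this is reconstructed or replaced in your proposal; without it, the degree accounting you correctly identify as ``the hard part'' cannot be carried out, so the argument as written does not close.
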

In particular, $\mathbf{\Phi}^R_{\mathrm{1-s}}(\Gamma):=\sum_{j=1}^\infty \mathbf{\Phi}^R_{\bar{\sigma}_j}(\Gamma_2)$.

This can be written in Sweedler's notation as (see Eq.(\ref{tradfor}))
\begin{eqnarray}\mathbf{\Phi}^R_{\mathrm{1-s}}(\Gamma) & = & \sum \mathbf{\Phi}^{-1}(\Gamma^{\prime}_2)(m_e=0)(S_0)\mathbf{\Phi}(\Gamma^{\prime\prime}_2)(S)(m_e=0),\\  & \equiv & \mathbf{\Phi_2}^{-1}(\Gamma^\prime)
\mathbf{\Phi_2}(\Gamma^{\prime\prime}),\end{eqnarray}
where $\mathbf{\Phi_2}(\Gamma)=\mathbf{\Phi}(\Gamma_2)$ and 1-scale graphs $\Gamma_2$  are assumed massless.
We use unrenormalized Feynman rules in the affine representation and  1-scale subgraphs which are  evaluated to counterterms 
$\mathbf{\Phi}^{-1}(\Gamma^{\prime}_2)(m_e=0)(S_0)$.
The sum terminates at $\mathrm{cor}(\Gamma)$.

By construction, 
\begin{equation}
\mathbf{\Phi}_{\mathrm{fin}}(\Gamma)(\Theta)=\sum\mathbf{\Phi_2}(\Gamma^\prime)^{-1}\mathbf{\Phi}(\Gamma^{\prime\prime}).
\end{equation}
Then, $\mathbf{\Phi}_{\mathrm{1-s}}(\Gamma)=\mathbf{\Phi}^{-1}(S_0)\star \mathbf{\Phi}(S)$, where it is understood that on the rhs of the equation  
sub- and co-graphs $\Gamma^{\prime\prime},\Gamma^\prime$ of $\Gamma$  are projected to 1-scale graphs $\Gamma^{\prime\prime}_2,\Gamma^\prime_2$.
We finally have our decomposition
\begin{equation}
\mathbf{\Phi}^R\equiv S^\mathbf{\Phi}_R\star\mathbf{\Phi}=\underbrace{({\mathbf{\Phi_2}}^{-1}\star\mathbf{\Phi}(\Theta_0))^{-1}}_{\mathbf{\Phi}_{\mathrm{fin}}^{-1}(\Theta_0)}
\star\mathbf{\Phi}^R_{\mathrm{1-s}}\star\underbrace{({\mathbf{\Phi_2}}^{-1}\star\mathbf{\Phi}(\Theta))}_{\phi_{\mathrm{fin}}(\Theta)}.
\end{equation}

\section{Examples} Next, we work out examples to acquaint the reader with our approach. These examples consider the decomposition into angles and scales. It is not our aim to do the final integrations, but to exhibit the structure
of properly renormalized Feynman integrands, suitably decomposed as announced. In the following sections, we exclusively
explore properties of $G_{\mathrm{1-s}}$.
\subsection{Overall log}
We start with overall logarithmic divergent graphs.
\subsubsection{Primitive log}
Consider 
\begin{equation}
\Gamma=\wt,
\end{equation}
and
\begin{equation}
\Gamma^2=\wtb.
\end{equation}
The graph $\Gamma$ is the one we are interested in, the 1-scale graph $\Gamma^2$ is supposed to have only two vertices at which $q(v)\not=0$, and all internal masses in it are set to zero.

Trivially, $|C(\Gamma)|=1$, with $C(\Gamma)=\{T\}$, and $T=\bullet_\Gamma$ primitive in the Hopf algebra of trees,
and $\Gamma$ is primitive in the Hopf algebra of graphs,
$\Delta(\Gamma)=\Gamma\otimes\One+\One\otimes\Gamma$.

For the kinematics, we let $S=p_1^2+p_2^2+p_3^2+2p_1\cdot p_2+2p_2\cdot p_3+2p_3\cdot p_1$ (which defines the variable angles $\Theta^{ij}=p_i\cdot p_j/S,\Theta^e=m_e^2/S$)
and subtract symmetrically say at $S_0, \Theta_0^{ij}=\frac{1}{3}(4\delta_{ij}-1)$ and $\Theta_0^e=m_e^2/S_0$,
which specifies the fixed angles $\Theta_0$.

The renormalized form is
\begin{equation}
\Phi^R_\Gamma=\frac{\ln\frac{\frac{S}{S_0}\phi_\Gamma(\Theta)}{\phi_\Gamma(\Theta_0)}}{\psi_\Gamma^2}\Omega_\Gamma.
\end{equation}
To find the desired decomposition, we use
\begin{equation}
\Delta^2(\Gamma)=\Gamma\otimes\One\otimes\One+\One\otimes\Gamma\otimes\One+\One\otimes\One\otimes\Gamma.
\end{equation}
We then have
\begin{equation}
\Phi^R_\Gamma=\Phi^{-1}_{\mathrm{fin}}(\Theta_0)(\Gamma)+\Phi^R_{\mathrm{1-s}}(S/S_0)(\Gamma)+
\Phi_{\mathrm{fin}}(\Theta)(\Gamma).
\end{equation}
We have
\begin{equation}
\Phi^{-1}_{\mathrm{fin}}(\Theta_0)(\Gamma)=-\frac{\ln\frac{\phi_\Gamma(\Theta_0)}{\psi_{{\Gamma^2}^\bullet}}}{\psi_\Gamma^2}\Omega_\Gamma,
\end{equation}
\begin{equation}
\Phi^R_{\mathrm{1-s}}(S/S_0)(\Gamma)=\frac{\ln\frac{S}{S_0}}{\psi_{\Gamma^2}^2}\Omega_\Gamma,
\end{equation}
which integrates to the renormalized value $\mathbf{\Phi}^R_{\mathrm{1-s}}(S/S_0)(\Gamma)=6\zeta(3)\ln\frac{S}{S_0}$.
Finally, 
\begin{equation}
\Phi_{\mathrm{fin}}(\Theta)(\Gamma)=\frac{\ln\frac{\phi_\Gamma(\Theta)}{\psi_{{\Gamma^2}^\bullet}}}{\psi_\Gamma^2}\Omega_\Gamma.
\end{equation}
These integrands indeed all converge, which is synonymous for us to say that they can be integrated against $\mathbb{P}^{|\Gamma^{[1]}|-1}(\mathbb{R}_+)$.   
\subsubsection{Log with log sub}
Consider the graph
\begin{equation}
\Gamma=\wtwf.
\end{equation}
It has a subgraph 
\begin{equation}
\gamma=\wf
\end{equation}
and a co-graph
\begin{equation}
\Gamma/\gamma=\wt.
\end{equation}
It suffices to consider a small Hopf algebra (co-)generated by 
$\One,\Gamma,\gamma,\Gamma/\gamma$, with the only non-trivial co-product
\begin{equation}
\Delta(\Gamma)=\Gamma\otimes\One+\One\otimes\Gamma+\gamma\otimes\Gamma/\gamma.
\end{equation}
Obviously, $|C(\Gamma)|=1$. We have $C(\Gamma)=\{T\}$, with 
\begin{equation}
T=\loglog,
\end{equation}
with $T$ just a tree on two vertices\footnote{In light of \cite{BergbKr}, note that we can write
\begin{equation}
T=B_+^{\Gamma/\gamma}(\bullet_\gamma), 
\end{equation}
and $B_+^{\Gamma/\gamma}$ a Hochschild closed map for this Hopf algebra:
\begin{equation}
\rho(\Gamma)=T, \Delta_T(T)=T\otimes\One+\One\otimes T+\bullet_\gamma\otimes\bullet_{\Gamma/\gamma}=\Delta_T B_+^{\Gamma/\gamma}(\bullet_\gamma),
\end{equation}
with $\Delta_T B_+^{\Gamma/\gamma}=B_+^{\Gamma/\gamma}\otimes \One+(\mathrm{id}\otimes B_+^{\Gamma/\gamma})\Delta_T$.}.

Forests are $f_0=\emptyset$ and $f_1=\gamma$. 

The renormalized integrand is then
\begin{equation}
\Phi^R_\Gamma(S/S_0,\Theta,\Theta_0)=\left\{ 
\frac{\ln\frac{\frac{S}{S_0}\phi_\Gamma(\Theta)}{\phi_\Gamma(\Theta_0)}}{\psi_\Gamma^2}-
\frac{
\ln 
\frac{
 \frac{S}{S_0}\phi_{\Gamma/\gamma}(\Theta)\psi_\gamma+\phi_\gamma(\Theta_0)\psi_{\Gamma/\gamma}
 }
{ \phi_{\Gamma/\gamma}(\Theta_0)\psi_\gamma+\phi_\gamma(\Theta_0)\psi_{\Gamma/\gamma}}
}
{\psi_{\Gamma/\gamma}^2\psi^2_\gamma} 
\right\}\Omega_\Gamma.
\end{equation}
One immediately checks that this is integrable in the edge variables against $\mathbb{P}^{13}(\mathbb{R}_+)$,
using nothing more than our cherished remainder properties of the graph polynomials Eqs.(\ref{cher1},\ref{cher2}).

For the decomposition as a twisted conjugation, we stick with our conventions that a superscript ${}^2$ indicated that a graph 
is 1-scale with respect to its external momenta flow ($q(v)\not= 0$ only for two vertices, all internal masses zero), while a subscript ${}_f$ means that the graphs in the forest $f$ are made 1-scale as subgraphs, also with zero internal masses in them.
Then, $\Gamma_\emptyset=\Gamma$ (the forest which is made 1-scale is empty), and we choose
\begin{equation}
\Gamma^2\equiv \Gamma^2_\emptyset=\wtwfb,
\end{equation}
\begin{equation}
\Gamma_\gamma=\wtwft,
\end{equation}
and
\begin{equation}
\Gamma^2_\gamma=\wtwftb.
\end{equation}
This gives also a new sub-graph $\gamma^2$ of $\Gamma_\gamma$,
\begin{equation}
\gamma^2=\wfb,
\end{equation}
and a new co-graph of $\Gamma^2_\emptyset$,
\begin{equation}
\Gamma^2_\gamma/\gamma^2=\Gamma^2_\emptyset/\gamma=\wtb.
\end{equation}
We have by definition
\begin{equation}
\phi_{\Gamma_\emptyset}=\phi_\Gamma=S\phi_{\Gamma}(\Theta),
\end{equation}
\begin{equation}
\phi_{\Gamma_\emptyset^2}=S\psi_{{\Gamma_\emptyset^2}^\bullet},
\end{equation}
and
\begin{equation}
\phi(\Gamma_\gamma)=S\phi_{\Gamma/\gamma}(\Theta)\psi_\gamma+S\psi_{\gamma^\bullet}\psi_{\Gamma/\gamma}.
\end{equation}
Also,
\begin{equation}
\phi(\Gamma_\gamma^2)=S\psi_{{\Gamma/\gamma}^\bullet}\psi_\gamma+S\psi_{\gamma^\bullet}\psi_{\Gamma/\gamma}.
\end{equation}
It follows that
\begin{equation}
1+x_2^{\Gamma_\emptyset}=\frac{\phi_{\Gamma_\emptyset}(\Theta)}{\psi_{{\Gamma^2_\emptyset}^\bullet}}
\end{equation}
and
\begin{equation}
1+x_2^{\Gamma_\gamma}=\frac{\phi_{\Gamma_\gamma}(\Theta)}{\psi_{{\Gamma^2_\gamma/\gamma^2}^\bullet}\psi_{\gamma}+\psi_{{\gamma^2}^\bullet}\psi_{\Gamma_\gamma^2/\gamma^2}}.
\end{equation}

We also need 
\begin{equation}
{\Gamma^2_\gamma/\gamma^2}^\bullet=\wtbbull,
\end{equation}
and 
\begin{equation}
G=\wtbups.
\end{equation}
We have $\Gamma^2_\gamma/\gamma^2=\Gamma^2_\emptyset/\gamma$ and 
$\Gamma/\gamma=\Gamma_\gamma/\gamma^2$. Also, $\psi_\Gamma=\psi_{\Gamma^2_\emptyset},\psi_{\Gamma_\gamma}=\psi_{\Gamma_\gamma^2}$,
and $\psi_\gamma=\psi_{\gamma^2}$.

To evaluate $\Phi_{\mathrm{fin}}^{-1}(\Theta_0)\star\Phi^R_{\mathrm{1-s}}(S/S_0)\star\Phi_{\mathrm{fin}}(\Theta)$,
we need
\begin{equation}
\Delta^2(\Gamma)=\Gamma\otimes\One\otimes\One+\One\otimes\Gamma\otimes\One+\One\otimes\One\otimes\Gamma+\One\otimes\gamma\otimes\Gamma/\gamma
+\gamma\otimes\One\otimes\Gamma/\gamma+\gamma\otimes\Gamma/\gamma\otimes\One.
\end{equation}
We find
\begin{eqnarray}\label{exa1}
\Phi^R_\Gamma & = & \Phi_{\mathrm{fin}}(\Theta)(\Gamma)+\Phi_{\mathrm{fin}}^{-1}(\Theta_0)(\Gamma)+\Phi^R_{\mathrm{1-s}}(\gamma)
\Phi_{\mathrm{fin}}(\Theta)(\Gamma/\gamma)\nonumber\\ & & +\Phi_{\mathrm{1-s}}^R(\Gamma)+\Phi_{\mathrm{fin}}^{-1}(\Theta_0)(\gamma)\Phi_{\mathrm{fin}}(\Theta)(\Gamma/\gamma)+\Phi_{\mathrm{fin}}^{-1}(\Theta_0)(\gamma)\Phi^R_{\mathrm{1-s}}(\Gamma/\gamma).
\end{eqnarray}
Note that all the terms on the rhs can be expressed using forest sums as in Eq.(\ref{forestsum}):
\begin{eqnarray}
\Phi^R_\Gamma & = & 
+
\underbrace{\Phi^R_{\bar{\sigma}_1}(\Gamma-\Gamma^2)}_{\mathcal{O}(1)}+\underbrace{\Phi^R_{\bar{\sigma}_2}(\Gamma-\Gamma^2)}_{\mathcal{O}(L)}\label{exa1a}
\\
 & & +
\underbrace{\Phi^R_{\bar{\sigma}_1}(\Gamma^2)}_{\mathcal{O}(L)}+\underbrace{\Phi^R_{\bar{\sigma}_2}(\Gamma^2)}_{=\Phi^R_{\mathrm{1-s}}(\Gamma),\mathcal{O}(L^2)},\label{exa1b}
\end{eqnarray}
indicating orders in $L=\ln S/S_0$.
We have in particular (we invite the reader to confirm convergence of these expressions again directly from the remainder properties Eqs.(\ref{cher1},\ref{cher2}))
\begin{equation}
\Phi_{\mathrm{fin}}(\Theta)(\Gamma)=\left\{
\frac{\ln
\frac{\phi_{\Gamma}(\Theta)}{\psi_{{\Gamma^2_\emptyset}^\bullet}}
}{\psi^2_{\Gamma^2_\emptyset}}
-
\frac{\ln
\frac{\phi_{\Gamma_\gamma}(\Theta)}{\psi_{{\Gamma^2_\gamma}^\bullet}}
}{\psi^2_{\Gamma_\gamma^2}}
\right\}\Omega_\Gamma.
\end{equation}
Furthermore
\begin{equation}
\Phi^R_{\mathrm{1-s}}(S/S_0)(\Gamma)=\left\{
\frac{
\ln
\frac{S}{S_0}
}{\psi^2_{\Gamma^2}}
-
\frac{
\ln
\frac{\overbrace{ \frac{S}{S_0}\psi_{{\Gamma^2/\gamma^2}^\bullet}\psi_{\gamma^2}+\psi_{{\gamma^2}^\bullet}\psi_{\Gamma^2/\gamma^2}}^{\Upsilon_{\gamma^2;\Gamma^2}\left(\frac{S}{S_0}\right)}}{\psi_{{\Gamma^2/\gamma^2}^\bullet}\psi_{\gamma^2}+\psi_{{\gamma^2}^\bullet}\psi_{\Gamma^2/\gamma^2}}
}{\psi^2_{\Gamma^2/\gamma^2}\psi^2_{\gamma_2}}
\right\} \Omega_\Gamma
\end{equation}
The term $\sim\ln^2(S/S_0)$ in $\Phi^R_\Gamma$, generated by $\Phi_{\mathrm{1-s}}$ only, 
is obviously $60\zeta(3)\zeta(5)$ by our considerations, 
while (\ref{exa1}) reveals three sources for a term linear in $L=\ln(S/S_0)$: First, we have 
\begin{equation}
\int\frac{\psi_{{\Gamma^2_\gamma/\gamma^2}^\bullet}}{\psi^2_{\Gamma^2_\gamma/\gamma^2}\psi_{\gamma^2}\Upsilon_{\gamma^2;\Gamma^2_\gamma}(1)}\Omega_\Gamma,
\end{equation}
(patently inaccessible by standard methods) but note that 
$\Upsilon_{\gamma^2;\Gamma^2_\gamma}(1)=\psi_G$, so that this expression is reduced to the study of graph hypersurfaces, started in \cite{BrCMP,BrFeyn}. 

For the other two terms linear in $L$, they give
$\Phi_{\mathrm{fin}}^{-1}(\Theta_0)(\gamma)\Phi^R_{\mathrm{1-s}}(\Gamma/\gamma)$
(which is $20\zeta(5)$ times an unknown function of angles $\Theta_0$, also easily identified in the second term on the right in Eq.(\ref{exa1a})),
as well as 
$\Phi^R_{\mathrm{1-s}}(\gamma)
\Phi_{\mathrm{fin}}(\Theta)(\Gamma/\gamma)$ (which is $6\zeta(3)$ times an unknown function of angles $\Theta$,
see also the first term on the right in Eq.(\ref{exa1b})).

The constant term in $L$ is again an unknown  function of angles $\Theta,\Theta_0$, which equals 
$\Phi^R_\Gamma(1,\Theta,\Theta_0)$.
\subsubsection{Log with quadratic sub}
This is an interesting example as it will involve an overall convergent contribution $l_{-1}^{[0]}$.
Kinematics is as before.
Let
\begin{equation}
\Gamma=\dcqs,
\end{equation}
and
\begin{equation}
\bar{\Gamma}=\dcqssq.
\end{equation}
The subgraph is 
\begin{equation}
\gamma=\subqul.
\end{equation}
The cographs are
\begin{equation}
\Gamma/\gamma=\codcsq,
\end{equation}
\begin{equation}
\bar{\Gamma}/\gamma=\codc.
\end{equation}
The Hopf algebra structure is simply 
\begin{equation}
\Delta(\Gamma)=\Gamma\otimes\One+\One\otimes\Gamma+\gamma\otimes\Gamma/\gamma,
\end{equation}
as co-graphs which are tadpoles vanish.

We have
\begin{eqnarray}
 & & \Phi^R_\Gamma  =  
\frac{\psi_{\gamma^\bullet}}{\psi_\gamma}
\left\{
\frac{l_0^{[0]}(x^{\bar{\Gamma}}_\emptyset)}{\psi^2_{\bar{\Gamma}}}
-\frac{l_0^{[0]}(x^{\bar{\Gamma}}_\gamma))}{\psi^2_{\bar{\Gamma/\gamma}}\psi^2_\gamma}
\right\}
\Omega_{\bar{\Gamma}}\\ 
&  & +
m_\gamma^2\frac{\bar{\psi}_{\gamma^\bullet}}{\psi_\gamma}
\left\{
\left(\frac{S_0\phi_\Gamma(\Theta_0)}{\psi_\Gamma}\right)^{-1}
\frac{l_{-1}^{[0]}(x^{{\Gamma}}_\emptyset)}{\psi^2_{{\Gamma}}}\right. \nonumber\\ & & \left.
-\left(\frac{S_0\phi_{\Gamma/\gamma}(\Theta_0)\psi_\gamma+S_0\bar{\psi}_{\gamma^\bullet}\psi_{\Gamma/\gamma}}{\psi_{\Gamma/\gamma}}\right)^{-1}
\frac{l_{-1}^{[0]}(x^{{\Gamma}}_\gamma))}{\psi^2_{{\Gamma/\gamma}}\psi^2_\gamma}
\right\}\Omega_{{\Gamma}}\nonumber\\ 
&  & -
m_\gamma^2\frac{\bar{\psi}_{\gamma^\bullet}}{\psi_\gamma}\frac{{\psi}_{\gamma^\bullet}}{\psi_\gamma}
\left\{
\left(
\frac{S_0\phi_{\bar{\Gamma}}(\Theta_0)}{\psi_{\bar{\Gamma}}}\right)^{-1}\frac{l_{-1}^{[0]}(x^{\bar{\Gamma}}_\emptyset)}{\psi^2_{\bar{\Gamma}}}
\right. \nonumber\\ & & \left.
-\left(\frac{S_0\phi_{\bar{\Gamma}/\gamma}(\Theta_0)\psi_\gamma+S_0\bar{\psi}_{\gamma^\bullet}\psi_{\bar{\Gamma}/\gamma}}{\psi_{\bar{\Gamma}/\gamma}}\right)^{-1}
\frac{l_{-1}^{[0]}(x^{\bar{\Gamma}}_\gamma))}{\psi^2_{\bar{\Gamma/\gamma}}\psi^2_\gamma}
\right\}\Omega_{\bar{\Gamma}}.\nonumber
\end{eqnarray}
The first line above and the coefficients of $m_\gamma^2$ in the two following lines
allow for a separate decomposition in $G_{\mathrm{1-s}},G_{\mathrm{fin}}$ as expected: the first line
depends on scales as well as angles and decomposes as before, the two following lines only depend on angles, taking all definitions into account.
Indeed, this must be so as we have an overall dimensionless quantity, and those two lines come from a superficial convergent contribution as they factorize $m_\gamma^2$, and hence the angle $m_\gamma^2/S$, multiplying it all out.
Here,
\begin{equation}
l_{-1}^{[0]}(x^\Gamma_\emptyset)=\frac{1}{1+\frac{\frac{S}{S_0}(\phi_\Gamma(\Theta)-\phi_{\Gamma}(\Theta_0))}{\phi_\Gamma(\Theta_0)}}-1,
\end{equation}
and
\begin{equation}
l_{-1}^{[0]}(x^\Gamma_\gamma)=\frac{1}{1+\frac{\frac{S}{S_0}(\phi_{\Gamma/\gamma}(\Theta)\psi_\gamma-\phi_{\Gamma/\gamma}(\Theta_0))\psi_\gamma}{\phi_{\Gamma/\gamma}(\Theta_0)\psi_\gamma+\bar{\psi}_{\gamma^\bullet}\psi_{\Gamma/\gamma}}}-1,
\end{equation}
similarly for $l_{-1}^{[0]}(x^{\bar{\Gamma}}_\emptyset)$ and $l_{-1}^{[0]}(x^{\bar{\Gamma}}_\gamma)$.

\subsection{Overall quadratic}
Next, we turn to overall quadratic amplitudes.
\subsubsection{Primitive overall quadratic}
We look at the graph 
\begin{equation}
\Gamma=\oltpsq.
\end{equation} $C(\Gamma)=\{\bullet_\Gamma\}$.
Kinematics are given by $S=q^2, S_0=m^2, \Theta_{q^2}=(1-\frac{m^2}{q^2}),\Theta_{q^2}^0=(1-\frac{q^2}{m^2}),\Theta_e=m_e^2/q^2,\Theta^0_e=m_e^2/m^2$.
We find immediately
\begin{eqnarray}
\Phi^R_\Gamma & = &  (q^2-m^2)\frac{A_2A_3A_4}{\psi_\Gamma}
\frac{\ln(1+x^\Gamma_\emptyset)\Omega_\Gamma}{\psi_\Gamma^2}\nonumber\\
 & & +m^2\frac{A_2A_3A_4+(A_2\Theta_2^0+A_3\Theta_3^0+A_4\Theta_4^0)\psi_\Gamma}{\psi_\Gamma}
\frac{\ln^{[1]}(1+x^\Gamma_\emptyset)\Omega_\Gamma}{\psi_\Gamma^2}.
\end{eqnarray}
Here, 
\begin{equation}
x^\Gamma_\emptyset  =  \frac{\frac{q^2}{m^2}\Theta_{q^2}A_2A_3A_4}{ (A_2A_3A_4+(A_2\Theta_2^0+A_3\Theta_3^0+A_4\Theta_4^0)\psi_\Gamma)}.
\end{equation}
This makes it evident that we fulfil the renormalization conditions.

We set $\Phi^R_\Gamma=:(q^2-m^2)\Phi^{(q^2-m^2);R}_\Gamma+m^2\Phi^{m^2;R}_\Gamma+m_2^2\Phi^{m_2^2;R}_\Gamma+m_3^2\Phi^{m_3^2;R}_\Gamma
+m_4^2\Phi^{m_4^2;R}_\Gamma$ and can read off the coefficient functions $\Phi_\Gamma^{x;R}$ immediately. 
Note that they are all dimensionless quantities.

The decomposition into angles and scales happens now for these coefficient functions.
\begin{eqnarray}
\Phi^{q^2-m^2;R}_{\mathrm{1-s}}(\Gamma) & = & \frac{A_2A_3A_4}{\psi_\Gamma}\ln\left( \frac{q^2}{m^2}\right)\frac{\Omega_\Gamma}{\psi^2_\Gamma},\\
\Phi^{m^2;R}_{\mathrm{1-s}}(\Gamma) & = & \frac{A_2A_3A_4}{\psi_\Gamma}\ln\left( \frac{q^2}{m^2}\right)\frac{\Omega_\Gamma}{\psi^2_\Gamma},\\
\Phi^{m_2^2;R}_{\mathrm{1-s}}(\Gamma) & = & \frac{A_2}{1}\ln\left( \frac{q^2}{m^2}\right)\frac{\Omega_\Gamma}{\psi^2_\Gamma},\\
\Phi^{m_3^2;R}_{\mathrm{1-s}}(\Gamma) & = & \frac{A_3}{1}\ln\left( \frac{q^2}{m^2}\right)\frac{\Omega_\Gamma}{\psi^2_\Gamma},\\
\Phi^{m_4^2;R}_{\mathrm{1-s}}(\Gamma) & = & \frac{A_4}{1}\ln\left( \frac{q^2}{m^2}\right)\frac{\Omega_\Gamma}{\psi^2_\Gamma}.
\end{eqnarray}
For the angles:
\begin{eqnarray}
\Phi^{q^2-m^2;R}_{\mathrm{fin}}(\Theta)(\Gamma) & = & \frac{A_2A_3A_4}{\psi_\Gamma}\ln\left( \frac{\varphi_\Gamma+\psi_\Gamma\sum_{j=2}^4 \Theta_j A_j}{\phi_\Gamma}\right)\frac{\Omega_\Gamma}{\psi^2_\Gamma},\\
\Phi^{m^2;R}_{\mathrm{fin}}(\Theta)(\Gamma) & = & \frac{A_2A_3A_4}{\psi_\Gamma}\ln\left( \frac{\varphi_\Gamma+\psi_\Gamma\sum_{j=2}^4 \Theta_j A_j}{\phi_\Gamma}\right)\frac{\Omega_\Gamma}{\psi^2_\Gamma},\\
\Phi^{m_2^2;R}_{\mathrm{fin}}(\Theta)(\Gamma) & = & \frac{A_2}{1}\ln\left( \frac{\phi_\Gamma+\psi_\Gamma\sum_{j=2}^4 \Theta_j A_j}{\varphi_\Gamma}\right)\frac{\Omega_\Gamma}{\psi^2_\Gamma},\\
\Phi^{m_3^2;R}_{\mathrm{fin}}(\Theta)(\Gamma) & = & \frac{A_3}{1}\ln\left( \frac{\phi_\Gamma+\psi_\Gamma\sum_{j=2}^4 \Theta_j A_j}{\varphi_\Gamma}\right)\frac{\Omega_\Gamma}{\psi^2_\Gamma},\\
\Phi^{m_4^2;R}_{\mathrm{fin}}(\Theta)(\Gamma) & = & \frac{A_4}{1}\ln\left( \frac{\phi_\Gamma+\psi_\Gamma\sum_{j=2}^4 \Theta_j A_j}{\varphi_\Gamma}\right)\frac{\Omega_\Gamma}{\psi^2_\Gamma}.
\end{eqnarray}
and
\begin{eqnarray*}
\Phi^{q^2-m^2;R}_{\mathrm{fin}}(\Theta^0)(\Gamma) & = & \frac{A_2A_3A_4}{\psi_\Gamma}\ln\left( \frac{\varphi_\Gamma+\psi_\Gamma\sum_{j=2}^4 \Theta_j^0 A_j}{\phi_\Gamma}\right)\frac{\Omega_\Gamma}{\psi^2_\Gamma},\\
\Phi^{m^2;R}_{\mathrm{fin}}(\Theta^0)(\Gamma) & = & \frac{A_2A_3A_4}{\psi_\Gamma}\left(\ln\left( \frac{\varphi_\Gamma+\psi_\Gamma\sum_{j=2}^4 \Theta_j^0 A_j}{\phi_\Gamma}\right)\right.\nonumber\\ & & \left.
-\frac{\Theta_{q^2}^0\varphi_\Gamma}{\varphi_\Gamma+\psi_\Gamma\sum_{j=2}^4 \Theta^0_jA_j}\right)\frac{\Omega_\Gamma}{\psi^2_\Gamma},\\
\Phi^{m_2^2;R}_{\mathrm{fin}}(\Theta^0)(\Gamma) & = & \frac{A_2}{1}\left(\ln\left( \frac{\varphi_\Gamma+\psi_\Gamma\sum_{j=2}^4 \Theta_j^0 A_j}{\phi_\Gamma}
\right)
-\frac{\Theta_{q^2}^0\phi_\Gamma}{\varphi_\Gamma+\psi_\Gamma\sum_{j=2}^4 \Theta^0_jA_j}\right)\frac{\Omega_\Gamma}{\psi^2_\Gamma},\\
\Phi^{m_3^2;R}_{\mathrm{fin}}(\Theta^0)(\Gamma) & = & \frac{A_3}{1}\left(\ln\left( \frac{\varphi_\Gamma+\psi_\Gamma\sum_{j=2}^4 \Theta_j^0 A_j}{\phi_\Gamma}\right)
-\frac{\Theta_{q^2}^0\phi_\Gamma}{\varphi_\Gamma+\psi_\Gamma\sum_{j=2}^4 \Theta^0_jA_j}\right)\frac{\Omega_\Gamma}{\psi^2_\Gamma},\\
\Phi^{m_4^2;R}_{\mathrm{fin}}(\Theta^0)(\Gamma) & = & \frac{A_4}{1}\left(\ln\left( \frac{\varphi_\Gamma+\psi_\Gamma\sum_{j=2}^4 \Theta_j^0 A_j}{\phi_\Gamma}
\right)
-\frac{\Theta_{q^2}^0\phi_\Gamma}{\varphi_\Gamma+\psi_\Gamma\sum_{j=2}^4 \Theta^0_jA_j}\right)\frac{\Omega_\Gamma}{\psi^2_\Gamma}.
\end{eqnarray*}

\subsubsection{Overall quadratic with quadratic sub}
The graph $\Gamma$ which we want to consider is
$$ \Gamma=\tltp .$$ We also need the graph
$$ \bar{\Gamma}=\tltpsq.$$
We have the sub-graph $$\gamma=\suboltp,$$
and the co-graphs 
$$\Gamma/\gamma=\oltpdot,$$ and
$$ \bar{\Gamma}/\gamma=\oltpsq .$$
We also need 
$$\Gamma^\bullet=\tltpdt$$
and
$$ \bar{\Gamma}^\bullet=\tltpsqdt.$$
For any $X$ as needed we write
\begin{equation}
\bar{\psi}_X=\frac{1}{m^2}\phi_X(m^2,\Theta^0)=\psi_{X^\bullet}+\left(\sum_{e\in X}\Theta_e^0A_e\right)\psi_X
\end{equation}
and
\begin{equation}
\phi_X(q^2,\Theta)=q^2\left(\psi_{X^\bullet}+\left(\sum_{e\in X}\Theta_e A_e\right)\psi_X\right).
\end{equation}
We renormalize the subgraph at $m_\gamma^2$ and $\Gamma$ at $m^2$.
Then, by Thm.\ref{finalr},
\begin{eqnarray}
 & & \Phi^R_\Gamma(q^2,m^2)  =  \\ & + &
\frac{\psi_{\bar{\Gamma}^\bullet}}{\psi_{\bar{\Gamma}}}
\frac{\psi_{\gamma^\bullet}}{\psi_{\gamma}}
l_1^{[1]}(x^{\bar{\Gamma}}_\emptyset)
\frac{\Omega_{\bar{\Gamma}}}{\psi^2_{\bar{\Gamma}}}\label{ququ1}\\ & - & 
\frac{\psi_{{\bar{\Gamma}/\gamma}^\bullet}}{\psi_{{\bar{\Gamma}/\gamma}}}
\frac{\psi_{{\gamma}^\bullet}}{\psi_{{\gamma}}}l_1^{[1]}(x^{\bar{\Gamma}}_\gamma)
\frac{\Omega_{\bar{\Gamma}}}{\psi^2_{\bar{\Gamma}/\gamma}\psi^2_\gamma}\label{ququ2}\\ & + & 
m_\gamma^2 \frac{\bar{\psi}_{\gamma^\bullet}}{\psi_\gamma}l_0^{[1]}(x^\Gamma_\emptyset)
\frac{\Omega_\Gamma}{\psi_\Gamma^2}\label{ququ3}\\ & - & 
m_\gamma^2 \frac{\bar{\psi}_{\gamma^\bullet}}{\psi_\gamma}l_0^{[1]}(x^\Gamma_\gamma)
\frac{\Omega_\Gamma}{\psi_{\Gamma/\gamma}^2\psi^2_\gamma}\label{ququ4}\\ & - & 
m_\gamma^2 \frac{\bar{\psi}_{\gamma^\bullet}}{\psi_\gamma}\frac{\psi_{\gamma^\bullet}}{\psi_\gamma}l_0^{[1]}(x^{\bar{\Gamma}}_\emptyset)
\frac{\Omega_{\bar{\Gamma}}}{\psi_{\bar{\Gamma}}^2}\label{ququ5}\\ & + & 
m_\gamma^2 \frac{\bar{\psi}_{\gamma^\bullet}}{\psi_\gamma}\frac{\psi_{\gamma^\bullet}}{\psi_\gamma}l_0^{[1]}(x^{\bar{\Gamma}}_\gamma)
\frac{\Omega_{\bar{\Gamma}}}{\psi_{\bar{\Gamma}/\gamma}^2\psi^2_\gamma}\label{ququ6}.
\end{eqnarray}
Here, $\Omega_{\bar{\Gamma}}$ is the five-form $A_2dA_3\wedge\cdots\wedge dA_7\pm\cdots$, and $\Omega_\Gamma$
the six-form $A_1dA_2\wedge\cdots\wedge dA_7\pm\cdots$.
Furthermore,
\begin{eqnarray}
l_1^{[1]}(x^{\bar{\Gamma}}_\emptyset) & = & (q^2-m^2)\ln\left(
1+\frac{(q^2-m^2)\psi_{\bar{\Gamma}^\bullet}}{m^2\bar{\psi}_{\bar{\Gamma}^\bullet}}
\right)\\ & + & \frac{\bar{\psi}_{\bar{\Gamma}^\bullet}}{\psi_{\bar{\Gamma}^\bullet}}\left\{
m^2\ln\left(1+\frac{(q^2-m^2)\psi_{\bar{\Gamma}^\bullet}}{m^2\bar{\psi}_{\bar{\Gamma}^\bullet}}\right)\right.\\ & - & 
\left. m^2\frac{(q^2-m^2)\psi_{\bar{\Gamma}^\bullet}}{m^2\bar{\psi}_{\bar{\Gamma}^\bullet}}\right\}.
\end{eqnarray}
and
\begin{eqnarray}
l_1^{[1]}(x^{\bar{\Gamma}}_\gamma) & = & (q^2-m^2)\ln\left(1+\frac{(q^2-m^2)\psi_{{\bar{\Gamma}/\gamma}^\bullet}\psi_\gamma}{m^2\bar{\psi}_{{\bar{\Gamma}/\gamma}^\bullet}\psi_\gamma+\psi_{\gamma^\bullet}\psi_{\bar{\Gamma/\gamma}}}\right)\\ & + & \frac{\bar{\psi}_{{\bar{\Gamma}/\gamma}^\bullet}}{\psi_{{\bar{\Gamma}/\gamma}^\bullet}}\left\{
m^2\ln\left(1+\frac{(q^2-m^2)\psi_{{\bar{\Gamma}/\gamma}^\bullet}\psi_\gamma}{m^2\bar{\psi}_{{\bar{\Gamma}/\gamma}^\bullet}\psi_\gamma+\psi_{\gamma^\bullet}\psi_{\bar{\Gamma/\gamma}}}\right)\right.\\ & - & 
\left. m^2\frac{(q^2-m^2)\psi_{\bar{\Gamma}^\bullet\psi_\gamma}}{m^2\bar{\psi}_{\bar{\Gamma}^\bullet}\psi_\gamma+
\psi_{\gamma^\bullet}\psi_{\bar{\Gamma/\gamma}}}\right\}.
\end{eqnarray}
\begin{equation}
l_0^{[1]}(x^\Gamma_\emptyset)=\ln\left(1+\frac{(q^2-m^2)\psi_{\Gamma^\bullet}}{m^2\bar{\psi}_{\Gamma^\bullet}}\right)
-\frac{(q^2-m^2)\psi_{\Gamma^\bullet}}{m^2\bar{\psi}_{\Gamma^\bullet}},
\end{equation}
and
\begin{equation}
l_0^{[1]}(x^\Gamma_\gamma)=\ln\left(1+\frac{(q^2-m^2)\psi_{{\Gamma/\gamma}^\bullet}\psi_\gamma}{m^2\bar{\psi}_{{\Gamma/\gamma}^\bullet}\psi_\gamma+\psi_{\gamma^\bullet}\psi_{\Gamma/\gamma}}\right)
-\frac{(q^2-m^2)\psi_{{\Gamma/\gamma}^\bullet\psi_\gamma}}{m^2\bar{\psi}_{{\Gamma/\gamma}^\bullet}\psi_\gamma+\psi_{\gamma^\bullet}\psi_{\Gamma/\gamma}},
\end{equation}
and similarly for $l_0^{[1]}(x^{\bar{\Gamma}}_\emptyset)$
and $l_0^{[1]}(x^{\bar{\Gamma}}_\gamma)$.

One now immediately checks that lines (\ref{ququ1})+(\ref{ququ2}) add to a convergent expression which fulfils the renormalization conditions, and so do (\ref{ququ3})+(\ref{ququ4}) and (\ref{ququ5})+(\ref{ququ6}).

From here on, decomposition into $\Phi_{\mathrm{1-s}}$ and  $\Phi_{\mathrm{fin}}$  is lengthy but straightforward.
\section{Graph Polynomials and 1-scale graphs}

\subsection{Single-scale graphs} \begin{defn} A \emph{1-scale graph}    is a  finite disjoint union  of connected graphs, each of which has   exactly two vertices whose external legs carry non-zero momenta, and with vanishing internal masses $m_e$ for all edges $e$.   \end{defn} 

Let $G$  be a connected  1-scale graph with distinguished vertices $v_1\neq v_2$. Let 
$\Gbar$ denote the graph obtained from $G$ by adding a new edge $e$ connecting $v_1$ and $v_2$, and  let  $\Gdot$ be
the graph obtained from $G$ by identifying  $v_1$ and $v_2$. The graphs  $G$, $\Gdot$ are obtained from $\Gbar$ by deleting (contracting) the edge $e$ respectively. Thus
$$ \psi_{\Gbar} = \psi_{G} \, \alpha_e + \psi_{\Gdot}\ .$$
  If $G=\cup_{i=1}^n G_i$ has $n$ connected components, then we define $\Gdot$  to be  $\cup_{i=1}^n\Gdot_i$.

\begin{center}
\fcolorbox{white}{white}{
  \begin{picture}(330,85) (74,-90)
    \SetWidth{1.0}
    \SetColor{Black}
    \Arc(344.5,-12.333)(7.683,-86.269,266.269)
    \Text(330,-11)[lb]{{\Black{$1$}}}
    \Arc[clock](75.111,-51)(57.889,32.378,-32.378)
    \Arc(189.143,-51)(72.143,154.551,205.449)
    \Line(78,-51)(126,-83)
   \CBox(80,-52)(76,-48){Black}{Black}
      \Text(65,-52)[lb]{{\Black{$v_1$}}}
    \Vertex(124,-82){3}
 \CBox(122,-18)(126,-22){Black}{Black}
      \Text(120,-16)[lb]{{\Black{$v_2$}}}
    \Line(78,-51)(126,-19)
    \Arc[clock](193.857,-51)(72.143,25.449,-25.449)
    \Arc(315.062,-51)(64.062,151.059,208.941)
    \Line(211,-51)(259,-19)
    \Line(211,-51)(259,-83)
   \CBox(257,-18)(261,-22){Black}{Black}
    \CBox(209,-53)(213,-49){Black}{Black}
    \Vertex(259,-82){3}
    \Arc[clock](252.179,-62.099)(42.648,164.915,80.796)
    \Vertex(345,-20){3}
    \Vertex(345,-82){3}
    \Arc(369.533,-51)(39.533,128.358,231.642)
    \Arc[clock](325.235,-51)(36.765,57.48,-57.48)
    \Line(345,-20)(345,-83)
    \Text(94,-36)[lb]{{\Black{$1$}}}
    \Text(94,-60)[lb]{{\Black{$2$}}}
    \Text(120,-53)[lb]{{\Black{$3$}}}
    \Text(135,-53)[lb]{{\Black{$4$}}}
      \Text(83,-90)[lb]{\Large{\Black{$G$}}}
    \Text(235,-43)[lb]{{\Black{$1$}}}
    \Text(235,-65)[lb]{{\Black{$2$}}}
    \Text(253,-53)[lb]{{\Black{$3$}}}
    \Text(271,-53)[lb]{{\Black{$4$}}}
    \Text(222,-24)[lb]{{\Black{$e$}}}
      \Text(215,-90)[lb]{\Large{\Black{$\Gbar$}}}
    \Text(324,-53)[lb]{{\Black{$2$}}}
    \Text(348,-53)[lb]{{\Black{$3$}}}
    \Text(366,-53)[lb]{{\Black{$4$}}}
      \Text(315,-90)[lb]{\Large{\Black{$\Gdot$}}}
  \end{picture}
}
\end{center}

\begin{lem} \label{lemphi} Let $G$ be a connected, 1-scale graph with external momentum $q$. Then the second Symanzik polynomial $\phi(q,G)$ is given by 
\begin{equation} \phi(q,G) = q^2 \psi_{\Gdot}\ .  \end{equation}
\end{lem}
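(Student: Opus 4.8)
The plan is to reduce the statement to a purely combinatorial identity relating spanning two-forests of $G$ to spanning trees of $\Gdot$, with the factor $q^2$ coming from the momentum routing. First I would record that, because $G$ is a $1$-scale graph, every internal mass $m_e$ vanishes and the only nonzero external momenta are $q(v_1)=q$ and $q(v_2)=-q$. Hence the mass contribution to $\phi$ disappears and $\phi(q,G)$ equals the massless second Symanzik polynomial
\begin{equation*}
\phi(q,G)=\sum_{T_1\cup T_2}Q(T_1)\cdot Q(T_2)\prod_{e\notin T_1\cup T_2}A_e,
\end{equation*}
the sum running over spanning two-forests $T_1\cup T_2$ of $G$, as in the definition of $\varphi_\Gamma$ above.

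Next I would determine which two-forests actually contribute. Since $Q(X)=\sum_{v\in X}q(v)$ and only $v_1,v_2$ carry momentum, a component of $T_1\cup T_2$ has vanishing $Q$ unless it contains exactly one of $v_1,v_2$. Thus a two-forest in which $v_1,v_2$ lie in the \emph{same} component contributes $0$, while one in which they are separated---say $v_1\in T_1$, $v_2\in T_2$---gives $Q(T_1)=q$ and $Q(T_2)=-q$, so the weight $Q(T_1)\cdot Q(T_2)$ reduces to the constant $q^2$ (its sign fixed by the scalar-product convention recorded in the Note above), independent of the forest. Factoring this constant out leaves
\begin{equation*}
\phi(q,G)=q^2\sum_{F}\prod_{e\notin F}A_e,
\end{equation*}
with $F$ ranging over the spanning two-forests of $G$ that separate $v_1$ from $v_2$.

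It then remains to identify the residual sum with $\psi_{\Gdot}$, and this is the main point. I would set up the bijection cleanly by routing through the auxiliary graph $\Gbar$ and its extra edge $e$ joining $v_1,v_2$: deleting $e$ from a spanning tree of $\Gbar$ that contains $e$ yields a separating two-forest of $G$, while contracting $e$ in the same tree yields a spanning tree of $\Gdot$, and both operations leave the set of \emph{other} edges untouched. The standard deletion/contraction bijections for a single edge show that these maps are mutually inverse, so separating two-forests $F$ of $G$ correspond bijectively to spanning trees of $\Gdot$ in an edge-complement-preserving way. The one item to check---the main obstacle if one argues directly by identifying $v_1\sim v_2$ rather than through $\Gbar$---is acyclicity together with the component count: an edge set that is a tree in $\Gdot$ can contain no $v_1$--$v_2$ path in $G$, since such a path would close into a cycle through the identified vertex, so splitting that vertex apart produces exactly two components; the $\Gbar$ bookkeeping avoids this by making acyclicity automatic.

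Granting the bijection, the monomials $\prod_{e\notin F}A_e$ match term by term those of $\psi_{\Gdot}=\sum_{T}\prod_{e\notin T}A_e$, whence $\sum_F\prod_{e\notin F}A_e=\psi_{\Gdot}$ and $\phi(q,G)=q^2\psi_{\Gdot}$. As a consistency check, this is exactly the combinatorial content of the relation $\psi_{\Gbar}=\psi_G\,A_e+\psi_{\Gdot}$ recorded just before the lemma: the term $\psi_{\Gdot}$ collects precisely the spanning trees of $\Gbar$ passing through $e$, i.e.\ the separating two-forests of $G$.
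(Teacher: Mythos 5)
Your proof is correct and is simply the full write-up of what the paper dismisses with ``this follows immediately from the definition of $\phi(q,G)$'': the restriction to separating two-forests and the contraction--deletion bijection through $\Gbar$ (equivalently, the identity $\psi_{\Gbar}=\psi_G A_e+\psi_{\Gdot}$ quoted just before the lemma) are exactly the content being invoked. The only wrinkle is the sign --- the earlier section records $\varphi_\Gamma=-q^2\psi_{\Gamma^\bullet}$ while the lemma asserts $+q^2\psi_{\Gdot}$ --- which is a convention mismatch internal to the paper, and you rightly flag it rather than pretend to derive it.
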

\begin{proof} This follows immediately from the definition of $\phi(q,G)$.\end{proof}
We henceforth write $\phi_G$ for $\psi_{\Gdot}$, since all $q$-dependencies are trivial.  Clearly,   $\deg \phi_G = \deg \psi_G + 1$. 
If $G=\cup_{i=1}^n G_i$ with $G_i$ connnected, we define
\begin{equation} \label{productconvention}
\psi_G = \prod_{i=1}^n \psi_{G_i} \qquad \hbox{ and } \qquad \phi_G = \sum_{i=1}^n \phi_{G_i} \prod_{j\neq i} \psi_{G_j}\ .
\end{equation} 
Note that this convention  is \emph{not} compatible with the  contraction-deletion relations, which require that a non-connected graph have vanishing graph polynomial. 
\begin{defn} Let $G$ be a connected 1-scale graph, and let $I\subset E(G)$ be a set of edges of $G$. Define
$\varepsilon_G(I)\in\{0,1\}$ to be $1$ if $I$ meets both distinguished vertices in $G$, and $0$ otherwise.
Equivalently, $\varepsilon_G(I)$ is $0$ if the image of $I$ in    $G^\bullet$ is isomorphic to $I$, and equal to $1$ otherwise. If $G=\cup_{i=1}^n G_i$, then we define $\varepsilon_G = \prod_{i=1}^n \varepsilon_{G_i}$.
\end{defn} 

Suppose that $G$ is a connected graph with vertices of degree at most 4, and let $\gamma \subsetneq G$ be a connected  subgraph.
Define the superficial  degree of divergence of $\gamma$ by
$$sd(\gamma)= 2h_\gamma  - E_{\gamma}\ . $$
Since $\gamma$ is  a strict subgraph, the  average valency of its edges $\alpha$ is $<4$, and hence $E_{\gamma} <2V_{\gamma}$. By Euler's formula $
E_{\gamma}-V_{\gamma}= h_{\gamma}-1$, this implies that  $sd(\gamma)\leq 1$.  We say that $\gamma$ is 
\begin{itemize}
\item \emph{convergent}  if $sd(\gamma)<0$,
\item  \emph{log-divergent}  if $sd(\gamma)=0$,  
\item \emph{quadratically divergent} if $sd(\gamma)=1$.
\end{itemize}
We say that $G$ has at most log-divergent subgraphs if $2 h_I \leq |I|$ for all $I\subset E(G)$. 

\begin{rem} \label{remvertexnumbers} Let $G$ be connected in $\phi^4$. Suppose that $G$ has $v_i$ vertices of valency $i$, where $1\leq i \leq 4$. A simple computation using Euler's formula shows that 
\begin{equation}\label{vertexcount}
v_3 +2\, v_2 = 4 -2\, sd(G)\ . \end{equation}
In particular, suppose that $G$ is quadratically divergent. Then $v_3+2\, v_2= 2$ and $(v_2,v_3)$ is either $(1,0)$ or $(0,2)$.  In particular, every  non-trivial connected quadratic subgraph of a $\phi^4$ graph has precisely two $3$-valent vertices. 
\end{rem}

\subsection{Circular joins}
Let $G_1,\ldots, G_n$ denote $n\geq 2$ connected 1-scale graphs.
\begin{defn} Label the distinguished vertices of each graph $G_i$  by  $v_i,w_i$. 
A  \emph{circular join} $c(G_1,\ldots, G_n)$  of the graphs $G_1,\ldots, G_n$ is the graph obtained by gluing  the $G_i$ together to form a circle, by identifying the vertices
  $v_i$ and  $w_{i+1}$, for $i\in \Z/ n\Z$. It  clearly depends on these choices.  Define  $c(G)= \Gdot$. \end{defn}

\begin{lem} \label{lemcircjoin} Let $c(G_1,\ldots, G_n)$ be any circular join of $G_1,\ldots, G_n$. Then
\begin{equation} \label{PsiGcirc}
\psi_{c(G_1,\ldots, G_n)} = \phi_{G_1\cup \ldots \cup G_n} \ .
\end{equation}
\end{lem}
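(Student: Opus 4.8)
The plan is to compute $\psi_{c(G_1,\ldots,G_n)}$ directly from the spanning-tree definition $\psi_\Gamma = \sum_T \prod_{e\notin T} A_e$ and to match the outcome, term by term, with the expansion of $\phi_{G_1\cup\cdots\cup G_n}$ dictated by the product convention (\ref{productconvention}). First I set up notation: write $u_i$ for the junction vertex obtained by identifying $v_i$ with $w_{i+1}$, so that each block $G_i$ is glued into the necklace between its two distinguished vertices $w_i=u_{i-1}$ and $v_i=u_i$. The edge sets $E(G_i)$ are pairwise disjoint inside the join, and the $n$ junctions are the only vertices shared between consecutive blocks.

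The heart of the argument is a combinatorial description of the spanning trees of the join. Given a spanning tree $T$, put $T_i=T\cap E(G_i)$. Since the only vertices of $G_i$ shared with the rest of the graph are $u_{i-1}$ and $u_i$, every interior vertex of $G_i$ can be reached only through edges of $G_i$; hence each $T_i$ must already span all vertices of $G_i$, i.e. $T_i$ is a spanning forest of $G_i$, say with $k_i\geq 1$ components. Because each of the $n$ junctions is shared by two consecutive blocks, the total vertex count of the join is $\sum_i V(G_i)-n$, so the spanning tree has $\sum_i V(G_i)-n-1$ edges; comparing with $\sum_i|T_i|=\sum_i\big(V(G_i)-k_i\big)$ forces $\sum_i k_i = n+1$. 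With $k_i\geq 1$ and $n$ summands this means exactly one index $i$ has $k_i=2$ and every other block is connected. An elementary connectivity check then shows that the two components of that distinguished $T_i$ must \emph{separate} $u_{i-1}$ from $u_i$: otherwise either the extra component of $G_i$ would be detached, or the necklace of junctions would close into a cycle. Conversely, any such datum --- a choice of distinguished index $i$, a spanning $2$-forest of $G_i$ separating its two marked vertices, and a spanning tree of each remaining $G_j$ --- reassembles into a spanning tree of the join.

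With this bijection the polynomial factorizes cleanly: grouping spanning trees by the distinguished index $i$, the complementary-edge monomial splits as a product over blocks, giving
\[
\psi_{c(G_1,\ldots,G_n)} = \sum_{i=1}^n \Big(\sum_{F_i}\prod_{e\in E(G_i)\setminus F_i} A_e\Big)\prod_{j\neq i}\Big(\sum_{T_j}\prod_{e\in E(G_j)\setminus T_j} A_e\Big),
\]
where $F_i$ runs over spanning $2$-forests of $G_i$ separating $v_i$ and $w_i$, and $T_j$ over spanning trees of $G_j$. The inner factor over $T_j$ is exactly $\psi_{G_j}$. For the distinguished block I invoke the standard vertex-identification identity: spanning trees of $G_i^\bullet$ pull back bijectively to spanning $2$-forests of $G_i$ separating $v_i,w_i$, with identical complementary edge sets (an edge joining $v_i$ to $w_i$ becomes a self-loop of $G_i^\bullet$ and is excluded on both sides). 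This is precisely the content of the contraction--deletion relation $\psi_{\Gbar}=\psi_G\,\alpha_e+\psi_{\Gdot}$ recorded above the statement, read at those spanning trees of $\Gbar$ that contain the new edge. Hence the inner factor over $F_i$ equals $\psi_{G_i^\bullet}=\phi_{G_i}$ by Lemma~\ref{lemphi}, and substitution yields $\sum_{i}\phi_{G_i}\prod_{j\neq i}\psi_{G_j}$, which is exactly $\phi_{G_1\cup\cdots\cup G_n}$ by (\ref{productconvention}).

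I expect the only genuine obstacle to be the spanning-tree characterization in the middle paragraph --- pinning down the separation condition and ruling out the degenerate configurations where the distinguished block fails to separate its junctions. The passage from $2$-forests of $G_i$ to spanning trees of $G_i^\bullet$ is the standard identity and should need only a brief remark handling self-loops and verifying that acyclicity is preserved under the vertex identification. As a sanity check one can verify the case $n=2$, where the two distinguished blocks are glued along both pairs of endpoints and the formula returns $\phi_{G_1}\psi_{G_2}+\psi_{G_1}\phi_{G_2}$, in agreement with the direct forest count.
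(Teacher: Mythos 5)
Your proof is correct and follows essentially the same route as the paper, which simply asserts the bijection $T(c(G_1,\ldots,G_n)) \longleftrightarrow \bigcup_{i} T(G_1)\times\cdots\times T(G_i^{\bullet})\times\cdots\times T(G_n)$ as obvious; your edge-counting argument forcing $\sum_i k_i = n+1$ and the separation condition is exactly the verification of that bijection, with the passage through separating $2$-forests of $G_i$ being the standard identification with spanning trees of $G_i^{\bullet}$.
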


\begin{proof}  There is an obvious bijection:
$$T(G) \longleftrightarrow \bigcup_{i=1}^n T(G_1)\times \ldots \times T(\Gdot_i) \times \ldots \times T(G_n)\ ,$$
 (disjoint union) where $T(\Gamma)$ denotes the set of spanning trees in a graph $\Gamma$. \end{proof}
When $n=2$, the circular join  of $G_1$, $G_2$ is a  2-vertex join of the  graphs $\Gbar_1, \Gbar_2$.

\begin{defn} Let $\gamma,  \Gamma$ be single-scale graphs where $\Gamma$ is connected.
We define
\begin{equation} \label{Upsilonsdef} 
\Upsilon_{\gamma ; \Gamma} (s) =  s \,  \psi_{\gamma} \phi_{\Gamma}  + \phi_{\gamma} \psi_{\Gamma} \ ,
\end{equation}
where $s$ is an indeterminate which will later correspond to an external momentum.

\end{defn}

\subsection{Hopf algebras of 1-scale graphs} \label{sectHopf1scale} Let $G$ be a  fixed  log-divergent  graph with at most log-divergent subgraphs, and let  $H_G$ 
denote the finitely-generated $\Q$-Hopf algebra of graphs generated by $G$  under the coproduct $\Delta$ $(\ref{DirkDeltadef})$.
 Its underlying vector space is spanned by all logarithmically divergent subgraphs $\gamma$ of $G$, their cographs $G/\gamma$, and their disjoint unions.  That  no such  cograph $G/\gamma$ can have quadratic subdivergences is an easy consequence of the definitions.

Now suppose that $G$ is single-scale, and thus carries a pair of distinguished vertices. Suppose that for every  divergent subgraph $\Gamma$ of $G$ there is a choice of two distinguished connectors (Lemma \ref{onescale}) of $\Gamma$ in $G$, making $\Gamma$ single-scale, in such a way that for all divergent subgraphs $\gamma\subsetneq \Gamma$, 
\begin{equation}\label{epsilonHopf}
\varepsilon_{\Gamma}(\gamma) =0\ .\end{equation}
It follows that the cograph $\Gamma/\gamma$ inherits the two distinguished vertices of $\Gamma$, and therefore is single-scale in a uniquely determined way. The condition
$(\ref{epsilonHopf})$ ensures that the two distinguished vertices of $\Gamma$ are never identified in any of its cographs.

In this case, we have a Hopf algebra of 1-scale, log-divergent graphs $H_G$ and a  coassociative coproduct obtained by summing over all divergent subgraphs:
\begin{eqnarray}\label{singlescalecoproduct}
\Delta:  H_G& \To & H_G\otimes_{\Q} H_G  \\
 \Gamma & \mapsto & \sum_{\gamma\subset \Gamma} \gamma \otimes \Gamma/\gamma \ .\nonumber 
\end{eqnarray} 
The existence of $H_G$ is proved in \S \ref{aux}. 
Since $H_G$ is graded by the loop number and commutative, it automatically inherits an antipode $S:H_G \rightarrow H_G$.

\begin{var} \label{variant}  Suppose that the edges of $G$ are labelled by elements of some set. Then every subgraph and  cograph  of $G$, and hence every generator of $H_G$ inherits a labelling.
 Note that in this case, the choice of two distinguished connectors on each divergent subgraph of $G$ may depend on the labelling. 
Thus  we can have distinct subgraphs $\gamma_1$ and $\gamma_2$ of $G$ which are isomorphic after forgetting their labels, but have different choices of distinguished vertices. In the sequel,  the set of labels on the edges of a graph will  always be distinct.
\end{var}

\subsection{Feynman rules for 1-scale graphs} Let $G$ be any graph  (which is not necessarily connected), whose edges are labelled 
 $1,\ldots, N$,  and 
let $\alpha_i$ denote the corresponding Schwinger parameters. Write
\begin{equation}  
\Omega_{G} = \sum_{i=1}^N (-1)^i \alpha_i d\alpha_1 \wedge \ldots \wedge \widehat{d\alpha_i} \wedge \ldots \wedge d\alpha_N\ .
\label{Omegag}
\end{equation}
 \begin{defn} Let $\gamma, \Gamma$ denote labelled $1$-scale graphs where $\Gamma$ is connected. If $\gamma, \Gamma$ have distinct labels, then 
 define rational forms 
 \begin{equation} \label{omega2def}
\omega_{\gamma  \otimes \Gamma}(s)  =  { s\, \phi_{\Gamma} \over   \psi_{\gamma}  \psi^2_{\Gamma}   \Upsilon_{\gamma; \Gamma}(s)  } \, \Omega_{\gamma \cup \Gamma}
\end{equation} 
where we recall that $\psi_{\gamma}$ is the product of the graph polynomials of its components.
It follows from the degree formulae stated in lemmas \ref{lemphi} and \ref{lemcircjoin} that the homogeneous degree of $\omega_{\gamma\otimes \Gamma}(s)$ with respect to the Schwinger parameters is
\begin{equation} \label{degomega}
\deg\, \omega_{\gamma  \otimes \Gamma}(s) = (E_{\gamma} - 2 h_{\gamma}) +  (E_{\Gamma} - 2 h_{\Gamma}) \ . 
\end{equation}
In particular, if $\gamma, \Gamma$ are logarithmically divergent then $\omega_{\gamma\otimes \Gamma}(s)$ is of degree $0$.

Recall that the  Feynman differential form of any labelled  graph $G$ is 
\begin{equation}  \label{omega1def}
\omega_G = { \Omega_G \over \psi^2_{G}}\ .\end{equation}
It has degree $E_{G} - 2h_G$.  Since $G$ is not necessarily connected, $\psi_G$ can be a product of graph polynomials.  If $1$ denotes the empty graph, we extend 
$(\ref{omega2def})$ and $(\ref{omega1def})$ by
\begin{equation} \label{omega1tens}
 \omega_{\gamma \otimes 1 } (s)=   0 \quad  \hbox{ and } \quad Ê\omega_{1 \otimes \Gamma} (s)  =   \omega_{\Gamma} \ . 
\end{equation}
In particular,  we set $\omega_1=0$.
It can also be convenient to  set $\omega_{\gamma\otimes \Gamma}(s)=0$ (respectively $\omega_{\Gamma}=0$) if $\gamma\cup \Gamma$ (resp. $\Gamma$) have 
 repeated labels. 
\end{defn}

\section{Renormalized Parametric integrals and examples} 
We state  the main formula for the renormalized amplitudes of $1$-scale graphs with logarithmic subdivergences, with some examples.
  The proofs are spread   over  the following sections, and the case of quadratic subdivergences is treated in \S\ref{sectQuadratic}.

\subsection{Renormalised integrands} Throughout, let $H$ denote a  Hopf algebra $H_G$ of labelled 1-scale graphs, as defined in  \S\ref{sectHopf1scale}, variant \ref{variant}.
\begin{defn} Let $\Gamma\in H$ be a labelled $1$-scale graph, with edges labelled $1,\ldots, N$. Denote its image under the preparation map (definition  \ref{defnR}) by
\begin{equation}\label{RGammaexplicit}
R(\Gamma) = 1 \otimes \Gamma + \sum_{i=1}^n a_i \, \gamma_i \otimes \Gamma /\gamma_i\qquad \in H\otimes_{\Q} H\ ,
\end{equation}
where  $a_i =\pm 1$ are signs\footnote{These signs are consistent with the signs in the forest sums $\sum_f^\emptyset (-1)^{|f|}$.}.  Let $q\in \R^4$ denote the overall incoming momentum of $\Gamma$,  let $\mu \in \R^4$ denote a fixed momentum, and  set  $s=q^2/\mu^2$.  We define
\begin{equation} \label{defnomegaren}
\omega^{\mathrm{ren}}_{\Gamma}(s) = \omega_{\Gamma}+ \sum_{i=1}^n a_i    \,  \omega_{\gamma_i\otimes \Gamma/\gamma_i}(s)  \ , 
\end{equation}
to be the renormalised parametric Feynman integrand.  For any linear combination of connected graphs   $\xi=\sum_i c_i \Gamma_i \in H$, we define $\omega^{\ren}_{\xi}(s)= \sum_i c_i \, \omega^{\ren}_{\Gamma_i}(s)$.
\end{defn}
The domain of integration is given by the standard coordinate simplex: 
\begin{equation} \label{defnDeltaN}
\DD_{\Gamma}= \{(\alpha_1:\ldots :\alpha_N ) \subset \Pro(\R)^{N-1}: \alpha_i\geq 0\} \ .
\end{equation}
Let $\Gamma$ be a log-divergent single-scale graph with at most log-divergent subgraphs. Then by  $(\ref{degomega})$,
$\omega^{\ren}_{\Gamma}(s)$ is of degree $0$ and we can consider the projective integral
\begin{equation}\label{defnrenint}
f_{\Gamma}(s)= \int_{\DD_{\Gamma}} \omega^{\ren}_{\Gamma}(s) \ .\end{equation}
We write $f_{\Gamma}=f_{\Gamma}(1)$.
The following theorem will be proved in \S\ref{sectCancelpoles}.
\begin{thm} \label{thmconv}  The integral  $(\ref{defnrenint})$ converges when $s>0$.
\end{thm}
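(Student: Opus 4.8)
The plan is to reduce the convergence of $(\ref{defnrenint})$ to a purely local question along the boundary of the simplex $\DD_\Gamma$, and there to exhibit the cancellation of the singularities of $\omega_\Gamma$ by the subtraction terms in $(\ref{defnomegaren})$. First I would use the hypothesis $s>0$ to dispose of the interior: on the open simplex every Schwinger parameter is strictly positive, so $\psi_\Gamma$, $\psi_\gamma$, $\phi_{\Gamma/\gamma}$, and in particular each denominator $\Upsilon_{\gamma;\Gamma/\gamma}(s)=s\,\psi_\gamma\phi_{\Gamma/\gamma}+\phi_\gamma\psi_{\Gamma/\gamma}$, are sums of monomials with non-negative coefficients and hence strictly positive (this is exactly where $s>0$ is needed, to prevent the two terms of $\Upsilon$ from cancelling). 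Thus $\omega^{\ren}_\Gamma(s)$ is a smooth top-dimensional form of degree $0$, by $(\ref{degomega})$, on the interior of $\DD_\Gamma$, and integrability can fail only along the boundary, i.e.\ along the coordinate subspaces $L_\gamma=\{\alpha_e=0:e\in\gamma\}$ indexed by the proper divergent subgraphs $\gamma\subsetneq\Gamma$. The overall divergence does not occur, since $L_\Gamma$ is empty in $\Pro^{N-1}$.

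Second, I would carry out the local power-counting along a single $L_\gamma$, writing $\alpha_e=t\,\beta_e$ for $e\in\gamma$ and letting $t\to 0$. By $(\ref{cher1})$ and $(\ref{cher2})$, and using $\deg\psi_\gamma=h_\gamma$ together with $\deg\phi_\gamma=h_\gamma+1$ from Lemma \ref{lemphi}, the leading behaviours are $\psi_\Gamma\sim t^{h_\gamma}\psi_{\Gamma/\gamma}\,\psi_\gamma$, $\phi_\gamma\sim t^{h_\gamma+1}\phi_\gamma$, and hence $\Upsilon_{\gamma;\Gamma/\gamma}(s)\sim t^{h_\gamma}\,s\,\psi_\gamma\phi_{\Gamma/\gamma}$; the condition $(\ref{epsilonHopf})$ ensures that $\Gamma/\gamma$ is again single-scale so that $\phi_{\Gamma/\gamma}$ is well defined and positive. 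Substituting these into $\omega_\Gamma=\Omega_\Gamma/\psi_\Gamma^2$ and into $\omega_{\gamma\otimes\Gamma/\gamma}(s)$ shows that both acquire the \emph{same} leading singularity $t^{-2h_\gamma}\bigl(\psi_\gamma\psi_{\Gamma/\gamma}\bigr)^{-2}\,\Omega_\Gamma$ (the two forms share the common $\Omega_{\gamma\cup\Gamma/\gamma}=\Omega_\Gamma$), so with the forest sign $a_\gamma=-1$ they cancel. Since $\gamma$ is log-divergent, $sd(\gamma)=2h_\gamma-E_\gamma=0$, and the radial Jacobian $t^{E_\gamma-1}\,dt$ combines with $t^{-2h_\gamma}$ to give the logarithmically divergent $\int_0 t^{-1}\,dt$; the subtraction removes precisely this leading term and leaves an integrable remainder.

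Third, to treat all subgraphs and their nestings simultaneously I would resolve the singularities by blowing up the $L_\gamma$, for divergent $\gamma$, in order of increasing dimension, obtaining a space on which the strict transform of $\DD_\Gamma$ is a manifold with corners and the exceptional divisors $D_\gamma$ cross normally. The factorizations above then compute the pole of $\omega^{\ren}_\Gamma(s)$ along each $D_\gamma$; using the coassociativity of $\Delta$ and the compatibility of the preparation map $R$ with contraction of $\gamma$, the residue along $D_\gamma$ reorganizes into a residue form for $\gamma$ times $\omega^{\ren}_{\Gamma/\gamma}(s)$, and the matching subtractions force it to vanish. Integrability along every boundary stratum follows by induction on the loop number, the base case being a primitive log-divergent graph, where $\omega^{\ren}_\Gamma=\omega_\Gamma$ and convergence is the known result for primitive Feynman periods \cite{BrCMP,BrFeyn}. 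The main obstacle is exactly this combinatorial step: one must verify that the signs $a_i$ and the Hopf coaction make the leading singularities cancel coherently along \emph{all} intersecting exceptional divisors coming from nested subgraphs, so that the inductions on $\gamma$ and on $\Gamma/\gamma$ apply in parallel; the single-subgraph cancellation above is the model that this forest bookkeeping must reproduce at every stratum.
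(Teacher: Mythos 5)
Your strategy is the same as the paper's: use positivity of the graph polynomials for $s>0$ to confine the problem to the coordinate faces, use the remainder identities $(\ref{cher1})$, $(\ref{cher2})$ (equivalently Lemma \ref{lemdivparts} and Corollary \ref{corremainders}) to compute the leading behaviour along each $L_I$, blow up to a normal crossing situation, and show that the residues of $\omega^{\ren}_\Gamma(s)$ along the exceptional divisors vanish. Your single-subgraph computation is correct and reproduces Proposition \ref{propResIformula} in the simplest case, and your remark that the overall divergence is absorbed by projectivising is right. One small imprecision: poles can occur along $L_I$ for arbitrary edge subsets $I$, not only those underlying divergent subgraphs; one needs the general power counting of Corollary \ref{corpoleorderalongLI} to see that only the divergent, nested configurations produce simple poles after blow-up and that everything else contributes no pole at all.

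The genuine gap is the step you yourself flag as ``the main obstacle.'' Asserting that ``the residue along $D_\gamma$ reorganizes into a residue form for $\gamma$ times $\omega^{\ren}_{\Gamma/\gamma}(s)$'' is not quite what happens and, as stated, is the whole difficulty rather than a consequence of ``coassociativity and compatibility of $R$ with contraction.'' The residue of $\omega_{\gamma_F\otimes G/\gamma_F}(s)$ along $\mathcal{E}_I$ has first tensor factor $\omega_{I_{\gamma_F}\cup I_{G/\gamma_F}}$, which depends on how the flag $F$ partitions $I$; the cancellation is therefore a nontrivial signed sum over all flags in $R(\Gamma)$ and all compatible partitions of $I$. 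The paper isolates this into the purely Hopf-algebraic identity
$$\mu_{13}\circ\bigl(\Delta\otimes(\Delta-\mathrm{id}\otimes 1)\bigr)\circ R \;=\; 1\otimes R$$
(Theorem \ref{thmRmainproperty}, proved by a noncommutative power-series argument), which together with the residue formula of Proposition \ref{propresomega2} collapses the total residue to $(\omega\otimes\omega^{(2)})\circ(1\otimes R)$, vanishing because $\omega(1)=0$ (Proposition \ref{proprenorm}). Without this identity, or an equivalent induction over forests that handles overlapping divergences (your graph $\Gamma$ of Example \ref{example1} is the first test case), the argument does not close; the single-subgraph model you verify does not by itself imply the coherent cancellation along intersecting exceptional divisors.
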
 
The renormalization group equations amount to the following.
\begin{thm} \label{Renormeq}  Write the reduced coproduct $\Delta'\Gamma= \sum_i \gamma_i \otimes \Gamma/\gamma_i$. We have
\begin{equation}\label{firstrenormeq}
s\, {d \over ds} f_{\Gamma}(s) = \sum_{i} f_{\gamma_i} f_{\Gamma/\gamma_i}(s) \ \hbox{ forÊ}\, \,  s>0 \ .
\end{equation} 
\end{thm}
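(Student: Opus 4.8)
The plan is to differentiate the projective integral $(\ref{defnrenint})$ under the integral sign and then to factorise the resulting integrand along the reduced coproduct. First I would justify interchanging $s\,\tfrac{d}{ds}$ with $\int_{\sigma_\Gamma}$: by Theorem \ref{thmconv} the integrand $\omega^{\ren}_\Gamma(s)$ is absolutely convergent, and for $s$ ranging over a compact subinterval of $(0,\infty)$ the $s$-derivative computed below is a form of the same vanishing homogeneity degree whose singularities along the coordinate hyperplanes are of the same type, hence dominated by a fixed integrable form. Dominated convergence then gives $s\tfrac{d}{ds}f_\Gamma(s)=\int_{\sigma_\Gamma} s\tfrac{d}{ds}\,\omega^{\ren}_\Gamma(s)$, reducing everything to a computation on integrands.

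Next I would differentiate termwise. Since $\omega_\Gamma$ carries no $s$, only the terms $\omega_{\gamma_i\otimes\Gamma/\gamma_i}(s)$ contribute. Writing $A=\psi_\gamma\phi_\Gamma$ and $B=\phi_\gamma\psi_\Gamma$ so that $\Upsilon_{\gamma;\Gamma}(s)=sA+B$, the elementary identity $s\tfrac{d}{ds}\tfrac{s}{sA+B}=\tfrac{sB}{(sA+B)^2}$ together with $(\ref{omega2def})$ yields
\begin{equation}
s\,\frac{d}{ds}\,\omega_{\gamma\otimes \Gamma}(s)=\frac{s\,\phi_\gamma\,\phi_\Gamma}{\psi_\gamma\,\psi_\Gamma\,\Upsilon_{\gamma;\Gamma}(s)^2}\,\Omega_{\gamma\cup\Gamma}.
\end{equation}
The effect of differentiation is to replace the single factor $\phi_\Gamma$ in the numerator of $(\ref{omega2def})$ by $\phi_\gamma\phi_\Gamma$ and to square the join denominator, producing a form symmetric in subgraph and cograph.

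The heart of the argument is a projective factorisation applied to each such term with $\gamma=\gamma_i$ and cograph $\Gamma/\gamma_i$. I would fix the projective scale by the Cheng--Wu section $\sum_{e\in\gamma_i}\alpha_e=1$ and parametrise the cograph edges by $\alpha_e=\lambda\,v_e$ with $v\in\sigma_{\Gamma/\gamma_i}$ and $\lambda\in(0,\infty)$ the relative scale. Because $\gamma_i$ and $\Gamma/\gamma_i$ are at most log-divergent, the polynomials $\psi,\phi$ are homogeneous of degrees $h,h+1$ with $E=2h$, so all powers of $\lambda$ outside the join factor cancel and the relative-scale integral collapses to $\int_0^\infty(a\lambda+b)^{-2}\,d\lambda=(ab)^{-1}$ with $a=s\,\psi_{\gamma_i}\phi_{\Gamma/\gamma_i}(v)$ and $b=\phi_{\gamma_i}\psi_{\Gamma/\gamma_i}(v)$. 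The factors $s$ and $\phi_{\gamma_i}\phi_{\Gamma/\gamma_i}$ cancel against $ab$, leaving exactly $\psi_{\gamma_i}^{-2}\psi_{\Gamma/\gamma_i}^{-2}$, and the integral decouples into a product $\big(\int_{\sigma_{\gamma_i}}\Omega_{\gamma_i}/\psi_{\gamma_i}^2\big)\big(\int_{\sigma_{\Gamma/\gamma_i}}\Omega_{\Gamma/\gamma_i}/\psi_{\Gamma/\gamma_i}^2\big)$. Crucially $s$ drops out of the subgraph factor but is retained in the cograph, which is what converts a single term into $f_{\gamma_i}\,f_{\Gamma/\gamma_i}(s)$.

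The main obstacle, and the step requiring the most care, is upgrading this primitive computation to the full nested setting while controlling signs. When $\gamma_i$ or $\Gamma/\gamma_i$ themselves carry subdivergences, the bare integrands $\Omega/\psi^2$ must be replaced by $\omega^{\ren}_{\gamma_i}(1)$ and $\omega^{\ren}_{\Gamma/\gamma_i}(s)$; this is where coassociativity of $(\ref{singlescalecoproduct})$ and the recursive structure of the preparation map $R$ of $(\ref{RGammaexplicit})$ enter, ensuring that the forest terms regroup so that the relative-scale factorisation is applied level by level and the decoupled factors are precisely the renormalised amplitudes. I would also need to check that the orientation signs generated in splitting $\Omega_{\gamma\cup\Gamma}$ across the join combine with the preparation signs $a_i$ so that the termwise contributions assemble into the reduced coproduct $\Delta'\Gamma=\sum_i\gamma_i\otimes\Gamma/\gamma_i$ with the overall sign of $(\ref{firstrenormeq})$. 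Once the factorisation and this bookkeeping are in place, summing over the terms of $R(\Gamma)$ delivers the renormalisation group equation.
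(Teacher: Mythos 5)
Your strategy is the paper's proof read in the opposite direction, built on the same two ingredients: the identity $\int_0^\infty (a\lambda+b)^{-2}\,d\lambda = (ab)^{-1}$ (Eq.~(\ref{mmid})) and the additivity $\phi_{\gamma}/\psi_{\gamma} = \sum_i \phi_{\gamma_i}/\psi_{\gamma_i}$ over connected components (lemma \ref{lemcjoinproperty}). Where you differentiate $f_{\Gamma}(s)$ first --- correctly obtaining the form $\overline{\omega}_{\gamma\otimes\Gamma}(s)$ of $(\ref{omegabardef})$ --- and then split the integral into a product by integrating out a relative scale, the paper starts from the product $f_{\gamma}(1)\cdot f_{\Gamma/\gamma}(s)$, lifts the first factor to an affine integral with a free parameter $\lambda$ (lemma \ref{lemminormonotone}), specializes $\lambda$ term by term to $\Upsilon_{\gamma_2;G_2/\gamma_2}(s)/\psi_{\gamma_2\cup G_2/\gamma_2}$ so as to fuse the two integrals into one (proposition \ref{prop1}), and only then recognizes the result as $s\,\partial_s$ of the renormalized form (proposition \ref{propprerenormgroup}). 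The two directions are equivalent, and your primitive-case computation is correct.

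Two steps in your sketch carry more weight than you give them. First, differentiating under the integral sign is not innocuous: the individual terms of $\omega^{\ren}_{\Gamma}(s)$, and likewise of its $s$-derivative, are \emph{not} separately integrable --- only the forest sum is --- so ``dominated by a fixed integrable form'' presupposes that the simple poles of $\overline{\omega}^{\ren}_{\Gamma}(s)$ along the exceptional divisors cancel. That is precisely the convergence statement established for the four-fold form $\omega^{(4)}(s)\circ(R\otimes R)$ inside proposition \ref{prop1}, via the residue formalism of \S\ref{sectCancelpoles}; you would need to run that argument for $\overline{\omega}^{\ren}$ before invoking dominated convergence (the paper sidesteps the interchange altogether by working with $\varepsilon$-truncated domains on the product side). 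Second, the ``level by level'' regrouping you defer is exactly the combination of identity $(\ref{Jsum})$ (the $s$-derivative of $\omega_{a_1\ldots a_n\otimes b}(s)$ distributes over the components of a flag, by lemma \ref{lemcjoinproperty}) with lemma \ref{proponR}, which gives $-(\mu_3\otimes \mathrm{id})\circ(R^m\otimes R^n)\circ\Delta' = R^{m+n+1}$; together these convert the sum over flags with a marked cut into $\omega^{(4)}(s)\circ(R\otimes R)\circ\Delta'(\Gamma)$, whence $\sum_i f_{\gamma_i} f_{\Gamma/\gamma_i}(s)$. Naming coassociativity and the recursion for $R$ is the right instinct, but this identity is the real content of the theorem and must be written out; once it is, your argument closes.
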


In the case when $\Gamma\in H$ is a primitive graph, i.e., $\Delta \Gamma = 1\otimes \Gamma + \Gamma \otimes 1$, then 
$\omega^{\ren}_{\Gamma}(s)= \omega_{\Gamma}$
and  does not depend on $s$. In this case,  $(\ref{defnrenint})$ reduces to the  residue:
$$f_{\Gamma}=\int_{\DD_{\Gamma}} {\Omega_{\Gamma}\over \psi_{\Gamma}^2} \ .$$

\subsection{Relation with BPHZ} Let $\Gamma \in H$ be a $1$-scale graph as above, and let $R(\Gamma)$ be given by  $(\ref{RGammaexplicit})$.
Then the  renormalized Feynman rules   are given by Eq. (\ref{renFR}):
\begin{eqnarray} \label{IGammafull} 
I(s): H &\To& \C \nonumber \\
I_{\Gamma}(s)& = & \int_{\DD_{\Gamma}}  \Big( {\log s \over \psi^2_{\Gamma}} + \sum_{i=1}^n a_i {\log\big( {s \psi_{\gamma_i} \phi_{\Gamma/\gamma_i} + \phi_{\gamma_i} \psi_{\Gamma / \gamma_i}  \over \psi_{\gamma_i} \phi_{\Gamma/\gamma_i} + \phi_{\gamma_i} \psi_{\Gamma / \gamma_i}}\big) \over \psi_{\gamma_i}^2 \psi_{\Gamma/\gamma_i}^2}   \Big) \Omega_{\Gamma} 
\end{eqnarray}
where $s=q^2/\mu^2$, assumed to be positive.
It follows immediately that $I_{\Gamma}(1)=0$ and that 
$ s\, I'_{\Gamma}(s) = f_{\Gamma}(s)$
by definitions $(\ref{Upsilonsdef})$ and $(\ref{defnomegaren})$.  In particular, the convergence of  $(\ref{IGammafull})$  follows from theorem \ref{thmconv}.
Again,  we extend the definition of $I_{\bullet}(s)$ by linearity to linear combinations of connected graphs $\xi \in H$.

\subsubsection{Example 1: The case of a single subdivergence} Let $\Gamma$ be a  log-divergent $1$-scale graph with a single log-divergent subgraph $\gamma\subset \Gamma$. Label the edges of $\Gamma$ from $1$ to $N$, and fix  a $1$-scale structure on $\gamma$. Fixing such a structure involves   choosing any two vertices of $\gamma$ which meet $\Gamma\backslash \gamma$. By definition,
$\Delta(\Gamma) = 1\otimes \Gamma + \Gamma \otimes 1 + \gamma \otimes \Gamma/\gamma,$
and therefore the renormalized Feynman rules are:
$$f_{\Gamma}(s)= \int_{\DD_{\Gamma}} \omega^{\ren}_{\Gamma}(s) = \int_{\DD_{\Gamma}} \Big( {1\over \psi_{\Gamma}^2}  - {s \, \phi_{\Gamma/\gamma}  \over  \psi_{\gamma} \psi_{\Gamma/\gamma}^2  \Upsilon_{\gamma;G/\gamma}(s)} \Big) \Omega_{\Gamma}\ ,$$
where $ \Upsilon_{\gamma;G/\gamma}(s) =\phi_{\gamma} \psi_{G/\gamma} + s\, \psi_{\gamma} \phi_{G/\gamma}$. 
The group equation  $(\ref{firstrenormeq})$ gives
$$f_{\Gamma}(s) = f_{\Gamma}  + f_{\gamma}  f_{\Gamma/\gamma} \log s $$
 Thus all new information about $\Gamma$ is encoded in $f_{\Gamma}$. We can rewrite this entirely in terms of graph polynomials using the definition of circular joins:
\begin{equation} \label{f1asgraphs}
f_{\Gamma} = \int_{\DD_{\Gamma}}   \Big( {1\over \psi_{\Gamma}^2}  - {\psi_{\Gamma^{\bullet}/\gamma}  \over \psi_{\Gamma/\gamma}^2 \psi_{\gamma} \psi_{c(\gamma, \Gamma/\gamma)}} \Big)\,  \Omega_{\Gamma}\ .
\end{equation} 
The smallest non-trivial example of such a graph $\Gamma$ which is simple (i.e., does not reduce to a smaller graph by series-parallel operations) is the wheel with 3 spokes inserted into itself. The five graphs which occur in $(\ref{f1asgraphs})$ are illustrated below: 
\begin{center}
\fcolorbox{white}{white}{
  \begin{picture}(310,143) (464,-57)
    \SetWidth{1.0}
    \SetColor{Black}
    \Arc[clock](716.378,-27.5)(25.627,126.875,-126.875)
    \Arc(685,-27.5)(26.005,52.028,307.972)
    \Text(620,-47)[lb]{\small{\Black{$c(\gamma,\Gamma}/\gamma)$}}
        \Text(730,-2)[lb]{\tiny{\Black{$2$}}}
       \Text(730,-25)[lb]{\tiny{\Black{$4$}}}
      \Text(730,-47)[lb]{\tiny{\Black{$3$}}}
      \Text(713,-17)[lb]{\tiny{\Black{$6$}}}
      \Text(713,-40)[lb]{\tiny{\Black{$5$}}}
      \Text(708,-28)[lb]{\tiny{\Black{$1$}}}
      \Text(695,-28)[lb]{\tiny{\Black{$8$}}}
      \Text(686,-17)[lb]{\tiny{\Black{$12$}}}
      \Text(686,-40)[lb]{\tiny{\Black{$10$}}}
          \Text(672,-2)[lb]{\tiny{\Black{$7$}}}
       \Text(670,-25)[lb]{\tiny{\Black{$11$}}}
      \Text(672,-47)[lb]{\tiny{\Black{$9$}}}
    \Arc(753,49)(20,270,630)
    \Line(753,71)(753,49)
    \Line(753,49)(740,35)
    \Line(753,49)(767,35)
    \Text(709,29)[lb]{{\Black{$\Gamma/\gamma$}}}
    \Vertex(753,69){2.5}
    \Vertex(740,35){2.5}
    \Line(659,-27)(687,-27)
    \Line(687,-27)(700,-7)
    \Line(687,-27)(700,-47)
    \Line(701,-7)(714,-27)
    \Line(714,-27)(701,-47)
    \Line(714,-27)(742,-27)
    \Arc(751.5,-27.5)(54.502,157.906,202.094)
    \Arc[clock](650.5,-27.5)(54.502,22.094,-22.094)
    \Vertex(701,-7){2.5}
    \Vertex(701,-48){2.5}
    \Arc(516,49)(35,270,630)
    \Arc(516,49)(13,270,630)
    \Line(516,82)(516,49)
    \Line(516,49)(488,29)
    \Line(516,49)(545,29)
    \Text(461,29)[lb]{{\Black{$\Gamma$}}}
    \Text(480,70)[lb]{\tiny{\Black{$1$}}}
    \Text(520,70)[lb]{\tiny{\Black{$6$}}}
    \Text(550,70)[lb]{\tiny{\Black{$2$}}} 
    \Text(503,58)[lb]{\tiny{\Black{$7$}}}
    \Text(530,58)[lb]{\tiny{\Black{$8$}}}
    \Text(519,52)[lb]{\tiny{\Black{$12$}}}
    \Text(508,45)[lb]{\tiny{\Black{$11$}}}
    \Text(522,45)[lb]{\tiny{\Black{$10$}}}
    \Text(517,38)[lb]{\tiny{\Black{$9$}}}
        \Text(538,37)[lb]{\tiny{\Black{$4$}}}
        \Text(494,37)[lb]{\tiny{\Black{$5$}}}
         \Text(517,17)[lb]{\tiny{\Black{$3$}}}
    \Vertex(516,84){2.5}
     \Vertex(487.5,28.5){2.5}
      \Vertex(516,62){1.5}
      \Vertex(526.5,41.5){1.5}
    \Arc[clock](563,-46.025)(39.837,151.473,28.527)
    \Arc(515,-27)(13,270,630)
    \Arc[clock](549,-60.75)(39.75,121.891,58.109)
    \Arc(549,-7)(29,-136.397,-43.603)
    \Arc(563,-8.929)(39.39,-152.691,-27.309)
    \Line(570,-27)(598,-27)
    \Text(476,-47)[lb]{\small{\Black{$\Gamma^{\bullet}/\gamma$}}}
      \Text(507,-30)[lb]{\tiny{\Black{$1$}}}
      \Text(550,-20)[lb]{\tiny{\Black{$6$}}}
      \Text(550,-33)[lb]{\tiny{\Black{$5$}}}
      \Text(580,-7)[lb]{\tiny{\Black{$2$}}}
      \Text(580,-25)[lb]{\tiny{\Black{$4$}}}
        \Text(580,-41)[lb]{\tiny{\Black{$3$}}} 
    \Vertex(528,-27){2.5}
    \Arc(640,49)(20,270,630)
    \Line(640,70)(640,49)
    \Line(640,49)(626,35)
    \Line(640,49)(653,35)
    \Text(604,29)[lb]{{\Black{$\gamma$}}}
    \Text(640,22)[lb]{\tiny{\Black{$9$}}}
    \Text(626,43)[lb]{\tiny{\Black{$11$}}}
    \Text(650,43)[lb]{\tiny{\Black{$10$}}}
    \Text(633,55)[lb]{\tiny{\Black{$12$}}}
    \Text(620,60)[lb]{\tiny{\Black{$7$}}}
    \Text(661,60)[lb]{\tiny{\Black{$8$}}}
   \Text(755,22)[lb]{\tiny{\Black{$3$}}}
    \Text(743,43)[lb]{\tiny{\Black{$5$}}}
    \Text(764,43)[lb]{\tiny{\Black{$4$}}}
    \Text(749,55)[lb]{\tiny{\Black{$6$}}}
    \Text(734,60)[lb]{\tiny{\Black{$1$}}}
    \Text(776,60)[lb]{\tiny{\Black{$2$}}}
      \Vertex(640,69){2.5}
    \Vertex(654,34){2.5}
  \end{picture}
}
\end{center}
It is surprising that the quantity $(\ref{f1asgraphs})$ is not known for this graph,  
as it is the simplest possible example of a graph which requires renormalization and does not reduce trivially to smaller graphs by series-parallel operations.
 However, one easily checks that the parametric integration method of \cite{BrCMP}, \cite{BrFeyn} applies to this case.

\subsubsection{Example 2: overlapping subdivergences} \label{sectexoverlap} Consider the graph $\Gamma$  below,
see also Eq.(\ref{firstexa}). 
\begin{center}
\fcolorbox{white}{white}{
  \begin{picture}(120,76) (257,-27)
    \SetWidth{1.0}
    \SetColor{Black}
    \Line(259,7)(313,45)
    \Line(259,7)(313,-26)
    \Line(313,45)(371,7)
    \Line(371,7)(313,-26)
    \Arc(341.382,9.5)(45.451,128.642,231.358)
    \Arc[clock](284.618,9.5)(45.451,51.358,-51.358)
    \CBox(262,9)(258,5){Black}{Black}
    \CBox(372,9)(368,5){Black}{Black}
      \Text(235,18)[lb]{\Large{\Black{$\Gamma$}}}
    \Text(276,28)[lb]{{\Black{$1$}}}
    \Text(279,-18)[lb]{{\Black{$2$}}}
    \Text(300,7)[lb]{{\Black{$3$}}}
    \Text(322,7)[lb]{{\Black{$4$}}}
    \Text(344,28)[lb]{{\Black{$5$}}}
    \Text(344,-18)[lb]{{\Black{$6$}}}
  \end{picture}
}
\end{center}
 It has the three overlapping log-divergent subgraphs shown  below. We give them single-scale structures:
\begin{center}
\fcolorbox{white}{white}{
  \begin{picture}(-100,76) (302,-27)
    \SetWidth{1.0}
    \SetColor{Black}
    \Line(159,7)(213,45)
    \Line(159,7)(213,-26)
    \Arc(241.382,9.5)(45.451,128.642,231.358)
    \Arc[clock](184.618,9.5)(45.451,51.358,-51.358)
    \CBox(162,9)(158,5){Black}{Black}
    \CBox(215,46)(211,42){Black}{Black}
    \Text(140,-10)[lb]{\Large{\Black{$\gamma_l$}}}
     \Text(176,28)[lb]{{\Black{$1$}}}
    \Text(179,-18)[lb]{{\Black{$2$}}}
    \Text(200,7)[lb]{{\Black{$3$}}}
    \Text(222,7)[lb]{{\Black{$4$}}}
       \Text(300,7)[lb]{{\Black{$3$}}}
    \Text(322,7)[lb]{{\Black{$4$}}}
  \Arc(341.382,9.5)(45.451,128.642,231.358)
    \Arc[clock](284.618,9.5)(45.451,51.358,-51.358)
     \CBox(315,46)(311,42){Black}{Black}
     \CBox(315,-24)(311,-28){Black}{Black}
    \Text(270,-10)[lb]{\Large{\Black{$\gamma_{34}$}}}
    \Line(413,45)(471,7)
    \Line(471,7)(413,-26)
    \Arc(441.382,9.5)(45.451,128.642,231.358)
    \Arc[clock](384.618,9.5)(45.451,51.358,-51.358)
   \CBox(415,46)(411,42){Black}{Black}
    \CBox(472,9)(468,5){Black}{Black}
    \Text(380,-10)[lb]{\Large{\Black{$\gamma_{r}$}}}
    \Text(400,7)[lb]{{\Black{$3$}}}
    \Text(422,7)[lb]{{\Black{$4$}}}
    \Text(444,28)[lb]{{\Black{$5$}}}
    \Text(444,-18)[lb]{{\Black{$6$}}}
      \end{picture}
}
\end{center}
They generate a Hopf algebra of single-scale graphs $H_{\Gamma}$. 
In  general, let $\gamma_{ij}$ denote a 2-edge banana graph on edges $i,j$ with its unique  single-scale structure.
The coproduct is $\Delta(\Gamma) = 1\otimes \Gamma + \Gamma \otimes 1 + \sum_{i\in \{l,34,r\}} \gamma_i \otimes \Gamma/\gamma_i$, hence
$$\Delta^{(1)}(\Gamma) = \gamma_l \otimes \gamma_{56} + \gamma_r \otimes \gamma_{12} +\gamma_{34} \otimes \gamma_{12}.\gamma_{56}$$
where $\gamma_{12}.\gamma_{56}=\Gamma/\gamma_{34}$ is the  1-vertex join of $\gamma_{12}$ and $\gamma_{56}$ whose single scale structure is given by the two outer  (non-joined) vertices. It follows that 
$$\Delta^{(2)}(\Gamma) = \gamma_{34} \otimes \gamma_{12} \otimes \gamma_{56}  +    \gamma_{34} \otimes \gamma_{56} \otimes \gamma_{12}\ .$$
The preparation map applied to $\Gamma$  therefore gives $R(\Gamma)=$
$$ 1\otimes \Gamma - \big(  \gamma_l \otimes \gamma_{56} + \gamma_r \otimes \gamma_{12} +\gamma_{34} \otimes \gamma_{12}.\gamma_{56} \big) 
+ \big( \gamma_{34} \gamma_{12} \otimes \gamma_{56}  +    \gamma_{34} \gamma_{56} \otimes \gamma_{12}\big) \ , $$
 where $\gamma_{ij}\gamma_{kl}$ denotes the disjoint union $\gamma_{ij}\cup \gamma_{kl}$. Thus 
 \begin{multline} 
 \omega^{\ren}_{\Gamma}(s) = \Big[ {1 \over \psi_{\Gamma}^2} -  {s\,  \phi_{\gamma_{56}} \over  \psi_{\gamma_{l}}  \psi_{\gamma_{56}}^2 \Upsilon_{\gamma_l; \gamma_{56}}(s)}  - {s\,  \phi_{\gamma_{12}}   \over  \psi_{\gamma_{r}}  \psi_{\gamma_{12}}^2 \Upsilon_{\gamma_r; \gamma_{12}}(s)}  - {s\,   \phi_{\gamma_{12}.\gamma_{56}} \over  \psi_{\gamma_{34}}  \psi_{\gamma_{12}.\gamma_{56}}^2  \Upsilon_{\gamma_{34}; \gamma_{12}.\gamma_{56}}(s)}  \nonumber \\
  \quad \qquad + \quad   { s \, \phi_{\gamma_{56}} \over  \psi_{\gamma_{34}} \psi_{\gamma_{12}} \psi_{\gamma_{56}}^2 \Upsilon_{\gamma_{34}\gamma_{12}; \gamma_{56}}(s)} 
 +{ s \, \phi_{\gamma_{12}} \over  \psi_{\gamma_{34}} \psi_{\gamma_{56}} \psi_{\gamma_{12}}^2 \Upsilon_{\gamma_{34}\gamma_{56}; \gamma_{12}}(s)}   \Big]\, \Omega_{\Gamma} \nonumber
 \end{multline}
 where 
 $\psi_{\gamma_{ij}}= \alpha_i +\alpha_j$,   $\phi_{\gamma_{ij}}= \alpha_i\alpha_j$, and 
 \begin{eqnarray}
  \psi_{\gamma_{l}}= (\alpha_1+\alpha_2)(\alpha_3+\alpha_4)+ \alpha_3\alpha_4  & &  \phi_{\gamma_{l}}= \alpha_1 (\alpha_2\alpha_3+\alpha_2\alpha_4+\alpha_3\alpha_4)  \nonumber \\
  \psi_{\gamma_{r}}= (\alpha_5+\alpha_6)(\alpha_3+\alpha_4)+ \alpha_3\alpha_4 & & \phi_{\gamma_{r}}= \alpha_5( \alpha_3\alpha_4+ \alpha_3\alpha_6+\alpha_4 \alpha_6)  \nonumber \\
  \psi_{\gamma_{12}.\gamma_{56}}= (\alpha_1+\alpha_2)(\alpha_5+\alpha_6)   & & \phi_{\gamma_{12}.\gamma_{56} }= 
  \alpha_1\alpha_2\alpha_5+    \alpha_1\alpha_2\alpha_6+ \alpha_2\alpha_5\alpha_6+ \alpha_1\alpha_5\alpha_6 \nonumber
 \end{eqnarray}
 and all remaining polynomials $\Upsilon$  are deduced from these by  $(\ref{Upsilonsdef})$. In particular,
 \begin{eqnarray}
 \Upsilon_{\gamma_l; \gamma_{56}}(s)&  = & s \, \psi_{\gamma_l} \phi_{\gamma_{56}} + \phi_{\gamma_l} \psi_{\gamma_{56}} \nonumber \\
 \Upsilon_{\gamma_{34}; \gamma_{12}.\gamma_{56}}(s)&  = & s\,    \psi_{\gamma_{34}} \phi_{\gamma_{12}.\gamma_{56}}  +\phi_{\gamma_{34}} \psi_{\gamma_{12}.\gamma_{56}}  \nonumber \\
 \Upsilon_{\gamma_{34}\gamma_{56};\gamma_{12}}(s) & = & s\, \psi_{\gamma_{34}} \psi_{\gamma_{56}} \phi_{\gamma_{12}} + \psi_{\gamma_{34}}   \phi_{\gamma_{56}}  \psi_{\gamma_{12}} +      
 \phi_{\gamma_{34}}   \psi_{\gamma_{56}}  \psi_{\gamma_{12}}  \nonumber 
 \end{eqnarray}
In the case $s=1$,  the terms $\Upsilon_{\gamma_l;\gamma_{56}}(1), \Upsilon_{\gamma_r;\gamma_{12}}(1) , \Upsilon_{\gamma_{34}; \gamma_{12}.\gamma_{56}}(1)$
 and $ \Upsilon_{\gamma_{34} \gamma_{12}; \gamma_{56}} (1)=  \Upsilon_{\gamma_{34} \gamma_{56}; \gamma_{12}}(1)$ which occur  in the denominator of $\omega^{\ren}_{\Gamma}(1)$ are the graph polynomials of the following four  circular join graphs, from left to right: 
\begin{center}
\fcolorbox{white}{white}{
  \begin{picture}(354,68) (79,-68)
    \SetWidth{1.0}
    \SetColor{Black}
    \Line(80,-31)(132,-57)
    \Line(80,-31)(132,-5)
    \Arc(197.1,-31)(70.1,158.229,201.771)
    \Arc(66.9,-31)(70.1,-21.771,21.771)
    \Arc[clock](132,-70)(65,143.13,90)
    \Arc[clock](117.611,-41.222)(38.975,164.795,68.335)
    \Line(366,-46)(399,-3)
    \Line(399,-3)(430,-46)
    \Line(430,-46)(366,-46)
    \Text(88,-68)[lb]{{\Black{$c(\gamma_l,\gamma_{56})$}}}
    \Text(96,-5)[lb]{\tiny{\Black{$5$}}}  
    \Text(98,-12)[lb]{\tiny{\Black{$6$}}}  
    \Text(106,-24)[lb]{\tiny{\Black{$1$}}}  
    \Text(106,-42)[lb]{\tiny{\Black{$2$}}}  
    \Text(122,-32)[lb]{\tiny{\Black{$3$}}}  
    \Text(132,-32)[lb]{\tiny{\Black{$4$}}}  
   \Text(161,-32)[lb]{\tiny{\Black{$3$}}}  
    \Text(171,-32)[lb]{\tiny{\Black{$4$}}}  
     \Text(194,-24)[lb]{\tiny{\Black{$5$}}}  
    \Text(194,-42)[lb]{\tiny{\Black{$6$}}}  
      \Text(204,-5)[lb]{\tiny{\Black{$1$}}}  
    \Text(202,-12)[lb]{\tiny{\Black{$2$}}}  
      \Text(272,-34)[lb]{\tiny{\Black{$1$}}}  
    \Text(272,-48)[lb]{\tiny{\Black{$2$}}}  
      \Text(294,-10)[lb]{\tiny{\Black{$3$}}}  
    \Text(294,-19)[lb]{\tiny{\Black{$4$}}}  
      \Text(318,-34)[lb]{\tiny{\Black{$5$}}}  
    \Text(318,-48)[lb]{\tiny{\Black{$6$}}}  
      \Text(384,-10)[lb]{\tiny{\Black{$1$}}}  
    \Text(384,-19)[lb]{\tiny{\Black{$2$}}}   
      \Text(412,-10)[lb]{\tiny{\Black{$3$}}}  
    \Text(412,-19)[lb]{\tiny{\Black{$4$}}}   
      \Text(398,-45)[lb]{\tiny{\Black{$5$}}}  
    \Text(398,-52)[lb]{\tiny{\Black{$6$}}}   
    \Text(180,-68)[lb]{{\Black{$c(\gamma_r,\gamma_{12})$}}}
    \Text(265,-68)[lb]{{\Black{$c(\gamma_{34}, \gamma_{12}.\gamma_{56})$}}}
    \Text(370,-68)[lb]{{\Black{$c(\gamma_{12},\gamma_{34},\gamma_{56})$}}}
    \Arc(398,23.571)(76.578,-114.7,-65.3)
    \Arc(368.995,-57.306)(62.044,10.499,61.079)
    \Arc(421.716,-54.596)(56.375,113.762,171.229)
    \Line(171,-5)(222,-31)
    \Line(171,-56)(222,-31)
    \Arc(233.5,-30.5)(67.502,157.805,202.195)
    \Arc(108.5,-30.5)(67.502,-22.195,22.195)
    \Arc[clock](185.724,-39.137)(37.177,113.331,12.643)
    \Arc[clock](166.644,-76.564)(71.696,86.517,39.458)
    \Arc[clock](273,-54.5)(27.5,143.13,36.87)
    \Arc(273,-23.833)(26.167,-147.221,-32.779)
    \Arc[clock](318,-56.545)(29.545,141.12,38.88)
    \Arc[clock](318,-21.958)(28.042,-34.894,-145.106)
    \Arc[clock](296,-65.98)(52.989,148.128,31.872)
    \Arc(296,-52.167)(47.177,17.475,162.525)
  \end{picture}
}
\end{center}
The renormalization group equation in this case reduces to 
$$s\, f_{\Gamma}'(s) =f_{\gamma_{\ell}}  f_{\gamma_{56}}(s) + f_{\gamma_{r}}  f_{\gamma_{12}}(s)+ f_{\gamma_{34}} f_{\gamma_{12}.\gamma_{56}}(s)$$
Since the $\gamma_{ij}$ are all isomorphic and primitive, and since $\gamma_r\cong \gamma_{\ell}$ this reduces to
$$s\, f_{\Gamma}'(s) =2 \, f_{\gamma_{\ell}}  f_{\gamma_{56}}  + f_{\gamma_{34}}  f_{\gamma_{12}.\gamma_{56}}(s) $$
Applying the group equation to $\gamma_{12}.\gamma_{56}$, we see that $ f_{\gamma_{12}.\gamma_{56}}(s)= f_{\gamma_{12}.\gamma_{56}}+ 2 f_{\gamma_{12}}^2 \log s $
by a similar calculation. 
All in all, we obtain
$$  f_{\Gamma}(s) =f_{\Gamma}+ (2 \, f_{\gamma_{\ell}} + f_{\gamma_{12}.\gamma_{56}} ) f_{\gamma_{12}}  \log s +  f_{\gamma_{12}}^3 \log^2 s . $$

\section{Residues} \label{sectResidues}

 Let $G$ be a   graph with edges labelled from $1$ to $N$. Then the Schwinger parameters $\alpha_1,\ldots, \alpha_N$ define projective
coordinates $(\alpha_1:\ldots :\alpha_N)$ on $\Pro^{N-1}$. For any  strict subset of edges $I\subset E(G)$, let  
\begin{equation}\label{LIdef} 
L_I = \bigcap_{i\in I} \{\alpha_i=0\} \subset \Pro^{N-1}
\end{equation}
be the corresponding coordinate  linear subvariety. If $\omega$ is a regular $k$-form  on  a Zariski-open subset of $\Pro^{N-1}$,   let 
$v_I(\omega)$ denote the order of vanishing of $\omega$ along $L_I$.

\begin{lem} \label{lemdivparts} Let $I \subsetneq G$ be a subgraph which is not necessarily connected. 
Then 
\begin{equation} 
\psi_G   = \psi_{I} \psi_{G/I} + R \ , \label{PsiRemainder} 
\end{equation}
where $R$ is some polynomial of degree strictly  greater than $h_I= \deg \psi_{I} $ in the Schwinger parameters corresponding
to $I$.  It follows that $v_I(\psi_G)= h_I$. 
\end{lem}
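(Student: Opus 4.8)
The plan is to argue directly from the Kirchhoff (spanning-tree) expansion $\psi_G = \sum_T \prod_{e\notin T}\alpha_e$ and to sort monomials by their degree in the Schwinger parameters $\{\alpha_e : e\in I\}$. For a spanning tree $T$ of $G$, the monomial it contributes has $I$-degree equal to $|I| - |T\cap I|$, where $T\cap I$ is the set of edges of $T$ lying in $I$. Since $T$ is acyclic, $T\cap I$ is a forest in $I$, so $|T\cap I|\leq V_I - c_I$, where $c_I$ is the number of connected components of $I$, with equality exactly when $T\cap I$ is a maximal spanning forest of $I$. Using $h_I = E_I - V_I + c_I = \deg\psi_I$, this shows every monomial of $\psi_G$ has $I$-degree at least $h_I$, and that the minimal value $h_I$ is actually attained, since any maximal spanning forest of $I$ extends to a spanning tree of the (connected) graph $G$. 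The disconnected case for $G$ reduces to this one via the product convention.

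The second step is to identify the part of $\psi_G$ of $I$-degree exactly $h_I$ with $\psi_I\psi_{G/I}$. First I would write $\psi_I = \sum_F \prod_{e\in I\setminus F}\alpha_e$, where $F$ runs over maximal spanning forests of $I$; for disconnected $I$ this is precisely what the product convention $(\ref{productconvention})$ encodes, since $F$ selects one spanning tree in each component. The key combinatorial input is that, for a \emph{fixed} maximal forest $F$ of $I$, contracting $I$ has the same effect on the complementary edges as contracting $F$: the edges of $I\setminus F$ become loops under $G/F$ and are harmless for spanning trees. Hence $S\mapsto S\cup F$ is a bijection between spanning trees $S$ of $G/I$ and spanning trees $T$ of $G$ with $T\cap I = F$. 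Multiplying the two sums and regrouping by $T = F\cup S$ gives $\psi_I\psi_{G/I} = \sum_{T:\,T\cap I\ \mathrm{maximal}}\prod_{e\notin T}\alpha_e$, which is exactly the minimal $I$-degree part of $\psi_G$.

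Combining the two steps, I would set $R = \psi_G - \psi_I\psi_{G/I}$, which collects precisely the trees with $|T\cap I| < V_I - c_I$; every such monomial has $I$-degree strictly greater than $h_I$, establishing $(\ref{PsiRemainder})$. Since $\psi_I\psi_{G/I}$ is a nonzero polynomial whose $I$-degree equals $h_I$ while $R$ has $I$-degree $>h_I$, the lowest-order term of $\psi_G$ along $L_I$ is nonzero of order $h_I$, so $v_I(\psi_G) = h_I$.

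I expect the main obstacle to be the contraction-of-a-forest bijection: one must verify it carefully degree-by-degree and handle disconnected $I$ through the product convention, checking in particular that requiring $T\cap I = F$ forces $F$ to be maximal and that the loops produced by $I\setminus F$ genuinely leave the spanning trees of the contraction unchanged.
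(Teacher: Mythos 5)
Your argument is correct. The paper does not actually prove Lemma \ref{lemdivparts}; it declares equation (\ref{PsiRemainder}) to be well-known and moves straight to its generalization in Corollary \ref{corremainders}. What you have written is the standard proof that is being implicitly invoked: stratify the spanning-tree expansion of $\psi_G$ by the forest $T\cap I$, note that the $I$-degree of the monomial of $T$ is $E_I-|T\cap I|\geq E_I-(V_I-c_I)=h_I$, and identify the minimal stratum with $\psi_I\psi_{G/I}$ via the bijection $S\mapsto S\cup F$ for $F$ a maximal spanning forest of $I$ and $S$ a spanning tree of $G/I$ (the edges of $I\setminus F$ become loops in $G/F$, so spanning trees of $G/F$ and of $G/I$ coincide). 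One small correction to your list of anticipated obstacles: the condition $T\cap I=F$ does not, and need not, force $F$ to be maximal --- spanning trees whose intersection with $I$ is a non-maximal forest are precisely the ones contributing to $R$, which is exactly how you use them, so nothing further needs checking there. The product convention handles disconnected $I$ exactly as you say, since $\deg\psi_I=\sum_i h_{I_i}=h_I$.
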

Equation $(\ref{PsiRemainder})$ is well-known, and  its generalization to products of graphs is given by the following corollary.
In the sequel we use the following notation: if $I \subset E(G)$, and $\gamma $ is a subgraph of $G$, let $I_{\gamma}=I \cap \gamma$, and let 
$I_{G/\gamma} = I_G / I_{\gamma}$.

\begin{cor} \label{corremainders}
 Let  $G_1,\ldots, G_m$ be connected $1$-scale graphs, let $G=\cup_i G_i$, and let $I_i\subset G_i$  be subgraphs for $i=1,\ldots ,m$,  where at least one $I_i\neq G_i$.  Then  
   \begin{equation} \label{PhiRemainder} v_I(\phi_{G}) =\sum_{i=1}^m  h_{I_i}  + \prod_{i=1}^m \varepsilon_{G_i}(I_i)
   \end{equation}
   where $I= \cup_{i=1}^m I_i$. 
Suppose that  for every $i=1,\ldots, m$,  either $\varepsilon_{G_i}(I_i)=0$ or $I_i=G_i$.  Then the non-zero quotients $G_i/I_i$ are   single-scale graphs, and 
  \begin{equation} \label{CircRemainder}
\phi_{G}  = 
                            \psi_{I} \phi_{G/I} + R       \end{equation}
where  $R$ is  of degree strictly greater than  
$ \deg \psi_{I}$  in the variables of $\psi_I$. Likewise, for any $1$-scale graphs $\gamma=\cup_i \gamma_i, \Gamma$ with $\gamma_i, \Gamma$ connected, and  $I\subset \gamma \cup \Gamma$, we
have 
\begin{equation}\label{114star}
v_I(\Upsilon_{\gamma; \Gamma}(s)) = \sum_{i=1}^m h_{I_{\gamma_i}} +   \varepsilon_{\gamma}(I _{\gamma}) \varepsilon_{\Gamma}(I_{\Gamma})\ .
\end{equation}
If $\varepsilon_{\gamma_i} (I_{\gamma_i})=0$ (resp. $\varepsilon_{\Gamma} (I_{\Gamma})=0$) whenever $\gamma_i \subsetneq I$ (resp. $\Gamma \subsetneq I$), then 
\begin{equation}\label{UpsReminder}
\Upsilon_{\gamma;  \Gamma}(s)  = 
                            \psi_{I} \Upsilon_{\gamma/I_{\gamma}; \Gamma/I_{\Gamma}} (s)+ R  \  .
                            \end{equation}
                             \end{cor}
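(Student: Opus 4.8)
The plan is to reduce every assertion to the single scalar remainder formula of Lemma \ref{lemdivparts}, namely $\psi_G = \psi_I\psi_{G/I} + R$ with $v_I(\psi_G) = h_I$, together with the identity $\phi_G = \psi_{G^{\bullet}}$ of Lemma \ref{lemphi} and the product conventions $(\ref{productconvention})$. The one genuinely new combinatorial input I need is the local formula
\[ v_{I_i}(\phi_{G_i}) = h_{I_i} + \varepsilon_{G_i}(I_i) \]
for a connected 1-scale $G_i$ and $I_i\subset E(G_i)$. First I would prove this by applying Lemma \ref{lemdivparts} to the graph $G_i^{\bullet}$ and the image $I_i^{\bullet}$ of $I_i$, giving $v_{I_i}(\phi_{G_i}) = v_{I_i}(\psi_{G_i^{\bullet}}) = h_{I_i^{\bullet}}$; the point is that passing from $G_i$ to $G_i^{\bullet}$ raises the loop number of $I_i$ by exactly $1$ when $I_i$ meets both distinguished vertices and by $0$ otherwise, which is precisely the definition of $\varepsilon_{G_i}(I_i)$. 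Throughout I use that $\psi$ and $\phi$ have nonnegative (indeed $\{0,1\}$) coefficients, so that the valuation of a sum of such polynomials is the minimum of the valuations of the summands, with no cancellation.

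For $(\ref{PhiRemainder})$ I expand $\phi_G = \sum_i \phi_{G_i}\prod_{j\neq i}\psi_{G_j}$ by $(\ref{productconvention})$. Since the components have disjoint edge variables, the $i$-th summand has $v_I = v_{I_i}(\phi_{G_i}) + \sum_{j\neq i} v_{I_j}(\psi_{G_j}) = \sum_k h_{I_k} + \varepsilon_{G_i}(I_i)$, using the local formula and $v_{I_j}(\psi_{G_j}) = h_{I_j}$. By nonnegativity $v_I(\phi_G)$ is the minimum over $i$, i.e.\ $\sum_k h_{I_k} + \min_i \varepsilon_{G_i}(I_i)$, and since the $\varepsilon$'s lie in $\{0,1\}$ the minimum equals $\prod_i \varepsilon_{G_i}(I_i)$, giving $(\ref{PhiRemainder})$. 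For $(\ref{CircRemainder})$ I extract the homogeneous part of $\phi_G$ of $I$-degree exactly $\deg\psi_I = \sum_k h_{I_k}$: a summand contributes iff $\varepsilon_{G_i}(I_i)=0$, and then its leading part is $(\psi_{I_i}\phi_{G_i/I_i})\prod_{j\neq i}(\psi_{I_j}\psi_{G_j/I_j})$, where I use Lemma \ref{lemdivparts} on each $\psi_{G_j}$ and, for the factor $\phi_{G_i}=\psi_{G_i^{\bullet}}$, that $\varepsilon_{G_i}(I_i)=0$ makes $I_i^{\bullet}\cong I_i$ and $G_i^{\bullet}/I_i^{\bullet} = (G_i/I_i)^{\bullet}$, so its leading part is $\psi_{I_i}\phi_{G_i/I_i}$. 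Factoring out $\psi_I = \prod_k \psi_{I_k}$, and observing that the excluded summands ($I_i = G_i$, forced by the hypothesis whenever $\varepsilon_{G_i}(I_i)=1$) contribute $\phi_{G_i/I_i} = \phi_{\mathrm{pt}} = 0$ to $\phi_{G/I}$, the leading part is exactly $\psi_I\phi_{G/I}$, which is $(\ref{CircRemainder})$.

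For the statements about $\Upsilon_{\gamma;\Gamma}(s) = s\psi_\gamma\phi_\Gamma + \phi_\gamma\psi_\Gamma$ I treat the two summands separately. Using disjointness of the edge variables of $\gamma$ and $\Gamma$, the product rule for $v_I$, the already-proved $(\ref{PhiRemainder})$ applied to $\phi_\gamma$ (recall $\varepsilon_\gamma = \prod_i \varepsilon_{\gamma_i}$), and $v(\psi)=h$ from Lemma \ref{lemdivparts}, the two summands have valuations differing only by the $\varepsilon$-contributions $\varepsilon_\Gamma(I_\Gamma)$ and $\varepsilon_\gamma(I_\gamma)$; by nonnegativity $v_I(\Upsilon)$ is their minimum, and the two $\{0,1\}$ quantities combine into the product $\varepsilon_\gamma(I_\gamma)\varepsilon_\Gamma(I_\Gamma)$, which yields $(\ref{114star})$. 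For $(\ref{UpsReminder})$ I take the leading $I$-homogeneous parts termwise: applying $(\ref{CircRemainder})$ to both $\phi_\gamma$ and $\phi_\Gamma$ and Lemma \ref{lemdivparts} to $\psi_\gamma$ and $\psi_\Gamma$, and using $\psi_I = \psi_{I_\gamma}\psi_{I_\Gamma}$, each summand of $\Upsilon$ has leading part equal to the corresponding summand of $\psi_I\,\Upsilon_{\gamma/I_\gamma;\,\Gamma/I_\Gamma}(s)$; the degenerate contributions (when $I_\Gamma = \Gamma$ or $I_{\gamma_i}=\gamma_i$) again drop out through $\phi_{\mathrm{pt}} = 0$, exactly as encoded in the hypotheses of $(\ref{UpsReminder})$.

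The main obstacle is the local formula $v_{I_i}(\phi_{G_i}) = h_{I_i} + \varepsilon_{G_i}(I_i)$, and, more subtly, the bookkeeping of the degenerate cases in $(\ref{CircRemainder})$ and $(\ref{UpsReminder})$: one must verify that whenever a component is entirely contracted ($I_i = G_i$, resp.\ $I_\Gamma = \Gamma$) the corresponding cograph is a single vertex and its second Symanzik polynomial vanishes, so that these terms are absorbed into the remainder $R$ rather than the leading part. Establishing that $R$ genuinely has $I$-degree strictly greater than $\deg\psi_I$ — rather than merely greater-or-equal — relies throughout on the no-cancellation property coming from positivity of the graph polynomials.
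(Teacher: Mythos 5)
Your proof is correct. The paper's own proof is essentially a two-line appeal to Lemma \ref{lemcircjoin}: it identifies $\phi_{G_1\cup\cdots\cup G_m}$ with the first Symanzik polynomial $\psi_{c(G_1,\ldots,G_m)}$ of a single auxiliary graph, the circular join, and then reads off $(\ref{PhiRemainder})$ and $(\ref{CircRemainder})$ by applying Lemma \ref{lemdivparts} to that one graph; the term $\prod_i\varepsilon_{G_i}(I_i)$ is the extra loop that $I$ closes in the circular join exactly when every $I_i$ joins the two distinguished vertices of $G_i$. You instead use only the one-component case of this identification ($\phi_{G_i}=\psi_{G_i^{\bullet}}$, i.e.\ Lemma \ref{lemphi}), expand $\phi_G$ by the product convention $(\ref{productconvention})$, and recombine the summands via the no-cancellation property of polynomials with nonnegative coefficients, using $\min_i\varepsilon_{G_i}(I_i)=\prod_i\varepsilon_{G_i}(I_i)$. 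The two computations are equivalent --- your minimum over summands is the paper's loop count in the circular join --- but your version is more explicit about the degenerate terms ($I_i=G_i$, $\phi_{\mathrm{pt}}=0$ absorbed into the remainder) and about why $G_i^{\bullet}/I_i^{\bullet}=(G_i/I_i)^{\bullet}$ when $\varepsilon_{G_i}(I_i)=0$, points the paper leaves implicit. One small remark: your derivation of the valuation of $\Upsilon_{\gamma;\Gamma}(s)$ correctly yields $\sum_i h_{I_{\gamma_i}}+h_{I_{\Gamma}}+\varepsilon_{\gamma}(I_{\gamma})\varepsilon_{\Gamma}(I_{\Gamma})$; the summand $h_{I_{\Gamma}}$ is missing from $(\ref{114star})$ as printed but is present where the formula is actually invoked in \S\ref{sectOrdersofPoles}, so this is a typo in the statement rather than a defect of your argument.
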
 

\begin{proof}  Use the interpretation of $\phi_{G} $ as the graph polynomial of a circular join  $\psi_{c(G_1,\ldots, G_m)}$.  
 We know from $(\ref{PsiRemainder})$ that for any subgraph $I$ in  a  graph $G$, the order of vanishing of $\psi_G$ along $\alpha_I=0$ is  equal to the number of loops in  $I$.
Equations $(\ref{PhiRemainder})$ and $(\ref{CircRemainder})$   follow from computing the number of loops  of $I$ in   $G= c(G_1,\ldots, G_m)$, and 
 applying   $(\ref{PsiRemainder})$  to $G$.  The corresponding calculation for $\Upsilon_{\gamma;\Gamma}(s)$ follows from this case, by  definition 
 $(\ref{Upsilonsdef})$. \end{proof}

\subsection{Orders of poles} \label{sectOrdersofPoles} 
We  compute the orders of the poles of  $\omega_{\gamma \otimes \Gamma}(s)$ along  $L_I$.

\begin{lem}  Let $\gamma, \Gamma$ denote single-scale graphs, with $\Gamma$ connected, and let $I_\gamma \subset \gamma$,  $I_{\Gamma} \subset \Gamma$ and $I = I_{\gamma} \cup I_{\Gamma}$. Then  \begin{equation}\label{valuationformula}
-v_I(\omega_{\gamma\otimes \Gamma}(s)) =  2 h_{I_\gamma} + 2 h_{I_{\Gamma}} - \varepsilon_{\gamma\otimes \Gamma}(I)\ ,\end{equation}
where $\varepsilon_{\gamma\otimes \Gamma} \in \{0,1\}$ is defined by 
\begin{equation} \label{epsilontensdefn}
 \varepsilon_{\gamma\otimes \Gamma}= \varepsilon_{\Gamma} (1-    \varepsilon_{\gamma})\ . 
 \end{equation}
\end{lem}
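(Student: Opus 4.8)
The plan is to reduce the computation to the valuations of the individual graph polynomials appearing in $(\ref{omega2def})$. Since $L_I$ has codimension $|I|\geq 1$, near a generic point of $L_I$ we pass to an affine chart $\alpha_j=1$ for some $j\notin I$, in which $\Omega_{\gamma\cup\Gamma}$ becomes (up to sign) the coordinate volume form $d\alpha_1\wedge\cdots\wedge\widehat{d\alpha_j}\wedge\cdots$, which neither vanishes nor has a pole along $L_I$. Hence the order of vanishing of $\omega_{\gamma\otimes\Gamma}(s)$ equals that of its rational coefficient, and since $s$ is a nonzero scalar,
\begin{equation}
v_I(\omega_{\gamma\otimes\Gamma}(s)) = v_I(\phi_\Gamma) - v_I(\psi_\gamma) - 2\,v_I(\psi_\Gamma) - v_I(\Upsilon_{\gamma;\Gamma}(s))\ .
\end{equation}
It therefore suffices to evaluate these four valuations and add them.

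By Lemma \ref{lemdivparts} applied componentwise, $v_I(\psi_\gamma)=\sum_i h_{I_{\gamma_i}}=h_{I_\gamma}$ and $v_I(\psi_\Gamma)=h_{I_\Gamma}$, each factor depending only on its own Schwinger parameters. For $\phi_\Gamma$ I would use $\phi_\Gamma=\psi_{\Gamma^{\bullet}}$ (Lemma \ref{lemphi}) together with $(\ref{PhiRemainder})$ of Corollary \ref{corremainders}: identifying the two distinguished vertices of $\Gamma$ raises the loop number of $I_\Gamma$ by exactly $\varepsilon_\Gamma(I_\Gamma)$, so $v_I(\phi_\Gamma)=h_{I_\Gamma}+\varepsilon_\Gamma(I_\Gamma)$. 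Finally, writing $\Upsilon_{\gamma;\Gamma}(s)=s\,\psi_\gamma\phi_\Gamma+\phi_\gamma\psi_\Gamma$ and applying the same inputs termwise, the two summands vanish to orders $h_{I_\gamma}+h_{I_\Gamma}+\varepsilon_\Gamma(I_\Gamma)$ and $h_{I_\gamma}+h_{I_\Gamma}+\varepsilon_\gamma(I_\gamma)$ respectively, whence (cf. Corollary \ref{corremainders})
\begin{equation}
v_I(\Upsilon_{\gamma;\Gamma}(s)) = h_{I_\gamma} + h_{I_\Gamma} + \min\big(\varepsilon_\gamma(I_\gamma),\varepsilon_\Gamma(I_\Gamma)\big) = h_{I_\gamma} + h_{I_\Gamma} + \varepsilon_\gamma(I_\gamma)\,\varepsilon_\Gamma(I_\Gamma)\ ,
\end{equation}
the last equality because these quantities are $\{0,1\}$-valued.

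Substituting and collecting terms gives
\begin{equation}
-v_I(\omega_{\gamma\otimes\Gamma}(s)) = 2h_{I_\gamma} + 2h_{I_\Gamma} - \varepsilon_\Gamma(I_\Gamma) + \varepsilon_\gamma(I_\gamma)\varepsilon_\Gamma(I_\Gamma) = 2h_{I_\gamma} + 2h_{I_\Gamma} - \varepsilon_\Gamma(I_\Gamma)\big(1-\varepsilon_\gamma(I_\gamma)\big)\ ,
\end{equation}
which is $(\ref{valuationformula})$ once the last factor is recognized as $\varepsilon_{\gamma\otimes\Gamma}(I)$ by definition $(\ref{epsilontensdefn})$.

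The step requiring genuine care — the \textbf{main obstacle} — is justifying the valuation of $\Upsilon_{\gamma;\Gamma}(s)$, since a priori the leading $t$-terms of its two summands could cancel when $\varepsilon_\gamma(I_\gamma)=\varepsilon_\Gamma(I_\Gamma)$. Here I would invoke positivity: after the scaling $\alpha_e=t\beta_e$ for $e\in I$, the leading coefficients are graph polynomials $\psi$ (and $\phi=\psi_{\bullet}$) of honest graphs evaluated at strictly positive $\beta$, hence strictly positive, and $s>0$; a sum of positive quantities cannot vanish, so no cancellation occurs and the valuation equals the minimum of the two orders. This positivity is exactly what forces the $\min$ to become the product and produces the asymmetric factor $1-\varepsilon_\gamma(I_\gamma)$. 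The remaining bookkeeping — that $\varepsilon$ is multiplicative over disjoint components and that the edge cases $I_\Gamma=\Gamma$ or $I_{\gamma_i}=\gamma_i$ are covered by the same degree formulas — is routine.
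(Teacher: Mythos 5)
Your proof is correct and follows essentially the same route as the paper's: both sum the valuations of the individual factors $\phi_{\Gamma}$, $\psi_{\gamma}$, $\psi_{\Gamma}^2$, $\Upsilon_{\gamma;\Gamma}(s)$ in the definition of $\omega_{\gamma\otimes\Gamma}(s)$, quoting $v_I(\phi_{\Gamma})=h_{I_{\Gamma}}+\varepsilon_{\Gamma}(I)$ and $v_I(\Upsilon_{\gamma;\Gamma}(s))=h_{I_{\gamma}}+h_{I_{\Gamma}}+\varepsilon_{\gamma}\varepsilon_{\Gamma}(I)$ from Corollary \ref{corremainders}. You merely make explicit two points the paper leaves implicit, namely that $\Omega_{\gamma\cup\Gamma}$ contributes nothing to $v_I$ and that positivity of the graph polynomials (for $s>0$) prevents cancellation of leading terms in $\Upsilon$, turning the $\min$ into the product of the $\varepsilon$'s.
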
 

\begin{proof}
By the remarks preceding $(\ref{PhiRemainder})$, 
$v_I(\phi_{\Gamma}) = h_{I_{\Gamma}} + \varepsilon_{\Gamma} (I).$
Likewise it follows from corollary \ref{corremainders}  that  
 $v_I( \Upsilon_{\gamma;  \Gamma}(s)  ) = h_{I_\gamma}+ h_{I_\Gamma} +   \varepsilon_{\gamma} \varepsilon_{\Gamma}(I)$. Summing the contributions of each term in $(\ref{omega2def})$ gives the formula.
\end{proof}

From now on let us assume that $G$ is log-divergent and has at most logarithmically divergent subgraphs. A \emph{flag} in $G$ is a nested sequence of
divergent subgraphs 
\begin{equation}\label{flag}
F\ : \quad \gamma_1 \subset \gamma_2 \ldots \subset \gamma_n \subset G \qquad \qquad 
\end{equation}
where all inclusions are strict. Given such a flag $F$, let  us write:
\begin{equation}\label{gammaflag} 
\gamma_F = \gamma_1 \cup \gamma_2/\gamma_1 \cup \ldots \cup \gamma_n / \gamma_{n-1} \quad  \hbox{  and }  \quad 
 G/\gamma_F = G/ \gamma_n \ . 
 \end{equation} 
 If $I$ is a subset of edges of $G$,  write 
\begin{equation} \label{Igammanotation}
I_{\gamma_F} = I_{\gamma_1} \cup  I_{\gamma_2/\gamma_1} \cup \ldots \cup I_{\gamma_n/\gamma_{n-1}} \quad \hbox{ and } \quad   I_{G/\gamma_F} = I_{G/\gamma_n}\ .
\end{equation}
We say that $I_{\gamma_F}$ is divergent if $ I_{\gamma_1}$,  and  all $I_{\gamma_{i+1}/\gamma_i}$  are either empty or divergent.
\begin{cor} \label{corpoleorderalongLI}
 Let   $I\subsetneq E(G)$, and  let $F$  be a flag  in $G$.   The form $\omega_{\gamma_F\otimes G/\gamma_F}(s)$ has a pole along $L_I$ of order $\leq |I|$, with equality if and only if 
 $I_{\gamma_F}$, $I_{G/\gamma_F}$ are divergent, and $I_{G/\gamma_F}\subsetneq G_{\gamma_F}$ is a strict subgraph.
\end{cor}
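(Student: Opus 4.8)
The plan is to read the pole order directly off the valuation formula \eqref{valuationformula}, applied with $\gamma=\gamma_F$ and $\Gamma=G/\gamma_F$. Since $G$ is connected, $G/\gamma_F=G/\gamma_n$ is connected, and $\gamma_F=\gamma_1\cup\gamma_2/\gamma_1\cup\ldots\cup\gamma_n/\gamma_{n-1}$ is a disjoint union of single-scale subquotients, so the lemma applies and gives
\[
-v_I\big(\omega_{\gamma_F\otimes G/\gamma_F}(s)\big)=2h_{I_{\gamma_F}}+2h_{I_{G/\gamma_F}}-\varepsilon_{\gamma_F\otimes G/\gamma_F}(I),
\]
where $\varepsilon_{\gamma_F\otimes G/\gamma_F}=\varepsilon_{G/\gamma_F}(1-\varepsilon_{\gamma_F})$ by \eqref{epsilontensdefn} and $\varepsilon_{\gamma_F}=\prod_i\varepsilon_{\gamma_i/\gamma_{i-1}}$. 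First I would record the additivity of loop numbers under contraction, namely $h_{\gamma_i}=h_{\gamma_{i-1}}+h_{\gamma_i/\gamma_{i-1}}$ refined to edge subsets, which iterates along the flag to $h_I=h_{I_{\gamma_F}}+h_{I_{G/\gamma_F}}$; combined with the edge decomposition $E(G)=\bigsqcup_i E(\gamma_i/\gamma_{i-1})\sqcup E(G/\gamma_n)$, which gives $|I|=|I_{\gamma_F}|+|I_{G/\gamma_F}|$, this rewrites the pole order as $2h_I-\varepsilon_{\gamma_F\otimes G/\gamma_F}(I)$.

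Since $\varepsilon_{\gamma_F\otimes G/\gamma_F}(I)\in\{0,1\}$ and the hypothesis that $G$ has at most log-divergent subgraphs gives $2h_I\le|I|$, the pole order is at most $|I|$, which proves the first assertion. For the equality case, $2h_I-\varepsilon=|I|$ together with $2h_I\le|I|$ and $\varepsilon\ge0$ forces \emph{simultaneously} $\varepsilon_{\gamma_F\otimes G/\gamma_F}(I)=0$ and $2h_I=|I|$. To decode the latter I would establish the piecewise bounds $2h_{I_{\gamma_i/\gamma_{i-1}}}\le|I_{\gamma_i/\gamma_{i-1}}|$ and $2h_{I_{G/\gamma_F}}\le|I_{G/\gamma_F}|$: each follows by applying the global bound to the lift $I_{\gamma_i/\gamma_{i-1}}\cup E(\gamma_{i-1})$ inside $\gamma_i$ and using that each flag member $\gamma_j$ is \emph{exactly} log-divergent ($2h_{\gamma_j}=E_{\gamma_j}$), which is precisely what makes the contraction preserve the bound. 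Summing over pieces, $2h_I=|I|$ holds if and only if every piece is empty or log-divergent, i.e.\ if and only if $I_{\gamma_F}$ and $I_{G/\gamma_F}$ are divergent in the sense defined after \eqref{Igammanotation}.

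It remains to match $\varepsilon_{\gamma_F\otimes G/\gamma_F}(I)=0$ with the strictness of $I_{G/\gamma_F}$, and this is the step I expect to be the crux. The key input is the single-scale structure \eqref{epsilonHopf}: taking $\Gamma$ there to be $G$ itself, and more generally any flag member $\gamma_k$, it says that $\varepsilon_{\Gamma}(\delta)=0$ for every \emph{strict} divergent subgraph $\delta\subsetneq\Gamma$. I would combine this with the contraction–lifting identity $\varepsilon_{\gamma_k/\gamma_{k-1}}(J)=\varepsilon_{\gamma_k}\big(J\cup E(\gamma_{k-1})\big)$, valid because \eqref{epsilonHopf} guarantees the distinguished vertices are never identified under contraction, and likewise $\varepsilon_{G/\gamma_F}(I_{G/\gamma_F})=\varepsilon_{G}\big(I_{G/\gamma_F}\cup E(\gamma_n)\big)$. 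For the \emph{if} direction, when $I_{G/\gamma_F}$ is a strict divergent subgraph its lift is a strict divergent subgraph of $G$, so \eqref{epsilonHopf} forces $\varepsilon_{G/\gamma_F}(I_{G/\gamma_F})=0$ and hence $\varepsilon_{\gamma_F\otimes G/\gamma_F}(I)=0$. For the \emph{only if} direction, if $I_{G/\gamma_F}=E(G/\gamma_F)$ is not strict then $\varepsilon_{G/\gamma_F}=1$, and since $I\subsetneq E(G)$ at least one piece $I_{\gamma_k/\gamma_{k-1}}$ is a strict (divergent) subgraph of its quotient; lifting and applying \eqref{epsilonHopf} to $\gamma_k$ gives $\varepsilon_{\gamma_k/\gamma_{k-1}}(I_{\gamma_k/\gamma_{k-1}})=0$, whence $\varepsilon_{\gamma_F}=0$ and $\varepsilon_{\gamma_F\otimes G/\gamma_F}(I)=1\neq0$, contradicting equality.

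The main obstacle is therefore the careful bookkeeping of the $\varepsilon$-invariants through the successive contractions of the flag — verifying the lifting identity and checking that a strict divergent piece always survives when $I\subsetneq E(G)$ — rather than any analytic input, the latter being entirely contained in the valuation formula \eqref{valuationformula} and Corollary \ref{corremainders}.
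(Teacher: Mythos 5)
Your overall route is the same as the paper's: apply the valuation formula $(\ref{valuationformula})$ with $\gamma=\gamma_F$, $\Gamma=G/\gamma_F$, bound $2h_{I_{\gamma_F}}+2h_{I_{G/\gamma_F}}$ by $|I|$ using log-divergence, and then decide when $\varepsilon_{\gamma_F\otimes G/\gamma_F}(I)$ vanishes via the single-scale hypothesis $(\ref{epsilonHopf})$; your treatment of the $\varepsilon$-bookkeeping and of the case $I_{G/\gamma_F}=G/\gamma_F$ matches the paper's proof and is if anything more explicit. There is, however, one genuinely false intermediate step: the claimed additivity $h_I=h_{I_{\gamma_F}}+h_{I_{G/\gamma_F}}$. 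Loop numbers of \emph{edge subsets} are not additive under contraction along a flag: one only has $h_I\le h_{I_{\gamma_F}}+h_{I_{G/\gamma_F}}$, because passing to a quotient identifies vertices and can create loops in the image of $I$ that were not present in $G$. A counterexample sits inside the paper's own Example \ref{example1}: for the flag $\gamma_{34}\subset\gamma_l\subset\Gamma$ and $I=\{1,2\}$, the edges $1,2$ form a loopless path in $\Gamma$ (so $h_I=0$), but their image in $\gamma_l/\gamma_{34}$ is a $2$-edge banana (so $h_{I_{\gamma_F}}=1$). Since the inequality goes the wrong way, your first-paragraph deduction that the pole order equals $2h_I-\varepsilon\le|I|$ is not valid: the pole order is $2h_{I_{\gamma_F}}+2h_{I_{G/\gamma_F}}-\varepsilon$, which can strictly exceed $2h_I-\varepsilon$ (in the example above it equals $2=|I|$ while $2h_I=0$).

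The error is self-correcting, because the piecewise bounds you establish in your second paragraph, namely $2h_{I_{\gamma_i/\gamma_{i-1}}}\le|I_{\gamma_i/\gamma_{i-1}}|$ obtained by lifting to $I_{\gamma_i/\gamma_{i-1}}\cup E(\gamma_{i-1})$ inside $\gamma_i$ and using that each flag member is exactly log-divergent, are precisely what is needed: summing them over the pieces gives $2h_{I_{\gamma_F}}+2h_{I_{G/\gamma_F}}\le|I_{\gamma_F}|+|I_{G/\gamma_F}|=|I|$ directly, which is the bound the paper uses (implicitly), and equality of the sum then forces each piece to be empty or log-divergent, i.e.\ $I_{\gamma_F}$ and $I_{G/\gamma_F}$ divergent in the sense defined after $(\ref{Igammanotation})$. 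So you should delete the appeal to $h_I$ and to the global hypothesis $2h_I\le|I|$ and run the entire argument on the quantity $2h_{I_{\gamma_F}}+2h_{I_{G/\gamma_F}}$; with that replacement your proof coincides with the paper's, modulo your extra (correct and welcome) detail on lifting the $\varepsilon$-invariants through the successive contractions.
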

\begin{proof} From  equation $(\ref{valuationformula})$, we have
\begin{eqnarray}
-v_I(\omega_{\gamma_F\otimes G/\gamma_F}(s))  & = & 2h_{I_{\gamma_F}} + 2h_{I_{G/\gamma_F}} - \varepsilon_{\gamma_F \otimes G/\gamma_F}(I) \ .\nonumber 
\end{eqnarray}
It follows that   $-v_I(\omega_{\gamma_F\otimes G/\gamma_F}(s))  \leq |I| - \varepsilon_{\gamma_F \otimes G/\gamma_F}(I)\leq |I|$ with equality if and only if 
$I_{\gamma_F}, I_{G/\gamma_F}$ are divergent and $\varepsilon_{\gamma_F \otimes G/\gamma_F}(I)=0$. But if $I_{G/\gamma_F}$ is divergent and a strict subgraph of $G/\gamma_F$, $\varepsilon_{G/\gamma_F} (I_{G/\gamma_F} )=0$  by  assumption $(\ref{epsilonHopf})$. By   $(\ref{epsilontensdefn})$, this gives  $\varepsilon_{\gamma_F \otimes G/\gamma_F}(I)=0$. In the case when $I_{G/\gamma_F}=G/\gamma_F$, we have $\varepsilon_{G/\gamma_F} (I_{G/\gamma_F} )=1$, and since $I\subset E(G)$ is strict,
we must have $I_{\gamma_F} \subsetneq \gamma_F$, so $\varepsilon_{\gamma_F} (I_{\gamma_F} )=0$  by  assumption $(\ref{epsilonHopf})$. Thus the case
when $I_{G/\gamma_F}=G/\gamma_F$  gives rise to a pole of order $\leq |I| -1$.
  \end{proof}

\subsection{Blow-ups and differential forms}
Let $\emptyset\neq I\subset E(G)$ be a  strict   subset of edges of $G$, whose edges are labelled $1,\ldots, N$.  Consider the blow-up 
$$\pi_I: P_I \To \Pro^{N-1}$$
of $\Pro^{N-1}$ along $L_I$. We denote the exceptional divisor by $\mathcal{E}_I$, which is isomorphic to $\Pro^{|I|-1} \times \Pro^{|I^c|-1}$, where $I^c$ is the complement of $I$ in $E(G)$.

\begin{lem} \label{lemgeneralresidues} Let $\omega$ be a differential form of the shape $(\prod_{k=1}^m P_k^{r_k}) \Omega_G$ where $r_k \in \Z$, and each $P_k$ is a homogeneous polynomial in the $\alpha_i$ of the form
$$P_k = A_k(\{\alpha_i\}_{ i\in I}) B_k(\{\alpha_j\}_{j\in I^c}) + R_k$$
where $A_k,B_k,R_k$ are homogeneous and the degree of $R_k$ in the variables $\alpha_i, i\in I$ is strictly greater than $\deg A_k$.  Then the order of the pole of $\pi_I^* \omega$ along $\mathcal{E}_I$ is 
\begin{equation}\label{generalpoleorder} 1- \sum_{k=1}^m r_k \deg(A_k) - |I| = 1 - v_I(\omega) - |I|
\end{equation}
where $v_I(\omega)$ is the order of vanishing of $\omega$ along $L_I$. The pole is simple if and only if $-v_I(\omega)= |I|$, in which case  the residue of $\pi_I^* \omega$ along $\mathcal{E}_I$ is given by
\begin{equation} \label{Residueformula}
\mathrm{Res}_{\mathcal{E}_I} \pi_I^* \omega = \prod_{k=1}^m A_k(\alpha_i, i\in I) \,\Omega_I \otimes \prod_{k=1}^m B_k(\alpha_j, j\in I^c) \, \Omega_{I^c} \end{equation}
\end{lem}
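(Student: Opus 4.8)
The plan is to reduce everything to an explicit computation in a single affine chart of the blow-up $\pi_I\colon P_I\to\Pro^{N-1}$, since the order of the pole of $\pi_I^*\omega$ along the irreducible divisor $\mathcal{E}_I$ and its residue are both local data along $\mathcal{E}_I$; the answer will then be manifestly chart-independent because it is expressed through the intrinsic forms $\Omega_I,\Omega_{I^c}$. After relabelling so that $I=\{1,\dots,|I|\}$ and fixing $i_0\in I$ and $j_0\in I^c$ (possible as $I$ is strict), I would work in the chart of $P_I$ with coordinates $t,\ (s_i)_{i\in I,\,i\neq i_0},\ (\alpha_j)_{j\in I^c,\,j\neq j_0}$, after dehomogenising by $\alpha_{j_0}=1$, in which $\pi_I$ is given by $\alpha_i=t\,s_i$ for $i\in I$ (with $s_{i_0}=1$) and $\alpha_j$ unchanged for $j\in I^c$, so that $\mathcal{E}_I=\{t=0\}\cong\Pro^{|I|-1}\times\Pro^{|I^c|-1}$ and $(s_i),(\alpha_j)$ are affine coordinates on the two factors.

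First I would pull back the volume form. A direct wedge computation, using $d\alpha_{i_0}=dt$ and $d\alpha_i=s_i\,dt+t\,ds_i$, shows that the blow-up Jacobian produces exactly one factor of $t$ per excess direction:
\[
\pi_I^*\Omega_G=\pm\, t^{\,|I|-1}\, dt\wedge\!\!\bigwedge_{i\in I,\,i\neq i_0}\!\! ds_i\wedge\!\!\bigwedge_{j\in I^c,\,j\neq j_0}\!\! d\alpha_j .
\]
Next, the hypothesis is used on each factor $P_k$. Writing $a_k=\deg A_k$, homogeneity of $A_k$ gives $A_k(t s_i)=t^{a_k}A_k(s_i)$, while the assumption $\deg_I R_k>a_k$ guarantees that $\pi_I^*R_k$ is divisible by $t^{a_k+1}$; hence $\pi_I^*P_k=t^{a_k}\widetilde P_k$ with $\widetilde P_k|_{t=0}=A_k(s_i)\,B_k(\alpha_j)$. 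In particular $v_I(P_k)=a_k$ and $v_I(\Omega_G)=0$ (the volume form restricts to a constant-coefficient form on the chart), so $v_I(\omega)=\sum_k r_k a_k$.

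Collecting powers of $t$, the pullback $\pi_I^*\omega=\big(\prod_k\widetilde P_k^{\,r_k}\big)\,t^{\sum_k r_k a_k+|I|-1}\,dt\wedge(\cdots)$ has, along $\mathcal{E}_I$, order of vanishing $v_I(\omega)+|I|-1$; since each $\widetilde P_k$ is a unit near a generic point of $\mathcal{E}_I$ (its restriction $A_kB_k$ is not identically zero), negative exponents $r_k$ introduce no further zero or pole there. This yields the pole order $1-v_I(\omega)-|I|=1-\sum_k r_k\deg A_k-|I|$, as claimed, and the pole is simple precisely when this equals $1$, i.e. $-v_I(\omega)=|I|$. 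In that case the coefficient of $dt/t$ restricted to $t=0$ is $\prod_k\big(A_k(s_i)B_k(\alpha_j)\big)^{r_k}$ times $\bigwedge_{i\neq i_0}ds_i\wedge\bigwedge_{j\in I^c,\,j\neq j_0}d\alpha_j$; identifying $\bigwedge_{i\neq i_0}ds_i$ and $\bigwedge_{j\neq j_0}d\alpha_j$ with the affine representatives of the projective volume forms $\Omega_I$ and $\Omega_{I^c}$ separates the $I$- and $I^c$-variables and gives the asserted product $\Res_{\mathcal{E}_I}\pi_I^*\omega=\prod_k A_k\,\Omega_I\otimes\prod_k B_k\,\Omega_{I^c}$ (with the exponents $r_k$ carried along).

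I expect the main obstacle to be bookkeeping rather than conceptual: tracking the overall sign through the dehomogenisations and the Jacobian wedge, and checking that the affine coordinate forms $\bigwedge ds_i,\ \bigwedge d\alpha_j$ really are the restrictions of the intrinsic $\Omega_I,\Omega_{I^c}$, so that the residue is independent of the choices of $i_0,j_0$ and glues to a global form on $\mathcal{E}_I$. The one genuinely load-bearing input is the remainder hypothesis $\deg_I R_k>\deg A_k$: it is exactly what forces $\pi_I^*P_k$ to factor as $t^{a_k}$ times a function nonvanishing on $\mathcal{E}_I$, so that $v_I(P_k)=\deg A_k$ and the leading term separates into the product $A_kB_k$ needed for the residue.
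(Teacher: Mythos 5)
Your proposal is correct and follows essentially the same route as the paper: dehomogenise at a coordinate of $I^c$, take chart coordinates $t=\alpha_{i_0}$, $s_i=\alpha_i/\alpha_{i_0}$ on the blow-up so that $\mathcal{E}_I=\{t=0\}$, extract the factor $t^{|I|-1}$ from the volume form and $t^{\deg A_k}$ from each $P_k$ via homogeneity and the remainder hypothesis, then read off the pole order and the residue at $t=0$. Your parenthetical that the exponents $r_k$ must be carried along in the residue formula is also right (the displayed formula in the statement suppresses them).
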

\begin{proof} Let  $I=\{i_1,\ldots, i_a\}$, and $I^c = \{ j_1,\ldots, j_b\}$.  On  the affine chart  of $P$ defined by $\alpha_{j_b}=1$,  take  local affine coordinates  $\alpha_{j_1},\ldots, \alpha_{j_{b-1}}$,  $z=\alpha_{i_1}$, and
$$\beta_{i_2} = {\alpha_{i_2} \over \alpha_{i_1}}\ ,\ \ldots\ ,\  \beta_{i_a} = {\alpha_{i_a} \over \alpha_{i_1}}\ .$$
 In these coordinates,  $\mathcal{E}_I$ is given by $z=0$.  On this chart, 
$$\Omega_N =  \pm z^{a-1} dz \wedge  d\beta_{i_2}\wedge\ldots \wedge d\beta_{i_a}\wedge d \alpha_{j_1}  \wedge \ldots \wedge d \alpha_{j_{b-1}} \ .$$
Equation $(\ref{generalpoleorder})$ follows by substituting the new variables into $\omega$ and computing 
the order of the pole in $z$. Equation $(\ref{Residueformula})$ follows on taking the residue at $z=0$.
\end{proof}

\subsection{Calculation of the residues} Let $G, \pi_I$ be as above.

\begin{prop} \label{propResIformula} Let $F$  be a flag of  divergent subgraphs  in  $G$, and let $I$ be a strict subset of edges in $G$. The  form $\pi_I^* \omega_{\gamma_F\otimes G/\gamma_F}(s)$ has a simple pole along the exceptional divisor $\mathcal{E}_I$ if and only if $I_{\gamma_F}, I_{G/\gamma_F}$ are divergent, and $I_{G/\gamma_F} \subsetneq G/\gamma_F$. The residue is  
\begin{equation}\label{ResIformula}
\mathrm{Res }_{\mathcal{E}_I} \, \pi^*_I  \omega_{\gamma_F\otimes G/\gamma_F}(s) =
                                \omega_{I_{\gamma_F} \cup I_{G/\gamma_F}} \otimes \omega_{\gamma_F/I\otimes G/(\gamma_F\cup I)}  (s) \ .
                         \end{equation}
\end{prop}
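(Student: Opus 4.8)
The plan is to deduce everything from the two structural results already established: the pole-order count in Corollary~\ref{corpoleorderalongLI} and the blow-up residue machinery of Lemma~\ref{lemgeneralresidues}. First I would dispose of the ``if and only if'' clause. By Lemma~\ref{lemgeneralresidues} the form $\pi_I^*\omega_{\gamma_F\otimes G/\gamma_F}(s)$ has a \emph{simple} pole along $\mathcal{E}_I$ precisely when $-v_I(\omega_{\gamma_F\otimes G/\gamma_F}(s)) = |I|$, since the pole order along $\mathcal{E}_I$ equals $1 - v_I(\omega) - |I|$ by $(\ref{generalpoleorder})$. But Corollary~\ref{corpoleorderalongLI} states exactly that the order of the pole of $\omega_{\gamma_F\otimes G/\gamma_F}(s)$ along $L_I$ attains its maximal value $|I|$ if and only if $I_{\gamma_F}$ and $I_{G/\gamma_F}$ are divergent and $I_{G/\gamma_F}\subsetneq G/\gamma_F$ is a strict subgraph. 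Combining the two gives the claimed characterisation with no further work.

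For the residue itself I would put $\omega = \omega_{\gamma_F\otimes G/\gamma_F}(s)$ into the monomial shape demanded by Lemma~\ref{lemgeneralresidues}, reading off from $(\ref{omega2def})$ the four polynomial factors $\phi_{G/\gamma_F}$ (exponent $+1$), $\psi_{\gamma_F}$ (exponent $-1$), $\psi_{G/\gamma_F}$ (exponent $-2$) and $\Upsilon_{\gamma_F;G/\gamma_F}(s)$ (exponent $-1$), multiplied by $s\,\Omega_G$. To each I apply the matching remainder formula from Corollary~\ref{corremainders}: $(\ref{PsiRemainder})$ gives $\psi_{\gamma_F}=\psi_{I_{\gamma_F}}\psi_{\gamma_F/I_{\gamma_F}}+R$ and $\psi_{G/\gamma_F}=\psi_{I_{G/\gamma_F}}\psi_{(G/\gamma_F)/I_{G/\gamma_F}}+R$; $(\ref{CircRemainder})$ gives $\phi_{G/\gamma_F}=\psi_{I_{G/\gamma_F}}\phi_{(G/\gamma_F)/I_{G/\gamma_F}}+R$; and $(\ref{UpsReminder})$ gives $\Upsilon_{\gamma_F;G/\gamma_F}(s)=\psi_{I_{\gamma_F}}\psi_{I_{G/\gamma_F}}\,\Upsilon_{\gamma_F/I_{\gamma_F};(G/\gamma_F)/I_{G/\gamma_F}}(s)+R$. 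In every case the distinguished factor $A_k$ lies in the $\alpha_i$, $i\in I$, the factor $B_k$ in the complementary variables, and the remainder has strictly higher $I$-degree, exactly as the hypothesis of Lemma~\ref{lemgeneralresidues} requires.

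I would then feed these $A_k,B_k$ into the residue formula $(\ref{Residueformula})$. Collecting the $A$-factors with their exponents gives $\psi_{I_{G/\gamma_F}}^{\,1-2-1}\,\psi_{I_{\gamma_F}}^{\,-1-1}=\psi_I^{-2}$ (using $I=I_{\gamma_F}\cup I_{G/\gamma_F}$ on disjoint edge sets), so the first tensor factor is $\psi_I^{-2}\,\Omega_I=\omega_{I_{\gamma_F}\cup I_{G/\gamma_F}}$ by $(\ref{omega1def})$. Collecting the $B$-factors together with the surviving numerator $s$ yields $s\,\phi_{(G/\gamma_F)/I_{G/\gamma_F}}$ divided by $\psi_{\gamma_F/I_{\gamma_F}}\,\psi_{(G/\gamma_F)/I_{G/\gamma_F}}^{2}\,\Upsilon_{\gamma_F/I_{\gamma_F};(G/\gamma_F)/I_{G/\gamma_F}}(s)$, times $\Omega_{I^c}$. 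Under the identifications $\gamma_F/I_{\gamma_F}=\gamma_F/I$ and $(G/\gamma_F)/I_{G/\gamma_F}=G/(\gamma_F\cup I)$, this is precisely $(\ref{omega2def})$ for the pair $\gamma_F/I$, $G/(\gamma_F\cup I)$, i.e.\ $\omega_{\gamma_F/I\otimes G/(\gamma_F\cup I)}(s)$. This reproduces $(\ref{ResIformula})$.

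The main obstacle is verifying that the clean remainder formulae $(\ref{CircRemainder})$ and $(\ref{UpsReminder})$ genuinely apply, i.e.\ that the leading coefficient in the $\alpha_I$ really is $\psi_{I_{\bullet}}$ rather than something of lower degree. These formulae hold only under the $\varepsilon$-vanishing hypotheses of Corollary~\ref{corremainders}, and I would need to check that in the simple-pole regime singled out above those hypotheses are met: the divergence of $I_{\gamma_F}$ and of $I_{G/\gamma_F}$ together with the Hopf-algebra condition $(\ref{epsilonHopf})$ forces $\varepsilon_{\gamma_F}(I_{\gamma_F})=\varepsilon_{G/\gamma_F}(I_{G/\gamma_F})=0$ whenever the relevant piece is a strict subgraph, while the strictness $I_{G/\gamma_F}\subsetneq G/\gamma_F$ is exactly what legitimises the circular-join factorisation $(\ref{CircRemainder})$ of $\phi_{G/\gamma_F}$. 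Tracking these $\varepsilon$'s carefully through the several connected components of $\gamma_F$ is the only genuinely delicate bookkeeping in the argument.
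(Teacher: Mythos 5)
Your proposal is correct and follows essentially the same route as the paper: the equivalence is obtained by combining the pole-order count of Corollary \ref{corpoleorderalongLI} with the criterion $-v_I(\omega)=|I|$ from Lemma \ref{lemgeneralresidues}, and the residue is computed by feeding the leading-order factorizations $(\ref{PsiRemainder})$, $(\ref{CircRemainder})$, $(\ref{UpsReminder})$ into the residue formula $(\ref{Residueformula})$ and collecting the $I$- and $I^c$-factors, exactly as in the paper's proof. Your closing remark about tracking the $\varepsilon$-vanishing hypotheses is a useful explicit check that the paper leaves implicit (it is absorbed into the proof of Corollary \ref{corpoleorderalongLI} via assumption $(\ref{epsilonHopf})$).
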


\begin{proof}   Apply lemma $\ref{lemgeneralresidues}$ to the definition $(\ref{omega2def})$ of $\omega_{\gamma_F\otimes G/\gamma_F}(s)$. By    $(\ref{CircRemainder})$ 
and $(\ref{UpsReminder})$, to lowest order terms in the $I$ parameters: 
\begin{eqnarray}\phi_{G/\gamma_F} & =  & \psi_{I_{G/\gamma_F}} \phi_{G/(\gamma_F\cup I)} \ , \nonumber \\
\Upsilon_{\gamma_F ; G/\gamma_F} (s) & =& \psi_{I_{\gamma_F}} \psi_{I_{G/\gamma_F}} \Upsilon_{\gamma_F/I ; G/(\gamma_F \cup I)} (s) \ .\nonumber 
\end{eqnarray} 
By lemma $\ref{lemgeneralresidues}$ and separating out terms we deduce that $ \mathrm{Res }_{\mathcal{E}_I}  \pi_I^* \omega_{\gamma\otimes G/\gamma_F}(s) $ is 
\begin{eqnarray} && {\psi_{I_{G/\gamma_F}}   \over \psi^2_{I_{G/\gamma_F}}   \psi^2_{I_{\gamma_F}} \psi_{I_{G/{\gamma_F}}}}  \,  \Omega_I\otimes    {   \phi_{G/(\gamma_F\cup I)}  \over \psi_{\gamma_F/I}   \psi^2_{G/(\gamma_F\cup I)} \Upsilon_{\gamma_F/I; G/(\gamma_F\cup I)}(s)  } \, \Omega_{G/I}  \nonumber 
  \end{eqnarray} 
  which is exactly  $ \omega_{I_{\gamma_F} \cup I_{G/\gamma_F} } \otimes \omega_{\gamma_F/I \otimes G/(\gamma_F\cup I)}(s)$.
\end{proof}
For any meromorphic algebraic form $\omega$  on $\Pro^{N-1}$, define the total residue to be 
\begin{equation} \label{deftotalresforomega}
\Res \, \omega = \bigoplus_I \Res_{{\mathcal{E}_I}  } \, \pi^*_I \omega\ ,
\end{equation} 
where the sum is over all strict subsets  $I \subsetneq   \{1,\ldots, N\} $.
\begin{prop}  \label{propresomega2} Let $G$ be a connected log-divergent graph with at most log-divergent subgraphs. For any flag $F$ of divergent subgraphs of $G$, 
 the total residue is
$$ \Res \, \omega_{\gamma_F \otimes G/\gamma_F } (s)= (\omega^1 \otimes \omega^{23} ) \circ \mu_{13} (\Delta \otimes \Delta) (\gamma_F\otimes G/\gamma_F) \ ,  $$
where $(\omega^1\otimes \omega^{23}) (x\otimes y \otimes z) = \omega_x \otimes \omega_{y\otimes z}(s)\ . $
\end{prop}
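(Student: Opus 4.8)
The plan is to evaluate both sides as explicit sums indexed by subsets of edges and match them term by term. For the left-hand side, recall from the definition \eqref{deftotalresforomega} that $\Res\,\omega_{\gamma_F\otimes G/\gamma_F}(s)$ is the sum over all strict subsets $I\subsetneq\{1,\dots,N\}$ of the residues $\Res_{\mathcal{E}_I}\pi_I^*\omega_{\gamma_F\otimes G/\gamma_F}(s)$. By \eqref{generalpoleorder} together with Corollary \ref{corpoleorderalongLI}, the order of the pole of $\pi_I^*\omega_{\gamma_F\otimes G/\gamma_F}(s)$ along $\mathcal{E}_I$ equals $1-v_I(\omega)-|I|\leq 1$, so the only contributing $I$ are those giving a genuine simple pole; by Proposition \ref{propResIformula} these are exactly the $I$ for which $I_{\gamma_F}$ and $I_{G/\gamma_F}$ are divergent and $I_{G/\gamma_F}\subsetneq G/\gamma_F$. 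For each such $I$ the residue is supplied by \eqref{ResIformula}, whence
\[\Res\,\omega_{\gamma_F\otimes G/\gamma_F}(s)=\sum_I \omega_{I_{\gamma_F}\cup I_{G/\gamma_F}}\otimes\omega_{\gamma_F/I\otimes G/(\gamma_F\cup I)}(s),\]
the sum ranging over these admissible $I$.

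First I would expand the right-hand side. Writing $\Delta\gamma_F=\sum\gamma_F^{(1)}\otimes\gamma_F^{(2)}$ and $\Delta(G/\gamma_F)=\sum(G/\gamma_F)^{(1)}\otimes(G/\gamma_F)^{(2)}$ for the coproduct \eqref{singlescalecoproduct}, applying $\mu_{13}$ (which multiplies the first and third tensor factors) and then $\omega^1\otimes\omega^{23}$ yields
\[(\omega^1\otimes\omega^{23})\circ\mu_{13}(\Delta\otimes\Delta)(\gamma_F\otimes G/\gamma_F)=\sum \omega_{\gamma_F^{(1)}\cdot(G/\gamma_F)^{(1)}}\otimes\omega_{\gamma_F^{(2)}\otimes(G/\gamma_F)^{(2)}}(s).\]

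The crux is then to exhibit a bijection between the admissible subsets $I$ and the terms of this double coproduct. Since $\gamma_F$ and $G/\gamma_F$ carry disjoint edge labels (Variant \ref{variant}), every $I\subset E(G)$ splits uniquely as a disjoint union $I=I_{\gamma_F}\sqcup I_{G/\gamma_F}$, and the requirement that $I_{\gamma_F}$ (resp.\ $I_{G/\gamma_F}$) be divergent is precisely the requirement that it arise as the left factor $\gamma_F^{(1)}$ (resp.\ $(G/\gamma_F)^{(1)}$) of a term of $\Delta\gamma_F$ (resp.\ $\Delta(G/\gamma_F)$), with complementary factors $\gamma_F^{(2)}=\gamma_F/I_{\gamma_F}=\gamma_F/I$ and $(G/\gamma_F)^{(2)}=(G/\gamma_F)/I_{G/\gamma_F}=G/(\gamma_F\cup I)$. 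Under this identification the disjointness of labels turns the product into a disjoint union, $\gamma_F^{(1)}\cdot(G/\gamma_F)^{(1)}=I_{\gamma_F}\cup I_{G/\gamma_F}$, so the two summands coincide term by term. It remains to reconcile the ranges of summation: the coproduct runs over all divergent subgraphs, including the boundary term $(G/\gamma_F)^{(1)}=G/\gamma_F$, whereas the residue sum imposes the strict inclusion $I_{G/\gamma_F}\subsetneq G/\gamma_F$. But the excluded term has $(G/\gamma_F)^{(2)}=1$, and by the convention \eqref{omega1tens} we have $\omega_{\gamma_F^{(2)}\otimes 1}(s)=0$; hence these extra terms vanish and the two sums agree.

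I expect the main obstacle to be bookkeeping rather than conceptual. The steps that require care are the uniqueness of the splitting $I=I_{\gamma_F}\sqcup I_{G/\gamma_F}$ and the graph-theoretic identities $\gamma_F/I_{\gamma_F}=\gamma_F/I$ and $(G/\gamma_F)/I_{G/\gamma_F}=G/(\gamma_F\cup I)$ at the level of labelled graphs, together with a careful check that every empty and boundary case—$I_{\gamma_F}=\emptyset$ or $\gamma_F$, and $I_{G/\gamma_F}=\emptyset$—is correctly matched on both sides, the only genuinely vanishing contributions being those annihilated by the convention \eqref{omega1tens}. Once these verifications are in place, the term-by-term comparison completes the proof.
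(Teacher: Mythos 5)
Your proposal is correct and follows essentially the same route as the paper's own proof: expand the total residue via Proposition \ref{propResIformula} as a sum over subsets $I$ with $I_{\gamma_F}$, $I_{G/\gamma_F}$ divergent, dispose of the boundary term $I_{G/\gamma_F}=G/\gamma_F$ using the convention $\omega_{\cdot\otimes 1}(s)=0$, and identify the resulting double sum with $(\omega^1\otimes\omega^{23})\circ\mu_{13}(\Delta\otimes\Delta)$. The only difference is that you spell out the bijection between admissible $I$ and coproduct terms more explicitly than the paper does.
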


\begin{proof}  By the definition of the total residue, $\Res \, \omega_{\gamma_F\otimes G/\gamma_F} (s) $ is equal to
$$ \sum_I \Res_{\mathcal{E}_I} \, \pi^*_I  \omega_{\gamma_F\otimes G/\gamma_F}(s) =   \sum_{I_{\gamma_F}, I_{G/\gamma_F} \diverg}  \omega_{I_{\gamma_F} \cup I_{G/\gamma_F} } \otimes \omega_{\gamma_F/I_{\gamma_F} \otimes G/(\gamma_F\cup I) }(s) \ ,$$
by proposition \ref{propResIformula}, 
since the  right-hand side vanishes when  $I_{G/\gamma_F}= G/\gamma_F$.
 This can be rewritten 
$$  \sum_{ a\subset \gamma_F \diverg} \sum_{b\subset G/\gamma_F \diverg}  ( \omega^1 \otimes  \omega^{23}) \circ  \mu_{13} ( a \otimes \gamma_F/a \otimes b \otimes G/\gamma_F/ b )$$
which is  $( \omega^1 \otimes  \omega^{23} ) \circ  \mu_{13} (\Delta(\gamma_F) \otimes \Delta(G/\gamma_F))$. \end{proof}

\section{Cancellation of  simple poles} \label{sectCancelpoles}
The proof of convergence of the renormalized Feynman integral follows from a few abstract Hopf-algebra theoretic properties of the residues.
\subsection{The set-up}  Let $H$ be a commutative, graded Hopf algebra over $\Q$, and let $\D$ be a vector space over $k$, a field of characteristic $0$. We think of $\D$ as some  space of differential forms.
Suppose we are given  a  linear map 
$\omega: H \rightarrow \D$
such that $\omega(1)=0$, 
and a map (the  `total  residue')
$$\Res: \D \rightarrow \D \otimes_k \D$$
which is related  to the coproduct $\Delta:H\rightarrow H\otimes_{\Q} H$ by:
\begin{equation} \label{Resomegaisdelta}
\Res \, \omega = (\omega \otimes \omega) \circ \Delta \ .
\end{equation}
In particular, this means that the map $\Res$ is coassociative.
 Now suppose that there exists another linear  map
$$\omegat: H\otimes_{\Q} H \To \D$$
such that, for all $\xi\in H$, 
\begin{eqnarray}\label{axioms}
\mathbf{(1)}: & & \omegat(\xi\otimes 1) = 0   \\
\mathbf{(2)}:& &  \omegat(1\otimes \xi) = \omega(\xi) \nonumber  \\
 \mathbf{(3)}: & &  \Res\, \omegat = (\omega \otimes \omegat) \circ \mu_{13} (\Delta \otimes \Delta) \ . \nonumber
\end{eqnarray}
\begin{defn}ÊIn this situation, we define the renormalized map 
\begin{eqnarray}
\omega^{\ren}:H & \To  & \D  \\
\omega^{\ren} & =  & \omegat \circ R  \nonumber 
\end{eqnarray}
\end{defn}
Since we have by definition
$R(x) = 1 \otimes x $ +  terms of lower order in the coradical filtration, 
then by assumption $\axii$, we can view
$\omega^{\ren} = \omega $ +  lower order terms. The following proposition is the main result for the renormalization of logarithmic singularities.
\begin{prop}  \label{proprenorm} With the assumptions $(\ref{axioms})$, 
$\Res \, \omega^{\ren} =0\ .$
\end{prop}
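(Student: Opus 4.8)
The plan is to avoid induction entirely and reduce the statement to the defining antipode identity together with the single normalization $\omega(1)=0$. Recall that the preparation map of Definition~\ref{defnR} (whose explicit expansion is (\ref{RGammaexplicit})) can be written in closed Hopf-algebraic form as
$$R = (S\otimes\mathrm{id})\,(\mathrm{id}\otimes P)\,\Delta\ ,$$
where $P$ is the projection onto the augmentation ideal; equivalently, in Sweedler notation $R(x)=\sum S(x_{(1)})\otimes x_{(2)} - S(x)\otimes 1$. One checks on $(\ref{RGammaexplicit})$ (and on the overlapping example of \S\ref{sectexoverlap}) that this reproduces the signed forest sums $\sum_f^\emptyset (-1)^{|f|}\gamma_f\otimes \Gamma/\gamma_f$. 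Since $\omega^{\ren}=\omegat\circ R$, applying axiom $\axiii$ immediately gives
$$\Res\,\omega^{\ren}=\Res\,\omegat\circ R=(\omega\otimes\omegat)\,\mu_{13}(\Delta\otimes\Delta)\,R\ ,$$
so the whole problem reduces to evaluating $(\omega\otimes\omegat)\,\mu_{13}(\Delta\otimes\Delta)$ on the two summands of $R(x)$.

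First I would dispose of the term $-S(x)\otimes 1$. Here $(\Delta\otimes\Delta)(S(x)\otimes 1)=\sum S(x)_{(1)}\otimes S(x)_{(2)}\otimes 1\otimes 1$, which after $\mu_{13}$ becomes $\sum S(x)_{(1)}\otimes S(x)_{(2)}\otimes 1$; applying $(\omega\otimes\omegat)$ produces terms $\omega(S(x)_{(1)})\otimes\omegat(S(x)_{(2)}\otimes 1)$, each of which vanishes by axiom $\axi$. Thus this summand contributes nothing, and it remains to treat the main term $(S\otimes\mathrm{id})\Delta(x)=\sum S(x_{(1)})\otimes x_{(2)}$.

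For the main term the computation uses the anti-coalgebra identity $\Delta S=(S\otimes S)\tau\Delta$ (with $\tau$ the flip) together with coassociativity. Writing the threefold iterated coproduct $\Delta^{(3)}(x)=\sum x_{(1)}\otimes x_{(2)}\otimes x_{(3)}\otimes x_{(4)}$, one finds
$$(\Delta\otimes\Delta)(S\otimes\mathrm{id})\Delta(x)=\sum S(x_{(2)})\otimes S(x_{(1)})\otimes x_{(3)}\otimes x_{(4)}\ ,$$
so that after $\mu_{13}$ and $(\omega\otimes\omegat)$ the contribution equals $\sum \omega(S(x_{(2)})\,x_{(3)})\otimes\omegat(S(x_{(1)})\otimes x_{(4)})$. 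The key observation is that $x_{(2)}$ and $x_{(3)}$ are exactly the two halves of the splitting of the middle factor of $\Delta^{(2)}(x)=\sum u\otimes v\otimes w$; summing over that splitting first and using the antipode axiom $m(S\otimes\mathrm{id})\Delta=\eta\epsilon$ gives $\sum_{(v)}\omega(S(v_{(1)})v_{(2)})=\epsilon(v)\,\omega(1_H)$. Since $\omega(1)=0$ by hypothesis, the entire sum collapses and $\Res\,\omega^{\ren}=0$.

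I expect the main obstacle to be bookkeeping rather than anything conceptual: one must first secure the closed form of $R$ (or prove the required identity directly from the forest expansion $(\ref{RGammaexplicit})$), and then keep the four Sweedler indices correctly aligned so that precisely the inner pair $x_{(2)},x_{(3)}$ is the one contracted by $\mu_{13}$ and annihilated by the antipode axiom. The structural point worth highlighting is that only axioms $\axi$ and $\axiii$, the antipode relation, and the normalization $\omega(1)=0$ enter; axiom $\axii$ is not used here, being instead what later identifies $\omega^{\ren}$ with $\omega$ plus lower-order corrections in the coradical filtration.
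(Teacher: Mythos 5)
Your proof is correct, and it takes a genuinely different route from the paper's. The paper first uses $\axi$ to replace $\Delta\otimes\Delta$ by $\Delta\otimes(\Delta-\mathrm{id}\otimes 1)$ and then invokes Theorem \ref{thmRmainproperty}, the identity $\mu_{13}\circ(\Delta\otimes(\Delta-\mathrm{id}\otimes 1))\circ R=1\otimes R$, which is established there by a word-combinatorial argument with non-commutative power series in three letters; the conclusion then follows from $\omega(1)=0$. You instead bypass that theorem entirely by writing $R$ in closed antipode form, splitting off the $-S(x)\otimes 1$ piece (killed termwise by $\axi$), and collapsing the main piece via $\Delta S=(S\otimes S)\tau\Delta$, coassociativity, and the antipode identity $\mu(S\otimes\mathrm{id})\Delta=\eta\epsilon$ applied to the inner Sweedler pair, again finishing with $\omega(1)=0$. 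Your closed form is not an extra burden: it follows in one line from the paper's own recursion $(\ref{recursiveR})$ combined with $S=-\mu\circ R$, which gives $R(x)=1\otimes x+\sum S(x^{(1)})\otimes x^{(2)}$ over the reduced coproduct, and this equals $(S\otimes\mathrm{id})(\mathrm{id}\otimes P)\Delta(x)$. What each approach buys: yours is shorter and makes transparent that the pole cancellation is the antipode identity in disguise, using only textbook Hopf-algebra facts; the paper's Theorem \ref{thmRmainproperty} is a reusable structural statement about $R$ that is needed again in the renormalization-group section (e.g.\ in the proof of Proposition \ref{prop1}), so it cannot be dispensed with globally even though you avoid it here. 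Two cosmetic cautions: your $\Delta^{(3)}$ denotes the iterated \emph{full} coproduct, whereas the paper reserves $\Delta^{(n)}$ for the iterated \emph{reduced} coproduct, so you should choose a different symbol; and you are right that $\axii$ is not needed, since $\omega(1)=0$ is a standing hypothesis of the set-up rather than a consequence one must derive.
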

\begin{proof}
Using properties $\axi-\axiii$ above, and theorem \ref{thmRmainproperty}, we have
\begin{eqnarray}
\Res\, \omega^{\ren}  =  \Res\, \omegat \circ R \nonumber    & \overset{\axiii}{=} &    (\omega \otimes \omegat) \circ \mu_{13} (\Delta \otimes \Delta)\circ R \nonumber \\
&    \overset{\axi}{=} &  (\omega \otimes \omegat) \circ \mu_{13} (\Delta \otimes (\Delta-id\otimes 1))\circ R \nonumber \\
&    \overset{}{=} &  (\omega \otimes \omegat) \circ (1\otimes R) \ .\nonumber 
\end{eqnarray}
The  equality on the third line follows from equation $(\ref{eqnRmainproperty})$. The final  expression vanishes by the property that $\omega(1)=0$, and this  follows
from $\axi$ and $\axii$.
\end{proof}
\begin{rem} Observe that  $\axiii$ implies $(\ref{Resomegaisdelta})$. For $x\in H$, we have 
$$\Res \, \omega(x) \overset{\axii}{=} \Res \, \omegat(1\otimes x) \overset{\axiii}{=} (\omega \otimes \omegat)\circ \mu_{13} (\Delta \otimes \Delta) (1\otimes x)$$
If we write $\Delta \,x= \sum_{(x)} x^{(1)} \otimes x^{(2)}$ using Sweedler's  notation $(\ref{Sweedler})$,  this reduces to
$$\sum_{(x)} (\omega \otimes \omegat)\circ (x^{(1)}\otimes 1 \otimes x^{(2)})  \overset{\axii}{=} \sum_{(x)} \omega(x^{(1)}) \otimes \omega(x^{(2)}) =(\omega\otimes \omega) \circ \Delta(x) \ .$$
 \end{rem}

\section{Blow-ups and  mixed Hodge structures}
\subsection{Hypersurfaces} Fix  a  $1$-scale graph $G$ with $N$ edges.  For any subgraph or quotient graph $\gamma$ of $G$, the graph hypersurface 
 $X_{\gamma}= V(\psi_{\gamma})\subset \Pro^{N-1}$ is defined to be   the zero locus of the graph polynomial
$\psi_{\gamma}$ (using  convention $(\ref{productconvention})$). 
If $F$ is a flag of  divergent  subgraphs in   $G$, define a family of hypersurfaces
$$X^s_{\gamma_F\otimes G/\gamma_F} = V(\Upsilon_{\gamma_F\otimes G/\gamma_F}(s))\subset \Pro^{N-1}\ ,$$
over $\Pro^1$, with coordinate $s$.
If $G$ has at most logarithmically divergent subgraphs,  define  $X^{tot}_{G,s}\subset \Pro^{N-1}$ to be  the union of the graph  hypersurfaces $ X_{\gamma_F}$,  $X_{G/\gamma_F}$,  and   $X^s_{\gamma_F\otimes G/\gamma_F}$, as $F$ ranges over the set of  flags of divergent subgraphs of $G$. By definition $(\ref{omega2def})$, $\omega^{\ren}_{G}(s) \in \Omega^{N-1}(\Pro^{N-1} \backslash X^{tot}_{G,s})$. 
Recall from $(\ref{defnDeltaN})$  that $\DD_{G}$
is the standard coordinate simplex, and write $D_I= L_I\cap \DD_G$ for all $I\subset E(G)$. 
It is known by \cite{BEK}, Proposition 3.1 and Lemma 7.1, that
\begin{equation} \label{XmeetsDI}
X_G \cap D_{I} \neq \emptyset \quad   \Longleftrightarrow \quad   D_I \subset X_G  \quad \Longleftrightarrow  \quad h_I >0 \ .\end{equation}
There is an obvious generalization for the hypersurfaces $X^s_{\gamma_F\otimes G/\gamma_F}$.

\begin{lem} Let $G,F $ be as above, and let $s>0$. The following are equivalent:
\begin{enumerate}
\item $X^s_{\gamma_F\otimes G/\gamma_F} \cap D_{I} \neq \emptyset $.
\item $D_I \subset X^s_{\gamma_F\otimes G/\gamma_F}  .$
\item $ h_{I_{\gamma_F}} +h_{I_{G/\gamma_F}} >0 \hbox{ or } \varepsilon_{\gamma_F}(I) = \varepsilon_{G/\gamma_F}(I) =1$.
\item The subgraph of $c(\gamma_F,G/\gamma_F)$ defined by the edges $I$ contains a loop.
\end{enumerate}
In particular,  $X_{\gamma_F}\cap D_I \subset X^s_{\gamma_F\otimes G/\gamma_F}\cap D_I$  and  $X_{G/\gamma_F} \cap D_I \subset X^s_{\gamma_F\otimes G/\gamma_F} \cap D_I$.
\end{lem}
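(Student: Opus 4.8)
The plan is to observe that conditions (2), (3) and (4) are three different readings of the single inequality $v_I(\Upsilon_{\gamma_F; G/\gamma_F}(s)) \geq 1$, and that (1) is tied to (2) by the positivity argument already invoked for graph hypersurfaces in $(\ref{XmeetsDI})$. First I would record the elementary but crucial point that, for $s>0$, the polynomial $\Upsilon_{\gamma_F; G/\gamma_F}(s) = s\, \psi_{\gamma_F}\phi_{G/\gamma_F} + \phi_{\gamma_F}\psi_{G/\gamma_F}$ has the same monomial support as $\Upsilon_{\gamma_F; G/\gamma_F}(1)$, since multiplying the first summand by a positive scalar neither creates nor cancels any monomial (all coefficients of $\psi$ and $\phi$ being non-negative). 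In particular the order of vanishing $v_I(\Upsilon_{\gamma_F; G/\gamma_F}(s))$ along $L_I$ is independent of $s>0$, and $D_I \subset X^s_{\gamma_F\otimes G/\gamma_F}$ holds precisely when $\Upsilon_{\gamma_F; G/\gamma_F}(s)\vert_{L_I}$ vanishes identically, i.e. when $v_I(\Upsilon_{\gamma_F; G/\gamma_F}(s)) \geq 1$.

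Given this, (2)$\Leftrightarrow$(3) is immediate from the valuation formula $(\ref{114star})$ of Corollary $\ref{corremainders}$, which computes $v_I(\Upsilon_{\gamma_F; G/\gamma_F}(s)) = h_{I_{\gamma_F}} + h_{I_{G/\gamma_F}} + \varepsilon_{\gamma_F}(I)\,\varepsilon_{G/\gamma_F}(I)$; this is $\geq 1$ exactly when $h_{I_{\gamma_F}} + h_{I_{G/\gamma_F}} >0$, or else both loop numbers vanish and the product $\varepsilon_{\gamma_F}(I)\varepsilon_{G/\gamma_F}(I)$ equals $1$, i.e. $\varepsilon_{\gamma_F}(I)=\varepsilon_{G/\gamma_F}(I)=1$, which is condition (3). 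For (2)$\Leftrightarrow$(4) I would use Lemma $\ref{lemcircjoin}$ together with the product convention $(\ref{productconvention})$, which identify $\Upsilon_{\gamma_F; G/\gamma_F}(1) = \psi_{\gamma_F}\phi_{G/\gamma_F} + \phi_{\gamma_F}\psi_{G/\gamma_F} = \phi_{\gamma_F\cup G/\gamma_F}= \psi_{c(\gamma_F, G/\gamma_F)}$. Since $v_I$ is $s$-independent for $s>0$ by the previous paragraph, Lemma $\ref{lemdivparts}$ applied to the circular join gives $v_I(\Upsilon_{\gamma_F; G/\gamma_F}(s)) = v_I(\psi_{c(\gamma_F,G/\gamma_F)}) = h_I$, the first Betti number of the subgraph of $c(\gamma_F, G/\gamma_F)$ spanned by the edges $I$. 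Thus $v_I \geq 1$ if and only if those edges contain a loop, which is condition (4).

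It remains to connect (1) with (2), and this is the only genuinely analytic step. The implication (2)$\Rightarrow$(1) is trivial once one notes $D_I \neq \emptyset$ (as $I\subsetneq E(G)$). For (1)$\Rightarrow$(2) I would invoke the stated generalization of $(\ref{XmeetsDI})$ to the hypersurfaces $X^s_{\gamma_F\otimes G/\gamma_F}$: because $\Upsilon_{\gamma_F;G/\gamma_F}(s)$ has non-negative coefficients for $s>0$, a real projective zero lying on the closed face $D_I$ forces the restricted polynomial to vanish identically on $L_I$, using that in $\Pro^{N-1}$ the coordinates cannot all vanish simultaneously; this is exactly the content of \cite{BEK}, Proposition 3.1 and Lemma 7.1 transported to $\Upsilon$. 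The main obstacle, and the one place meriting care, is precisely this positivity/projectivity argument: unlike the algebraic equivalences (2)--(4), it genuinely uses $s>0$ (so that no monomials are lost to cancellation) and the compactness of $\Pro^{N-1}$. Finally, the two displayed inclusions follow formally: if $X_{\gamma_F}\cap D_I \neq \emptyset$ then $D_I \subset X_{\gamma_F}$ by $(\ref{XmeetsDI})$ applied to $\psi_{\gamma_F}$, whence $h_{I_{\gamma_F}} \geq 1$; this triggers condition (3), so $D_I \subset X^s_{\gamma_F\otimes G/\gamma_F}$ and therefore $X_{\gamma_F}\cap D_I \subset X^s_{\gamma_F\otimes G/\gamma_F}\cap D_I$, and symmetrically for $G/\gamma_F$.
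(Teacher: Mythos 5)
Your proposal is correct and follows essentially the same route as the paper: positivity of the coefficients of $\Upsilon_{\gamma_F;G/\gamma_F}(s)$ for $s>0$ to pass between (1) and (2), the valuation formula $(\ref{114star})$ for (3), the identification $\Upsilon_{\gamma_F;G/\gamma_F}(1)=\psi_{c(\gamma_F,G/\gamma_F)}$ for (4), and $(\ref{XmeetsDI})$ for the final inclusions. The paper compresses all of this into a few sentences, but the ingredients and their order are the same as yours.
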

\begin{proof}  Since $\Upsilon_{\gamma_F;G/\gamma_F}(s)$ has positive coefficients, $X^s_{\gamma_F\otimes G/\gamma_F}$ meets $D_I$ if and only if
$L_I \subset X^s_{\gamma_F\otimes G/\gamma_F}$. By $(\ref{114star})$, this occurs if and only if $  h_{I_{\gamma_F}} +h_{I_{G/\gamma_F}} + \varepsilon_{\gamma_F}(I) \varepsilon_{G/\gamma_F}(I)$ is positive, which is in turn equivalent to $(4)$.  So $(1)$-$(4)$ are equivalent. The last part is obvious by $(3)$ and $(\ref{XmeetsDI})$.
\end{proof} 

Thus the intersections of $X_{G,s}^{tot}$ with $\DD_G$ are contained in the union of the intersections of each hypersurface $X^s_{\gamma\otimes G/\gamma}$ with $\DD_G$. Hereafter, let $s>0$.

\subsection{The blowup}  \label{sectblowup} Define the following  set of  subsets of edges of $G$:
$$G^{div} = \{ I \subsetneq E(G): I \hbox{ minimal  such that  for some flag } F  \hbox{ of divergent}$$
$$ \hbox{  subgraphs}, 
 I\subset c(\gamma_F, {G/\gamma_F}) \hbox{ contains a loop} \}$$
The set  $\{L_I: I \in G^{div}\}$ is therefore the set of maximal linear coordinate spaces whose real points are contained in $X^{tot}_{G,s}\cap \DD_G$. Define
$\mathcal{F}_G$ to be the set of all intersections $L_{I_1\cup \ldots \cup I_k}$, with $I_j\in G^{div}$, of such coordinate spaces. Following standard practice (\cite{BEK}, \S7), one blows up the elements in $\mathcal{F}_G$ in strictly increasing order of codimension.  One knows  that the space $P_G$ obtained in this way is well-defined.  If  $ \pi:P_G \rightarrow \Pro^{N-1}$ denotes the blow-up, 
let $Y^{tot}_G$ be the strict transform of $X^{tot}_G$ and let $B$ be the total inverse image of the linear spaces $L_I$ under $\pi$. Then $B$ is a union of the strict transforms of the  coordinate hyperplanes $\alpha_i=0$  with exceptional  components
$\mathcal{E}_I$ for each $I \in G^{div},$
where $\mathcal{E}_I$ is the strict transform of $L_I$. 

\begin{prop} \label{propnc} The divisor $B$ is normal crossing, and  the strict transform of $\DD_G$ in $P_G$ does not meet $Y^{tot}_G$. No non-empty intersection of the components of $B$ is contained in $Y^{tot}_G$.
 \end{prop}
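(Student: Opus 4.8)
The plan is to reduce all three assertions to combinatorial facts already established: that every centre of the blow-up is a coordinate linear subspace, that for $s>0$ the polynomials cutting out $X^{tot}_{G,s}$ have non-negative coefficients, and that their leading terms along any $L_I$ factorise as products of graph polynomials of smaller single-scale graphs by Corollary~\ref{corremainders}. For the normal crossing property of $B$, I would note that each $L_I$ with $I\in G^{div}$, and hence every element of $\mathcal F_G$, is an intersection of coordinate hyperplanes $\{\alpha_i=0\}$, and that $\mathcal F_G$ is closed under intersection by construction. The successive blow-up of such coordinate subspaces is the wonderful-model construction for the Boolean arrangement, whose total transform of the coordinate arrangement is normal crossing; this is precisely the situation of \cite{BEK}, \S7, which I would cite rather than reprove. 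The components of $B$ are then the strict transforms $\widetilde{H}_i$ of the hyperplanes $\{\alpha_i=0\}$ together with the exceptional divisors $\mathcal E_I$, $I\in G^{div}$.

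For the separation of the strict transform of $\DD_G$ from $Y^{tot}_G$, I would argue that on the interior of $\DD_G$ every $\alpha_i>0$, so $\psi_{\gamma_F}$, $\psi_{G/\gamma_F}$ and $\Upsilon_{\gamma_F\otimes G/\gamma_F}(s)$ are strictly positive and $X^{tot}_{G,s}$ does not meet it. By the preceding lemma, $X^{tot}_{G,s}\cap\overline{\DD_G}$ is contained in $\bigcup_{I\in G^{div}} D_I$, that is, exactly in the union of the centres. Blowing these up removes the meeting loci, and it remains to check that the strict transforms separate on each $\mathcal E_I\cong\Pro^{|I|-1}\times\Pro^{|I^c|-1}$. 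By the remainder formulas $(\ref{CircRemainder})$ and $(\ref{UpsReminder})$, the restriction of each defining factor to $\mathcal E_I$ is, after dividing out the power of the exceptional coordinate, a product of $\psi_{I_\bullet}$ with a graph polynomial of the contracted graph, all with non-negative coefficients; hence it is again strictly positive on the strict transform of the positive simplex. An induction on the number of blow-ups then propagates the separation to all of $P_G$.

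For the statement that no non-empty intersection of components of $B$ lies in $Y^{tot}_G$, I would work chart by chart on the blow-up. A non-empty intersection of the $\widetilde{H}_i$ and $\mathcal E_I$ is a stratum lying over some coordinate flag, and by Corollary~\ref{corremainders} the leading term of each factor of $X^{tot}_{G,s}$ along that flag is a product of graph polynomials with non-negative coefficients and at least one non-zero coefficient. Such a polynomial cannot vanish at the generic (torus) point of the stratum, so $Y^{tot}_G$ meets the stratum in positive codimension and therefore cannot contain it.

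The main obstacle is the induction underlying the second and third steps: one must verify that after each blow-up the factored leading coefficient is once more a positive-coefficient graph polynomial of a smaller single-scale graph, so that the hypotheses of Corollary~\ref{corremainders} persist at every stage. This is exactly where the assumption that $G$ has at most log-divergent subgraphs enters, and where the cases $\varepsilon_{\gamma_F}(I)=\varepsilon_{G/\gamma_F}(I)=1$ must be tracked by hand, since they are the only way $\Upsilon_{\gamma_F\otimes G/\gamma_F}(s)$ can acquire a vanishing of an order not already visible in $\psi$ or $\phi$.
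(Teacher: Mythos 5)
Your argument is correct and is essentially the approach the paper intends: its proof of Proposition \ref{propnc} simply defers to \cite{BEK}, Proposition 7.3 (ii), (iii), and your sketch (positivity of the graph polynomials on the simplex, the remainder factorizations of Corollary \ref{corremainders} controlling the leading terms along each $\mathcal{E}_I$, and induction over the blow-up sequence) is precisely that argument written out. The point you flag as the main remaining obstacle --- that the factored leading coefficients stay positive-coefficient graph polynomials of smaller single-scale graphs at every stage of the blow-up --- is exactly the content imported from loc.\ cit., so nothing further is required.
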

\begin{proof} This can be proved in an identical manner to \cite{BEK}, Proposition 7.3 (ii), (iii) (only part (i) of loc. cit. requires the assumption that the graph be primitive). \end{proof}

\begin{example} Let $G$ be the graph $\gamma_l$ in \S\ref{sectexoverlap}. It has one divergent subgraph $\gamma=\{3,4\}$. 
 The circular join of $G/\gamma$ and $\gamma$ is the banana graph with four edges, with graph polynomial
$\alpha_1\alpha_2\alpha_3+\alpha_1\alpha_2\alpha_4 + \alpha_1\alpha_3\alpha_4+ \alpha_2\alpha_3\alpha_4\ ,$
 which meets $\DD_G$ along all $D_I$ for any $I\subset \{1,2,3,4\}$, with $|I|\geq 2$.  It follows that  $G^{div} = \{ \{i,j\}, 1\leq i<j\leq 4\}$, and $\mathcal{F}_G$ consists of all divisors
 $L_I$ where $|I|\geq 2$.  
 \end{example}
 
\begin{rem} In general, one could simply blow-up all  the faces of codimension $\geq 2$  for good measure. In this case the
space $P_G$ admits an action of the symmetric group on $N$ letters which permutes the coordinates,  and $B$ has the structure of a permutohedron with $N!$ vertices. In particular, there are canonical coordinates on $P_G$ in the neighbourhood of each vertex indexed by  a $\tau \in \Sigma_N$, given by 
$$\alpha^{\tau}_1 = \alpha_{\tau(1)}\ ,\  \alpha^{\tau}_2 = {\alpha_{\tau(2)} \over \alpha_{\tau(1)}}\ ,\  \ldots\ ,\  \alpha^{\tau}_N = {\alpha_{\tau(N)} \over \alpha_{\tau(N-1)}} \ .$$
These are precisely the local coordinates in the  sector decomposition  of K. Hepp.
\end{rem}

\subsection{The renormalized differential form} We  now apply the formalism of \S\ref{sectCancelpoles}. Let $G$ be a labelled single-scale overall log-divergent graph with at most log-divergent subgraphs and let $H=H_G$ be the Hopf algebra obtained by fixing single-scale
structures on its sub and co-graphs  (\S\ref{sectHopf1scale}). For every  non-empty set of edges $I\subset E(G)$,  let  $\Pro^I$ denote the projective space whose coordinates
are the Schwinger parameters of $I$, and let  $\D^I\subset \Omega^{|I|-1}_{\Q(\alpha_i: i\in I)/\Q}$ denote the  subspace of  regular forms of  homogeneous degree $0$ on a Zariski open subset of $\Pro^{I}$ which are defined over $\Q$. Let
\begin{equation}\label{Dformsdef}
\D=\bigoplus_{\emptyset \neq I\subset E(G)} \D^I
\end{equation}
 As in definition $(\ref{omega2def})$, set
\begin{eqnarray}
\omegat : H_G \otimes_{\Q} H_G &\rightarrow & \D^{E(G)}\subset \D  \\
{\gamma\otimes \Gamma} & \mapsto & \omega_{\gamma\otimes \Gamma}(s)  \ . \nonumber 
\end{eqnarray}
and define $\omega: H_G\rightarrow \D$ by $\omega(G)=\omegat(1\otimes G)=\Omega_G \psi_G^{-2} $. 
The  total residue   $\Res: \, \D \rightarrow \D\otimes_{\Q} \D$ is given by $(\ref{deftotalresforomega})$, i.e.,  the sum of the   residues along  all exceptional divisors $\mathcal{E}_I$ in the blow-ups of $\Pro^{E(G)}$ along $L_I$.
Let 
$$\pi: P_G \rightarrow \Pro^{E(G)}$$
denote the blow-up constructed in \S\ref{sectblowup}.
\begin{thm}  \label{thmconv2} If $s>0$, the form 
$\pi^{*} \omega^{\ren}_G(s)$ has no poles along $B$. 
\end{thm}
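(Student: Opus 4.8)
The plan is to show that $\pi^*\omega^{\ren}_G(s)$ is regular along the normal crossing divisor $B$ by localizing the question to the individual exceptional components $\mathcal{E}_I$ (and the strict transforms of the coordinate hyperplanes $\alpha_i=0$), and then using Proposition \ref{proprenorm}, which establishes the key vanishing $\Res\,\omega^{\ren}_G(s)=0$, to rule out poles along each $\mathcal{E}_I$. The strategy is to verify the hypotheses of the abstract setup of \S\ref{sectCancelpoles}, so that the vanishing of the total residue does the analytic work.

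**First I would** check that the three axioms $\axi$, $\axii$, $\axiii$ hold for the maps $\omega$, $\omegat$ defined at the end of the preceding subsection. Axioms $\axi$ and $\axii$ are immediate from the boundary conventions $(\ref{omega1tens})$, namely $\omega_{\gamma\otimes 1}(s)=0$ and $\omega_{1\otimes\Gamma}(s)=\omega_\Gamma$. Axiom $\axiii$, the compatibility $\Res\,\omegat=(\omega\otimes\omegat)\circ\mu_{13}(\Delta\otimes\Delta)$, is exactly the content of Proposition \ref{propresomega2}, after summing over flags $F$; indeed Proposition \ref{propResIformula} computes each individual residue $\Res_{\mathcal{E}_I}\pi_I^*\omega_{\gamma_F\otimes G/\gamma_F}(s)$, and the total residue assembles these into the stated Hopf-theoretic form. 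With the axioms verified, Proposition \ref{proprenorm} gives $\Res\,\omega^{\ren}_G(s)=0$ directly.

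**The main step** is then to translate $\Res\,\omega^{\ren}_G(s)=0$ into regularity along $B$. For each $I\in G^{div}$, Corollary \ref{corpoleorderalongLI} (via Lemma \ref{lemgeneralresidues}) tells us that the order of the pole of $\pi_I^*\omega_{\gamma_F\otimes G/\gamma_F}(s)$ along $\mathcal{E}_I$ is at most $1$, i.e.\ any pole is \emph{simple}. But the residue of a simple pole is precisely what the total residue map $\Res$ records; since $\Res\,\omega^{\ren}_G(s)=0$, the residue of $\pi^*\omega^{\ren}_G(s)$ along every $\mathcal{E}_I$ vanishes, and a form with at most a simple pole and vanishing residue is in fact regular along that divisor. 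To make this rigorous globally rather than chart-by-chart, I would invoke Proposition \ref{propnc}: $B$ is normal crossing, and since the poles along each component are at worst simple with vanishing residue, the form extends regularly across each component, and across their intersections one uses that no intersection of components lies in $Y^{tot}_G$.

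**The hard part will be** matching the blow-up $P_G$ of \S\ref{sectblowup} — obtained by blowing up the full family $\mathcal{F}_G$ in order of increasing codimension — against the single-center blow-ups $\pi_I$ used to \emph{define} the total residue in $(\ref{deftotalresforomega})$. One must check that the residue computed on the iterated blow-up $P_G$ agrees with the sum of residues $\Res_{\mathcal{E}_I}\pi_I^*$ taken one center at a time, so that the vanishing $\Res\,\omega^{\ren}_G(s)=0$ genuinely controls the poles on $P_G$. This is where the normal crossing property and the fact that the centers $L_I$ meet transversally after blow-up are essential; the argument parallels \cite{BEK}, Proposition 7.3, and the key input is that the pole orders are bounded by Corollary \ref{corpoleorderalongLI} so that no higher-order poles can be created in the iterated process. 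Once this compatibility is established, regularity along $B$ follows uniformly.
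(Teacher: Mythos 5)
Your proposal is correct and follows essentially the same route as the paper's proof: reduce to single-center blow-ups $\pi_I$ via the argument of \cite{BEK}, Proposition 7.3, observe that the poles along each $\mathcal{E}_I$ are at most simple, and kill the residues by Proposition \ref{proprenorm}, whose hypotheses $\axi$, $\axii$ are supplied by $(\ref{omega1tens})$ and $\axiii$ by Proposition \ref{propresomega2}. The only cosmetic difference is that you cite Corollary \ref{corpoleorderalongLI} for the simple-pole bound where the paper cites Proposition \ref{propResIformula}, and you spell out the (standard) step that a simple pole with vanishing residue is no pole at all.
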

\begin{proof}
It is enough (by, for example, the proof of proposition 7.3 in \cite{BEK}) to show that  $\pi_I^*\omega^{\ren}_G(s)$ has no poles along $\mathcal{E}_I$, where $\pi_I : P^I \rightarrow \Pro^{E(G)}$ is the blow-up of a single
coordinate hyperplane $L_I$.  By proposition \ref{propResIformula}, it has at most simple poles. The residue is zero by    proposition  \ref{proprenorm},   since properties $\axi$ and $\axii$ hold by $(\ref{omega1tens})$, and property $\axiii$ holds by proposition \ref{propresomega2}.  
\end{proof}

\begin{cor} \label{corconvergence} If $s>0$, the renormalized Feynman integral   converges:
$$f_{G}(s) = \int_{\DD_G} \omega^{\ren}_G(s)<\infty \ .$$ 
\end{cor}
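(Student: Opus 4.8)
The plan is to transport the entire integral up to the blow-up $P_G$ of \S\ref{sectblowup} and to show that there the integrand becomes a smooth top-degree form on a compact domain, so that finiteness is automatic. Since $\pi: P_G \to \Pro^{E(G)}$ is birational and restricts to an isomorphism away from the finitely many blown-up loci $L_I$, and these loci meet the open simplex in a set of Lebesgue measure zero, I would first record the identity
$$\int_{\DD_G} \omega^{\ren}_G(s) = \int_{\widetilde{\DD}_G} \pi^{*}\omega^{\ren}_G(s)\ ,$$
where $\widetilde{\DD}_G$ denotes the strict transform of $\DD_G$, i.e. the closure in $P_G(\R)$ of the preimage of the open simplex. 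This is just the change-of-variables formula on the open dense locus where $\pi$ is an isomorphism.

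Next I would observe that $\widetilde{\DD}_G$ is compact: it is a closed subset of the compact real manifold-with-corners $P_G(\R)$, which is itself compact because $\Pro^{E(G)}(\R)$ is compact and blowing up along the $L_I$ preserves this. Moreover its boundary $\partial\widetilde{\DD}_G$ is contained in the normal-crossing divisor $B$ of Proposition \ref{propnc}, being the union of the strict transforms of the coordinate hyperplanes $\alpha_i=0$ together with the exceptional divisors $\mathcal{E}_I$, $I\in G^{div}$.

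The heart of the argument is to show that $\pi^{*}\omega^{\ren}_G(s)$ is regular in a neighbourhood of $\widetilde{\DD}_G$. Its polar locus is contained in $B\cup Y^{tot}_G$. For $s>0$, Theorem \ref{thmconv2} guarantees that the pullback has no poles along $B$, so no singularity can arise from the boundary faces of the simplex. Along the strict transform $Y^{tot}_G$ of the graph hypersurfaces the form may still be singular, but by Proposition \ref{propnc} the domain $\widetilde{\DD}_G$ does not meet $Y^{tot}_G$, and no non-empty intersection of components of $B$ lies in $Y^{tot}_G$; hence these remaining poles are bounded away from the compact set $\widetilde{\DD}_G$. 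Consequently $\pi^{*}\omega^{\ren}_G(s)$ restricts to a continuous (indeed smooth) $(N-1)$-form on an open neighbourhood of $\widetilde{\DD}_G$, and integrating a bounded continuous top-form over the compact domain $\widetilde{\DD}_G$ yields a finite number.

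I expect the only genuinely delicate point to be the first step, namely the identification of $\int_{\DD_G}$ with the upstairs integral, since a priori the naive integrand is precisely the object whose boundary singularities the blow-up is designed to resolve. The clean way around this is to \emph{define} $f_G(s)$ by the finite upstairs integral $\int_{\widetilde{\DD}_G}\pi^{*}\omega^{\ren}_G(s)$ and to note that it coincides with the naive expression wherever the latter is absolutely convergent, the two differing only over the measure-zero blow-up centres. Everything else reduces to a direct appeal to Theorem \ref{thmconv2} together with the geometric properties collected in Proposition \ref{propnc}.
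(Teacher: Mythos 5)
Your proof is correct and follows essentially the same route as the paper: pull back to the blow-up $P_G$, invoke Theorem \ref{thmconv2} to kill the poles along $B$ and Proposition \ref{propnc} to keep $Y^{tot}_G$ away from the strict transform of $\DD_G$, and conclude by integrating a continuous form over a compact domain. Your extra care about identifying the downstairs and upstairs integrals (via the measure-zero blow-up centres) is a point the paper passes over silently, but it does not change the argument.
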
 
\begin{proof}  Pull back the integral  to $P_G$. Thus $f_G(s)$ is the integral of $\pi^{*} \omega^{\ren}(s)$ over the strict transform of $\DD_G$, a compact polytope whose boundary is contained in $B$.  By the previous theorem, the poles of $\pi^{*} \omega^{\ren}(s)$  are contained in $Y^{tot}_{G,s}$, which does not meet the new domain of integration by proposition \ref{propnc}. Thus $f_G(s)$ is the  integral of a continuous function on a compact domain, and is therefore bounded.
\end{proof}

\subsection{The tangent mixed Hodge structure} 
We can therefore define the \emph{tangent mixed Hodge structure of G} as follows. Writing $Y^{tot}_G$ for $Y^{tot}_{G,1}$, let 
\begin{equation}\label{motdef}
 \mathrm{Mot}(G)=H^{N-1}(P_G \backslash Y^{tot}_G, B \backslash B\cap Y^{tot}_G)\ .\end{equation}
The renormalized form  $\omega_G^{\ren}(1)$   defines a cohomology class
$$[\pi^*(\omega_G^{\ren}(1))] \in H_{DR}^{N-1}(P_G\backslash Y^{tot}_G,  B\backslash B\cap Y^{tot}_G)\ ,$$
and the strict transform of the simplex $\Delta_N$ defines a relative homology class
$$[\pi^{-1}(\Delta_N)] \in  W^0 H_{B,N-1}(P_G\backslash Y^{tot}_G, B\backslash B\cap Y^{tot}_G)\ .$$

\begin{cor} The lowest log term $f_G$ of the  renormalized Feynman  integral of $G$  
 defines a period of the tangent mixed Hodge structure $\mathrm{Mot}(G)$. 
\end{cor}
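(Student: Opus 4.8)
The plan is to realise $f_G$ directly as the period pairing of the two classes already produced in the statement, so that the corollary becomes an essentially formal consequence of Theorem \ref{thmconv2}, Proposition \ref{propnc} and Corollary \ref{corconvergence}. Recall that a period of a mixed Hodge structure $M$ equipped with the comparison between its de Rham and Betti realisations is, by definition, a number of the form $\langle \gamma, [\omega]\rangle$, where $[\omega]$ is a de Rham cohomology class of $M$ and $\gamma$ a Betti homology class in the dual; concretely it is computed by integrating an algebraic representative of $[\omega]$ over a topological representative of $\gamma$. For $M=\mathrm{Mot}(G)$ the two required ingredients are exactly the de Rham class $[\pi^*(\omega_G^{\ren}(1))]$ and the relative homology class $[\pi^{-1}(\Delta_N)]$ exhibited above.

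First I would confirm that $[\pi^*(\omega_G^{\ren}(1))]$ is a genuine class in $H^{N-1}_{DR}(P_G\backslash Y^{tot}_G, B\backslash B\cap Y^{tot}_G)$. Since $P_G$ has dimension $N-1$, the pulled-back form is of top degree and hence automatically closed; by Theorem \ref{thmconv2} it has no poles along $B$, so it is regular in a neighbourhood of $B\backslash B\cap Y^{tot}_G$, and for degree reasons its restriction to the lower-dimensional divisor $B$ vanishes. It therefore represents a class in the relative de Rham complex, and this class is defined over $\Q$ because $\omega_G^{\ren}$ is assembled from the rational graph polynomials $\psi,\phi,\Upsilon$ through $(\ref{omega2def})$ and $(\ref{defnomegaren})$.

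Next I would check that $[\pi^{-1}(\Delta_N)]$ is a genuine relative cycle. The strict transform of the simplex is compact, and by Proposition \ref{propnc} its boundary lies in $B$ while it does not meet $Y^{tot}_G$; hence it defines a class in the Betti homology $H_{N-1}(P_G\backslash Y^{tot}_G, B\backslash B\cap Y^{tot}_G)$. The assertion that this class lies in the weight-zero part $W^0$ follows from the normal-crossing structure of $B$ established in Proposition \ref{propnc}, exactly as in the treatment of \cite{BEK}.

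Finally, the period is evaluated by the pairing
\[
\langle [\pi^{-1}(\Delta_N)], [\pi^*(\omega_G^{\ren}(1))]\rangle = \int_{\pi^{-1}(\Delta_N)} \pi^*(\omega_G^{\ren}(1)) = \int_{\DD_G} \omega_G^{\ren}(1) = f_G(1) = f_G,
\]
where the middle equality is the change-of-variables formula for the birational morphism $\pi$, which restricts to a diffeomorphism on the interior of the simplex and differs from it only on a set of measure zero, and convergence is guaranteed by Corollary \ref{corconvergence}. I expect the main point to be not this evaluation, which is formal, but the careful matching of the analytic integral with the algebraic period pairing of $\mathrm{Mot}(G)$: namely, verifying that the relative cycle genuinely sits in $W^0$ and that pairing the $\Q$-algebraic de Rham class against the topological cycle reproduces precisely the convergent integral $f_G$, so that $f_G$ is indeed a matrix entry of the comparison isomorphism for $\mathrm{Mot}(G)$.
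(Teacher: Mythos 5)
Your proposal is correct and follows essentially the same route as the paper, which states the corollary as an immediate consequence of the two displayed classes $[\pi^*(\omega_G^{\ren}(1))]$ and $[\pi^{-1}(\Delta_N)]$ together with Theorem \ref{thmconv2}, Proposition \ref{propnc} and Corollary \ref{corconvergence}. You have merely made explicit the period pairing and the well-definedness checks that the paper leaves implicit.
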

It follows from the renormalization group equations that $f_G(s)$ is a polynomial in $\log s$, whose coefficients
are (products of) periods of mixed Hodge structures $(\ref{motdef})$.

\section{The Renormalization group}
We prove that the  Feynman rules respect  the renormalization group structure.

\subsection{Renormalization group equations} \label{sectrenormgroup}
It is convenient to define a new set of differential forms as follows. Let $\lambda$ denote any parameter, and set 
\begin{eqnarray}
\pomega_G(\lambda) &  = & {\phi_G \over \psi_G}  {\lambda \over (\psi_{G} \, \lambda + \phi_{G}  )^2 } \,  \prod_{e\in E(G)}  d\alpha_e \ ,   \\
\pomega_{\gamma\otimes \Gamma}(\lambda)&  = & {\phi_{\Gamma} \over \psi_{\Gamma}}  {\lambda \over ( \psi_{\gamma  \cup \Gamma} \, \lambda +  \phi_{\gamma\cup \Gamma} )^2 } \,  \prod_{e\in E(\gamma) \cup E(\Gamma)}  d\alpha_e \ ,
\end{eqnarray}
where $G$ and $\gamma, \Gamma$ are labelled single-scale graphs with disjoint labels.   Denote the corresponding renormalized form by
$\pomega^{\ren}_G (\lambda) = \pomega(\lambda) \circ R(G)$.
 The following lemma enables us to lift the domain of integration to one dimension higher.
\begin{lem} \label{lemminormonotone} Let $G$ be single-scale as above. For all $\lambda >0$ we have :
\begin{equation} \label{liftedintegral} 
\int_{\DD_{G}} \omega^{\ren}_{G}(1)=  \int_{[0,\infty]^{E(G)}} \pomega^{\ren}_G ( \lambda)\ .
\end{equation}
\end{lem}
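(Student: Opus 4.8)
The plan is to realise the affine octant $[0,\infty]^{E(G)}$ as the cone over the simplex $\DD_G$ and to integrate out the radial coordinate, thereby reducing the right-hand side of $(\ref{liftedintegral})$ to the projective integral on the left. Since both $\omega^{\ren}_G(1)$ and $\pomega^{\ren}_G(\lambda)$ are obtained by applying the same finite operator $R(G)=1\otimes G+\sum_i a_i\,\gamma_i\otimes G/\gamma_i$ (definition \ref{defnR}), it suffices by linearity to establish, for each summand, the radial identity
\[
\int_{0}^{\infty}\pomega_{\gamma\otimes\Gamma}(\lambda)=\omega_{\gamma\otimes\Gamma}(1),
\]
together with its degenerate analogue $\int_0^\infty \pomega_G(\lambda)=\omega_G$ for the leading term $1\otimes G$ (mirroring the extension $(\ref{omega1tens})$), where the radial integration is performed pointwise over the simplex.

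First I would introduce coordinates $\alpha_e = t\,\beta_e$ with $t\in[0,\infty)$ and $(\beta_e)\in\DD_G$, so that $\prod_e d\alpha_e = t^{N-1}\,dt\wedge\Omega_{G}$ with $N=|E(G)|$. Using the homogeneity of the graph polynomials, $\psi_X(\alpha)=t^{h_X}\psi_X(\beta)$ and $\phi_X(\alpha)=t^{h_X+1}\phi_X(\beta)$, and assigning the indeterminate the radial weight $1$ so that $\psi_{\gamma\cup\Gamma}\lambda+\phi_{\gamma\cup\Gamma}=t^{h_G}\big(\psi_{\gamma\cup\Gamma}(\beta)\lambda+t\,\phi_{\gamma\cup\Gamma}(\beta)\big)$, all powers of $t$ cancel precisely because $G$ is log-divergent, i.e.\ $N=2h_G$. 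The integrand collapses to
\[
\pomega_{\gamma\otimes\Gamma}(\lambda)=\frac{\phi_\Gamma(\beta)}{\psi_\Gamma(\beta)}\,\frac{\lambda\,dt}{\big(\psi_{\gamma\cup\Gamma}(\beta)\lambda+t\,\phi_{\gamma\cup\Gamma}(\beta)\big)^2}\wedge\Omega_{G}.
\]

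The key computation is the elementary one-dimensional integral
\[
\int_0^\infty\frac{\lambda\,dt}{(A\lambda+tB)^2}=\frac{1}{AB}\qquad(A,B>0),
\]
which is manifestly independent of $\lambda$; this is exactly the mechanism that produces a $\lambda$-independent answer on the right-hand side, matching the $\lambda$-free left-hand side. Taking $A=\psi_{\gamma\cup\Gamma}(\beta)$ and $B=\phi_{\gamma\cup\Gamma}(\beta)$, and recalling $\psi_{\gamma\cup\Gamma}=\psi_\gamma\psi_\Gamma$ together with $\phi_{\gamma\cup\Gamma}=\phi_\gamma\psi_\Gamma+\psi_\gamma\phi_\Gamma=\Upsilon_{\gamma;\Gamma}(1)$, gives
\[
\int_0^\infty\pomega_{\gamma\otimes\Gamma}(\lambda)=\frac{\phi_\Gamma}{\psi_\gamma\psi_\Gamma^2\,\Upsilon_{\gamma;\Gamma}(1)}\,\Omega_G=\omega_{\gamma\otimes\Gamma}(1),
\]
exactly as required; the leading term runs identically with $A=\psi_G$, $B=\phi_G$, yielding $\psi_G^{-2}\Omega_G=\omega_G$.

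Finally I would assemble the pieces: summing the radial identities against the coefficients of $R(G)$ shows that, for $\beta$ in the interior of $\DD_G$ (where all graph polynomials are strictly positive), the radial integral of $\pomega^{\ren}_G(\lambda)$ equals $\omega^{\ren}_G(1)$ pointwise. Integrating over $\beta$ then yields $(\ref{liftedintegral})$, the outer integral being finite by Corollary \ref{corconvergence}. The one point needing care — and the only genuine obstacle — is the interchange of the radial and simplex integrations: the individual summands $\pomega_{\gamma\otimes\Gamma}(\lambda)$ are \emph{not} absolutely integrable over $\DD_G$, so Fubini cannot be invoked term by term. This is circumvented by performing the radial integration first, as a pointwise and absolutely convergent operation (each $A,B>0$) on the interior of the simplex, and only afterwards recombining the summands into the renormalized form, whose integrability over $\DD_G$ is guaranteed by Corollary \ref{corconvergence}; the boundary $\partial\DD_G$ has measure zero and contributes nothing.
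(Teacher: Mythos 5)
Your computation is correct and is, in essence, the paper's own argument run in the opposite direction. Both proofs rest on the single identity $\int_0^\infty du/(Au+B)^2 = 1/(AB)$ for $A,B>0$: the paper starts from the left-hand side, inserts an auxiliary integral over $\lambda$ to convert $1/\psi_G^2$ and $1/(\psi_\gamma\psi_\Gamma^2\Upsilon_{\gamma;\Gamma}(1))$ into integrals over $\DD_G\times[0,\infty]$, and then observes that the resulting integrand is a degree-zero form on $\Pro^{|E(G)|}$ in the coordinates $(\alpha_1:\cdots:\alpha_N:\lambda)$, so that restricting to the section $\lambda=\mathrm{const}$ yields the affine integral on the right. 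You instead start from the right-hand side, pass to cone coordinates over $\DD_G$, and integrate out the radial variable $t$; your degree bookkeeping ($N=2h_G$ killing all powers of $t$) and the identifications $\psi_{\gamma\cup\Gamma}=\psi_\gamma\psi_\Gamma$, $\phi_{\gamma\cup\Gamma}=\Upsilon_{\gamma;\Gamma}(1)$ are all correct, and you correctly recover $\omega_{\gamma\otimes\Gamma}(1)$ and $\omega_G$ term by term. These are two charts on the same projective integral, so the content is identical; your version has the advantage of making the $\lambda$-independence of the answer completely transparent, while the paper's version is set up so that the same lifted form $\pomega^{\ren}_G(\lambda)$ with a graph-dependent substitution for $\lambda$ can be reused immediately in the proof of Proposition \ref{prop1}.

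The one substantive omission is at the point you yourself flag. Performing the radial integration first produces the iterated integral $\int_{\DD_G}\bigl(\int_0^\infty\cdots\,dt\bigr)$, and your argument correctly identifies this with the left-hand side; but the lemma asserts equality with the genuine $N$-dimensional integral over $[0,\infty]^{E(G)}$, and to pass from that integral to the iterated one you need $\pomega^{\ren}_G(\lambda)$ to be absolutely integrable over the orthant (the integrand is not of one sign, so Tonelli is unavailable). This is precisely what the paper establishes as its first step: it computes the total residue $\Res\,\pomega(\lambda)=(\pomega(\lambda)\otimes\pomega(\lambda))\circ(\Delta\otimes\Delta)$ and invokes the pole-cancellation formalism of \S\ref{sectCancelpoles} to conclude that the right-hand side converges in its own right. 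Your proposed circumvention --- ``perform the radial integration first'' --- only defines the iterated integral; it does not show that the affine integral converges or that it equals the iterated one. Adding the residue computation for $\pomega(\lambda)$ (which is a near-verbatim affine analogue of Proposition \ref{propresomega2}) closes this gap and makes your proof complete.
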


\begin{proof} We first show that the right-hand side is convergent. For this,  it is a simple matter to check that    $\pomega_{1\otimes G}(\lambda)=\pomega_G(\lambda)$, and
 $\pomega_{G\otimes 1}(\lambda)=0$. By a similar calculation to proposition \ref{propresomega2} (with the small difference that we work in affine rather than projective space) we find that the total residue satisfies 
 $$ \Res  \, \pomega(\lambda) =(\pomega(\lambda) \otimes \pomega(\lambda) ) \circ ( \Delta \otimes \Delta) \ ,$$
and therefore by the general set-up of \S \ref{sectCancelpoles}, the integral is convergent. 
 The proof of the  lemma uses the fact that for any  $A,B>0$, we have  
\begin{equation} \label{mmid} {1\over A B } = \int_0^\infty { d\lambda  \over (A\, \lambda+B)^2  } \ .
\end{equation}
Now let $\DD^{\varepsilon}_G = \{ (\alpha_1:\ldots : \alpha_{|E(G)|})\in \Pro^{|E(G)|-1}: \alpha_i \geq  |\varepsilon|\}$,  and write  the left hand side of   $(\ref{liftedintegral})$ as a limit
 $$\int_{\DD_{G}} \omega^{\ren}_{G}(1)= \lim_{\varepsilon\rightarrow 0^+} \int_{\DD_G^{\varepsilon}}  \omega^{\ren}_{G}(1)\ .$$
 Writing   the preparation map  
 $R (G)=1 \otimes G + \sum_{\gamma} a_{\gamma} \, \gamma \otimes G / \gamma, $
 we obtain 
 $$\int_{\DD^{\varepsilon}_{G}} \omega^{\ren}_{G}(1)= \int_{\DD^{\varepsilon}_{G}}   {\Omega_G\over \psi_{G}^2} + \sum_{\gamma} a_{\gamma} \int_{\DD^{\varepsilon}_{G}}  {\phi_{G/ \gamma}\, \Omega_{G} \over \psi_{\gamma} \psi_{G/\gamma}^2 \phi_{\gamma \cup G/\gamma } } \ .  $$ 
 Since  $\DD^{\varepsilon}_G$ does not meet the coordinate hyperplanes,  the graph polynomials are strictly positive on $\DD^{\varepsilon}_G$ and  we can apply   $(\ref{mmid})$ with $(A,B)=(\psi_G,\phi_G)$ to the first factor  and with 
$(A,B)= ( \psi_{\gamma\cup G/\gamma},  \phi_{\gamma\cup G/\gamma})$ to the factors in the sum.  This gives 
$$ \int_{\DD^{\varepsilon}_{G}}  \int_0^{\infty}  
 {\phi_G \over \psi_G}  { \Omega_G  \, d\lambda \over (\psi_{G} \, \lambda + \phi_{G}  )^2 } + \sum_{\gamma} a_{\gamma} \int_{\DD^{\varepsilon}_G} \int_0^{\infty} 
 {\phi_{G/\gamma} \over \psi_{G/\gamma}}  {\Omega_{G}  \, d\lambda \over ( \psi_{\gamma  \cup G/\gamma} \, \lambda +\phi_{\gamma \cup G/\gamma} )^2 } \ . $$
 We can take the limit as $\varepsilon \rightarrow 0$:
 $$ \int_{\DD_{G}}  \int_0^{\infty}  \Big(
 {\phi_G \over \psi_G}   {1 \over (\psi_{G} \, \lambda + \phi_{G}  )^2 } + \sum_{\gamma} a_{\gamma} 
 {\phi_{G/\gamma} \over \psi_{G/\gamma}}  {1 \over (\psi_{\gamma  \cup  G/\gamma} \, \lambda +  \phi_{\gamma\cup G/\gamma} )^2 }\Big) \,  \Omega_G  \, d\lambda$$
 since the right-hand side of the integrand  (viewed as an integral on $\Pro^{|E(G)|-1}\times \A^1$) is convergent, by the usual arguments. 
 The integrand  defines a projective integral on $\Pro^{|E(G)|}$ with projective coordinates $(\alpha_1: \ldots: \alpha_{E(G)} : \lambda)$ if we replace $\Omega_G \, d\lambda$   with 
$\Omega_G\, d\lambda + \lambda \prod_i d\alpha_i$. 
 Restricting to 
any  hyperplane  $\lambda=$ constant  gives  an affine integral which is precisely the right-hand side of $(\ref{liftedintegral})$.
\end{proof}

 \begin{prop}  \label{prop1} Let $G_1,G_2$ denote single-scale graphs with at most logarithmic subdivergences, and denote their  images under the preparation map $R$  by
$$R(G_i) = \sum_{\gamma_i} a_{\gamma_i} \gamma_i \otimes G_i /\gamma_i\ ,  $$
where  $a_{\gamma_i} = \pm 1$, the sum is over all flags of (possibly empty) divergent subgraphs  $\gamma_i\subset G_i$ $(\ref{gammaflag})$  and  $i=1, 2$.  If $G_1,G_2$ have distinct labels and $s>0$ then we have
\begin{eqnarray}  \label{renomeqmain}
\int_{\DD_{G_1}} \omega^{\ren}_{G_1}(1) \times \int_{\DD_{G_2}} \omega^{\ren}_{G_2}(s)\quad  =   \qquad\qquad\qquad\quad&& \\ 
\qquad \qquad \qquad  \int_{\DD_{G_1\cup G_2}}  \sum_{\gamma_i\subset G_i} a_{\gamma_1} a_{\gamma_2} {\phi_{G_1/ \gamma_1} \phi_{G_2/ \gamma_2} \over \psi_{G_1/\gamma_1}
 \psi_{G_2/\gamma_2}}
 {s \, \Omega_{G_1\cup G_2} \over \big( \Upsilon_{\gamma_1\cup \gamma_2;G_2/\gamma_2}(s)\big)^2}\ ,   \label{renomeqmain2}
 \end{eqnarray}
where all integrals are convergent.
\end{prop}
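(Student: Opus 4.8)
The plan is to reduce everything to the single graph $G_1$ by lifting the factor $\int_{\DD_{G_1}}\omega^{\ren}_{G_1}(1)$ to an affine integral, and then to glue the two simplices $\DD_{G_1}$ and $\DD_{G_2}$ into $\DD_{G_1\cup G_2}$ by means of a single Feynman parameter. First I would invoke Lemma \ref{lemminormonotone} to write $\int_{\DD_{G_1}}\omega^{\ren}_{G_1}(1)=\int_{[0,\infty]^{E(G_1)}}\pomega^{\ren}_{G_1}(\lambda)$ for every $\lambda>0$. This is exactly where the asymmetric hypothesis — that the first factor is evaluated at $s=1$ — is used, since the monotone lemma is only available there. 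Multiplying by $\int_{\DD_{G_2}}\omega^{\ren}_{G_2}(s)$ and applying Fubini (legitimate after the regularisation described below) presents the left-hand side as a single integral over $[0,\infty]^{E(G_1)}\times\DD_{G_2}$ in which the parameter $\lambda$ remains at our disposal.

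The algebraic heart of the matching is the factorisation
\begin{equation}
\Upsilon_{\gamma_1\cup\gamma_2;\,G_2/\gamma_2}(s)=\psi_{\gamma_1}\,\Upsilon_{\gamma_2;\,G_2/\gamma_2}(s)+\phi_{\gamma_1}\,\psi_{\gamma_2}\psi_{G_2/\gamma_2},
\end{equation}
which follows at once from $\psi_{\gamma_1\cup\gamma_2}=\psi_{\gamma_1}\psi_{\gamma_2}$, $\phi_{\gamma_1\cup\gamma_2}=\phi_{\gamma_1}\psi_{\gamma_2}+\psi_{\gamma_1}\phi_{\gamma_2}$ together with the product conventions $(\ref{productconvention})$ and definition $(\ref{Upsilonsdef})$; it is precisely the graph polynomial of the circular join gluing the subgraphs of $G_1$ onto those of $G_2$ (Lemma \ref{lemcircjoin}). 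I would then exhibit $[0,\infty]^{E(G_1)}\times\DD_{G_2}$ as a dense chart of $\DD_{G_1\cup G_2}$ by the cone (join) construction that fibres $\DD_{G_1\cup G_2}$ over $\DD_{G_1}\times\DD_{G_2}$, the relative scale of the two blocks of Schwinger variables being the gluing coordinate. Applying $\tfrac{1}{AB}=\int_0^\infty\tfrac{d\lambda}{(A\lambda+B)^2}$ exactly as in the proof of Lemma \ref{lemminormonotone}, specialising $\lambda$ to the appropriate ratio of $G_2$ graph polynomials, and invoking the degree formulae of Lemmas \ref{lemphi} and \ref{lemcircjoin}, the product of Feynman forms collapses on the regularised domain onto the integrand of $(\ref{renomeqmain2})$: the join Jacobian supplies $\Omega_{G_1\cup G_2}$, the numerators assemble into $\phi_{G_1/\gamma_1}\phi_{G_2/\gamma_2}/(\psi_{G_1/\gamma_1}\psi_{G_2/\gamma_2})$, and exactly one power of $s$ survives, producing $\Upsilon_{\gamma_1\cup\gamma_2;\,G_2/\gamma_2}(s)^{-2}$.

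The delicate point, and the step I expect to be the real obstacle, is convergence: individual forests — and indeed the naive iterated integrals taken in any single order — diverge, so the gluing can be carried out only on the complete signed sum $\sum_f^\emptyset(-1)^{|f|}$, compatibly with the inclusion–exclusion of the Hopf algebra. For instance the unsubtracted ($\gamma_1=\emptyset$) contribution of $G_1$ by itself blows up along the face where all $E(G_2)$-coordinates vanish, so the entanglement of the subdivergences of $G_1$ and $G_2$ must be controlled \emph{jointly} rather than by Fubini. I would therefore perform all manipulations on the regularised domains $\DD^{\varepsilon}$ bounded away from the coordinate hyperplanes, where every graph polynomial is strictly positive and both Fubini and the Feynman trick are unconditionally valid, and only afterwards let $\varepsilon\to0$. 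The existence of the limit on both sides, and the fact that the reorganisation into a single projective integral does not alter the value, is then guaranteed by the blow-up of §\ref{sectblowup}: by Proposition \ref{propnc} the renormalised integrands lift to forms with no poles on the strict transform of the simplex (Theorem \ref{thmconv2}), so that the product and the combined integral are both integrals of continuous functions over compact polytopes (Corollary \ref{corconvergence}), and the two regularisations are identified on $P_{G_1\cup G_2}$. This joint control of the entangled subdivergences, rather than the otherwise routine algebra of the gluing, is the crux of the argument.
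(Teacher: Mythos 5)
Your proposal follows the paper's own proof quite closely: the lift of the $s=1$ factor to an affine integral via Lemma \ref{lemminormonotone}, the term-by-term specialisation of $\lambda$ to a ratio of $G_2$-polynomials (the paper takes $\lambda_{\gamma_2}=\Upsilon_{\gamma_2;G_2/\gamma_2}(s)/\psi_{\gamma_2\cup G_2/\gamma_2}$, which is exactly what your displayed factorisation of $\Upsilon_{\gamma_1\cup\gamma_2;G_2/\gamma_2}(s)$ encodes), the regularisation on the domains $\DD^{\varepsilon}$ with a limit $\varepsilon\to 0$, and the final projectivisation are all as in the paper. The algebraic identity you isolate is correct and is precisely the content of the paper's ``directly from the definitions, one verifies'' step.

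The one place where your argument is not yet complete is the convergence of the right-hand side $(\ref{renomeqmain2})$. You appeal to Proposition \ref{propnc}, Theorem \ref{thmconv2} and Corollary \ref{corconvergence}, but those statements concern the renormalised form $\omega^{\ren}_G(s)$ of a \emph{single} graph, whereas the integrand of $(\ref{renomeqmain2})$ is the genuinely new four-slot form $\omega^{(4)}(s)\circ(R\otimes R)(G_1\otimes G_2)$ of $(\ref{omegafour})$, and its pole-freeness on the blow-up is not a formal consequence of the single-graph case. The paper closes this by verifying, as in Lemma \ref{lemgeneralresidues}, that $\omega^{(4)}$ has at most simple poles along the exceptional divisors $\mathcal{E}_I$ with total residue
$\Res\,\omega^{(4)}=(\omega\otimes\omega^{(4)})\circ\mu_{1357}\circ(\Delta\otimes\Delta\otimes\Delta\otimes\Delta)$,
and then applying the preparation-map identity of Theorem \ref{thmRmainproperty} once in each tensor factor to obtain
$\Res\big(\omega^{(4)}\circ(R\otimes R)\big)=\omega(1)\otimes\omega^{(4)}\circ(R\otimes R)=0$.
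You correctly flag the joint control of the entangled subdivergences as the crux, but the actual mechanism is this residue computation for the four-slot form together with the doubled use of Theorem \ref{thmRmainproperty}, not the single-graph convergence theorems you cite.
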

\begin{proof} The two integrals in the first line of  $(\ref{renomeqmain})$  are convergent by theorem \ref{thmconv}. To check the convergence of $(\ref{renomeqmain2})$,
write
\begin{equation}\label{omegafour}
\omega^{(4)}_{\gamma_1\otimes \gamma_2 \otimes \gamma_3 \otimes \gamma_4}(s)   = {\phi_{\gamma_2}\over \psi_{\gamma_2}} {\phi_{\gamma_4} \over \psi_{\gamma_4}} 
{s\,  \Omega_{\gamma_1\cup \ldots \cup \gamma_4} \over \Upsilon^2_{\gamma_1\cup \gamma_2 \cup \gamma_3; \gamma_4}(s) } \end{equation}
and verify, by a similar computation to lemma \ref{lemgeneralresidues} that $\omega^{(4)}$ has at most  simple poles along exceptional divisors $\mathcal{E}_I$ 
and that its total residue is
$$\Res\, \omega^{(4)}(s)  = (\omega \otimes \omega^{(4)}) \circ \mu_{1357}  \circ (\Delta \otimes \Delta \otimes \Delta \otimes \Delta)  \ .
$$
where $\mu_{1357}(\gamma_1\otimes \ldots \otimes \gamma_8)= \gamma_1\gamma_3\gamma_5\gamma_7\otimes \gamma_2\otimes \gamma_4\otimes \gamma_6 \otimes  \gamma_8$.
The integrand   in $(\ref{renomeqmain2})$ is given by $\omega^{(4)}(s) \circ (R \otimes R) (G_1 \otimes G_2)$.
We have
\begin{eqnarray}
 \Res\, \omega^{(4)} \circ (R \otimes R)  & = &  (\omega \otimes \omega^{(4)}) \circ \mu_{1357}  \circ (\Delta \otimes \Delta \otimes \Delta \otimes \Delta) \circ (R\otimes R) \nonumber \\
 &=&  (\omega \otimes \omega^{(4)}) \circ  \mu_{14} \big( ( \mu_{13} \circ (\Delta \otimes \Delta)\circ R) \otimes  ( \mu_{13} \circ (\Delta \otimes \Delta)\circ R)\big) \nonumber \\
& = & (\omega \otimes \omega^{(4)}) \circ  \mu_{14} (1\otimes R \otimes 1 \otimes R) \nonumber \\
 & =  &   \omega(1) \otimes \omega^{(4)} (R  \otimes R) \quad =  \quad 0 \nonumber 
\end{eqnarray}
 The third equality follows from the fact that $\omega^{(4)}(\gamma_1\otimes \gamma_2\otimes \gamma_3 \otimes 1 ) = 0$ and theorem \ref{thmRmainproperty}, and the  vanishing of the last line follows from  $\omega(1)=0$.  Using the positivity of graph polynomials (since $s>0$),
 we prove as in corollary \ref{corconvergence}  that $(\ref{renomeqmain2})$ converges.
   For the proof of the main identity,   
write
$$\int_{\DD_{G_2}} \omega^{\ren}_{G_2}(s)= \lim_{\varepsilon\rightarrow 0} \Big( \int_{\DD^{\varepsilon}_{G_2}}  \omega_{G_2} + \sum_{\emptyset \neq\gamma_2 \subset G_2} a_{\gamma_2} \int_{\DD^{\varepsilon}_{G_2}}  \omega_{\gamma_2\otimes G_2 /\gamma_2} (s) \Big) \ ,$$
where $\DD^{\varepsilon}_{G_2} = \{ (\alpha_1:\ldots : \alpha_{|E(G_2)|})\in \Pro^{|E(G_2)|-1}: \alpha_i \geq  |\varepsilon|\}$.  
Multiply   this expression through by  
$$\int_{\DD_{G_1}} \omega^{\ren}_{G_1} (1) = \int_{[0,\infty]^{E(G_1)}} \pomega^{\ren}_{G_1}(\lambda)\ ,$$
which holds by lemma \ref{lemminormonotone} for any $\lambda >0$. Thus, for any choice of $\lambda_{\gamma_2}$'s, we have
$$ (\ref{renomeqmain})= \lim_{\varepsilon\rightarrow 0} \Big( \int_{X_{\varepsilon}}  \pomega^{\ren}_{G_1}(\lambda_{G_2}) \wedge\omega_{G_2} + \sum_{\gamma_2 \subset G_2} a_{\gamma_2} \int_{X_{\varepsilon}}  \pomega^{\ren}_{G_1}(\lambda_{\gamma_2}) \wedge  \omega_{\gamma_2\otimes G_2 /\gamma_2} (s) \Big) \ ,$$
where the domain of integration $X_{\varepsilon}=[0,\infty]^{E(G_1)}\times \DD^{\varepsilon}_{G_2}$. Now substitute
$$\lambda_{G_2}  = { s \, \phi_{G_2} \over  \psi_{G_2}}\quad \hbox{ and } \quad \lambda_{\gamma_2}  = {\Upsilon_{\gamma_2; G_2/\gamma_2 } (s)\over  \psi_{\gamma_2\cup G_2/\gamma_2}} \ .$$
Directly from the definitions, one verifies that
$$ \pomega_{\gamma_1 \otimes G_1/\gamma_1}(\lambda_{\gamma_2}) \wedge \omega_{\gamma_2\otimes G_2/\gamma_2} (s) = 
  {\phi_{G_1/ \gamma_1} \phi_{G_2/ \gamma_2} \over \psi_{G_1/\gamma_1}
 \psi_{G_2/\gamma_2}}
 {s \,  \prod_{e\in E(G_1)} d\alpha_e \wedge \Omega_{G_2}\over \big( \Upsilon_{\gamma_1\cup \gamma_2;G_2/\gamma_2}(s)\big)^2}\ .$$
By a similar calculation involving $\pomega_{G_1}(\lambda)$ and so on, we conclude that 
$$ (\ref{renomeqmain})= \lim_{\varepsilon\rightarrow 0} \Big( \int_{X_{\varepsilon}}    \sum_{\gamma_i\subset G_i} a_{\gamma_1} a_{\gamma_2}\, 
 {\phi_{G_1/ \gamma_1} \phi_{G_2/ \gamma_2} \over \psi_{G_1/\gamma_1}
 \psi_{G_2/\gamma_2}}
 {s \,  \prod_{e\in E(G_1)} d\alpha_e \wedge \Omega_{G_2}\over \big( \Upsilon_{\gamma_1\cup \gamma_2;G_2/\gamma_2}(s)\big)^2}  \Big)  \ .  $$
Projectivising  the integral on the right-hand side gives $(\ref{renomeqmain2})$.  Therefore the limit  $\varepsilon=0$ is finite and  equal to $(\ref{renomeqmain2})$. 
\end{proof}

\begin{defn} Let $\gamma, \Gamma$ be  labelled single-scale graphs with disjoint labels. Let
\begin{equation}  \label{omegabardef}
\overline{\omega}_{\gamma\otimes  \Gamma} (s) =  s {\partial \over \partial s}\, \omega_{\gamma \otimes \Gamma}(s) ={\phi_{\gamma}  \phi_{\Gamma}\over \psi_{\gamma} \psi_{\Gamma}} 
{s \over \big( s  \psi_{\gamma}\phi_{\Gamma} + \phi_{\gamma} \psi_{\Gamma}\big)^2}\ .
\end{equation}
\end{defn}
 In particular,  $\overline{\omega}_{1\otimes  \Gamma} (s) =0$.
 The following lemma is  immediate.
\begin{lem} \label{lemcjoinproperty} Let $\gamma_1,\ldots, \gamma_n$ be connected 1-scale graphs, and $\gamma=\cup_i \gamma_i$. Then 
\begin{equation} \label{Phiadditive}
{\phi_{\gamma}  \over \psi_{\gamma}} = \sum_{i=1}^n {\phi_{\gamma_i} \over \psi_{\gamma_i}} \ . 
\end{equation}
\end{lem}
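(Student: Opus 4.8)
The plan is to prove this directly from the product conventions recorded in equation $(\ref{productconvention})$, since the identity is purely formal and follows by a one-line cancellation. First I would recall that for a disjoint union $\gamma = \cup_{i=1}^n \gamma_i$ of connected $1$-scale graphs, the convention stipulates
\begin{equation*}
\psi_{\gamma} = \prod_{k=1}^n \psi_{\gamma_k} \qquad \text{and} \qquad \phi_{\gamma} = \sum_{i=1}^n \phi_{\gamma_i} \prod_{j\neq i} \psi_{\gamma_j}\ .
\end{equation*}
These are the only inputs needed; no further structure of the circular join or the Symanzik polynomials is required.

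Next I would substitute both expressions into the left-hand side of $(\ref{Phiadditive})$ and observe that the denominator $\psi_{\gamma} = \prod_{k=1}^n \psi_{\gamma_k}$ factors so as to cancel, term by term, against the partial products in the numerator. Concretely, for each fixed $i$ the $i$-th summand of $\phi_{\gamma}$ is $\phi_{\gamma_i}\prod_{j\neq i}\psi_{\gamma_j}$, and dividing it by $\prod_{k=1}^n \psi_{\gamma_k} = \psi_{\gamma_i}\prod_{j\neq i}\psi_{\gamma_j}$ leaves precisely $\phi_{\gamma_i}/\psi_{\gamma_i}$. Summing over $i$ yields $\sum_{i=1}^n \phi_{\gamma_i}/\psi_{\gamma_i}$, which is the claimed right-hand side.

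There is no genuine obstacle here: the statement is an elementary algebraic consequence of the multiplicativity of $\psi$ and the additive (Leibniz-type) convention for $\phi$, and the whole computation is a single cancellation. The only point worth flagging is that one must use the \emph{convention} $(\ref{productconvention})$ for non-connected graphs rather than the contraction-deletion normalization (which would set $\psi$ to vanish on disconnected graphs); as the excerpt already notes, these two conventions are incompatible, and it is the product convention that makes $\phi_{\gamma}/\psi_{\gamma}$ additive over connected components.
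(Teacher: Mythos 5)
Your proof is correct and matches the paper's intent exactly: the paper simply declares the lemma ``immediate,'' and the implicit justification is precisely the term-by-term cancellation from the product convention $(\ref{productconvention})$ that you carry out. Nothing further is needed.
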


\begin{prop} \label{propprerenormgroup}  Let $\Gamma$ be a connected single-scale graph with at most logarithmic subdivergences as above, and write $\Delta' \Gamma =\sum_{\gamma} \gamma \otimes \Gamma/\gamma$.  Then 
\begin{equation}  \label{propprenormgroupeq}
\int_{\DD_{\Gamma}} \overline{\omega}^{\ren}_{\Gamma}(s) =  \sum_{\gamma\subset \Gamma} \, \int_{\DD_{\gamma}} \omega_{\gamma}^{\ren} (1)  \times \int_{\DD_{\Gamma/\gamma}}  \omega_{\Gamma/\gamma}^{\ren} (s) \ . 
\end{equation} 
\end{prop}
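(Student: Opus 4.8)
The plan is to evaluate the right-hand side by feeding each of its summands into Proposition \ref{prop1}. Fix a divergent subgraph $\gamma\subset\Gamma$ occurring in $\Delta'\Gamma$ and apply that proposition with $G_1=\gamma$ and $G_2=\Gamma/\gamma$; since $\gamma$ and $\Gamma/\gamma$ inherit disjoint edge labels from $\Gamma$ and together partition $E(\Gamma)$, the product $\int_{\DD_\gamma}\omega^{\ren}_\gamma(1)\times\int_{\DD_{\Gamma/\gamma}}\omega^{\ren}_{\Gamma/\gamma}(s)$ is converted into a single convergent projective integral over $\DD_{\gamma\cup\Gamma/\gamma}=\DD_\Gamma$. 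As recorded in the proof of Proposition \ref{prop1}, the integrand produced this way is exactly $\omega^{(4)}(s)\circ(R\otimes R)(\gamma\otimes\Gamma/\gamma)$, with $\omega^{(4)}$ as in (\ref{omegafour}). Summing over the reduced coproduct and using linearity of $R$ and of the integral gives
\begin{equation}
\sum_{\gamma\subset\Gamma}\int_{\DD_\gamma}\omega^{\ren}_\gamma(1)\times\int_{\DD_{\Gamma/\gamma}}\omega^{\ren}_{\Gamma/\gamma}(s)=\int_{\DD_\Gamma}\omega^{(4)}(s)\circ(R\otimes R)\circ\Delta'(\Gamma).
\end{equation}

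It then suffices to identify the right side of this display with the left-hand side of the proposition, $\int_{\DD_\Gamma}\overline{\omega}^{\ren}_\Gamma(s)$, and I would aim to do so already at the level of integrands (an equality of rational $(N-1)$-forms, up to the global sign dictated by the conventions for the $a_\gamma$ and for $\Omega_\Gamma$). I would first fix that sign and check the identity in two base cases: for primitive $\Gamma$ both sides vanish, and for a single subdivergence the relations $\psi_1=1$, $\phi_1=0$ reduce $\omega^{(4)}(s)\circ(R\otimes R)(\gamma\otimes\Gamma/\gamma)$ to $\overline{\omega}_{\gamma\otimes\Gamma/\gamma}(s)$, which is to be compared with $\overline{\omega}^{\ren}_\Gamma(s)$. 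The general case rests on three structural inputs: the additivity Lemma \ref{lemcjoinproperty}, which collapses the product of $\phi/\psi$-factors of $\omega^{(4)}$ (built on a disjoint union) into the single $\phi/\psi$-factors occurring in $\overline{\omega}$ (\ref{omegabardef}); the factorization of $\Upsilon$ from Corollary \ref{corremainders}, which matches the denominator that $\omega^{(4)}$ attaches to a pair of flags with the denominator $\Upsilon_{\gamma_F;\Gamma/\gamma_F}(s)$ of (\ref{Upsilonsdef}) attached to a single flag $F$ of $\Gamma$; and the defining recursion of the preparation map, Theorem \ref{thmRmainproperty} (equation (\ref{eqnRmainproperty})), which is precisely what reorganizes $(R\otimes R)\circ\Delta'$ back into $R$ and pins down the signs $a_F=a_\gamma a_{\Gamma/\gamma}$.

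The combinatorial core is a bijection between the flags $F:\gamma_1\subset\cdots\subset\gamma_n\subset\Gamma$ indexing $\overline{\omega}^{\ren}_\Gamma(s)$ and the triples $(\gamma,F_1,F_2)$, where $\gamma$ runs over $\Delta'\Gamma$, $F_1$ is a flag in $\gamma$, and $F_2$ a flag in $\Gamma/\gamma$; here one uses the standard correspondence sending a subgraph of $\Gamma/\gamma$ to the subgraph of $\Gamma$ containing $\gamma$ of which it is the quotient. Under this bijection $\gamma_F=\gamma_{F_1}\cup\gamma_{F_2}$ and $\Gamma/\gamma_F=(\Gamma/\gamma)/\gamma_{F_2}$, which is what forces the $\Upsilon$-denominators to agree. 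I expect this term-by-term matching to be the main obstacle: one must reconcile the asymmetric first argument of $\Upsilon$ produced by $\omega^{(4)}$ (which carries the full cograph of $\gamma$, not merely its flag part) with the symmetric flag data on $\Gamma$, and simultaneously track the additivity-induced regrouping of numerators together with the signs inherited from $R$. Everything downstream of the matching is a direct substitution into (\ref{Upsilonsdef}) and (\ref{omegafour}).

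Convergence requires no separate argument: the right-hand side is finite by Theorem \ref{thmconv}, and the assembled integrand $\omega^{(4)}(s)\circ(R\otimes R)\circ\Delta'(\Gamma)$ is integrable by the same residue mechanism used in Proposition \ref{prop1}, namely that its total residue vanishes by property $\axi$ for $\omega^{(4)}$ combined with the main property of $R$. This also legitimizes the renormalization-group reading of the identity, in which $\int_{\DD_\Gamma}\overline{\omega}^{\ren}_\Gamma(s)=s\,\tfrac{d}{ds}f_\Gamma(s)$ by differentiation under the integral sign on compact subintervals of $(0,\infty)$, so that the proposition is exactly the statement feeding into the renormalization group equation.
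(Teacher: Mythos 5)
Your overall strategy coincides with the paper's: apply Proposition \ref{prop1} to each term of $\Delta'\Gamma$ so that the right-hand side becomes $\int_{\DD_\Gamma}\omega^{(4)}(s)\circ(R\otimes R)\circ\Delta'(\Gamma)$, and then identify this integrand with $\overline{\omega}^{\ren}_{\Gamma}(s)=\overline{\omega}(s)\circ R(\Gamma)$. The gap is in the identification step, where you reach for the wrong tools. Theorem \ref{thmRmainproperty} is not ``what reorganizes $(R\otimes R)\circ\Delta'$ back into $R$'': equation $(\ref{eqnRmainproperty})$ computes $\mu_{13}\circ(\Delta\otimes(\Delta-\mathrm{id}\otimes 1))\circ R$, i.e.\ it describes coproducts taken \emph{after} $R$ and is the engine for residue cancellation, not for reassembly. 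What you need is Lemma \ref{proponR}, $(-\mu_3\otimes \mathrm{id})\circ(R^m\otimes R^n)\circ\Delta'=R^{m+n+1}$, which identifies the terms of $(R\otimes R)\circ\Delta'(\Gamma)$ with four-fold tensors $\pm\,x^{(1)}\cdots x^{(m)}\otimes x^{(m+1)}\otimes x^{(m+2)}\cdots x^{(m+n+1)}\otimes x^{(m+n+2)}$ indexed by a flag of $\Gamma$ \emph{together with a marked position} $m+1$. Correspondingly, your ``combinatorial core'' is not a bijection between flags of $\Gamma$ and triples $(\gamma,F_1,F_2)$: a flag of length $k$ corresponds to $k$ such triples, and your formula $\gamma_F=\gamma_{F_1}\cup\gamma_{F_2}$ omits the cograph piece $\gamma/(\hbox{top of }F_1)$, which is exactly the second tensor slot of $\omega^{(4)}$. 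That multiplicity $k$ is not an obstacle to be cancelled but the whole point: by Lemma \ref{lemcjoinproperty}, summing the factor $\phi_{x^{(m+1)}}/\psi_{x^{(m+1)}}$ of $(\ref{omegafour})$ over the $k$ marked positions yields the single factor $\phi_{\gamma_F}/\psi_{\gamma_F}$ appearing in $\overline{\omega}_{\gamma_F\otimes\Gamma/\gamma_F}(s)$; this is identity $(\ref{Jsum})$ in the paper and is essentially the entire content of the matching.

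Relatedly, Corollary \ref{corremainders} plays no role and would not help: it controls leading-order behaviour along the coordinate subspaces $L_I$, whereas here the $\Upsilon$-denominators must (and do) agree \emph{identically as polynomials}. The first argument of the $\Upsilon$ produced by $\omega^{(4)}$, namely $\gamma_{F_1}\cup\bigl(\gamma/(\hbox{top of }F_1)\bigr)\cup\gamma_{F_2}$, is literally $\gamma_F$ for the concatenated flag, so the ``asymmetric first argument'' you flag as the main obstacle is a phantom. With Lemma \ref{proponR} substituted for Theorem \ref{thmRmainproperty} and the marked-flag bookkeeping corrected, your argument closes and becomes the paper's proof; as written, the central step does not go through.
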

\begin{proof}  If $a_1,\ldots, a_n\in H$ are graphs, then lemma  \ref{lemcjoinproperty}  implies that
\begin{equation} \label{Jsum} \sum_{1\leq  i \leq n}  \omega^{(4)}_{a_1\ldots a_{i-1} \otimes a_i \otimes a_{i+1}\ldots a_n \otimes b}(s) = \overline{\omega}_{a_1\ldots a_n \otimes b}(s)\  
\end{equation} 
where empty products  are defined to be $1$, e.g., $\omega^{(4)}_{1\otimes a_1 \otimes 1\otimes b} = \overline{\omega}_{a_1\otimes b}(s)$.
Viewed as an equality of elements in $\mathrm{Hom}(H\otimes_{\Q} H, \R)$, proposition \ref{prop1} states that:
$$\mu \circ (f_{G_1}(1)\otimes f_{G_2}(s)) = \int_{\DD_{G_1\cup G_2}} \omega^{(4)}(s)  \circ (R\otimes R) \circ  (G_1\otimes G_2) \ . $$ 
Apply both sides  to $\Delta'(\Gamma)\in H\otimes_{\Q} H$. The left-hand side equates to the right-hand side of 
$(\ref{propprenormgroupeq})$. 
By $(\ref{Jsum})$ and  lemma \ref{proponR}, 
the integrand on the right-hand side is
$$\omega^{(4)}(s) \circ (R\otimes R) \circ \Delta' (\Gamma)=\overline{\omega}(s) \circ R(\Gamma)=\overline{\omega}^{\ren}_{\Gamma}(s)\ ,$$
which therefore gives equation  $(\ref{propprenormgroupeq})$.
\end{proof}
\subsection{Group equations with angular dependence}
 The proof of the group equations essentially only uses $(\ref{Phiadditive})$ together with some formal properties of the map $R$. As a result the proof of the renormalization group equations also goes through
in the case when $\phi$ has non-trivial angular dependencies.
 
\section{Quadratic subdivergences}   \label{sectQuadratic}
We treat the general case of  graphs in massless $\phi^4$ with arbitrary subdivergences.

 \subsection{Single-scale graphs with quadratic subdivergences} 
 Let $G$ be a connected graph in $\phi^4$ theory, and let $q\subsetneq G$ be a connected, 1PI  subgraph with $sd(q)=1$.  
 The crucial feature of quadratic subdivergences is that, by remark \ref{remvertexnumbers},  they have exactly two $3$-valent vertices, 
 and   two external edges, or \emph{connectors}, which we denote by  $\{e_q,f_q\}\in G\backslash q$.  
 Thus $q$ inherits a unique single-scale structure.

   We define  a \emph{squashing} of $G$ 
  to be the choice, for every   $q$ a 1PI connected quadratic subgraph, of an ordered pair of connectors $(e_q,f_q)$ such that $f_{q_1}\neq f_{q_2}$ for all $q_1\neq  q_2$. It is easy to verify (e.g. as a consequence of corollary \ref{qdontoverlap} below) that  such a structure always exists.  We define the \emph{squashed} graph  by:
   \begin{equation}
  \label{squashdef}
 \overline{G} = G/ \cup_{q} f_q
  \end{equation}
   Note that $\overline{G}$ has vertices of arbitrary degree. The image $\overline{q}$ in $\overline{G}$ of a 
 quadratic subgraph $q$ in $G$ can have $0$ or $1$ connectors.   
 
 \begin{defn} A  \emph{single-scale graph with quadratic subdivergences}  is a labelled graph $G$ with a choice of squashing. A \emph{tadpole-free subdivergence} of $G$
 is a subgraph $\gamma\subsetneq G$ such that $\gamma$ is 1PI,  divergent and 
 \begin{eqnarray} \label{tadpolefree}
 \varepsilon_q(\gamma) = 0 \hbox{ for all  }  q\supsetneq \gamma \ ,
 \end{eqnarray}
where $q$ ranges over the set of  quadratic ($sd(q)=1$) connected 1PI subgraphs of $G$. 
  \end{defn} 
    The condition $(\ref{tadpolefree})$ ensures that in the cograph $G/\gamma$ the connectors of $q$ remain unjoined (no tadpoles are spawned). 
 To construct a Hopf algebra from this, we proceed as before. Let $G$ be a $1$-scale graph with quadratic subdivergences and 
 suppose that for every divergent tadpole-free subgraph $\Gamma$ of $G$ there is a choice of two distinguished connectors giving it a $1$-scale 
structure, in a such a way that for all $\gamma \subset \Gamma$ divergent and tadpole-free, we have
\begin{equation}\label{noepsilonsquadcase}
\varepsilon_{\Gamma}(\gamma)=0 \ .
\end{equation}
This guarantees that the cographs $\Gamma/\gamma$ have a well-defined single-scale structure.
 
  For any such  labelled  single-scale graph $G$, let   $H_G$ denote the (loop-number graded) $\Q$-vector space spanned by its tadpole-free divergent  subgraphs and  their cographs. 
We therefore have a well-defined map   
   \begin{eqnarray} 
 \Delta  : H_G &  \To &  H_G\otimes_{\Q} H_G \nonumber \\
 \Gamma & \mapsto & \sum_{\gamma} \gamma \otimes \Gamma/\gamma
 \end{eqnarray} 
 where the sum is over all tadpole-free subdivergences $(\ref{tadpolefree})$.  It is well known that this map is  coassociative and hence defines a Hopf algebra.
  
 \subsection{General massless Feynman rules} 
  With $G$ as above, and  $\gamma$,    $\Gamma$ in $H_G$, where $\gamma$ is not necessarily connected, define differential forms as follows:
 \begin{eqnarray} \label{quadforms}
 \omega_{G} & = & \Big( \prod_{q\subseteq G} {\phi_{\overline{q}}  \over \psi_{\overline{q}}}\Big) \, {\Omega_{\overline{G}}  \over \psi_{\overline{G}}^2}  \ ,  \\ 
 \omega_{\gamma \otimes \Gamma }(s) & = & \Big( \prod_{q\subseteq \gamma} {\phi_{\overline{q}}  \over \psi_{\overline{q}}}\Big)  \Big( \prod_{q\subseteq \Gamma} {\phi_{\overline{q}}  \over \psi_{\overline{q}}}\Big) { s\, \phi_{\overline{\Gamma}}  \over  \psi_{\overline{\gamma}} \psi_{\overline{\Gamma}}^2 \Upsilon_{\overline{\gamma};\overline{\Gamma}}(s)   } \,  \Omega_{\overline{\gamma} \cup \overline{\Gamma}}   \ ,    \nonumber 
    \end{eqnarray}
    where the products are over all connected 1PI quadratically-divergent subgraphs $q$.
    Note that  the forms are  in the  edge variables of  $\overline{\gamma}, \overline{\Gamma}$, and not $\gamma, \Gamma$. 
  Obviously,
  \begin{equation}\label{1tensgammaforq}
  \omega_{\gamma\otimes 1}(s) = 0 \quad  \hbox{ and } \quad  \omega_{1\otimes \Gamma}(s) = \omega_{\Gamma}\ .
  \end{equation}
 Define  renormalized Feynman rules as follows.
 Given $\Gamma \in H_G$, write as usual
 $$\Delta(\Gamma) = 1 \otimes \Gamma + \sum_{i=1}^n a_i \, \gamma_i \otimes \Gamma/\gamma_i \in H_G \otimes_{\Q} H_G\ ,$$
 and let $\omega^{\ren}=\omega \circ R$, i.e.,  define 
  $$\omega^{\ren}_{\Gamma} = \omega_{\Gamma} + \sum_{i=1}^n a_i  \, \omega_{\gamma_i \otimes \Gamma/\gamma_i}(s)\ .$$
 The renormalized Feynman integral is then 
 \begin{equation}\label{frenq}
 f_{\Gamma}(s)=\int_{\DD_{\overline{\Gamma}}} \omega^{\ren}_{\Gamma}(s)\ .\end{equation}
 We show below that $(\ref{frenq})$ converges and satisfies renormalization group equations (theorem \ref{thmquadconv} and \S\ref{sectrenormgroupquad}).
 As previously, we write $f_{\Gamma}= f_{\Gamma}(1)$ for the lowest log term. 
 \begin{example} Consider the graphs with nested quadratic subdivergences below.
 
\begin{center}
\fcolorbox{white}{white}{
  \begin{picture}(342,100) (367,1)
    \SetWidth{0.5}
    \SetColor{Black}
    \Text(444,99)[lb]{{\Black{$5$}}}
    \SetWidth{1.0}
    \Arc(444,41)(39,270,630)
    \Line(405,41)(483,41)
    \Arc(444,80)(16,270,630)
    \Arc(570,41)(24,270,630)
    \Arc(658,41)(24,270,630)
    \Line(546,41)(594,41)
    \Line(634,41)(683,41)
    \Vertex(658,65){2}
    \Text(405,65)[lb]{\small{\Black{$1$}}}
    \Text(484,65)[lb]{\small{\Black{$2$}}}
    \Text(444,43)[lb]{\small{\Black{$3$}}}
    \Text(444,8)[lb]{\small{\Black{$4$}}}
    \Text(444,82)[lb]{\small{\Black{$6$}}}
    \Text(444,67)[lb]{\small{\Black{$7$}}}
    \Text(380,33)[lb]{\Large{\Black{$\Gamma$}}}
    \Text(532,33)[lb]{{\Black{$\gamma$}}}
    \Text(570,67)[lb]{\small{\Black{$5$}}}
    \Text(570,44)[lb]{\small{\Black{$6$}}}
    \Text(570,22)[lb]{\small{\Black{$7$}}}
    \Text(636,60)[lb]{\small{\Black{$1$}}}
    \Text(680,60)[lb]{\small{\Black{$2$}}}
    \Text(658,43)[lb]{\small{\Black{$3$}}}
    \Text(658,22)[lb]{\small{\Black{$4$}}}
    \Text(611,33)[lb]{{\Black{$\Gamma/\gamma$}}}
  \end{picture}
}
\end{center}

 A squashing of this graph is determined by choosing one of the connectors of the subgraph $\gamma$. Let it be edge $1$. The squashed 
 graphs $\overline{\Gamma}, \overline{\gamma}, \overline{\Gamma}/\overline{\gamma}$ are obtained from the graphs above by contracting edge $1$.
 The renormalized integral is:
 $$f_{\Gamma}(s) = \int_{\DD_{\overline{\Gamma}}}\omega^{\ren}_{\Gamma}(s) = \int_{\DD_{\overline{\Gamma}}}\Big( {\phi_{\overline{\gamma}} \over \psi_{\overline{\gamma}} }  {\phi_{\overline{\Gamma}} \over \psi^3_{\overline{\Gamma}} }   -   
  {\phi_{\overline{\gamma}} \over \psi^2_{\overline{\gamma}} }   {\phi^2_{\overline{\Gamma}/\overline{\gamma}} \over \psi^3_{\overline{\Gamma}/\overline{\gamma}} } {s \over \Upsilon_{\overline{\gamma};\overline{\Gamma}/\overline{\gamma}}(s)}\Big) \Omega_{\Gamma} $$
 where $\Upsilon_{\overline{\gamma};\overline{\Gamma}/\overline{\gamma}}(s)= \psi_{\overline{\gamma}}  \phi_{\overline{\Gamma}/ \overline{\gamma}}\, s+ \psi_{\overline{\Gamma}/ \overline{\gamma}}  \phi_{\overline{\gamma}}$, and all other polynomials are:
 \begin{eqnarray}
  \psi_{\overline{\gamma}} = \alpha_5\alpha_6+\alpha_5\alpha_7+\alpha_6\alpha_7 \ , & & \phi_{\overline{\gamma}} =\alpha_5\alpha_6\alpha_7  \ ,\nonumber \\
  \psi_{\overline{\Gamma}/ \overline{\gamma}} = \alpha_2\alpha_3+\alpha_2\alpha_4+\alpha_3\alpha_4 \ , & &  \phi_{\overline{\Gamma}/ \overline{\gamma}} = \alpha_2\alpha_3\alpha_4  \ ,\nonumber  \\
  \psi_{\overline{\Gamma}} = \psi_{\overline{\gamma}}\psi_{\overline{\Gamma}/\overline{\gamma}} + (\alpha_3+\alpha_4)\alpha_5\alpha_6\alpha_7 \ , &&
  \phi_{\overline{\Gamma}} = \psi_{\overline{\gamma}}\phi_{\overline{\Gamma}/\overline{\gamma}} + \alpha_3\alpha_4\alpha_5\alpha_6\alpha_7 \ . \nonumber
 \end{eqnarray}
Note that  in general, the graph polynomial of $\overline{\Gamma}$ is obtained from that of $\Gamma$ by setting the Schwinger parameters of the  squashed edges to $0$.
  \end{example}

\subsection{The single-scale Hopf algebra in the general case}
The structure of the Hopf algebra is particularly simple:  quadratic subdivergences  have no non-trivial overlaps with any other subdivergences and can be separated off.

\begin{lem} \label{lemqhasnoepsilon} Let $q_1 \subsetneq q_2$ be nested quadratic subdivergences. Then $\varepsilon_{q_2}(q_1)=0$.
\end{lem}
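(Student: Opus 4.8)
The plan is to unpack the definition of $\varepsilon_{q_2}(q_1)$ and derive a contradiction from a counting of external legs (connectors). Since $q_2$ is a connected $1$PI quadratic subgraph of a $\phi^4$ graph, Remark \ref{remvertexnumbers} gives that it has exactly two $3$-valent vertices, say $a$ and $b$; these are precisely the distinguished vertices fixing its single-scale structure, and its two connectors $e_{q_2},f_{q_2}$ attach to $a$ and $b$. By definition of $\varepsilon$, the assumption $\varepsilon_{q_2}(q_1)=1$ means the edge set of $q_1$ meets both distinguished vertices of $q_2$, i.e. $a,b\in V_{q_1}$. My goal is to show that this forces $q_1$ to have at least three connectors, contradicting the fact (recalled at the start of \S\ref{sectQuadratic}) that a connected quadratic subgraph has exactly two.

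First I would record the elementary count: for a connected subgraph $q$ of a graph whose vertices are all $4$-valent, the number of external legs is $\sum_{v\in V_q}(4-\deg_q(v)) = 4 V_q - 2 E_q$, which by Euler's relation $sd(q)=E_q-2V_q+2$ equals $4 - 2\, sd(q)$. Hence both $q_1$ and $q_2$, being quadratic, have exactly two connectors each.

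Next, assuming $a,b\in V_{q_1}$, I would observe that $e_{q_2}$ and $f_{q_2}$ lie outside $q_2\supseteq q_1$ yet are incident to $a,b\in V_{q_1}$; they are therefore two distinct external legs of $q_1$. Since $q_1$ has only two external legs, these are the only ones, so every other edge of the ambient graph incident to $a$ or to $b$ must already belong to $E_{q_1}$. In particular all internal $q_2$-edges at $a$ and at $b$ lie in $q_1$, so $a$ and $b$ have identical edge-neighbourhoods in $q_1$ and in $q_2$. Finally I exploit that $q_1\subsetneq q_2$ is a \emph{proper} subgraph, so $E_{q_1}\subsetneq E_{q_2}$: if no edge of $E_{q_2}\setminus E_{q_1}$ were incident to $V_{q_1}$, then $V_{q_1}$ (which contains $a,b$) would be disconnected within $q_2$ from the nonempty set carrying the remaining edges, contradicting connectedness of $q_2$. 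Thus some $\eta\in E_{q_2}\setminus E_{q_1}$ is incident to $V_{q_1}$ and, not lying in $E_{q_1}$, is a third external leg of $q_1$ — the desired contradiction. Hence $\varepsilon_{q_2}(q_1)=0$.

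The step requiring the most care is the external-leg bookkeeping: one must keep rigorously distinct the edges of the ambient graph, of $q_2$, and of $q_1$, and in particular verify that the connectors $e_{q_2},f_{q_2}$ genuinely count as connectors of $q_1$ rather than being internal to it. Once that is pinned down, the connectedness argument closing the proof is routine.
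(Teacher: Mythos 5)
Your proof is correct and is essentially the paper's argument in a different bookkeeping: the paper counts the two $3$-valent vertices of the quadratic subgraph $q_1$ (via Remark \ref{remvertexnumbers}) where you count its $4V_{q_1}-2E_{q_1}=4-2\,sd(q_1)=2$ external half-edges, and these are the same quantity. Both proofs then conclude that all other vertices of $q_1$ are saturated so that $q_1\subsetneq q_2$ is impossible; your explicit connectivity step at the end just spells out what the paper asserts directly.
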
 
\begin{proof}
Let  $v_1,v_2$ denote the two 3-valent vertices of $q_2$ and denote its connectors by $e_1,e_2$.  If $\varepsilon_{q_2}(q_1)=1$,  there exists a path from $v_1$ to $v_2$ which is contained in $q_1$ and so  $q_1$ contains $v_1,v_2$. Since $q_1$ has exactly two 3 valent-vertices (which must be $v_1,v_2$) and  all  other vertices of $q_1$ are 4-valent   there can be no edge $e$ of $q_2$ which does not already lie in $q_1$, and therefore $q_1=q_2$.   So $q_1\subsetneq q_2$ implies 
that $\varepsilon_{q_2}(q_1)=0$.
\end{proof}

Quadratic subdivergences are always tadpole-free.

 \begin{lem}\label{lemIandqnesting} Let $I$ be  a connected,  divergent 1PI subgraph of $G$. Then the  following  two statements  are equivalent:
 \begin{enumerate}
\item  $I$ is tadpole-free $(\ref{tadpolefree})$ , 
\item For all  connected 1PI quadratic  subgraphs $q$  of  $G$, one of the following holds:
\begin{equation} \label{Inestedq}
\left\{
\begin{array}{lll}
  (i)&  I \cap q = \emptyset \ ,   \\
 (ii) &   I \supseteq q  \ , \\
 (iii) &   I \subsetneq q \hbox{ and } \varepsilon_q(I)=0    \ .
\end{array}
\right.
 \end{equation}
\end{enumerate} 
 \end{lem}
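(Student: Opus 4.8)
The plan is to prove the two implications separately. The reverse direction is purely formal, while the forward direction rests on a single geometric fact: that a quadratic subgraph cannot genuinely overlap a divergent $1$PI subgraph.

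First I would dispatch $(2)\Rightarrow(1)$. Recall that $I$ is tadpole-free precisely when $\varepsilon_q(I)=0$ for every connected $1$PI quadratic subgraph $q\supsetneq I$. So fix such a $q$, i.e.\ one with $I\subsetneq q$. Since $I$ is divergent it is non-empty, hence $I\cap q=I\neq\emptyset$ and alternative $(i)$ of $(\ref{Inestedq})$ fails; and $I\subsetneq q$ strictly excludes $I\supseteq q$, so $(ii)$ fails too. By hypothesis $(2)$ the remaining alternative $(iii)$ must hold, which gives exactly $\varepsilon_q(I)=0$. As $q$ was an arbitrary quadratic subgraph strictly containing $I$, this is the definition $(\ref{tadpolefree})$ of tadpole-freeness.

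For $(1)\Rightarrow(2)$, fix a connected $1$PI quadratic subgraph $q$. If $I\cap q=\emptyset$ we are in case $(i)$, and if $I\supseteq q$ we are in case $(ii)$, so I may assume $I\cap q\neq\emptyset$ and $I\not\supseteq q$. The claim is then that necessarily $I\subsetneq q$, after which the tadpole-free hypothesis $(1)$, applied to the quadratic subgraph $q\supsetneq I$, yields $\varepsilon_q(I)=0$, i.e.\ case $(iii)$. Thus everything reduces to the \emph{no-overlap} statement: a divergent connected $1$PI subgraph $I$ that shares an edge with $q$ but does not contain $q$ must be contained in $q$. This is exactly the content of corollary \ref{qdontoverlap}, which may be invoked directly; below I indicate a self-contained argument, since this is where all the work lies.

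To establish no-overlap I would exploit the rigidity of quadratic subgraphs. By remark \ref{remvertexnumbers}, $q$ has exactly two $3$-valent vertices $v_1,v_2$, every other vertex of $q$ being $4$-valent in $q$; as every internal vertex of $G$ is $4$-valent in $\phi^4$, this means $q$ meets $G\setminus q$ only at $v_1,v_2$, the only edges crossing the interface being the two connectors $e_q,f_q$. Suppose $I$ overlaps $q$, so that $A:=I\cap q$ and $C:=I\setminus q$ are both non-empty. Because $I$ is connected and crosses the interface while the only crossing edges are $e_q,f_q$, deleting either connector alone would disconnect $A$ from $C$; since $I$ is $1$PI (bridgeless) this forces both $e_q,f_q\in I$, so $A$ and $C$ are glued along at most the two vertices $v_1,v_2$. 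Superficial degree is supermodular under such gluings: writing $c_\bullet$ for numbers of connected components one has $sd(I)=sd(A)+sd(C)+2\,|V(A)\cap V(C)|+2-2c_A-2c_C$, and the analogous identity for $q=A\cup(q\setminus I)$, together with $sd(q)=1$ and the $2$-edge-connectivity of $q$, controls $sd(A)$; combined with the bound $sd\leq 1$ on strict subgraphs this should force $sd(I)\leq-1$, contradicting $sd(I)\geq 0$. The hard part is precisely this last loop-count: the bare counting inequality does not by itself kill the \emph{clean} overlap (where $A$ is connected and $I,q$ share no superfluous vertices), and to exclude it one must feed in the valency rigidity of $q$ — exactly two $3$-valent vertices, as in the proof of Lemma \ref{lemqhasnoepsilon} — to rule out a strict subgraph $A\subsetneq q$ that meets both $v_1,v_2$ and is itself quadratic. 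Once no-overlap is secured, the trichotomy $I\cap q=\emptyset$, $q\subseteq I$, $I\subsetneq q$ matches cases $(i)$, $(ii)$, $(iii)$ as above, completing the forward direction.
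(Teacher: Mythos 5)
Your proof of $(2)\Rightarrow(1)$ is correct and is exactly the paper's (one-line) argument. The forward direction, however, has a genuine gap, and it is located exactly where you place "all the work". First, you cannot invoke Corollary \ref{qdontoverlap}: in the paper that corollary is \emph{deduced from} Lemma \ref{lemIandqnesting} (together with Lemma \ref{lemqhasnoepsilon}), so citing it here is circular; moreover it only concerns two \emph{quadratic} subgraphs, whereas your $I$ is merely divergent. Second, the unconditional "no-overlap" statement you try to prove by power counting is not available: the dangerous configuration is the one where $I$ contains \emph{both} connectors $e_q,f_q$ of $q$ together with a proper part $I_q=I\cap q\subsetneq q$ and a nonempty piece $A=I\setminus(q\cup\{e_q,f_q\})$ outside. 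In that configuration $I$ is not contained in $q$ at all (it contains $e_q,f_q\notin q$), so your reduction "necessarily $I\subsetneq q$" is already wrong; the case must be shown to be \emph{void}, and power counting alone does not kill it --- as you yourself concede. The paper excludes it by using hypothesis $(1)$ a second time: since $\varepsilon_q(I_q)=1$, Lemma \ref{lemqhasnoepsilon} forbids $I_q$ from being quadratic, so $sd(I_q)\leq 0$ and hence $sd(A)\geq 0$ from $sd(A)+sd(I_q)=sd(I)\geq 0$; then $Q=I\cup q$ satisfies $sd(Q)=sd(A)+sd(q)\geq 1$, i.e.\ $Q$ is a quadratic $1$PI subgraph strictly containing $I$ with $\varepsilon_Q(I)=1$, contradicting tadpole-freeness of $I$. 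Your sketch never brings hypothesis $(1)$ to bear on this overlap configuration (you only use it in the nested case $I\subsetneq q$), so the argument cannot be completed as written; the "supermodularity" identity you propose would at best reprove the degree bookkeeping the paper already does, but the contradiction must come from $(\ref{tadpolefree})$ applied to the larger graph $Q$, not from forcing $sd(I)\leq -1$.

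The parts of your forward direction that do survive are the easy cases: if $I$ avoids both connectors then connectedness of $I$ and the fact that $q$ meets $G\setminus q$ only through $e_q,f_q$ give $I\cap q=\emptyset$ or $I\subseteq q$, and in the latter (strict) case hypothesis $(1)$ gives $\varepsilon_q(I)=0$ directly; and $1$PI-ness of $I$ forces $I\cap\{e_q,f_q\}$ to have $0$ or $2$ elements, which is the correct case split. What is missing is precisely the treatment of the two-connector case.
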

 \begin{proof} Clearly $(2)$ implies $(1)$.  In the other direction, let $q$ be a connected quadratic  1PI subgraph of $G$.  Denote its connectors by $\{e_1,e_2\}$. Since $I$ is 1PI, the intersection
 $I\cap \{e_1,e_2\}$  consists of  $0$ or $2$ elements. Suppose that  $I\cap \{e_1,e_2\}=\emptyset$.  Since $I$ is connected,  either $I\cap q=\emptyset$ or  $I$ is contained in $q$, and  $(2)$ holds. 
Now suppose that  $e_1,e_2\in I$.   In this case, write $I$ as a two-edge join 
 $I= I_q \cup \{e_1,e_2\} \cup A$, where $I_q=I\cap q$ and $A$ is defined to be the complement of $q\cup \{e_1,e_2\}$ in $I$. Suppose that    $I_q$ is strictly contained in $q$, 
 for otherwise, $I\supseteq q$ and $(2)$ holds. By power counting,  
 \begin{equation}\label{sdinproof}
 sd(A)+sd(I_q)=sd(I)\geq 0\ .
 \end{equation}
   If $I_q$ is quadratic, then since $I$ is connected and  contains the connectors of $e_1,e_2$,  $\varepsilon_q(I_q)=1$, which contradicts lemma \ref{lemqhasnoepsilon}. Therefore $I_q$ is at most logarithmically divergent  which implies  by $(\ref{sdinproof})$ that $sd(A)\geq 0$. Therefore the graph  $Q=I \cup q$   is   quadratically divergent, since $sd(Q)=sd(A)+sd(q)\geq 1$, and therefore $sd(Q)=1$.     But $Q$ strictly contains $I$ and satisfies $\varepsilon_Q(I)=1$, which contradicts $(1)$. 
 Therefore we must have had $I_q=q$, which completes the proof.   \end{proof}

\begin{cor}  \label{qdontoverlap} Let $q_1,q_2$ be connected $1PI$ quadratic subdivergences in $G$. Then they cannot overlap non-trivially: either $q_1 \cap q_2=\emptyset$, $q_1\subseteq q_2$,  or  $q_2 \subseteq q_1$. 
\end{cor}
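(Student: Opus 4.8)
The plan is to derive this directly from Lemma~\ref{lemIandqnesting}, which already classifies how a tadpole-free divergent subgraph sits relative to any quadratic subgraph. The only preliminary point needed is that a quadratic subdivergence is itself tadpole-free, so that the lemma is applicable to it.

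First I would record that observation. Let $q_1$ be any connected 1PI quadratic subdivergence of $G$. To check the tadpole-free condition $(\ref{tadpolefree})$, I must verify that $\varepsilon_q(q_1)=0$ for every connected 1PI quadratic subgraph $q \supsetneq q_1$. But this is exactly the content of Lemma~\ref{lemqhasnoepsilon} applied to the nesting $q_1 \subsetneq q$. Hence $q_1$ satisfies $(\ref{tadpolefree})$ and is tadpole-free. (This is the sentence ``Quadratic subdivergences are always tadpole-free'' made precise.)

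Next I would apply Lemma~\ref{lemIandqnesting} with $I = q_1$, which is connected, divergent, 1PI and now known to be tadpole-free, and with the quadratic subgraph taken to be $q = q_2$. Since statement (1) of that lemma holds for $I=q_1$, statement (2) holds as well, so one of the three alternatives $(\ref{Inestedq})$ must occur: either $q_1 \cap q_2 = \emptyset$, or $q_1 \supseteq q_2$, or $q_1 \subsetneq q_2$ with $\varepsilon_{q_2}(q_1)=0$. In the first case the subgraphs are disjoint; in the second $q_2 \subseteq q_1$; in the third $q_1 \subseteq q_2$. In every case a non-trivial overlap, in which the two share edges but neither is contained in the other, is excluded, which is precisely the assertion.

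I do not expect any genuine obstacle here: the combinatorial work has been front-loaded into Lemmas~\ref{lemqhasnoepsilon} and~\ref{lemIandqnesting}, and the corollary is a bookkeeping consequence. The only place to be mildly careful is the symmetry of the conclusion: the classification is phrased with $q_1$ playing the role of the ``inner'' subgraph $I$, but the three outcomes produced are already symmetric in $q_1,q_2$ (disjoint, $q_1\subseteq q_2$, or $q_2\subseteq q_1$), so no separate argument with the roles of $q_1$ and $q_2$ interchanged is required.
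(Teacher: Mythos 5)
Your proof is correct and follows exactly the route the paper intends: the paper's own proof simply cites Lemmas~\ref{lemqhasnoepsilon} and~\ref{lemIandqnesting}, and your write-up is precisely the unpacking of that citation (first using Lemma~\ref{lemqhasnoepsilon} to verify that $q_1$ is tadpole-free, then applying the trichotomy of Lemma~\ref{lemIandqnesting} with $I=q_1$, $q=q_2$). Your remark that the three outcomes are already symmetric in $q_1,q_2$ is a sensible extra check, but no further argument is needed.
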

\begin{proof}
This follows immediately from  lemmas \ref{lemqhasnoepsilon} and \ref{lemIandqnesting}.
  \end{proof}

One can deduce from this the fact that 
 $\Delta$ is coassociative.
 
\subsection{Inflating  a squashed graph}
The Hopf algebra is phrased in terms of the original graph $G$, but the Feynman rules are expressed in terms of the squashed graph  $\overline{G}$.  To compute the poles of the forms $(\ref{quadforms})$ in terms of $H_G$, we therefore require a correspondence between subgraphs of $\overline{G}$
 and certain subgraphs of $G$. 
 
 \begin{defn} Let $G$ be as above. 
 If  $I\subset \overline{G}$, let $Q_I$ denote the set of connected 1PI quadratic subgraphs $q\subset G$ such that $\varepsilon_{\overline{q}}(I)=1$.  Define $\Iup$ to be the smallest subgraph of $G$ which contains $I$ and satisfies
  \begin{equation} \label{Iupdef}
 | \Iup \cap \{e_q,f_q\} | \in \{ 0 ,   2\} \quad  \hbox{ for all } q \in Q_I \ .
 \end{equation} 
 It is clear  that  $\overline{\Iup}=I$,  but note that  $\Iup$ is not necessarily  connected, even if $I$ is. We shall call $\Iup$ the \emph{inflation} of the graph $I \subset \overline{G}$.
  \end{defn}

 For divergent graphs, the condition $(\ref{Iupdef})$ is equivalent to being $1PI$.
 \begin{lem} \label{inflationis1PI} Let $J\subset G$ be divergent. Then 
 $$J=(\overline{J})^{\ell} \quad \Leftrightarrow  \quad  J  \hbox{ is  1PI} \ .$$
  \end{lem}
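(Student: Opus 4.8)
The plan is to deduce both implications from the equivalence noted just above the lemma: for a divergent subgraph $K\subseteq G$, the inflation condition $(\ref{Iupdef})$ holds if and only if $K$ is 1PI. First I would record the bookkeeping fact that, since $\overline{G}=G/\cup_q f_q$ is obtained by contracting the distinguished connectors $f_q$, the edges of $\overline{G}$ are precisely the edges of $G$ other than the $f_q$. Hence any subgraph of $G$ with a prescribed image in $\overline{G}$ is determined up to the collection of squashed edges $f_q$ it contains. In particular $J$ and $(\overline{J})^{\ell}$ both have image $I:=\overline{J}$ and therefore coincide away from the $f_q$, so the entire statement reduces to comparing which squashed edges belong to $J$ and which to the inflation. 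Throughout I would treat the connected case and recover the general (disjoint union) case componentwise, using that quadratic subdivergences are nested or disjoint by Corollary $\ref{qdontoverlap}$.

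For the key equivalence, the easy direction ($K$ 1PI $\Rightarrow (\ref{Iupdef})$) is immediate: each connected quadratic $q$ has exactly two connectors $\{e_q,f_q\}$ by Remark $\ref{remvertexnumbers}$, and these form a two-edge cut isolating $q$ in $G$; if $K$ contained exactly one of them that edge would be a bridge of $K$, contradicting 1PI, so $|K\cap\{e_q,f_q\}|\in\{0,2\}$ for every quadratic $q$, a fortiori for $q\in Q_I$. For the reverse direction I would argue by contraposition. If the connected divergent graph $K$ is not 1PI it has a bridge $e$; cutting $e$ exhibits a side attached to the rest of $G$ by $e$ together with at least one further edge (since $G$ is 1PI), and the minimal $1PI$ two-point subgraph this produces is quadratically divergent by the $\phi^4$ vertex count $(\ref{vertexcount})$. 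Thus $e$ is one connector of a connected 1PI quadratic subgraph $q$, the minimal such being unique by Corollary $\ref{qdontoverlap}$, and $K$ meets $\{e_q,f_q\}$ in the single edge $e$. Moreover a bridge on $K$ means $K$ straddles $q$, whence $\varepsilon_{\overline{q}}(I)=1$, i.e.\ $q\in Q_I$, so $(\ref{Iupdef})$ is violated. Identifying every failure of 1PI with exactly one connector of some $q\in Q_I$, and translating ``straddling'' into the value of $\varepsilon_{\overline{q}}$, is the crux of the argument and the step I expect to be the main obstacle.

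Granting the equivalence, the lemma follows quickly. If $J=(\overline{J})^{\ell}$ then the right-hand side is divergent and satisfies $(\ref{Iupdef})$ by construction, so the key lemma gives that $J$ is 1PI. Conversely, assume $J$ is 1PI. Then $J$ satisfies $(\ref{Iupdef})$ and contains the edges of $I$, so it is one of the competitors in the definition of the smallest such subgraph; minimality yields $(\overline{J})^{\ell}\subseteq J$. For the reverse inclusion I would check that $J$ carries no superfluous squashed edge: if $f_q\in J$ then 1PI-ness together with the easy direction forces $e_q\in J$ as well, so $J$ meets both $3$-valent vertices of $q$; passing to $\overline{J}=I$ this shows $I$ meets both distinguished vertices of $\overline{q}$, i.e.\ $\varepsilon_{\overline{q}}(I)=1$ and $q\in Q_I$. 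Then $(\ref{Iupdef})$ forces $f_q\in(\overline{J})^{\ell}$, whence $J\subseteq(\overline{J})^{\ell}$ and equality holds.
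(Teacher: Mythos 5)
Your overall architecture --- reduce the lemma to the equivalence ``condition $(\ref{Iupdef})$ holds iff the divergent subgraph is 1PI'', prove the easy direction by observing that the two connectors of a quadratic subgraph form a two-edge cut, and obtain $(\overline{J})^{\ell}\subseteq J$ from minimality --- is the paper's argument, and your extra care over the reverse inclusion $J\subseteq(\overline{J})^{\ell}$ is sound. The gap is in the contrapositive direction of your key equivalence, exactly the step you flag as the crux. You must produce, from a bridge $e$ of a connected divergent non-1PI graph $K$, a connected 1PI quadratic subgraph $q$ of $G$ having $e$ as a connector and with $\varepsilon_{\overline{q}}(\overline{K})=1$. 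You try to extract $q$ from the way one side of the bridge sits inside the \emph{ambient} graph $G$ (``the minimal 1PI two-point subgraph this produces is quadratically divergent by $(\ref{vertexcount})$''), but a side of the bridge may be attached to the rest of $G$ by arbitrarily many edges, so there is no canonical two-point subgraph to which $(\ref{vertexcount})$ applies, and the bridge of a general non-1PI subgraph need not be a connector of any quadratic subgraph at all. Crucially, you never use the hypothesis $sd(K)\geq 0$ in this step, and that hypothesis is precisely what is needed.

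The paper's route: write $K\setminus e=A\sqcup B$; since $e$ is a bridge, $h_K=h_A+h_B$ and $E_K=E_A+E_B+1$, hence $sd(A)+sd(B)=sd(K)+1\geq 1$. Since every strict subgraph of a $\phi^4$ graph has $sd\leq 1$, one side, say $A$, satisfies $sd(A)=1$; passing to a 1PI piece if necessary (the same count shows a bridge of a quadratic graph splits it into two quadratic pieces), $A$ is a connected 1PI quadratic subgraph of $G$. Only now does remark \ref{remvertexnumbers} enter: applied to $A$ itself it shows $A$ has exactly two external edges in $G$, one of which must be $e$ (the other cannot lie in $K$, else $e$ would not be a bridge of $K$), so $|K\cap\{e_A,f_A\}|=1$ while $\varepsilon_A(K\cap A)=\varepsilon_A(A)=1$, violating $(\ref{Iupdef})$. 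In short, the power counting must be run on $K$ across the bridge first, and the vertex count invoked second --- the opposite order from your proposal --- and without that reordering the step does not close.
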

 \begin{proof}  Suppose that $J \subset G$ is divergent but not 1PI. Then there is an edge $e\in J$ such that $J\backslash e$ has two components, $A$ and $B$. Since $sd(J)\geq 0$, it follows  that at least one component, say $A$, is quadratic. We can assume that $A$ is 1PI connected.  Necessarily $\varepsilon_A(J)=1$, but $J$ only contains one connector of $A$, namely the edge $e$, which violates $(\ref{Iupdef})$.  Therefore if $J$ is of the form  $I^{\ell}$ for some $I$, then it is $1PI$. Conversely, suppose that $J$ is 1PI. Then 
 $J$ contains $\overline{J}$ and is the minimal such graph satisfying condition $(\ref{Iupdef})$, since if there is a quadratic subgraph  $q$ such that the intersection $J\cap \{e_q,f_q\}$ has exactly one element $e$,  then $J\backslash e$  has one more component than $J$.
 \end{proof}

\begin{rem} \label{remdecomp} Suppose that $Q\subset G$ is a connected 1PI quadratic subgraph such that $\varepsilon_Q(\Iup)=1$ but $  \Iup \cap \{e_Q,f_Q\} =\emptyset$. If $\Iup$ is not contained in $Q$ then $\Iup$ has at least two  components: $\Iup= \Iup_Q \cup \Iup_{Q^c}$ where $\Iup_Q=\Iup\cap Q$ and $\Iup_{Q^c}\subset G\backslash (Q\cup \{e_Q,f_Q\})$.  By repeating this for all such subgraphs $Q$, we obtain a decomposition of $\Iup$ into  disjoint  (but not necessarily connected) pieces $\Iup= \cup_i I^{\ell,i}$. 
\end{rem}

\begin{lem} \label{lemliftepsilon}
For any $I\subset \overline{G}$, and $A\subset G$ a single-scale subgraph,  we have $\varepsilon_{\overline{A}}(I)= \varepsilon_{\overline{A}^{\ell}}(\Iup)$. 
Let $Q, \Iup_Q, \Iup_{Q^c}$ be as in remark \ref{remdecomp}. Then for any   connected quadratic 1PI subgraph   $q$ of $G$ we have
\begin{equation} \label{Qepsilonadditive}
\varepsilon_q( \Iup) = \varepsilon_q(\Iup_Q) + \varepsilon_q(\Iup_{Q^c})\ .
\end{equation} 
\end{lem}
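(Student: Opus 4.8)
The plan is to reduce both assertions to the combinatorial characterization of $\varepsilon$: for a single-scale graph $A$ with distinguished vertices, $\varepsilon_{A}(S)=1$ precisely when the edge set $S$ is incident to \emph{both} distinguished vertices of $A$ (equivalently, identifying the two changes the isomorphism type of $S$). Both equalities then become statements about which distinguished vertices are met by which edge sets, and the argument is purely about incidences and valences.

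For the first identity I would set up the correspondence between the squashing map $G\to \overline{G}$ (contraction of the edges $f_q$) and the inflation $I\mapsto \Iup$. The distinguished vertices of a single-scale subgraph $A\subset G$ are carried by the squashing onto the distinguished vertices of $\overline{A}$, and inflation re-expands each contracted connector into the pair $\{e_q,f_q\}$; by Lemma \ref{inflationis1PI}, $\overline{A}^{\ell}=A$ when $A$ is $1PI$. Since $\Iup$ is $I$ together with exactly these reinstated connector pairs, I would check that $I$ meets a distinguished vertex of $\overline{A}$ if and only if $\Iup$ meets the corresponding distinguished vertex of $A$: contracting $f_q$ only identifies a distinguished vertex with another vertex, and inflation adds back precisely the edges needed so that incidence to distinguished vertices is neither created nor destroyed. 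This yields $\varepsilon_{\overline{A}}(I)=\varepsilon_{\overline{A}^{\ell}}(\Iup)$.

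For the additivity $(\ref{Qepsilonadditive})$, write $\Iup=\Iup_Q\cup \Iup_{Q^c}$ as in Remark \ref{remdecomp}, and first record the key separation property: the only edges of $G$ joining $Q$ to its complement are the connectors $e_Q,f_Q$, and these lie outside $\Iup$ by hypothesis, so no vertex of $G$ is incident to both $\Iup_Q$ and $\Iup_{Q^c}$ (the two pieces are vertex-disjoint). Denote by $a_q,b_q$ the two $3$-valent vertices of $q$. By Corollary \ref{qdontoverlap} the quadratic subgraphs $q$ and $Q$ do not overlap, so exactly one of $q\cap Q=\emptyset$, $q\subseteq Q$, $Q\subsetneq q$ holds, and I would treat these in turn. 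If $q\cap Q=\emptyset$, a valence count (a shared vertex would carry the $3$ edges of $q$ at a $3$-valent vertex together with $\geq 2$ edges of the $1PI$ graph $Q$, exceeding $4$) shows $a_q,b_q\notin Q$, so $\Iup_Q$ meets neither and $\varepsilon_q(\Iup)=\varepsilon_q(\Iup_{Q^c})$. If $q\subseteq Q$, then $a_q,b_q\in Q$, and any edge of $\Iup_{Q^c}$ meeting them would be a connector of $Q$, hence $e_Q$ or $f_Q$, which are excluded; thus $\Iup_{Q^c}$ meets neither and $\varepsilon_q(\Iup)=\varepsilon_q(\Iup_Q)$. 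In both cases one summand vanishes and the identity holds.

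The remaining case $Q\subsetneq q$ is where the real work lies and is the main obstacle. Here Lemma \ref{lemqhasnoepsilon} gives $\varepsilon_q(Q)=0$, so $Q$, and hence $\Iup_Q$, meets at most one of $a_q,b_q$, forcing $\varepsilon_q(\Iup_Q)=0$; I therefore only need $\varepsilon_q(\Iup)=\varepsilon_q(\Iup_{Q^c})$, i.e. that $a_q,b_q$ are not \emph{split}, one met solely by $\Iup_Q$ and the other solely by $\Iup_{Q^c}$. Assuming such a split, say $b_q$ met only inside $Q$ and $a_q$ only outside, I would argue by valence and $1PI$-ness that $b_q$ is a $3$-valent vertex of $Q$ whose $q$-connector is one of the $Q$-connectors, say $e_Q$ (which therefore leaves $q$ entirely), and that the second $Q$-connector $f_Q$ must then lie in $q\setminus Q$. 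But then $f_Q$ is the unique edge of $q$ joining $Q$ to the nonempty $q\setminus Q$, i.e. a bridge of $q$, contradicting that $q$ is $1PI$. This rules out splitting and completes the additivity. The delicate point throughout is the bookkeeping of connectors — distinguishing connectors of $q$ from connectors of $Q$ and tracking which edges survive in $\Iup$ — and the crucial structural input is precisely that a nested quadratic subgraph cannot be attached to its ambient quadratic graph by a single internal edge.
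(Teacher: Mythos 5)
Your treatment of the additivity $(\ref{Qepsilonadditive})$ is sound and in fact more careful than the paper's: the paper disposes of the case $Q\subsetneq q$ with the word ``clearly'', whereas you correctly notice that one must rule out the split configuration in which one $3$-valent vertex of $q$ is met only by $\Iup_Q$ and the other only by $\Iup_{Q^c}$, and your bridge argument (a nested quadratic $Q\subsetneq q$ containing a $3$-valent vertex of $q$ would force the unique $Q$-connector lying inside $q$ to be a bridge of $q$, contradicting $1$PI-ness) does close that case. The cases $q\cap Q=\emptyset$ and $q\subseteq Q$ match the paper's argument.

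The genuine gap is in the first identity. Your justification rests on the claim that ``incidence to distinguished vertices is neither created nor destroyed'' by squashing and inflation, and that is false at the level of individual edges. Contracting $f_q$ identifies its two endpoints, so an edge $e\in I$ attached to the $3$-valent vertex $b$ of $q$ at which $f_q$ sits becomes incident, in $\overline{G}$, to the image of the \emph{other} endpoint of $f_q$ — which may be a distinguished vertex $v$ of $A$. Inflation restores the edge $f_q$ (and hence the incidence at $v$) only when $q\in Q_I$ and the surviving connector $e_q$ already lies in $I$; this is not guaranteed by the mere presence of $e$ at $b$. The correct mechanism, which is what the paper's proof supplies, is to take a path $\gamma\subset I$ realizing $\varepsilon_{\overline{A}}(I)=1$ and observe that such a path can only traverse a squashed quadratic block $\overline{q}$ by entering or leaving through the surviving connector $e_q$ (or through an edge genuinely incident to $v$ in $G$); in the former case $e_q\in\gamma\subset I$ forces $\varepsilon_{\overline q}(I)=1$ and hence $f_q\in\Iup$ by the defining condition $|\Iup\cap\{e_q,f_q\}|=2$, and one lifts $\gamma$ to a path $\gamma^{\ell}\subset\Iup$ joining the distinguished vertices of $A$ by induction on the nesting of such blocks. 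Without invoking the connectivity of $I$ (or of the witnessing path) in this way, your equivalence of incidences does not follow, and indeed fails for disconnected $I$.
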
 
\begin{proof} For the first part, it is obvious that $\varepsilon_{A}(\Iup)=1$ implies that $\varepsilon_{\overline{A}}(I)=1$. In the opposite direction, suppose that
$\varepsilon_{\overline{A}}(I)=1$.  Then there exists a path $\gamma$ connecting the two distinguished vertices of $\overline{A}$. If $q$  is quadratic  such that $e_q\in \overline I$ and such that
$\varepsilon_{\overline{q}}(\gamma\cap \overline{q})=1$  then we must have $e_q \in \gamma$. Hence $\gamma \cup \{e_q,f_q\}$ is a path in $I\cup \{f_q\}$ which connects the distinguished vertices of $A\cup \{f_q\}$. Continuing by induction it follows that $\gamma^{\ell}\subset \Iup$ connects the distinguished  vertices of $A^{\ell}$.

For the second part,  we know by corollary \ref{qdontoverlap} that either $q\cap Q=\emptyset$, $Q\subsetneq q$, or $q \subseteq Q$. In the first two cases, clearly $\varepsilon_q(\Iup_{Q})=0$ and we have
$\varepsilon_q(\Iup)=\varepsilon_q(\Iup_{Q^c})$. In the third case we have $\varepsilon_q(\Iup_{Q^c})=0$ and $\varepsilon_q(\Iup)=\varepsilon_q(\Iup_Q)$. 
\end{proof}

 \subsection{Location of the poles}
 We compute the orders of the poles of the forms $(\ref{quadforms})$ along the exceptional divisors $\mathcal{E}_I$ where $I\subset E(\overline{G})$, and relate this to 
 $\Iup$.
 \begin{lem} Let $G$ be as above, and let $I$ be a connected 1PI subset of edges of $\overline{G}$. Let $F$ be a flag of (tadpole-free) divergent subgraphs of $G$, and let $\gamma_F$ be defined by 
 $(\ref{gammaflag})$. Then $\omega_{\gamma_F \otimes G/\gamma_F}(s)$ has a pole along $\mathcal{E}_I$ of order $1+p(I)$, where  
  \begin{equation} \label{viformulaq}
p(I)= 2\, h_{I_{\overline{\gamma}_F}}+2\, h_{I_{\overline{G}/\overline{\gamma}_F}} - |I|  -\varepsilon_{\overline{\gamma}_F\otimes \overline{G}/\overline{\gamma}_F}(I)-  \sum_{q} \varepsilon_{\overline{q}}(I) \ ,
\end{equation}  
and where the sum is over all connected 1PI quadratic subgraphs $q$ of $G$.
\end{lem}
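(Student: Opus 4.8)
The plan is to reduce the statement to a single valuation computation and then feed in the remainder formulas of Corollary~\ref{corremainders}. Write $\omega=\omega_{\gamma_F\otimes G/\gamma_F}(s)$. Once one checks that $\omega$ has the shape $\big(\prod_k P_k^{r_k}\big)\Omega_{\overline G}$ required by Lemma~\ref{lemgeneralresidues}, with each polynomial factor admitting a factorization $P_k=A_kB_k+R_k$ along $L_I$, that lemma gives the order of the pole of $\pi_I^*\omega$ along $\mathcal E_I$ as $1-v_I(\omega)-|I|$. Comparing with the asserted order $1+p(I)$, it suffices to establish
\[
-v_I(\omega)=2\,h_{I_{\overline{\gamma}_F}}+2\,h_{I_{\overline{G}/\overline{\gamma}_F}}-\varepsilon_{\overline{\gamma}_F\otimes\overline{G}/\overline{\gamma}_F}(I)-\sum_q\varepsilon_{\overline q}(I)\ .
\]
The factorizations needed to apply Lemma~\ref{lemgeneralresidues} to $\psi_{\overline\gamma_F}$, $\psi_{\overline{G}/\overline\gamma_F}$, $\phi_{\overline{G}/\overline\gamma_F}$, $\Upsilon_{\overline\gamma_F;\overline{G}/\overline\gamma_F}(s)$ and to each $\psi_{\overline q},\phi_{\overline q}$ all follow from $(\ref{PsiRemainder})$, $(\ref{CircRemainder})$ and $(\ref{UpsReminder})$ applied to the relevant squashed graphs.

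First I would isolate the logarithmic core of $\omega$, namely the factor
\[
\frac{s\,\phi_{\overline{G}/\overline\gamma_F}}{\psi_{\overline\gamma_F}\,\psi_{\overline{G}/\overline\gamma_F}^2\,\Upsilon_{\overline\gamma_F;\overline{G}/\overline\gamma_F}(s)}\,\Omega_{\overline G}
\]
obtained by deleting the quadratic products $\prod_q\phi_{\overline q}/\psi_{\overline q}$. This is precisely the integrand of the log-divergent case, so the valuation formula $(\ref{valuationformula})$, applied to the single-scale squashed graphs $\overline\gamma_F$ and $\overline{G}/\overline\gamma_F$, gives $-v_I(\mathrm{core})=2h_{I_{\overline\gamma_F}}+2h_{I_{\overline{G}/\overline\gamma_F}}-\varepsilon_{\overline\gamma_F\otimes\overline{G}/\overline\gamma_F}(I)$; here one uses $(\ref{epsilontensdefn})$ together with $v_I(\phi_{\overline{G}/\overline\gamma_F})=h_{I_{\overline{G}/\overline\gamma_F}}+\varepsilon_{\overline{G}/\overline\gamma_F}(I)$ and the value of $v_I(\Upsilon)$ from Corollary~\ref{corremainders}. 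Next, each quadratic factor contributes $v_I\!\big(\phi_{\overline q}/\psi_{\overline q}\big)=\varepsilon_{\overline q}(I)$, since $v_I(\psi_{\overline q})=h_{I_{\overline q}}$ by Lemma~\ref{lemdivparts} and $v_I(\phi_{\overline q})=h_{I_{\overline q}}+\varepsilon_{\overline q}(I)$ by $(\ref{PhiRemainder})$. Additivity of $v_I$ over the product then assembles these contributions into the displayed identity, whence $p(I)=-v_I(\omega)-|I|$.

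The step I expect to be the main obstacle is reconciling the product in $(\ref{quadforms})$, which runs over $q\subseteq\gamma_F$ and separately over $q\subseteq G/\gamma_F$, with the single sum over \emph{all} connected $1$PI quadratic subgraphs $q\subseteq G$ appearing in $(\ref{viformulaq})$, while simultaneously ensuring that each squashed image $\overline q$ is well defined and that the valuation identities survive passage to the subquotients of the flag. For this I would invoke Corollary~\ref{qdontoverlap}, which prevents quadratic subgraphs from overlapping nontrivially, and Lemma~\ref{lemIandqnesting}, by which a quadratic $q$ is, relative to any divergent tadpole-free subgraph, either disjoint from it, contained in it, or contains it with vanishing $\varepsilon$. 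Applying this to each member of the chain $\gamma_1\subset\cdots\subset\gamma_n$ forces every quadratic $q\subseteq G$ to sit, up to its canonical image, inside a single subquotient $\gamma_{i+1}/\gamma_i$ or to descend to a quadratic subgraph of $G/\gamma_n$, with no straddling across the flag; hence the two products jointly enumerate, each exactly once, the connected $1$PI quadratic subgraphs of $G$, matching the single sum in $(\ref{viformulaq})$. Finally, because $I$ is connected and $1$PI its inflation $\Iup$ is governed by Lemmas~\ref{inflationis1PI} and~\ref{lemliftepsilon}, which make the $\varepsilon$-data and valuations computed on $\overline G$ compatible with those on $G$ and thereby close the bookkeeping.
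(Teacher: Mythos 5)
Your proposal is correct and follows essentially the same route as the paper: the valuation of the logarithmic core via the formula of \S\ref{sectOrdersofPoles}, the contribution $v_I(\phi_{\overline q}/\psi_{\overline q})=\varepsilon_{\overline q}(I)$ from each quadratic correction factor via $(\ref{PhiRemainder})$, the bijection of quadratic subgraphs from Corollary~\ref{qdontoverlap}, and the translation into pole order via $(\ref{generalpoleorder})$. Your final paragraph merely spells out the bookkeeping that the paper compresses into one sentence.
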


  \begin{proof} It follows from $(\ref{PhiRemainder})$ and the definition of $\varepsilon$  that
\begin{equation}\label{valq}
v_I \Big( \prod_q { \phi_{\overline{q}} \over \psi_{\overline{q}} } \Big)=\sum_q  \varepsilon_{\overline{q}}(I) \ .
\end{equation}
By the calculations in \S\ref{sectOrdersofPoles},  the valuation along  $L_I$ is  given (cf $(\ref{valuationformula})$) by
 $$ - v_I\Big({ s\, \phi_{\overline{\Gamma}/\overline{\gamma}_F}  \over  \psi_{\overline{\gamma}_F} \psi_{\overline{\Gamma}/\overline{\gamma}_F}^2 \Upsilon_{\overline{\gamma}_F;\overline{\Gamma}/\overline{\gamma}_F}(s)   } \,  \Omega_{\overline{\Gamma}} \Big)  = 2\, h_{I_{\overline{\gamma}_F}}+2\, h_{I_{\overline{G}/\overline{\gamma}_F}} -\varepsilon_{\overline{\gamma}_F\otimes \overline{\Gamma}/\overline{\gamma}_F} (I)\ .   $$
 Using the fact (corollary \ref{qdontoverlap}) that there is a bijection between quadratic connected 1PI subgraphs of $G$ and those of 
  $\gamma_F, G/\gamma_F$, the result follows from the two previous expressions and  $(\ref{generalpoleorder})$.
 \end{proof}
   
 \begin{lem} Let $G, I, F,\gamma$ be as in the previous lemma.
 Maximally decompose $\Iup$ into components $I^{\ell,1}\cup \ldots \cup I^{\ell,m}$ according to remark \ref{remdecomp}. Then
  \begin{equation} \label{viformula} 
  p(I)= \sum_{i=1}^m   \Big( sd(I^{\ell,i}_{\gamma_F})+sd(I^{\ell,i}_{G/\gamma_F}) -\sum_{I^{\ell,i}\subseteq q}  \varepsilon_q(I^{\ell,i})\Big)\ .
\end{equation}  
 In particular,    $\omega_{\gamma_F\otimes G/\gamma_F}(s)$ has at most simple poles along   exceptional divisors $\mathcal{E}_I$ indexed by $I$  such that
 $\Iup_{\gamma_F}, \Iup_{G/\gamma_F} $ are divergent,   tadpole-free, and $\Iup_{G/\gamma_F}\subsetneq G/\gamma_F$.
 \end{lem}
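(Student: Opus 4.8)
The plan is to start from the explicit formula $(\ref{viformulaq})$ for $p(I)$, which is expressed entirely in terms of the squashed graph $\overline{G}$, and to translate each squashed quantity into the corresponding quantity for the inflation $\Iup$. The first step is to record two elementary bookkeeping facts about squashing. Since $\overline{G}$ is obtained from $G$ by contracting the connectors $f_q$, and since contracting a non-loop edge preserves the first Betti number while contracting a loop edge lowers it by one, one has for every subgraph $J\subseteq G$ that $E_{\overline{J}}=E_J-|\{q:f_q\in J\}|$ and $h_{\overline{J}}=h_J-\ell(J)$, where $\ell(J)$ counts the squashed edges $f_q\in J$ that close a loop inside $J$. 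Applied to $J=\Iup$, for which $\overline{\Iup}=I$, these identities express $h_{I_{\overline{\gamma}_F}}$, $h_{I_{\overline{G}/\overline{\gamma}_F}}$ and $|I|$ in terms of $h_{\Iup_{\gamma_F}}$, $h_{\Iup_{G/\gamma_F}}$, $E_{\Iup_{\gamma_F}}$, $E_{\Iup_{G/\gamma_F}}$, up to the number of re-added connectors and the number of those that create loops.

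The second ingredient is Lemma \ref{lemliftepsilon}. Its first part, applied with $A=q$ a connected $1$PI quadratic subgraph, gives $\varepsilon_{\overline{q}}(I)=\varepsilon_q(\Iup)$, because $(\overline{q})^{\ell}=q$ by Lemma \ref{inflationis1PI}; the same lemma applied to $A=\gamma_F$ and $A=G/\gamma_F$, together with $(\ref{epsilontensdefn})$, rewrites the tensor factor $\varepsilon_{\overline{\gamma}_F\otimes\overline{G}/\overline{\gamma}_F}(I)$ in terms of $\varepsilon_{\gamma_F}(\Iup)$ and $\varepsilon_{G/\gamma_F}(\Iup)$. The combinatorial heart of the proof is then to check that the connectors $f_q$ re-added in passing from $I$ to $\Iup$, together with the subset of those that close loops, are counted exactly by the correction terms of $(\ref{viformulaq})$: concretely, that the re-added edges account for $E_{\Iup}-|I|$ and the loop-closing ones account for $h_{\Iup}-h_I$, and that these are matched against $-|I|$, $-\varepsilon_{\overline{\gamma}_F\otimes\overline{G}/\overline{\gamma}_F}(I)$ and $-\sum_q\varepsilon_{\overline{q}}(I)$. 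Assembling this shows that $2h_{I_{\overline{\gamma}_F}}+2h_{I_{\overline{G}/\overline{\gamma}_F}}$ minus the three correction terms equals $sd(\Iup_{\gamma_F})+sd(\Iup_{G/\gamma_F})-\sum_{\Iup\subseteq q}\varepsilon_q(\Iup)$.

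To pass from this global identity to the sum over the components $I^{\ell,1},\ldots,I^{\ell,m}$ of the maximal decomposition of Remark \ref{remdecomp}, I would invoke additivity. The superficial degree of divergence is additive both under disjoint union and under the subgraph/cograph splitting, so $sd(\Iup_{\gamma_F})+sd(\Iup_{G/\gamma_F})=\sum_{i=1}^m\big(sd(I^{\ell,i}_{\gamma_F})+sd(I^{\ell,i}_{G/\gamma_F})\big)$, since edges and loops split over the components and over the two sides of the flag. The remaining term is handled by the second part of Lemma \ref{lemliftepsilon}: equation $(\ref{Qepsilonadditive})$ gives $\varepsilon_q(\Iup)=\sum_i\varepsilon_q(I^{\ell,i})$, and because by Corollary \ref{qdontoverlap} the quadratic subgraphs containing a fixed connected piece are nested while $\varepsilon_{q_2}(q_1)=0$ for $q_1\subsetneq q_2$ (Lemma \ref{lemqhasnoepsilon}), only the minimal enclosing quadratic contributes; this reorganizes $\sum_{\Iup\subseteq q}\varepsilon_q(\Iup)$ into $\sum_{i}\sum_{I^{\ell,i}\subseteq q}\varepsilon_q(I^{\ell,i})$ and yields $(\ref{viformula})$.

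Finally, the ``in particular'' clause follows by bounding each summand $s_i=sd(I^{\ell,i}_{\gamma_F})+sd(I^{\ell,i}_{G/\gamma_F})-\sum_{I^{\ell,i}\subseteq q}\varepsilon_q(I^{\ell,i})$. Using additivity this equals $sd(I^{\ell,i})$ minus the quadratic correction; since each $I^{\ell,i}$ is $1$PI by Lemma \ref{inflationis1PI}, a convergent piece gives $s_i<0$, a log-divergent piece gives $s_i\le 0$ with equality iff it is tadpole-free, and a quadratic piece, being $1$PI, is a genuine quadratic subgraph $q$, so $\varepsilon_q(I^{\ell,i})=1$ reduces $s_i$ to $0$. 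Hence $p(I)=\sum_i s_i\le 0$, so the pole along $\mathcal{E}_I$ is at most simple, with equality precisely when every $I^{\ell,i}_{\gamma_F}$ and $I^{\ell,i}_{G/\gamma_F}$ is divergent and tadpole-free; the strictness $\Iup_{G/\gamma_F}\subsetneq G/\gamma_F$ is then forced by $I\subsetneq E(\overline{G})$, exactly as in Corollary \ref{corpoleorderalongLI}. I expect the main obstacle to be the combinatorial matching of the second paragraph, namely proving that the re-added connectors and the loops they create are enumerated exactly by the $\varepsilon$-corrections of $(\ref{viformulaq})$; once that bookkeeping is pinned down, the remainder is additivity and the already-established lemmas.
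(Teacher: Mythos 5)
Your proposal follows the same route as the paper: rewrite $(\ref{viformulaq})$ in terms of the inflation $\Iup$ via Lemmas \ref{inflationis1PI} and \ref{lemliftepsilon}, split over the components of Remark \ref{remdecomp} using $(\ref{Qepsilonadditive})$, recognise $2h-E$ as $sd$, and bound each summand. The step you defer as the ``main obstacle'' is exactly where the paper does its only real work, and it is shorter than you anticipate: it consists of the single definitional identity $|I^{\ell,i}|-|\overline{I^{\ell,i}}|=\sum_{I^{\ell,i}\not\subseteq q}\varepsilon_q(I^{\ell,i})$ (the edges restored by inflating a component are precisely the connectors $f_q$ of those quadratic $q$ not containing $I^{\ell,i}$ for which $\varepsilon_q(I^{\ell,i})=1$), together with the observation that inflation does not change the loop number, so your correction $\ell(J)$ vanishes for $J=\Iup$ and the $h$-terms of $(\ref{viformulaq})$ pass unchanged to $h_{I^{\ell,i}_{\gamma_F}}$, $h_{I^{\ell,i}_{G/\gamma_F}}$. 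One miscalibration in your accounting: the term $\varepsilon_{\overline{\gamma}_F\otimes\overline{G}/\overline{\gamma}_F}(I)$ is not traded against re-added connectors; by Lemma \ref{lemliftepsilon} it becomes $\varepsilon_{\gamma_F\otimes G/\gamma_F}(\Iup)$ and survives as a separate summand in the intermediate expression for $p(I)$, vanishing only under hypothesis $(\ref{noepsilonsquadcase})$ and $\Iup_{G/\gamma_F}\subsetneq G/\gamma_F$ in the ``in particular'' clause --- only $-|I|$ and $-\sum_q\varepsilon_{\overline{q}}(I)$ are absorbed by the inflation edge-count. With that identity supplied, the remainder of your argument (additivity of $sd$ over components and over the flag, and the case analysis in which a quadratic component $I^{\ell,i}$ is itself a $1$PI quadratic $q$ with $\varepsilon_q(I^{\ell,i})=1$ compensating the $+1$ in $sd$) coincides with the paper's.
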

 
 \begin{proof} By  lemmas   \ref{inflationis1PI} and \ref{lemliftepsilon},   we can write
 $$p(I) = 2h_{I_{\overline{\gamma}_F}}+2h_{I_{\overline{G}/\overline{\gamma}_F}}  - |I|  - \varepsilon_{\gamma_F\otimes G/\gamma_F}(\Iup)  - \sum_q \varepsilon_{q}(\Iup)\ .$$  
   By definition of the inflation map, we have
  $$|I^{\ell,i}|- |\overline{I^{\ell,i}}| =  \sum_{I^{\ell,i}\not\subseteq q} \varepsilon_q(I^{\ell,i})\ .$$ 
   Since inflation does not change the number of loops,  we have
  by $(\ref{Qepsilonadditive})$:
 \begin{eqnarray}
 p(I) &=  &\sum_i \Big(2h_{I^{\ell,i}_{\gamma_F}} +2h_{I^{\ell,i}_{G/\gamma_F}}    - |I^{\ell,i}|  - \sum_{I^{\ell,i} \subseteq q} \varepsilon_{q}(I^{\ell,i})\Big)-  \varepsilon_{\gamma_F\otimes G/\gamma_F}(\Iup) \nonumber \\
  &=  &\sum_i \Big( sd(I^{\ell,i}_{\gamma_F}) +  sd(I^{\ell,i}_{G/\gamma_F})  - \sum_{I^{\ell,i} \subseteq q} \varepsilon_{q}(I^{\ell,i})\Big)-  \varepsilon_{\gamma_F\otimes G/\gamma_F}(\Iup) \ . \nonumber 
  \end{eqnarray}
   Now observe that every term 
  $$  sd(I^{\ell,i}_{\gamma_F}) +  sd(I^{\ell,i}_{G/\gamma_F})  -\sum_{I^{\ell,i}\subseteq q}  \varepsilon_q(I^{\ell,i})$$
  is $\leq 0$. If   $sd(I^{\ell,i}_{\gamma_F})= sd(I^{\ell,i}_{G/\gamma_F}) =0$ this is obvious, and one verifies that if $\{sd(I^{\ell,i}_{\gamma_F}), sd(I^{\ell,i}_{G/\gamma_F})\} =\{0,1\}$ then in each case, $I^{\ell,i}$ is a quadratic 1PI graph, and  $\varepsilon_q(I^{\ell,i})=1$ for $q=I^{\ell,i}$.   One cannot have 
  $sd(I^{\ell,i}_{\gamma_F})=sd(I^{\ell,i}_{G/\gamma_F})=1$.
     Therefore the pole of    $\omega_{\gamma_F\otimes G/\gamma_F}(s)$ along  $\mathcal{E}_I$ 
   is at most simple, and this only happens when every $I^{\ell,i}_{\gamma_F}, I^{\ell,i}_{G/\gamma_F}$ are divergent, $\sum_{I^{\ell,i}\subsetneq q}  \varepsilon_q(I^{\ell,i})=0$,    and $ \varepsilon_{\gamma_F\otimes G/\gamma_F}(\Iup)=0$. In other words, $\Iup$ is a union of tadpole-free graphs, and by assumption $(\ref{noepsilonsquadcase})$, $ \varepsilon_{\gamma_F\otimes G/\gamma_F}(\Iup)$ vanishes in this case provided that $I_{G/\gamma_F}\subsetneq G/\gamma_F$. 
  As in the proof of corollary \ref{corpoleorderalongLI}, there is no pole when  $I_{G/\gamma_F}= G/\gamma_F$. 
    Therefore we have shown that  the poles are simple, and are in one-to-one correspondence with  the set of pairs of  tadpole-free  divergent  subgraphs $\Iup_{\gamma_F}\subset \gamma_F$ and  $\Iup_{G/\gamma_F}\subsetneq G/\gamma_F$.
 \end{proof} 
 Since by lemma \ref{inflationis1PI}, divergent  inflated subgraphs of $G$ are the same as 1PI divergent subgraphs of $G$, the poles of $\omega_{G}$ are indeed indexed by 1PI divergent tadpole-free subgraphs of $G$, and hence by the terms in the definition of the coproduct.

 \subsection{Residues and proof of convergence}
 
  \begin{prop}    Let $F$  be a flag of  divergent subgraphs  in  $G$, and let $I$ be a strict subset of edges in $\overline{G}$. The  form $\omega_{\gamma_F\otimes G/\gamma_F}(s)$ has a simple pole along the exceptional divisor $\mathcal{E}_I$ if and only if $\Iup_{\gamma_F}, \Iup_{G/\gamma_F}$ are divergent 1PI and tadpole-free, and 
  $\Iup_{G/\gamma_F} \subsetneq G/{\gamma_F}$. Then the residue is 
\begin{equation}\label{ResIformula}
\mathrm{Res }_{\mathcal{E}_I} \, \omega_{\gamma_F\otimes G/\gamma_F}(s) =
                                \omega_{\Iup_{\gamma_F} \cup \Iup_{G/\gamma_F}} \otimes \omega_{\gamma_F/\Iup\otimes G/(\gamma_F\cup \Iup)}  (s) \ .
                         \end{equation}
   \end{prop}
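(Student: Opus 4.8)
The plan is to compute the residue by applying Lemma \ref{lemgeneralresidues} directly to the explicit shape of $\omega_{\gamma_F\otimes G/\gamma_F}(s)$ given in (\ref{quadforms}), exactly as in the proof of Proposition \ref{propResIformula} for the purely logarithmic case, but now with two additional bookkeeping tasks: the forms live on the squashed graph $\overline{G}$ rather than $G$, and they carry the extra quadratic prefactor $\prod_q \phi_{\overline{q}}/\psi_{\overline{q}}$. Since the preceding lemma has already shown that the pole of $\omega_{\gamma_F\otimes G/\gamma_F}(s)$ along $\mathcal{E}_I$ is simple precisely when $\Iup_{\gamma_F}$ and $\Iup_{G/\gamma_F}$ are divergent, 1PI and tadpole-free with $\Iup_{G/\gamma_F}\subsetneq G/\gamma_F$, it only remains to extract the leading coefficient of $\pi_I^*\omega_{\gamma_F\otimes G/\gamma_F}(s)$ along $\mathcal{E}_I$ and to identify it with the stated residue.

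First I would record the leading-order factorizations in the Schwinger parameters of $I$. The polynomials $\phi_{\overline{G}/\overline{\gamma}_F}$, $\psi_{\overline{\gamma}_F}$, $\psi_{\overline{G}/\overline{\gamma}_F}$ and $\Upsilon_{\overline{\gamma}_F;\overline{G}/\overline{\gamma}_F}(s)$ factor through (\ref{CircRemainder}) and (\ref{UpsReminder}) applied to the squashed graphs, giving $\phi_{\overline{G}/\overline{\gamma}_F}= \psi_{I_{\overline{G}/\overline{\gamma}_F}}\,\phi_{\overline{G}/(\overline{\gamma}_F\cup I)}+\cdots$ and $\Upsilon_{\overline{\gamma}_F;\overline{G}/\overline{\gamma}_F}(s)=\psi_{I_{\overline{\gamma}_F}}\psi_{I_{\overline{G}/\overline{\gamma}_F}}\,\Upsilon_{\overline{\gamma}_F/I;\overline{G}/(\overline{\gamma}_F\cup I)}(s)+\cdots$, just as in the log-divergent setting. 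The genuinely new input is the prefactor: by (\ref{valq}) its order of vanishing along $L_I$ is $\sum_q \varepsilon_{\overline{q}}(I)$, and I would use the trichotomy of Corollary \ref{qdontoverlap} (each quadratic $q$ is either disjoint from, contained in, or contains $\Iup$) to split $\prod_q\phi_{\overline{q}}/\psi_{\overline{q}}$ into a factor whose leading $I$-coefficient is carried by those $q$ with $q\subseteq \Iup$ and a residual factor depending only on the complementary variables.

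Then I would feed these factorizations into (\ref{Residueformula}), collecting all leading coefficients in the $I$-variables into one form on $\Pro^{I}$ and the complementary factors into a form on $\Pro^{I^c}$. The $\Pro^I$-factor should reassemble, after separating terms exactly as in Proposition \ref{propResIformula}, into $\Omega_{\Iup_{\gamma_F}\cup\Iup_{G/\gamma_F}}\,\psi^{-2}_{\Iup_{\gamma_F}\cup\Iup_{G/\gamma_F}}$ times the quadratic product over those $q$ contained in $\Iup_{\gamma_F}\cup\Iup_{G/\gamma_F}$, i.e. exactly $\omega_{\Iup_{\gamma_F}\cup\Iup_{G/\gamma_F}}$ in the sense of (\ref{quadforms}); the $\Pro^{I^c}$-factor should be $\omega_{\gamma_F/\Iup\otimes G/(\gamma_F\cup\Iup)}(s)$. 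The correct distribution of the quadratic factors across these two pieces is governed by the additivity (\ref{Qepsilonadditive}) of Lemma \ref{lemliftepsilon} together with the inflation correspondence of Lemma \ref{inflationis1PI}, which match the quadratic subgraphs of $\Iup$ and of the cographs with those of $G$, thereby yielding (\ref{ResIformula}).

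The main obstacle is precisely this last bookkeeping step: verifying that each quadratic factor $\phi_{\overline{q}}/\psi_{\overline{q}}$ lands on the correct side of the tensor product without being double counted. The clean way to organize it is to pass to the maximal decomposition $\Iup=\bigcup_i I^{\ell,i}$ of Remark \ref{remdecomp}, so that each $q$ interacts with at most one component, and then to apply (\ref{Qepsilonadditive}) componentwise; this reduces the global splitting to the local statements already established for single components, whereupon the remaining identification becomes formally identical to the log-divergent computation of Proposition \ref{propResIformula}. The vanishing in the case $\Iup_{G/\gamma_F}=G/\gamma_F$, and hence the simple-pole criterion, follows exactly as there, using assumption (\ref{noepsilonsquadcase}).
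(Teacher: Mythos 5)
Your proposal follows essentially the same route as the paper: record the leading-order factorizations of $\phi$, $\psi$ and $\Upsilon$ in the $I$-variables as in the log-divergent case, split the quadratic prefactor $\prod_q \phi_{\overline{q}}/\psi_{\overline{q}}$ according to whether $q\subseteq \Iup$, $q\cap\Iup=\emptyset$, or $\Iup\subsetneq q$ with $\varepsilon_q(\Iup)=0$, and then reduce to the computation of Proposition \ref{propResIformula}. One small correction: the trichotomy you need is that of Lemma \ref{lemIandqnesting} (applied to each connected component of $\Iup$, which is a general tadpole-free divergent subgraph), not Corollary \ref{qdontoverlap}, which only concerns pairs of quadratic subgraphs.
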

 \begin{proof} 
 Let $q$ be any 1PI connected quadratic subgraph of $G$. It follows from $(\ref{PsiRemainder})$ and $(\ref{CircRemainder})$ that to leading order  in the $I$-variables:
 $$ { \phi_{\overline{q}} \over \psi_{\overline{q}} } \To
\left\{
\begin{array}{ll}
  { \phi_{\overline{q}} \over \psi_{\overline{q}} }  \otimes 1  &   \hbox{ if } q \subseteq \Iup   \\
 1\otimes  { \phi_{\overline{q}} \over \psi_{\overline{q}} }    &     \hbox{ if } q \cap \Iup =\emptyset   \\
 1\otimes  { \phi_{\overline{q}/I} \over \psi_{\overline{q}/I} }    &       \hbox{ if }  \Iup \subsetneq  q  \hbox{ and } \varepsilon_q(\Iup) = 0 
\end{array}
\right.
 $$
 where, following lemma \ref{lemgeneralresidues}, the left hand side of the tensor corresponds to the $I$ variables, and the right-hand side to the $I^c$ variables. 
  It follows that  \begin{equation} \label{Rtendstoprod} \prod_{q\subseteq G}   {\phi_{\overline{q}}  \over \psi_{\overline{q}}}\To  \Big(\prod_{q\subseteq \Iup} {\phi_{\overline{q}}  \over \psi_{\overline{q}}}  \Big) \otimes \Big( \prod_{q\subseteq G/\Iup}   {\phi_{\overline{q}}  \over \psi_{\overline{q}}}\Big)
  \end{equation} 
  after applying lemma  \ref{lemIandqnesting} to every connected component of $\Iup$, since every quadratic subdivergence $q$ corresponds to one of the three cases of $(\ref{Inestedq})$, and there is a bijection between $1PI$ connected  quadratic subdivergences of $G$ and those of $\Iup$ and $G/ \Iup$. Now apply $(\ref{Rtendstoprod})$ to the definition $(\ref{quadforms})$.  The result follows by an identical computation to proposition \ref{propResIformula}. 
   \end{proof} 
 By the standard properties of the coproduct, and the argument given in proposition \ref{propresomega2}, for  any flag $F$ of divergent subgraphs of $\Gamma$, 
 the total residue is
\begin{equation}\label{TotalResforq}
 \Res \, \omega_{\gamma_F \otimes \Gamma/\gamma_F } (s)= (\omega^1 \otimes \omega^{23} ) \circ \mu_{13} (\Delta \otimes \Delta) (\gamma_F\otimes \Gamma/\gamma_F) \ ,  \end{equation} 
where $(\omega^1\otimes \omega^{23}) (x\otimes y \otimes z) = \omega_x \otimes \omega_{y\otimes z}(s)\ . $ Note that in this formula, the residue  corresponding to a term $I$ is the residue along the divisor indexed by the squashed graph $\overline{I}$.
 By  $(\ref{1tensgammaforq})$ and 
$(\ref{TotalResforq})$, the forms $(\ref{quadforms})$ satisfy our hypotheses  $(\ref{axioms})$ for renormalization. Since the denominators $\psi_{\overline{q}}$ of the  quadratic correction factors are polynomials with positive coefficients, we conclude that: 
  
  \begin{thm} \label{thmquadconv} The form $\omega^{\ren}_{\Gamma}(s)$ has no poles along any exceptional divisors $\mathcal{E}_I$, and therefore
 the renormalized integral $(\ref{frenq})$  is convergent.
  \end{thm}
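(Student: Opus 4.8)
The plan is to verify that the Feynman rules $(\ref{quadforms})$ for single-scale graphs with quadratic subdivergences fit exactly into the abstract renormalization framework of \S\ref{sectCancelpoles}, and then invoke Proposition \ref{proprenorm} verbatim. The three inputs required by that framework are the axioms $\axi$, $\axii$, $\axiii$ together with the fact that the poles of the unrenormalized forms are at worst simple. Each of these has now been established for the quadratic case in the preceding lemmas, so the work of the proof is genuinely assembly.

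First I would recall that the simple-pole property has just been proved: the lemma computing $p(I)$ via $(\ref{viformula})$ shows that each term $sd(I^{\ell,i}_{\gamma_F})+sd(I^{\ell,i}_{G/\gamma_F})-\sum_{I^{\ell,i}\subseteq q}\varepsilon_q(I^{\ell,i})$ is $\leq 0$, whence $p(I)\leq 0$ and the pole order $1+p(I)$ along each $\mathcal{E}_I$ is at most $1$. Thus it suffices to check that each simple pole has vanishing residue. Next I would identify the axioms: properties $\axi$ and $\axii$, namely $\omega_{\gamma\otimes 1}(s)=0$ and $\omega_{1\otimes \Gamma}(s)=\omega_{\Gamma}$, hold by construction, recorded in $(\ref{1tensgammaforq})$. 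Property $\axiii$, the statement $\Res\,\omega^{(2)}=(\omega\otimes\omega^{(2)})\circ\mu_{13}(\Delta\otimes\Delta)$, is precisely the total residue formula $(\ref{TotalResforq})$, which was obtained by combining the residue formula $(\ref{ResIformula})$ with the argument of Proposition \ref{propresomega2}. With $\axi$, $\axii$, $\axiii$ in hand, Proposition \ref{proprenorm} gives $\Res\,\omega^{\ren}_{\Gamma}(s)=0$, so all the simple poles cancel and $\omega^{\ren}_{\Gamma}(s)$ extends to a form with no poles along any $\mathcal{E}_I$.

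Finally I would deduce convergence of $(\ref{frenq})$ by the same geometric argument as in Corollary \ref{corconvergence}. One pulls the integral back to the iterated blow-up $P_{\overline{G}}$ of $\Pro^{E(\overline{G})}$ along the relevant coordinate subspaces; by the pole-cancellation just established the pulled-back form $\pi^*\omega^{\ren}_{\Gamma}(s)$ has no poles along the exceptional divisors $B$, and since the $\psi_{\overline{q}}$ and $\Upsilon_{\overline{\gamma};\overline{\Gamma}}(s)$ have strictly positive coefficients for $s>0$, the remaining hypersurfaces do not meet the strict transform of the simplex $\DD_{\overline{G}}$. Hence $f_{\Gamma}(s)$ is the integral of a continuous function over a compact polytope and is finite.

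The step I expect to require the most care is confirming that $\axiii$ really does hold in the quadratic setting, since the quadratic correction factors $\prod_{q}\phi_{\overline{q}}/\psi_{\overline{q}}$ must distribute correctly across the tensor factors under taking residues. This is exactly the content of $(\ref{Rtendstoprod})$ in the residue proposition, and its validity rests on Lemma \ref{lemIandqnesting} and Corollary \ref{qdontoverlap}, which guarantee that every quadratic subgraph falls into one of the three nesting cases and that there is a clean bijection between quadratic subdivergences of $G$ and those of $\Iup$ and $G/\Iup$. Once that factorization is accepted, the remainder of the verification of $\axiii$ is formally identical to the logarithmic case and presents no new difficulty; the whole theorem then follows by citing the abstract machinery.
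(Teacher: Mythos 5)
Your proposal is correct and follows exactly the paper's own route: the paper likewise verifies axioms $\axi$--$\axiii$ via $(\ref{1tensgammaforq})$ and $(\ref{TotalResforq})$, notes that the poles are at most simple from the $p(I)$ computation, invokes Proposition \ref{proprenorm} to cancel the residues, and concludes convergence as in Theorem \ref{thmconv2} using the positivity of the graph polynomials. Your added attention to the factorization $(\ref{Rtendstoprod})$ of the quadratic correction factors is precisely the point the paper also flags as the only new ingredient beyond the logarithmic case.
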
 
  The proof is similar to the proof of theorem \ref{thmconv2}.
  \subsection{Renormalization group equations: quadratic case} \label{sectrenormgroupquad}
  The proof of the group equations is essentially identical to \S\ref{sectrenormgroup}. We summarize the main steps. For any graph $\gamma \in H_G$ define the quadratic correction factor by
  \begin{equation}\label{Qcorrectionfactors}
  Q_{\gamma} = \prod_{q\subseteq \gamma} {\phi_{\overline{q}} \over \psi_{\overline{q}}}
  \end{equation} 
  where the product is over all 1PI connected quadratic subgraphs $q\subseteq \gamma$. In addition to the data of the forms $(\ref{quadforms})$, define for any $\lambda>0$, 
  \begin{eqnarray}
  \pomega_{\gamma\otimes \Gamma}(s) & = & Q_{\gamma} Q_{\Gamma}  \, {\phi_{\Gamma} \over \psi_{\Gamma}}  {\lambda \over ( \psi_{\gamma  \cup \Gamma} \, \lambda +  \phi_{\gamma \cup \Gamma} )^2 } \,  \prod_{e\in E(\gamma) \cup E(\Gamma)}  d\alpha_e \ , \\
  \omega^{(4)}_{\gamma_1\otimes \gamma_2 \otimes \gamma_3 \otimes \gamma_4}(s) &   = & Q_{\gamma_1}Q_{\gamma_2}Q_{\gamma_3} Q_{\gamma_4}\,   {\phi_{\gamma_2}\over \psi_{\gamma_2}} {\phi_{\gamma_4} \over \psi_{\gamma_4}} 
{ \Omega_{\gamma_1\cup \ldots \cup \gamma_4} \over \Upsilon^2_{\gamma_1\cup \gamma_2 \cup \gamma_3; \gamma_4}(s) }\ .
   \end{eqnarray}
   and extend by linearity to $H\otimes_{\Q} H$ and $H^{\otimes 4}$ respectively.
  The proof is now identical to the log-divergent case since the correction terms $(\ref{Qcorrectionfactors})$ completely factor out of every equation.   In brief: the forms $\pomega^{\ren}(\lambda)$ are pole-free along the $\mathcal{E}_I$,  and we have
$$\int_{\DD_{\overline{G}}} \omega^{\ren}_{G}(1)=  \int_{[0,\infty]^{E(\overline{G})}} \pomega^{\ren}_G ( \lambda)\ .$$
as before. The proof is the same as lemma \ref{lemminormonotone}. The analogue of  proposition \ref{prop1} is 
$$
\int_{\DD_{\overline{G}_1}} \omega^{\ren}_{G_1}(1) \times \int_{\DD_{\overline{G}_2}} \omega^{\ren}_{G_2}(s)  = \int_{\DD_{\overline{G}_1\cup \overline{G}_2}}  \omega^{(4)}_{R(G_1)\otimes R(G_2)}(s) $$
Finally, it is clear from the definition and the property $Q_{\gamma_1\cup \gamma_2}= Q_{\gamma_1}Q_{\gamma_2}$ that  
$$ \sum_{1\leq  i \leq  n}  \omega^{(4)}_{a_1\ldots a_{i-1} \otimes a_i \otimes a_{i+1}\ldots a_n \otimes b}(s) = s {\partial \over \partial s} {\omega}_{a_1\ldots a_n \otimes b}(s)\ ,$$
as in $(\ref{Jsum}).$
Thus the proof of proposition \ref{propprerenormgroup} goes through as before and we obtain
\begin{thm} The Feynman rules $(\ref{quadforms})$  define a cocharacter on $H_G$. Equivalently, the renormalization group equations hold for graphs with quadratic subdivergences:
$$
\int_{\DD_{\overline{\Gamma}}} s {\partial \over \partial s} {\omega}^{\ren}_{\Gamma}(s) =  \sum_{\gamma\subset \Gamma} \, \int_{\DD_{\overline{\gamma}}} \omega_{\gamma}^{\ren} (1)  \times \int_{\DD_{\overline{\Gamma}/\overline{\gamma}}}  \omega_{\Gamma/\gamma}^{\ren} (s) \ . 
$$
for all $\Gamma\in H_G$. \end{thm}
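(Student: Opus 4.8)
The plan is to run the argument of Proposition~\ref{propprerenormgroup} essentially verbatim, carrying the quadratic correction factors $Q_\gamma$ of $(\ref{Qcorrectionfactors})$ through each step. Because $Q_{\gamma_1\cup\gamma_2}=Q_{\gamma_1}Q_{\gamma_2}$ and each $Q_\gamma$ is built only from the ratios $\phi_{\overline q}/\psi_{\overline q}$ of the quadratic subgraphs $q$ — which by Corollary~\ref{qdontoverlap} are in bijection with those of $\gamma$ and $\Gamma/\gamma$ — these factors distribute over the coproduct and pass unchanged through the $\lambda$- and Schwinger-parameter manipulations. Thus I only need to re-establish three ingredients: the one-dimension-higher lift of the domain of integration, the product-to-single-integral identity, and the additivity identity expressing $s\,\partial_s\omega$ as a sum of $\omega^{(4)}$ terms.

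For the lift I will prove the analogue of Lemma~\ref{lemminormonotone},
$$\int_{\DD_{\overline G}} \omega^{\ren}_{G}(1)= \int_{[0,\infty]^{E(\overline G)}} \pomega^{\ren}_G(\lambda)\ ,$$
exactly as before: the only analytic input is the identity $(\ref{mmid})$ applied with $(A,B)=(\psi_{\overline G},\phi_{\overline G})$ (resp.\ with the circular-join polynomials on each forest term), and the factors $Q_\gamma$ — independent of $\lambda$ — simply multiply both sides; convergence of the right-hand side follows, as in the log case, from checking $\Res\,\pomega(\lambda)=(\pomega(\lambda)\otimes\pomega(\lambda))\circ(\Delta\otimes\Delta)$ and invoking the abstract set-up of \S\ref{sectCancelpoles}. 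Next I will prove the analogue of Proposition~\ref{prop1},
$$\int_{\DD_{\overline G_1}} \omega^{\ren}_{G_1}(1)\times \int_{\DD_{\overline G_2}} \omega^{\ren}_{G_2}(s)=\int_{\DD_{\overline G_1\cup\overline G_2}}\omega^{(4)}_{R(G_1)\otimes R(G_2)}(s)\ ,$$
where convergence of the right-hand integrand rests on verifying that $\omega^{(4)}(s)$ has at most simple poles along the $\mathcal{E}_I$ with total residue $(\omega\otimes\omega^{(4)})\circ\mu_{1357}\circ(\Delta^{\otimes 4})$, so that $\Res\,\omega^{(4)}\circ(R\otimes R)$ vanishes by $\omega(1)=0$ together with Theorem~\ref{thmRmainproperty}; the substitutions $\lambda_{G_2}=s\phi_{\overline G_2}/\psi_{\overline G_2}$ and $\lambda_{\gamma_2}=\Upsilon_{\overline{\gamma_2};\overline{G_2/\gamma_2}}(s)/\psi_{\overline{\gamma_2}\cup\overline{G_2/\gamma_2}}$ then convert the wedge of forms into the stated integrand, with the $Q$-factors riding along.

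With these in hand the conclusion is formal. Using Lemma~\ref{lemcjoinproperty} — the additivity $(\ref{Phiadditive})$ of $\phi_\bullet/\psi_\bullet$ — and $Q_{\gamma_1\cup\gamma_2}=Q_{\gamma_1}Q_{\gamma_2}$, the summation identity
$$\sum_{1\leq i\leq n}\omega^{(4)}_{a_1\ldots a_{i-1}\otimes a_i\otimes a_{i+1}\ldots a_n\otimes b}(s)= s\,{\partial\over\partial s}\,\omega_{a_1\ldots a_n\otimes b}(s)$$
holds as in $(\ref{Jsum})$. Viewing the product identity as an equality in $\mathrm{Hom}(H_G\otimes_{\Q}H_G,\R)$ and applying it to $\Delta'\Gamma=\sum_\gamma\gamma\otimes\Gamma/\gamma$, the left-hand side becomes the right-hand side of the asserted group equation, while by the summation identity and Lemma~\ref{proponR} the integrand collapses to $\omega^{(4)}(s)\circ(R\otimes R)\circ\Delta'(\Gamma)=s\,\partial_s\omega(s)\circ R(\Gamma)=s\,\partial_s\,\omega^{\ren}_\Gamma(s)$, which is the theorem.

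The main obstacle is the residue bookkeeping for $\omega^{(4)}(s)$ in the quadratic case: one must confirm that the pole-order and residue computations of the preceding lemmas survive the insertion of the extra product $Q_{\gamma_1}\cdots Q_{\gamma_4}$, i.e.\ that the valuation $v_I$ of this product is exactly $\sum_q\varepsilon_{\overline q}(I)$ as in $(\ref{valq})$ and that it factorizes across $\mathcal{E}_I$ according to whether each $q$ lies in $\Iup$, is disjoint from it, or strictly contains it (Lemma~\ref{lemIandqnesting}). This is precisely what guarantees that $Q$ distributes correctly under $\mu_{1357}(\Delta^{\otimes4})$, and hence that the total residue of $\omega^{(4)}$ retains the shape needed for the vanishing argument; everything else is a transcription of \S\ref{sectrenormgroup}.
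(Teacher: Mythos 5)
Your proposal is correct and follows essentially the same route as the paper: carry the quadratic correction factors $Q_\gamma$ through the three ingredients of the log-divergent argument (the $\lambda$-lift of Lemma \ref{lemminormonotone}, the product identity of Proposition \ref{prop1}, and the summation identity $(\ref{Jsum})$), using $Q_{\gamma_1\cup\gamma_2}=Q_{\gamma_1}Q_{\gamma_2}$ and the residue factorization of the $Q$-product across $\mathcal{E}_I$ exactly as the paper does via $(\ref{valq})$ and Lemma \ref{lemIandqnesting}. The only point you leave implicit is that the cocharacter statement is then deduced from the group equation via Proposition \ref{propHomifdiff}, which the paper dispatches in one line.
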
   
The first part of the theorem follows from the second, by proposition \ref{propHomifdiff}.
\begin{rem}
These are still the renormalization group equations for massless diagrams. In the presence of masses the result 
Thm. \ref{finalr} is a polynomial in squared masses. Repeating the analysis term by term for the coefficient functions verifies
the full Callan-Symanzik equations, as expected.
\end{rem}
 \subsection{The tangent motive in the general case} The definition of the mixed Hodge structure corresponding to $G$ in the general case is similar to the case when $G$ has at most logarithmic subdivergences,  if we include  extra hypersurfaces $X_{q}$ for each quadratic 1PI connected subgraph $q$ in the definition of $X^{tot}_{G,s}$. Then only the differential form $\omega^{\ren}_G(s)$ changes. Thus we can define $\mathrm{Mot}(G)$ via $(\ref{motdef})$, define framings
 in the same manner,  and conclude that $f_G$ is its period, in all cases.

\section{Reminders on  Hopf algebras} 
\subsection{Basic definitions}
Consider any commutative, graded Hopf algebra 
$$H=\bigoplus_{n\geq 0} H_n$$
over a field $k$ of characteristic zero, where  $H_0=k$.  Denote the  multiplication by $\mu_2:H\otimes_k H \rightarrow H$,  and more generally let
$\mu_n : H^{\otimes n} \rightarrow H$ denote the multiplication of $n$ elements.  We frequently write $y_1\ldots y_n$ as a shorthand for 
$\mu_n(y_1\otimes \ldots \otimes y_n)$ for simplicity.
If the coproduct is denoted  $\Delta: H \rightarrow H\otimes_k H$, the reduced coproduct is defined by  $\Delta'$, where $\Delta'=\Delta-1\otimes id - id \otimes 1$. 
It satisfies
\begin{equation} \label{Deltaprimegraded} \Delta' (H_n ) \subseteq \bigoplus_{p+q=n,p\geq 1, q\geq 1} H_p \otimes_k H_q \ .
\end{equation}
For $n\geq 1$, consider the maps
$$\Delta^{(n)} : H\rightarrow H^{\otimes n+1}  $$
obtained  by setting $\Delta^{(1)}=\Delta'$ and  iterating the reduced coproduct:
$$\Delta^{(n)}= (id \otimes \Delta^{(n-1)})\circ \Delta' = (\Delta^{(n-1)}\otimes id) \circ \Delta' \quad  \hbox{ for   } n\geq 2 \ ,$$
The maps $\Delta^{(n)}$ are    well-defined by the coassociativity of $\Delta$. 
For any element $x\in H$, we shall sometimes use the following version of   Sweedler's notation and write
\begin{equation}\label{Sweedler}
\Delta^{(n)} (x) = \sum_{(x)} x^{(1)} \otimes \ldots \otimes x^{(n+1)}\ ,
\end{equation}
where by $(\ref{Deltaprimegraded})$, all elements $x^{(i)}$ have  degree $\geq 1$. 

\begin{defn} The \emph{coradical filtration} is the increasing filtration defined by 
$$H^{(i)} = \{ x: \Delta^{(i)} x= 0\}$$
where $i\geq 1$, and $H^{(0)}=k$. The set of \emph{primitive} elements in $H$ are the elements in $ H^{(1)}$, i.e.,  which satisfy $\Delta x= 1\otimes x+ x\otimes1$. \end{defn} 
 The Hopf algebras we consider in this paper (Hopf algebras of graphs or trees) are graded with respect to the coradical filtration, i.e.,  $\Delta$ is homogeneous 
 with respect to the grading associated to the coradical filtration.

\subsection{The preparation map}
\begin{defn} \label{defnR}  The \emph{preparation map} $R:H \rightarrow H \otimes_k H$ is defined  by 
\begin{equation} \label{defneqnR} R=1\otimes id + \sum_{n\geq 1} (-1)^n (\mu_n \otimes id ) \Delta^{(n)}\  ,
\end{equation}
where $\mu_1=id$.
It is well-defined since  by $(\ref{Deltaprimegraded})$ the sum on the right-hand side terminates when applied to 
any element in $H$.
\end{defn}
  It is convenient to set  $R^0=1\otimes id$, and  $R^n=   (-1)^n (\mu_n \otimes id ) \Delta^{(n)}$ for all $n\geq 1$. By Sweedler's notation $(\ref{Sweedler})$, we can write for $n\geq 1$ and $x\in H$, 
\begin{equation} \label{RinSweedler} R^n (x) = (-1)^n\sum_{(x)} x^{(1)}\ldots x^{(n)}\otimes x^{(n+1)}  \ .
\end{equation} 
One can also define $R$ recursively in the following way.

\begin{lem} Using Sweedler's notation $(\ref{Sweedler})$, the map $R$ satisfies
\begin{equation} \label{recursiveR} R(x) = 1 \otimes x - \sum_{(x)} \mu( R(x^{(1)}))  \otimes x^{(2)}   \  . 
\end{equation}  
\end{lem}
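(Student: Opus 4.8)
The plan is to expand both sides in terms of the iterated reduced coproduct $\Delta^{(n)}$ and the multiplication maps $\mu_n$, and then reconcile them using coassociativity; the whole content is bookkeeping with Sweedler notation, with no genuine analytic obstacle. Writing $R=\sum_{k\geq 0}R^k$ with $R^0=1\otimes \mathrm{id}$ and $R^k=(-1)^k(\mu_k\otimes \mathrm{id})\Delta^{(k)}$ as in $(\ref{RinSweedler})$, I would first compute the inner expression $\mu(R(x^{(1)}))$. Since $\mu\circ(\mu_k\otimes \mathrm{id})=\mu_{k+1}$ and $\mu(1\otimes y)=y$, applying $\mu$ to each term of $R(x^{(1)})$ collapses the two tensor slots and yields
\[
\mu\big(R(x^{(1)})\big)=x^{(1)}+\sum_{k\geq 1}(-1)^k\,\mu_{k+1}\Delta^{(k)}(x^{(1)})\ .
\]

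The key step is then coassociativity. Because $\Delta^{(k+1)}=(\Delta^{(k)}\otimes \mathrm{id})\circ\Delta'$, summing the above against $x^{(2)}$ over the reduced coproduct $\Delta'(x)=\sum_{(x)}x^{(1)}\otimes x^{(2)}$ and using $\sum_{(x)}\Delta^{(k)}(x^{(1)})\otimes x^{(2)}=\Delta^{(k+1)}(x)$ (after which $\mu_{k+1}$ acts on the first $k+1$ slots) gives
\[
\sum_{(x)}\mu\big(R(x^{(1)})\big)\otimes x^{(2)}=\Delta'(x)+\sum_{k\geq 1}(-1)^k(\mu_{k+1}\otimes \mathrm{id})\Delta^{(k+1)}(x)\ .
\]

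Finally, I would re-index by $n=k+1$ and isolate the boundary term. The sum on the right becomes $-\sum_{n\geq 2}(-1)^n(\mu_n\otimes \mathrm{id})\Delta^{(n)}(x)$, while $\Delta'(x)=(\mu_1\otimes \mathrm{id})\Delta^{(1)}(x)$ supplies exactly the missing $n=1$ term once its sign is tracked (since $(-1)^1=-1$). Substituting into $1\otimes x-\sum_{(x)}\mu(R(x^{(1)}))\otimes x^{(2)}$ then collapses everything to
\[
1\otimes x+\sum_{n\geq 1}(-1)^n(\mu_n\otimes \mathrm{id})\Delta^{(n)}(x)=R(x)\ ,
\]
which is precisely $(\ref{recursiveR})$. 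The only care required is matching the $n=1$ boundary term and the reindexing sign, so I expect the proof to be short, with coassociativity of $\Delta$ carrying all the real weight.
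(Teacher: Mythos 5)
Your proof is correct and is essentially the paper's argument written out in full: the paper condenses everything into the single term-by-term identity $R^n=-(\mu\circ R^{n-1}\otimes \mathrm{id})\circ\Delta'$, which is precisely what your expansion, coassociativity step, and reindexing establish. No gaps.
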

\begin{proof} This follows immediately from $R^n=-(\mu \circ R^{n-1}\otimes id)\circ \Delta'$.
\end{proof}
\noindent
Since $H$ is commutative and graded, the antipode $S:H\rightarrow H$ is 
$S= -\mu \circ R.$

\subsection{Renormalization property of $R$}
Consider the map defined by:
\begin{eqnarray}  \mu_{13}: H^{\otimes 4} & \To  & H^{\otimes 3} \label{mu13def} \\
x_1\otimes x_2\otimes x_3 \otimes x_4 & \mapsto & x_1 x_3 \otimes x_2 \otimes x_4 \nonumber 
\end{eqnarray}
The following theorem is the main mechansim for the cancellation of poles in  renormalization. For want of a suitable reference,
we give a complete proof here.

\begin{thm}  \label{thmRmainproperty} The preparation map satisfies the following equation:
\begin{equation} \label{eqnRmainproperty} 
\mu_{13} \circ ( \Delta \otimes  (\Delta -id \otimes 1)) \circ R = 1 \otimes R \ .
\end{equation}
  \end{thm}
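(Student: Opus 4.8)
The plan is to deduce (\ref{eqnRmainproperty}) from the recursive description (\ref{recursiveR}) of $R$ by induction on the coradical filtration. First I would unwind the right-hand slot: since $(\Delta - \mathrm{id}\otimes 1)(b) = 1\otimes b + \Delta'(b)$ for $b$ in the augmentation ideal $\bigoplus_{n\geq 1} H_n$, the operator $\Lambda := \mu_{13}\circ(\Delta\otimes(\Delta-\mathrm{id}\otimes 1))$ on $H\otimes_k H$ splits as
\[
\Lambda = (\Delta\otimes\mathrm{id}) + \mu_{13}\circ(\Delta\otimes\Delta')\ ,
\]
the first summand arising from the term $1\otimes b$ (where $\mu_{13}$ multiplies the first tensor factor of $\Delta a$ by the unit). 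A direct check then gives $\Lambda(1\otimes x) = 1\otimes 1\otimes x + \sum_{(x)} x^{(1)}\otimes 1\otimes x^{(2)}$, which already settles the case of primitive $x$, since there $R(x)=1\otimes x$ and $\Delta'x=0$.

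Writing $\bar R := \mu\circ R$ (so $\bar R = -S$, and applying $\mu$ to (\ref{recursiveR}) yields $\bar R = \mathrm{id} - \mu(\bar R\otimes\mathrm{id})\Delta'$), the recursion (\ref{recursiveR}) reads $R(x) = 1\otimes x - \sum_{(x)}\bar R(x^{(1)})\otimes x^{(2)}$. Applying the linear operator $\Lambda$ and inserting the formula for $\Lambda(1\otimes x)$ above, the desired identity $\Lambda R = 1\otimes R$ becomes equivalent to the single statement
\begin{equation*}
\sum_{(x)}\Lambda\big(\bar R(x^{(1)})\otimes x^{(2)}\big) = \sum_{(x)} x^{(1)}\otimes 1\otimes x^{(2)} + \sum_{(x)} 1\otimes \bar R(x^{(1)})\otimes x^{(2)}\ . \tag{$\dagger$}
\end{equation*}
The point of this reformulation is that every argument on the left of $(\dagger)$ now involves only the proper parts $x^{(1)}, x^{(2)}$, which lie in strictly lower steps of the coradical filtration by (\ref{Deltaprimegraded}), so $(\dagger)$ is amenable to induction.

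I would prove $(\dagger)$ by induction on coradical degree. Expanding $\Lambda(\bar R(x^{(1)})\otimes x^{(2)}) = \Delta\bar R(x^{(1)})\otimes x^{(2)} + \mu_{13}(\Delta\bar R(x^{(1)})\otimes\Delta'x^{(2)})$ and substituting the recursion $\bar R(x^{(1)}) = x^{(1)} - \sum\bar R\big((x^{(1)})'\big)\,(x^{(1)})''$ into $\Delta\bar R(x^{(1)})$, one regroups the nested sums using the coassociativity of $\Delta$ and $\Delta'$, combining the coproducts of $x^{(1)}$ and $x^{(2)}$ into the iterated coproducts $\Delta^{(k)}x$ of (\ref{Sweedler}). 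The mechanism, already visible in the primitive and depth-two cases, is that the terms carrying a unit $1$ in the middle tensor factor telescope to exactly $\sum_{(x)} x^{(1)}\otimes 1\otimes x^{(2)}$, while the remaining terms reassemble into $\sum_{(x)} 1\otimes\bar R(x^{(1)})\otimes x^{(2)}$; the inductive hypothesis, applied to the subterms $x^{(1)}$, is what permits replacing the internal occurrences of $\Lambda\circ\bar R$ by their closed form.

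The main obstacle is precisely this bookkeeping in $(\dagger)$: there is no clean closed form for $\Delta\circ\bar R$, since the antipode is only a coalgebra anti-homomorphism, so the cancellation of the middle-unit terms has to be extracted by hand from the interplay of the $\bar R$-recursion, coassociativity, and the careful distinction between the full coproduct $\Delta$ in the first slot and the reduced coproduct $\Delta'$ in the second. Keeping the signs and this full-versus-reduced distinction straight is where an error is most likely to arise, so I would organize the whole computation around the split $\Lambda = (\Delta\otimes\mathrm{id}) + \mu_{13}(\Delta\otimes\Delta')$ and verify the telescoping degree by degree.
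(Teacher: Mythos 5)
Your reduction of $(\ref{eqnRmainproperty})$ to the identity $(\dagger)$ is correct: the splitting $\Lambda = (\Delta\otimes \mathrm{id}) + \mu_{13}\circ(\Delta\otimes \Delta')$ on elements whose second tensor slot lies in the augmentation ideal, the formula $\Lambda(1\otimes x)=1\otimes 1\otimes x+\sum_{(x)}x^{(1)}\otimes 1\otimes x^{(2)}$, and the equivalence of $(\dagger)$ with the theorem all check out. The gap is that $(\dagger)$ --- which carries the entire content of the theorem --- is then only asserted to follow from an induction whose mechanism is not coherent as stated. The composite ``$\Lambda\circ\bar{R}$'' whose ``internal occurrences'' you propose to replace via the inductive hypothesis is not well formed ($\Lambda$ is defined on $H\otimes_k H$ while $\bar{R}$ lands in $H$), and neither the theorem for lower coradical degree nor $(\dagger)$ applied to the $x^{(1)}$ actually occurs in the expansion of $\sum_{(x)}\Lambda\bigl(\bar{R}(x^{(1)})\otimes x^{(2)}\bigr)$: what occurs there is $\Delta\bar{R}(x^{(1)})$, never $\Lambda$ evaluated on anything of the form $R(y)$. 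So as written the induction has nothing to feed on, and the claimed telescoping is precisely the statement left to prove.

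The gap is fillable, and in a way that contradicts your closing remark: there \emph{is} a clean closed form for $\Delta\circ\bar{R}$. Since $H$ is commutative, graded and connected, $S=-\bar{R}$ is the antipode and satisfies $\Delta S=(S\otimes S)\circ\tau\circ\Delta$ with $\tau$ the flip, whence for $y$ in the augmentation ideal
$\Delta\bar{R}(y)=1\otimes\bar{R}(y)+\bar{R}(y)\otimes 1-\sum_{(y)}\bar{R}(y^{(2)})\otimes\bar{R}(y^{(1)})$.
Substituting this into $\Lambda\bigl(\bar{R}(x^{(1)})\otimes x^{(2)}\bigr)=\Delta\bar{R}(x^{(1)})\otimes x^{(2)}+\mu_{13}\bigl(\Delta\bar{R}(x^{(1)})\otimes\Delta'x^{(2)}\bigr)$, rewriting the nested sums via coassociativity in terms of $\Delta^{(2)}x$ and $\Delta^{(3)}x$, and applying the recursion $\bar{R}(y)+\sum_{(y)}\bar{R}(y^{(1)})y^{(2)}=y$ twice (once against the terms with middle factor $1$, once against those with middle factor $\bar{R}(x^{(1)})$) yields $(\dagger)$ directly, with no induction at all. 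If you complete the argument this way you obtain a proof genuinely different from the paper's, which never leaves the explicit expansion $(\ref{defneqnR})$: there the left-hand side is encoded by words in three letters $a,b,c$ standing for $1\otimes\mathrm{id}$, $\mathrm{id}\otimes 1$, $\Delta'$, the substitution $c\mapsto ba$ is shown to be harmless under $\widetilde{\mu}\circ\phi$, and the sum collapses by inverting $(1+a)$ and $(1+b)$ as noncommutative formal power series. Your route trades that combinatorial bookkeeping for the standard antipode identities, but only once the missing formula for $\Delta\circ\bar{R}$ is supplied.
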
 
  \begin{proof} By definition $(\ref{defneqnR})$, the left-hand side of $(\ref{eqnRmainproperty})$ is given by
   $$   \sum_{n \geq 0} (-1)^n  \mu_{13} ( \Delta \mu_n  \otimes  (\Delta -id \otimes 1)) \circ   \Delta^{(n)}    \ ,$$
   where $\mu_0=id$ and $\Delta^{(0)}=1\otimes id$. 
 Since $\Delta$ is a homomorphism, this can be rewritten: 
   \begin{equation}Ê \label{inpfscndeqforR}
     \sum_{n \geq 0} (-1)^n  \widetilde{\mu}_n ( \Delta^{\otimes n} \otimes  (\Delta -id \otimes 1)) \circ   \Delta^{(n)}   \ , 
     \end{equation} 
   where $\widetilde{\mu}_n:H^{\otimes 2n} \rightarrow H^{\otimes 3}$ is the map defined  by
   $$\widetilde{\mu}_n (y_1 \otimes  \ldots \otimes y_{2n} ) = y_1y_3\ldots y_{2n-1} \otimes y_2y_4 \ldots y_{2n-2} \otimes y_{2n}\ .$$
 Now let $x\in H$, and  write out the following  terms 
   $$  ( \Delta^{\otimes n} \otimes  (\Delta -id \otimes 1)) \circ   \Delta^{(n)}  (x) \qquad = \qquad \qquad \qquad$$
   \begin{equation}  \label{expandedDelta}
  \qquad \qquad \qquad  (( \Delta'+ id \otimes 1 + 1 \otimes id ) ^{\otimes n} \otimes  (\Delta' + 1\otimes id)) \circ   \Delta^{(n)}  (x)  \end{equation}
 using Sweedler's notation $(\ref{Sweedler})$.  When  $n=0$, $(\ref{expandedDelta})$  gives 
  \begin{equation}\label{nis0twoterms} 
  1\otimes 1\otimes 1\otimes x + \sum_{(x)} 1 \otimes 1\otimes x^{(1)} \otimes x^{(2)}\ ,
  \end{equation}
 and   when $n=1$ $(\ref{expandedDelta})$  has the following six terms:
  \begin{eqnarray} \label{nis1sixterms} && \sum_{(x)} \,\, (1\otimes x^{(1)}  \otimes 1\otimes x^{(2)}  + 1 \otimes x^{(1)} \otimes x^{(2)} \otimes x^{(3)}  +
    x^{(1)} \otimes 1  \otimes 1\otimes x^{(2)} +    \\ 
     && x^{(1)}  \otimes 1\otimes x^{(2)} \otimes x^{(3)} +  x^{(1)}  \otimes x^{(2)}\otimes 1  \otimes x^{(3)} + x^{(1)}  \otimes x^{(2)}\otimes x^{(3)} \otimes x^{(4)} )  \nonumber 
       \end{eqnarray}
  The general expression can be encoded by an alphabet with three letters $a,b,c$, which correspond respectively to 
   the maps $1\otimes id$, $id \otimes 1$, and $\Delta'$. For any word in the letters $\{a,b,c\}$ of length $n$ not ending in $b$,  we associate a term in  $(\ref{expandedDelta})$. For this, consider the unique morphism (for the concatenation product) of monoids 
   $$\phi_1: \{a,b,c\}^{\times} \rightarrow \{1,x\}^{\times}$$
   such that $\phi_1(a)=1x$, $\phi_1(b)=x1$, and $\phi_1(c)=xx$. Now  let $\phi_2$ be the map which inserts a tensor in between all letters of a word
   in the letters $\{1,x\}$, and adds superscripts to all letters $x$ in strictly increasing order.  Finally, if $w\in \{a,b,c\}^{\times}$ is a word with at least 2 letters, set 
     $\phi(w)=\phi_2\circ \phi_1(w)$. We set $\phi(a)=1\otimes 1 \otimes 1 \otimes x_1$ and $\phi(c)=1\otimes 1 \otimes x^{(1)} \otimes x^{(2)}$ to agree with  $(\ref{nis0twoterms})$.  Hence  for $n\geq 2$  we have
   \begin{equation} \label{phian}
   \phi(a^n) =1\otimes x^{(1)}\otimes 1 \otimes x^{(2)} \otimes \ldots \otimes 1 \otimes x^{(n)}
   \end{equation} 
and    the six terms of $(\ref{nis1sixterms})$ are $\phi(aa), \phi(ac),\phi(ba),\phi(bc),\phi(ca), \phi(cc)$ in order.
    With these definitions,  the full expansion  of  $(\ref{expandedDelta})$  in Sweedler notation is just
   $$  \phi ((a+b+c)^n (a+c )) \ . $$
   Therefore consider the      non-commutative formal  power series
   $$T= \sum_{n\geq 0} (-1)^n (a+b+c)^n (a+c )\in \Q\langle\langle  a,b,c\rangle\rangle \ . $$
We have shown that $(\ref{inpfscndeqforR})$ is $\widetilde{\mu} \circ\phi(T)$. 
   Now it is easy to verify from the definitions that  for all (possibly empty) words $w, w' \in \{a,b,c\}^{\times}$, 
   $$\widetilde{\mu}\circ \phi( w\, c \, w') = \widetilde{\mu}\circ \phi( w\, b\,a \,w')\ . $$
  It follows that $\widetilde{\mu}\circ \phi(T)=\widetilde{\mu} \circ \phi(\overline{T})$, where 
   $$\overline{T}= \sum_{n\geq 0} (-1)^n (a+b+ba)^n (a+ba )\in \Q\langle\langle  a,b\rangle\rangle \ . $$  
  This satisfies $ (1+a+b+ba) \overline{T} = (a+ba) $, which is $ (1+b) (1+a) \overline{T} = (1+b) a.$
 Since $1+b$  (resp. $1+a$) is invertible as a  non-commutative formal power series in  $\Q\langle \langle a,b\rangle \rangle$, this implies that  $(1+a)\overline{T} = a$, which has
 the unique solution
 $$\overline{T}= \sum_{n\geq 1}(-1)^{n+1} a^n \ . $$
  Therefore, by  $(\ref{phian})$,   $\widetilde{\mu} \circ \phi( \overline{T})$   reduces to 
   $$\sum_{n \geq 0} (-1)^{n} \widetilde{\mu} \circ\phi( a^{n+1})= 1\otimes 1\otimes x+  \sum_{n \geq 1}\sum_{(x)} (-1)^{n}\,   1\otimes x^{(1)} \ldots x^{(n)} \otimes x^{(n+1)}$$
     which is exactly $1\otimes R(x)$ by $(\ref{RinSweedler})$. Since $H$ is graded commutative,  all the previous formal power series arguments are in fact finite series when evaluated on a specific element $x$ of $H$, and so this 
    proves that  $(\ref{inpfscndeqforR})$ is equal to $1\otimes R$.
    \end{proof}


\subsection{Properties for the renormalization group equations}

\begin{lem} \label{proponR} For any $m,n\geq 0$,  the following diagram commutes:
\begin{equation} \label{commdiagforR}
\xymatrix{ H   \ar[d]^{R^{m+n+1}} \ar[r]^{\Delta'} & H^{\otimes 2} \ar[d]^{R^{m}\otimes R^{n}} \\
H^{\otimes 2} &   \ar[l]_{-\mu_3\otimes id}  H^{\otimes 4}} 
\end{equation}
\end{lem}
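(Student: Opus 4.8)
The plan is to verify, directly in Sweedler notation, the identity encoded by the diagram, namely
\begin{equation}
R^{m+n+1} = (-\mu_3 \otimes \mathrm{id}) \circ (R^m \otimes R^n) \circ \Delta' \ ,
\end{equation}
reducing everything to the formula $(\ref{RinSweedler})$ and coassociativity. The single structural input I would isolate first is the coassociativity identity
\begin{equation}
(\Delta^{(m)} \otimes \Delta^{(n)}) \circ \Delta' = \Delta^{(m+n+1)}\ ,
\end{equation}
valid for all $m,n\geq 0$ under the conventions $\Delta^{(0)}=\mathrm{id}$, $\mu_0=\mu_1=\mathrm{id}$, and $R^0=1\otimes\mathrm{id}$. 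This follows by an immediate induction from the two recursive descriptions of $\Delta^{(n)}$ and is nothing more than the associativity of iterating the reduced coproduct.

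Next I would expand the right-hand side. Writing $\Delta'(x) = \sum_{(x)} x^{(1)}\otimes x^{(2)}$ and applying $R^m$ to the first slot, $R^n$ to the second, the formula $(\ref{RinSweedler})$ presents each $R^m(x^{(1)})\in H\otimes H$ as a \emph{head} $\otimes$ \emph{tail}, with head $(-1)^m (x^{(1)})_{(1)}\cdots (x^{(1)})_{(m)}$ and tail $(x^{(1)})_{(m+1)}$, where $\Delta^{(m)}(x^{(1)})=\sum (x^{(1)})_{(1)}\otimes\cdots\otimes(x^{(1)})_{(m+1)}$, and similarly for $R^n(x^{(2)})$. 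Thus $(R^m\otimes R^n)\Delta'(x)$ is a sum over four tensor slots: head and tail of $x^{(1)}$, then head and tail of $x^{(2)}$. The map $\mu_3\otimes\mathrm{id}$ multiplies the first three of these, hence absorbs the tail of $x^{(1)}$ into the product while leaving the tail of $x^{(2)}$ alone in the fourth slot.

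The crux is precisely this asymmetry. For the first factor, head times tail recombines to the full product $(-1)^m\mu_{m+1}\Delta^{(m)}(x^{(1)})$, whereas for the second factor only the head $(-1)^n\mu_n\Delta^{(n)}(x^{(2)})$ enters the product, its tail $(x^{(2)})_{(n+1)}$ surviving separately. Collecting the sign $-1$ from $-\mu_3$ with the $(-1)^{m+n}$ already present, the right-hand side becomes
\begin{equation}
(-1)^{m+n+1} \sum_{(x)} (x^{(1)})_{(1)}\cdots (x^{(1)})_{(m+1)}\,(x^{(2)})_{(1)}\cdots (x^{(2)})_{(n)}\otimes (x^{(2)})_{(n+1)}\ .
\end{equation}
Feeding in the coassociativity identity above identifies this ordered string of $m+n+2$ components with $\Delta^{(m+n+1)}(x)=\sum x^{(1)}\otimes\cdots\otimes x^{(m+n+2)}$; the first $m+n+1$ are multiplied and the last is left alone, yielding exactly $(-1)^{m+n+1}(\mu_{m+n+1}\otimes\mathrm{id})\Delta^{(m+n+1)}(x)=R^{m+n+1}(x)$ by $(\ref{RinSweedler})$.

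The main thing to get right, and the only genuine obstacle, is this bookkeeping of which tail is absorbed: the tail of the left factor must be swallowed by $\mu_3$ so that $\Delta^{(m)}$ contributes $m+1$ factors to the product, while the tail of the right factor must be spared so that $\Delta^{(n)}$ contributes only $n$, producing $m+n+1$ multiplied components rather than $m+n$ or $m+n+2$. I would also check the boundary cases $m=0$ or $n=0$, where $R^0=1\otimes\mathrm{id}$ contributes a trivial head $1$; these are consistent with the conventions $\Delta^{(0)}=\mathrm{id}$ and $\mu_1=\mathrm{id}$, so the computation goes through verbatim. No further sign or convergence subtlety arises, since on any fixed $x$ of finite degree all sums terminate by $(\ref{Deltaprimegraded})$.
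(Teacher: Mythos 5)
Your proof is correct and follows essentially the same route as the paper's: both write $(R^m\otimes R^n)\circ\Delta'$ out in Sweedler notation, observe that $-\mu_3\otimes\mathrm{id}$ merges the first three tensor slots into a product of $m+n+1$ components, and identify the result with $R^{m+n+1}$ via coassociativity of the iterated reduced coproduct. You merely make explicit the identity $(\Delta^{(m)}\otimes\Delta^{(n)})\circ\Delta'=\Delta^{(m+n+1)}$ and the head/tail bookkeeping that the paper leaves implicit.
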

\begin{proof} In Sweedler's notation $(\ref{RinSweedler})$, we have:
$$R^{m+n+1}(x) = \sum_{(x)} (-1)^{m+n+1} x^{(1)}\ldots x^{(m+n+1)}\otimes x^{(m+n+2)}\ .$$
On the other hand, $(R^{m}\otimes R^{n})\circ \Delta' (x)$ can be written
$$\sum_{(x)} (-1)^{m+n} x^{(1)}\ldots x^{(m)}\otimes x^{(m+1)}\otimes  x^{(m+2)}\ldots x^{(m+n+1)}\otimes x^{(m+n+2)}\ ,$$
which is mapped to $- R^{m+n+1}(x) $ on applying $\mu_3\otimes id$.
\end{proof}

We also need the following characterization of additive cocharacters on $H$.
Let $K$ be a field containing $k$,  and let $K[L]$ be the Hopf algebra of $K$-valued functions on  the additive group $\mathbb{G}_a$, where  $L$ is primitive. If $g\in K[L]$, write
$g'(L)$ (resp. $g''(L)$), 
for the usual  derivative (resp. second derivative) with respect to $L$.  
Since $H$ and $K[L]$ are connected, they have counits (or augmentations) which  we denote by 
$\varepsilon: H \rightarrow H_0 \cong k$ and $\varepsilon : K[L] \rightarrow K$.

\begin{prop} \label{propHomifdiff}  Suppose that  $H$ is graded for its  coradical filtration. Then a homomorphism  $f: H \rightarrow K[L]$  of  augmented algebras is a homomorphism of  Hopf algebras 
 if and only if 
  \begin{equation} \label{f''equation}  f''_x(L)  = \sum_{(x)}   f'_{x^{(1)}}(0)  f'_{x^{(2)}}(L) \ , 
\end{equation}
where $\Delta'(x)=\sum_{(x)} x^{(1)}  \otimes x^{(2)} $, and the dependence on $x$ is written in subscript.
\end{prop}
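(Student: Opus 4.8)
The plan is to translate the statement into the language of the convolution group of characters, and then to recognise equation $(\ref{f''equation})$ as the infinitesimal form of the one-parameter subgroup condition. For each $t\in K$ set $\chi_t=\mathrm{ev}_t\circ f:H\to K$, where $\mathrm{ev}_t$ is evaluation at $L=t$; since $f$ and $\mathrm{ev}_t$ are augmented algebra homomorphisms, each $\chi_t$ is a character of $H$, and $\chi_0=\varepsilon$ because $f$ is augmented. Writing $\star$ for the convolution product $\phi\star\psi=\mu_K(\phi\otimes\psi)\Delta$ on $\mathrm{Hom}(H,K)$, a direct check shows that $f$ is a homomorphism of Hopf algebras if and only if $f_x(L_1+L_2)=\sum_{(\Delta x)} f_{x'}(L_1)f_{x''}(L_2)$ for all $x$, i.e.\ iff $\chi_{s+t}=\chi_s\star\chi_t$ for all $s,t$ (equality of the two polynomials follows from their equality as functions, since $K\supseteq k$ has characteristic $0$ and is therefore infinite). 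Call this identity $(*)$. Thus the task reduces to proving that $(*)$ is equivalent to $(\ref{f''equation})$, the generator being the infinitesimal character $Z(x)=f'_x(0)=\tfrac{d}{dt}\big|_{0}\chi_t(x)$; that $Z$ is an infinitesimal character follows by differentiating $\chi_t(xy)=\chi_t(x)\chi_t(y)$ at $t=0$, which gives $Z(xy)=Z(x)\varepsilon(y)+\varepsilon(x)Z(y)$.

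For the forward direction, assume $(*)$. Differentiating $(*)$ in $L_1$ and setting $L_1=0$ yields $f'_x(L)=f'_x(0)+\sum_{(x)}f'_{x^{(1)}}(0)f_{x^{(2)}}(L)$ (using $f_{x^{(1)}}(0)=\varepsilon(x^{(1)})=0$); call this $(**)$. A further differentiation in $L$ produces exactly $f''_x(L)=\sum_{(x)}f'_{x^{(1)}}(0)f'_{x^{(2)}}(L)$, which is $(\ref{f''equation})$. This direction is immediate.

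For the converse I would first integrate $(\ref{f''equation})$ once in $L$: the constant of integration is fixed by evaluating at $L=0$, where $f_{x^{(2)}}(0)=\varepsilon(x^{(2)})=0$ and $f'_x(0)=Z(x)$, so one recovers $(**)$, which says precisely that $\tfrac{d}{dt}\chi_t=Z\star\chi_t$ with $\chi_0=\varepsilon$. The strategy is then to identify $\chi_t$ with $\exp_\star(tZ)=\varepsilon+\sum_{k\ge 1}\tfrac{t^k}{k!}Z^{\star k}$, a well-defined character since $Z$ vanishes on $H_0$, so only finitely many terms survive on each graded piece. Uniqueness of solutions of the linear equation $\dot\Theta_t=Z\star\Theta_t$ with prescribed $\Theta_0$ is proved by induction on the coradical degree: the difference $\delta_t$ of two solutions satisfies $\dot\delta_t(x)=\sum_{(x)}Z(x^{(1)})\delta_t(x^{(2)})$ with $\delta_t(1)=0$, and the factors $x^{(2)}$ have strictly smaller coradical degree by $(\ref{Deltaprimegraded})$, whence $\delta_t\equiv 0$. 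Since $\exp_\star(tZ)$ solves the same equation with the same value $\varepsilon$ at $t=0$, we get $\chi_t=\exp_\star(tZ)$, and therefore $\chi_{s+t}=\exp_\star\big((s+t)Z\big)=\exp_\star(sZ)\star\exp_\star(tZ)=\chi_s\star\chi_t$, which is $(*)$.

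The main obstacle is that $H$ is not cocommutative, so the convolution group of characters is genuinely non-abelian; one cannot naively argue that $\chi_s\star\chi_t$ satisfies the same left equation $\dot\Theta_t=Z\star\Theta_t$ as $\chi_{s+t}$, because the relevant reassociation introduces $Z$ on the wrong tensor factor. The resolution is that the canonical solution $\exp_\star(tZ)$ lives in the commutative $\star$-subalgebra generated by the single element $Z$ — all of its $\star$-powers commute trivially — so the exponential identity $\exp_\star\big((s+t)Z\big)=\exp_\star(sZ)\star\exp_\star(tZ)$ holds by the ordinary binomial theorem, with no cocommutativity needed; this is exactly the step that upgrades the infinitesimal relation $(\ref{f''equation})$ to the global one-parameter subgroup property. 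The remaining points — that $(*)$ may be tested on $K$-points, and that compatibility with the coproduct together with the augmented-algebra hypothesis already forces compatibility with the antipode — are routine in the connected graded setting.
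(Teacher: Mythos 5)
Your proof is correct, but it takes a genuinely different route from the paper's. Both arguments pass through the same intermediate first-order identity — your $(**)$ is exactly the paper's equation $(\ref{coprodtodiffpf3})$, obtained there by integrating $(\ref{f''equation})$ and fixing the constant via $f_y(0)=0$ — but the two proofs establish the equivalence of that identity with the Hopf-morphism property in different ways. The paper reduces compatibility with $\Delta$ to the commutativity of a square involving the infinitesimal coaction $(\pi\otimes \mathrm{id})\circ\Delta'$ into the Lie coalgebra $\cH\otimes_k H$ (a reduction stated as ``one verifies, using the fact that $H$ is graded for its coradical filtration''), and then computes $(\pi\otimes\mathrm{id})\Delta'(g(L))=L\otimes(g'(L)-g'(0))$ directly. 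You instead work in the convolution group: you encode the Hopf-morphism property as the one-parameter subgroup law $\chi_{s+t}=\chi_s\star\chi_t$, recognise $(**)$ as $\dot\chi_t=Z\star\chi_t$ with $Z=f'_{\bullet}(0)$ the infinitesimal character, prove uniqueness of solutions by induction on the coradical filtration, and identify $\chi_t=\exp_\star(tZ)$, from which the group law follows by the binomial theorem in the commutative $\star$-subalgebra generated by $Z$. What your approach buys: it is fully self-contained (the paper's ``one verifies'' step is itself nontrivial), it makes the use of the coradical filtration explicit, it only needs conilpotence rather than gradedness for the coradical filtration, and it correctly isolates and resolves the non-cocommutativity subtlety (one cannot compare $\chi_{s+t}$ and $\chi_s\star\chi_t$ via the same left-invariant equation, but both equal $\exp_\star$ of commuting elements). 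What the paper's approach buys is brevity and a direct link to the infinitesimal coaction formalism used elsewhere in the renormalization-group discussion.
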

\begin{proof} Let us denote the Lie coalgebras of $H$ and $K[L]$ by 
$$\cH={H_{>0} \over H_{>0}H_{>0}} \quad  \hbox{ and } \quad KL \cong {K[L]_{>0} \over K[L]_{>0}K[L]_{>0}}   $$ 
respectively, and let $\overline{f}: \cH \rightarrow KL$ be the map induced by $f$. One verifies using the fact that $H$ is graded for its coradical filtration, that 
the map $f$ is a homomorphism of Hopf algebras if and only if the following diagram commutes:
$$
\xymatrix{ H   \ar[d]^{f} \ar[r]^{} &  \cH\otimes_k H \ar[d]^{\overline{f}\otimes f } \\
  K[L]  \ar[r]^{}  &   KL \otimes_K K[L] } 
$$
where both horizontal arrows are given by the infinitesimal coaction $(\pi\otimes id) \circ \Delta'$, and
$\pi:H_{>0} \rightarrow \cH$ (resp. $\pi:K[L]_{>0} \rightarrow KL$) are the natural quotient maps. The commutativity of the diagram is therefore equivalent to the equation
\begin{equation} \label{coprodtodiffpf}
(\pi \otimes id) \circ \Delta' f_{x} = (\overline{f} \otimes f) \circ (\pi \otimes id) \circ \Delta'(x)\end{equation}
for all $x\in H$. Using the fact that $\Delta'(L^n) =\sum_{1\leq i<n} \binom{n}{i} L^i\otimes L^{n-i}$, we see that
$$ ( \pi  \otimes id)\circ \Delta' ( g(L) ) = L \otimes \big( g'(L) -g'(0)\big) \quad \hbox{ for all } g(L) \in K[L]\ , $$
since the formula is linear and evidently true for $g(L)=L^n$.  Now let  $x\in H$ satisfy $\varepsilon(x)=0$, where $\varepsilon$ is the counit. Using the Sweedler notation $(\ref{Sweedler})$,  and applying the previous equation to the left-hand side of 
$(\ref{coprodtodiffpf})$, we obtain 
\begin{equation} \label{coprodtodiffpf2}
L\otimes \big( f_x'(L) -f_x'(0)\big) = \sum_{(x)} \overline{f}_{ \pi(x^{(1)})}(L) \otimes f_{x^{(2)}} (L) \ .
\end{equation}
As  $\overline{f}_{ \pi(x^{(1)})}=\pi(f_{x^{(1)}}(L))$ is  the linear term  $f'_{x^{(1)}}(0)L$ of $f_{x^{(1)}}(L)$,  $(\ref{coprodtodiffpf2})$ becomes
\begin{equation} \label{coprodtodiffpf3} 
   f_x'(L) -f_x'(0)= \sum_{(x)} f_{x^{(1)}}'(0)  f_{x^{(2)}} (L) \ .
   \end{equation}
   Since $f$ respects the counit, $f_y(L)$ has no constant term in $L$ for any $y$ of degree $\geq 1$, and so $f_y(0)=0$. Thus the constant terms on both sides of 
 of $(\ref{coprodtodiffpf3})$ vanish, and $(\ref{coprodtodiffpf3})$ is equivalent to its derivative with respect to $L$, which is  simply $(\ref{f''equation})$.
 \end{proof}
In the applications, $H$ is a Hopf algebra of  graphs over $k=\Q$, the coefficient field $K$ is $\C$ or $\R$, and $L=\log s$, where $s$ is the chosen 
renormalization scale. 

\bibliographystyle{plain}
\renewcommand\refname{References}

\end{document}